\declaretheorem[name=Theorem,numberwithin=section]{thm}
\newtheorem{theorem}{Theorem}[section]
\newtheorem{cor}[theorem]{Corollary}
\newtheorem{lemma}[theorem]{Lemma}
\newtheorem{claim}[theorem]{Claim}
\newtheorem{definition}[theorem]{Definition}
\newtheorem*{theorem*}{Theorem}
\newtheorem*{corollary*}{Corollary}
\newtheorem*{conjecture*}{Conjecture}
\newtheorem*{lemma*}{Lemma}
\newtheorem*{thm*}{Theorem}
\newtheorem*{prop*}{Proposition}
\newtheorem*{claim*}{Claim}
\newtheorem*{cor*}{Corollary}
\newtheorem*{obs*}{Observation}
\newtheorem*{rem*}{Remark}
\newtheorem*{definition*}{Definition}
\newtheorem*{rec*}{Recommendation}
\DeclareMathOperator{\DD}{{\cal D}}
\DeclareMathOperator{\eps}{\gamma}
\newcommand{\INPUT}{\item[{\bf Input:}]}
\newcommand{\ignore}[1]{}%
\newcommand{\Ai}{A_{\text{-}i}}
\newcommand{\score}{F}
\newcommand{\MS}{\hat{A}}
\DeclareMathOperator{\good}{\texttt{good}(A)}
\DeclareMathOperator{\bad}{\texttt{bad}(A)}
\DeclareMathOperator{\pdf}{\rho}
\DeclareMathOperator{\mg}{\theta_g}
\DeclareMathOperator{\mb}{\theta_b}
\newcommand{\rmk}[1]{&&\color{lightgray}{\textit{#1}}}
\newcommand{\OPT}{\texttt{OPT}}
\newcommand{\prob}[2][]{\text{\bf P}\ifthenelse{\not\equal{}{#1}}{_{#1}}{}\!\left(#2\right)}
\newcommand{\expect}[2][]{\text{\bf E}\ifthenelse{\not\equal{}{#1}}{_{#1}}{}\!\left[#2\right]}
\newcommand{\var}[2][]{\text{\bf Var}\ifthenelse{\not\equal{}{#1}}{_{#1}}{}\!\left[#2\right]}
\DeclareMathOperator{\argmax}{argmax}
\DeclareMathOperator{\argmin}{argmin}
\title{Submodular Optimization under Noise}
\author{
  Avinatan Hassidim\footnote{Supported by ISF 1241/12;}\\
 Bar Ilan University\\
  \texttt{avinatan@cs.biu.ac.il}
    \and
  Yaron Singer\footnote{Supported by NSF grant CCF-1301976, CAREER CCF-1452961, Google Faculty Research Award, Facebook Faculty Award.}\\
  Harvard University\\
  \texttt{yaron@seas.harvard.edu}
}
\begin{document}

\frenchspacing
\sloppy
\widowpenalty10000
\clubpenalty10000
\date{}
\maketitle

\begin{abstract}
We consider the problem of maximizing a monotone submodular function under noise.  There has been a great deal of work on optimization of submodular functions under various constraints, resulting in algorithms that provide desirable approximation guarantees.  In many applications, however, we do not have access to the submodular function we aim to optimize, but rather to some erroneous or noisy version of it.  This raises the question of whether provable guarantees are obtainable in presence of error and noise. We provide initial answers, by focusing on the question of maximizing a monotone submodular function under a cardinality constraint when given access to a noisy oracle of the function.  We show that:
\begin{itemize}
  \item For a cardinality constraint $k \geq 2$, there is an approximation algorithm whose approximation ratio is arbitrarily close to $1-1/e$;
  \item For $k=1$ there is an algorithm whose approximation ratio is arbitrarily close to $1/2$.  No randomized algorithm can obtain an approximation ratio better than $1/2+o(1)$;
  \item If the noise is adversarial, no non-trivial approximation guarantee can be obtained.
\end{itemize}
\end{abstract}

\newpage
\tableofcontents
\clearpage
\addtocounter{page}{-1}
\pagenumbering{arabic}

\newpage \section{Introduction}
In this paper we study the effects of error and noise on submodular optimization.
A function $f:2^{N} \to \mathbb{R}$ defined on a ground set $N$ of size $n$ is submodular if for any $S,T\subseteq N$:
$$f(S \cup T) \leq f(S) + f(T) - f(S \cap T)$$
Equivalently, submodularity can be defined in terms of a natural diminishing returns property.  For any $A,B \subseteq N$ let $f_{A}(B) = f(A \cup B) - f(A)$, then $f$ is submodular if $\forall S \subseteq T \subseteq N, a \in N\setminus T$:
$$f_{S}(a) \geq f_{T}(a).$$
In general, submodular functions may require a representation that is exponential in the size of the ground set and the assumption is that we are given access to a \emph{value oracle} which given a set $S$ returns $f(S)$.  It is well known that submodular functions admit desirable approximation guarantees and are heavily used in applications such as market design, data mining, and machine learning (see related work).  For the classic problem of maximizing a monotone (i.e. $S\subseteq T \implies f(S) \leq f(T)$) submodular function under a cardinality constraint, the greedy algorithm which iteratively adds the element with largest marginal contribution into the solution obtains a $1-1/e$ approximation~\cite{nemhauser1978} which is optimal unless using exponentially-many queries~\cite{nemhauser1978best} or P=NP~\cite{feige1998threshold}.

Since submodular functions can be exponentially representative, it may be reasonable to assume that there are cases where one faces some error in their evaluation.  In market design where submodular functions often model agents' valuations for goods, it seems reasonable to assume that agents do not precisely know their valuations.
Even with compact representation, evaluation of a submodular function may be prone to error.
In learning and sketching submodular functions, the algorithms produce an approximate version of the function~\cite{GHIM09,DBH11,BCIW12-valuations,BDFKNR12-sketches,FKV13-submodular_trees,FV13-submodular_juntas,DLBS14_a,DLBS14_b,FK14-coverage,FV15-submodular_polynomials,Balcan15-AAMAS}.
\begin{center}
\emph{Can we retain desirable approximation guarantees in the presence of error?}
\end{center}
For $f:2^{N} \to \mathbb{R}$ and $\epsilon>0$ we say that $\widetilde{f}:2^{N} \to \mathbb{R}$ is $\epsilon$-erroneous if for every set $S \subseteq N$, it respects:
$$(1-\epsilon)f(S) \leq \widetilde{f}(S) \leq (1+\epsilon)f(S)$$
For the canonical problem of $\max_{S:|S|\leq k}f(S)$, one can trivially approximate the solution within a factor of $\frac{1-\epsilon}{1+\epsilon}$ using ${n}\choose{k}$ queries with an $\epsilon$-erroneous oracle by simply evaluating all possible subsets and returning the best solution (according to the erroneous oracle).  Is there a polynomial-time algorithm that can obtain desirable approximation guarantees for maximizing a monotone submodular function under a cardinality constraint given access to $\epsilon$-erroneous oracles? In Appendix~\ref{sec:examples} we sketch an example showing that the celebrated greedy algorithm fails to obtain an approximation strictly better than $O(1/k)$ for any constant $\epsilon>0$ when given access to an $\epsilon$-erroneous oracle $\widetilde{f}$ instead of $f$.  It turns out that this is not intrinsic to greedy. No algorithm is robust to small errors.

\begin{theorem*}[\ref{thm:adversarial}]
No randomized algorithm can obtain an approximation strictly better than $O(n^{-1/2+\delta})$ to maximizing monotone submodular functions under a cardinality constraint using $e^{n^{\delta}}/n$ queries to an $\epsilon$-erroneous oracle, for any
fixed $\epsilon,\delta<1/2$, with high probability.
\end{theorem*}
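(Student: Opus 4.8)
The natural approach is an indistinguishability argument: we construct a distribution over instances such that, with high probability over the choice of instance and of the algorithm's (at most $e^{n^\delta}/n$) queries, the noisy oracle's answers reveal nothing about which instance was drawn, while the optimal values of the instances differ by the claimed factor. Concretely, I would take the ground set $N$ of size $n$, fix a ``planted'' set size roughly $m \approx n^{1/2}$ (tuned to the target $n^{-1/2+\delta}$), and consider a base submodular function $f_0$ that is essentially flat --- e.g. $f_0(S) = \min(|S|, m)$ or a coverage-type function --- paired with a family of functions $\{f_T\}$ indexed by a secret set $T$ with $|T| = k$, where $f_T$ agrees with $f_0$ everywhere except that sets containing $T$ (or a large fraction of $T$) get a large boost. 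Each $f_T$ is monotone submodular, $\OPT(f_T)$ is large (achieved at $T$), while $\OPT(f_0)$ is small, so any algorithm that does well on $f_T$ must essentially find $T$.

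The key point is that an $\epsilon$-erroneous oracle is allowed to \emph{round} or \emph{quantize} its answers within the multiplicative $(1\pm\epsilon)$ window, and this is exactly enough slack to hide the planted structure. I would define $\widetilde f$ so that on every set $S$ that is not ``close'' to containing $T$, $\widetilde f(S)$ is a fixed canonical value depending only on $|S|$ (and on the public part of the instance), and this canonical value lies within $(1\pm\epsilon)$ of both $f_0(S)$ and $f_T(S)$. This is feasible precisely because the gap we are planting is a $(1\pm\epsilon)$-sized gap on the relevant sets, not on all sets: we only need $f_T$ and $f_0$ to differ on sets that are near-supersets of $T$, and on all other sets they can be taken literally equal (so a common canonical answer trivially exists). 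Thus the only way to detect $T$ through the oracle is to query a set $S$ with $|S \cap T|$ nearly $|T|=k$; but $T$ is a uniformly random $k$-subset, unknown to the algorithm, so by a union bound over its $q \le e^{n^\delta}/n$ queries the probability of ever hitting such an $S$ is at most $q \cdot \binom{n}{k}^{-1}\cdot(\text{small combinatorial factor})$, which is $o(1)$ for the chosen parameters since $\binom{n}{k}$ vastly exceeds $e^{n^\delta}$. Conditioned on missing $T$, the algorithm's transcript is independent of $T$, so whatever set it outputs contains $T$ with probability $o(1)$, giving value at most $\OPT(f_0) = O(n^{-1/2+\delta})\cdot\OPT(f_T)$ in expectation; Yao's principle then promotes this to a lower bound against randomized algorithms, and Markov/concentration gives the ``with high probability'' form.

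The steps, in order, are: (i) define the base function and the planted family, and verify monotonicity and submodularity of each $f_T$ --- here one must be careful that adding a bonus on near-supersets of $T$ preserves the diminishing-returns inequality, which forces the bonus to be structured (e.g. a capped additive bump, or realized as extra coverage elements associated to $T$); (ii) define the canonical noisy oracle $\widetilde f$ and check the two-sided $\epsilon$-erroneous bound holds simultaneously against \emph{every} $f_T$ and against $f_0$; (iii) bound the probability that a $q$-query transcript distinguishes $f_T$ from $f_0$ by a union bound over queries, using $q \le e^{n^\delta}/n \ll \binom{n}{k}$; (iv) conclude via Yao that no randomized algorithm beats $O(n^{-1/2+\delta})$ w.h.p.; (v) optimize the constants $m$, $k$, and the bonus magnitude to match the exponents in the statement, using $\epsilon,\delta<1/2$.

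The main obstacle I expect is step (i) combined with the consistency requirement in step (ii): one wants the planted gap to be as large as possible (ideally a $\Theta(n^{1/2})$-factor in $\OPT$) while (a) keeping each $f_T$ genuinely submodular and monotone, and (b) ensuring that on all sets \emph{not} revealing $T$ the functions $f_0$ and $f_T$ coincide (or at least lie in a common $(1\pm\epsilon)$ window) so that a single oracle can serve all instances. Reconciling ``large planted value'' with ``submodular'' is the delicate part --- a single huge bump on the exact set $T$ is easy but too easily missed in a way that doesn't help, whereas spreading the bonus over many near-supersets of $T$ risks violating submodularity or leaking $T$ through sets with only moderate overlap; the right construction is likely a matroid-rank or coverage function in which the elements of $T$ jointly ``unlock'' a block of $\approx n^{1/2}$ new coverage elements, so that $f_T$ exceeds $f_0$ by an additive $\approx n^{1/2}$ only on sets containing all (or nearly all) of $T$, and the hard-instance parameters then fall out of balancing this additive gap against the $O(n^{1/2})$ baseline.
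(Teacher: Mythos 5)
Your high-level plan --- two nearly-indistinguishable instances, a consistent adversarial oracle that answers within the $(1\pm\epsilon)$ window, a union bound over queries, Yao to handle randomization --- is the right shape of argument and matches the paper's. The difficulty you flag at the end (reconciling a polynomial planted gap with submodularity and with oracle consistency) is exactly the crux, but your proposed resolution does not and cannot work, and this is a genuine gap.

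The obstruction is that monotone submodular functions cannot implement an ``AND''-like bonus that appears only on (near-)supersets of a small hidden set $T$. Suppose $f_T(\emptyset)=0$, $f_T$ agrees with $f_0$ on every set with $|S\cap T|<|T|$, and $f_T(T)>f_0(T)$. Submodularity at the last element $a\in T$ gives $f_T(T)-f_T(T\setminus\{a\})\le f_T(\{a\})=f_0(\{a\})$, hence $f_T(T)\le f_0(T\setminus\{a\})+f_0(\{a\})$, so the additive surplus is bounded by a single singleton value; to turn this into a polynomial multiplicative gap you would need $f_0(T)\lesssim f_0(\{a\})$, which by monotonicity forces $f_T(T)/f_0(T)=O(1)$. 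Softening ``supersets'' to ``sets with $|S\cap T|\ge (1-\gamma)|T|$'' does not help, by the same one-element telescoping argument at the boundary. The coverage/matroid-rank ``unlock'' you gesture at also cannot deliver a bonus that is invisible on subsets of $T$: in a coverage function the marginal contributions of the elements of $T$ are already visible on singletons, so $f_T$ and $f_0$ would leak on small queries, defeating the ``only near-supersets of $T$ matter'' premise. In short, the two requirements you listed as (a) and (b) --- a large planted gap at $T$ and agreement off near-supersets of $T$ --- are jointly unsatisfiable under submodularity, and the union bound over $q\ll\binom{n}{k}$ queries never gets off the ground.

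The paper's construction sidesteps this entirely and is qualitatively different. It plants a \emph{large} random set $X\subseteq N$ (each element included i.i.d.\ with probability $n^{-1/2+\delta}$, so $|X|\approx n^{1/2+\delta}$) and uses two affine normalized functions
$f_1(S)=\min\{|S\cap X|\cdot n^{1/2}+n^{1/2+\delta}/\epsilon,\; |S|\cdot n^{1+\delta}\}$ and $f_2(S)=\min\{|S|\cdot n^{\delta}+n^{1/2+\delta}/\epsilon,\; |S|\cdot n^{1+\delta}\}$, which differ on essentially \emph{every} nonempty set. Indistinguishability is not combinatorial (``the algorithm never hits $T$'') but probabilistic: for any \emph{fixed} query $S$, a Chernoff bound shows $|S\cap X|$ concentrates around $|S|\cdot n^{-1/2+\delta}$ so that $(1-\epsilon)f_1(S)\le f_2(S)\le(1+\epsilon)f_1(S)$ with probability $1-e^{-\Omega(n^\delta)}$, and the adversarial oracle simply reports $f_2(S)$ whenever that is allowed. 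The per-query failure probability $e^{-\Omega(n^\delta)}$ then absorbs a union bound over $e^{n^\delta}/n$ queries, and a one-query-to-many-query reduction finishes the argument. So the submodularity tension you identified is resolved not by making $f_T$ flat off supersets of $T$ but by letting the two instances differ everywhere while hiding that difference in the multiplicative error window via concentration --- a mechanism your proposal does not contain.
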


Since desirable guarantees are generally impossible with erroneous oracles, we seek natural relaxations of the problem.  The first could be to consider stricter classes of functions.  It is trivial to show for example, that \emph{additive} functions (i.e. $f(S)=\sum_{a\in S}f(a)$) allow us to obtain a $\frac{1-\epsilon}{1+\epsilon}$ approximation when given access to $\epsilon$-erroneous oracles.
Unfortunately, it seems like there are not many interesting classes of submodular functions that enjoy these properties.  In fact,
our impossibility result applies to very simple affine functions, and even coverage functions like the example in Appendix~\ref{sec:examples}.  An alternative relaxation is to consider error models that are not necessarily adversarial.

\paragraph{Noisy oracles.}  We can equivalently say that ${\widetilde{f}:2^{N} \to \mathbb{R}}$ is $\epsilon$-erroneous if for every $S \subseteq N$ we have that $\widetilde{f}(S) = \xi_{S}f(S)$ for some $\xi_{S} \in [1-\epsilon,1+\epsilon]$.  The lower bound stated above applies to the case in which the error multipliers $\xi_{S}$  are adversarially chosen.  A natural question is whether some relaxation of the adversarial error model can lead to possibility results.

\begin{definition*}\label{noisyOracle}
For a function $f:2^{N}\to \mathbb{R}$ we say that $\widetilde{f}:2^{N}\to \mathbb{R}$ is a \emph{\textbf{noisy}} oracle if there exists some distribution $\mathcal{D}$ s.t. $\widetilde{f}(S) = \xi_{S}f(S)$ where $\xi_{S}$ is independently drawn from $\mathcal{D}$ for every $S \subseteq N$.
\end{definition*}

Note that the noisy oracle defined above is \emph{consistent}: for any $S \subseteq N$ the noisy oracle returns the same answer regardless of how many times it is queried.  When the noisy oracle is inconsistent, mild conditions on the noise distribution allow the noise to essentially vanish after logarithmically-many queries, reducing the problem to standard submodular maximization (see e.g.~\cite{KKT03,STK16}).  Consistency implies that the noise is arbitrarily correlated for a given set in different time steps, but i.i.d between different sets.  In fact, we will later generalize the model to the case in which $\xi_{S}$ and $\xi_{T}$ are i.i.d only when $S$ and $T$ are sufficiently far, and arbitrarily correlated otherwise (see Section~\ref{sec:applications}).  At this point, we are interested in identifying a natural non worst-case model of corrupted or approximately submodular functions that is amendable to optimization.

We will be interested in a class of distributions that avoids trivialities like  $\mathcal{D} \subseteq \{0\}$ and is yet general enough to contain natural distributions.  In this paper we define a class which we call \emph{generalized exponential tail} distributions that contains Gaussian, Exponential, and distributions with bounded support which are independent of $n$ (o.w. optimization is impossible, see Appendix~\ref{sec:distributions}).  Note that optimization in this setting always requires that $n$ is sufficiently large. For example, if for every $S$ the noise is s.t. $\xi_{S} = 2^{100}$ with probability $1/ 2^{100}$ and $0$ otherwise, but $n = 50$, it is likely that the noisy oracle will always return $0$, in which case we cannot do better than selecting an element at random. Throughout the paper we assume that $n$ is sufficiently large.
\begin{definition*}
A noise distribution $\mathcal{D}$ has a \textbf{generalized exponential tail} if there exists some $x_0$ such that for $x > x_0$ the probability density function $\pdf(x) = e^{-g(x)}$, where $g(x) = \sum_{i} a_ix^{\alpha_i}$. We do not assume that all the $\alpha_i$'s are integers, but only that $\alpha_0 \ge \alpha_1 \ge \ldots$, and that $\alpha_0 \ge 1$. If $\mathcal{D}$ has bounded support we only require that either it has an atom at its supremum, or that $\pdf$ is continuous and non zero at the supremum.
\end{definition*}
For simplicity, one can always consider the special case where $\mathcal{D} \subseteq [1-\epsilon,1+\epsilon]$, which implies that two sets whose true values are close will remain close in the noisy evaluation.  Even when the noise distribution is uniform in $[1-\epsilon,1+\epsilon]$ it is easy to show that the greedy algorithm fails (see Appendix~\ref{sec:examples}).  The question is whether provable guarantees are achievable in this model.

\subsection{Main result}\label{sec:intro_main}
Our main result is that for the problem of optimizing a monotone submodular function under a cardinality constraint, near-optimal approximations are achievable under noise.

\begin{theorem*}
For any monotone submodular function there is a polynomial-time algorithm which optimizes the function under a cardinality constraint $k> 2$ and obtains an approximation ratio that is w.h.p arbitrarily close to $1-1/e$ using access to a generalized exponential tail noisy oracle of the function.
\end{theorem*}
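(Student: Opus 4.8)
The plan is to keep the overall skeleton of greedy but to replace each single noisy marginal query by a robust estimation subroutine that exploits the one resource the model still gives us: the noise multipliers are independent across \emph{distinct} sets. I would first isolate the reduction. By the textbook analysis, if at every step $t=1,\dots,k$ the algorithm augments its current solution $\MS$ by an element $a$ with $f_{\MS}(a)\ge(1-\gamma)(\OPT-f(\MS))/k$, then $f(\MS)\ge(1-1/e-O(\gamma))\OPT$ after $k$ steps. So it suffices to build, for a fixed $\MS$, a polynomial-time routine that w.h.p.\ either outputs such an element or certifies $f(\MS)\ge(1-1/e-\gamma)\OPT$; a union bound over the $k$ steps then closes the argument.

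Second, I would construct a robust value estimator for a fixed set $T$, to be applied to $T=\MS$ and to the candidates $T=\MS\cup\{a\}$. The idea is that for distinct auxiliary elements $b$ the sets $T\cup\{b\}$ carry i.i.d.\ noise, so the multiset $\{\widetilde f(T\cup\{b\})\}_b$ equals $\{\xi_b\,f(T\cup\{b\})\}_b$ with $\xi_b$ i.i.d.\ from $\DD$; whenever $f(T\cup\{b\})\approx f(T)$ for most $b$, this multiset is, up to a $(1\pm\gamma)$ factor, an i.i.d.\ sample from $\DD$ scaled by $f(T)$, and matching its order statistics against $\DD$ recovers $f(T)$. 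Here the generalized exponential tail hypothesis does real work: it guarantees that the quantiles of $\DD$ are bounded away from its tail, so that a mild number of samples certifies the scaling factor and the rare huge values of $\xi$ cannot distort the match; concretely a large constant multiple of $\log n$ samples gives a $(1\pm\gamma)$ estimate with failure probability $1/\mathrm{poly}(n)$ by a Glivenko--Cantelli / DKW-type bound. This is why the gadget is built on $\smallell$ auxiliary elements with all $2^{\smallell}=\mathrm{polylog}(n)$ of their unions queried, which both supplies enough samples and, through the submodularity pattern across the $2^{\smallell}$ values, lets us detect and discard the $b$'s that do not have small marginal; I would present the gadget output as a band $[\vmin,\vmax]$ sandwiching $f(T)$ within $(1\pm\gamma)$.

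The crux, and the step I expect to be the main obstacle, is that the marginal estimate $f_{\MS}(a)=f(\MS\cup\{a\})-f(\MS)$ is a difference of two such bands, and its relative error blows up exactly when $f_{\MS}(a)\ll f(\MS)$ --- the same regime that defeats naive greedy, and the same obstruction behind the $1/2$ lower bound for $k=1$; moreover the estimator for $f(\MS\cup\{a\})$ can be biased upward precisely for a candidate $a$ many of whose would-be auxiliary elements themselves have large marginal. To control both effects I would run the greedy in blocks of $\smallell$ steps: inside a block, score each candidate $a$ by an aggregate $\score(a)$ assembled from the gadget values and threshold into $\good$ (score above $\mg$) and $\bad$ (score below $\mb$), chosen to yield a clean dichotomy. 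Either a constant fraction of the per-step marginal gains available in the block are $\Omega((\OPT-f(\MS))/k)$, in which case the scoring provably separates $\good$ from $\bad$ and adding a $\good$ element is safe (and when the $\good$ set is large a random pick suffices); or the best $\smallell$ marginals are all dominated by $f(\MS)$, in which case submodularity forces $\OPT-f(\MS)$ to be a vanishing fraction of $f(\MS)$ and the certification branch fires. The remaining work is purely quantitative: tuning $\gamma$, $\mg$, $\mb$, and the block length so that the $O(\gamma)$ slack accumulated over the $k/\smallell$ blocks still leaves the final ratio at $1-1/e-O(\gamma)$, and taking a union bound over the $O(nk)$ gadget invocations, each failing with probability $1/\mathrm{poly}(n)$.
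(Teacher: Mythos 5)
Your gadget's structure (an auxiliary set of $\ell$ elements, all $2^{\ell}$ unions queried) resembles the paper's smoothing neighborhood, but the analysis has a genuine gap at exactly the step you flag as the crux, plus a modeling issue. On modeling: "matching order statistics against $\DD$" presupposes the algorithm knows $\DD$; the paper's algorithms are deliberately oblivious to the noise distribution and use its properties only in the analysis. On the crux: you estimate $f(\MS)$ and $f(\MS\cup\{a\})$ each within $(1\pm\gamma)$ multiplicatively and take the difference, but a $(1\pm\gamma)$ band on two absolute values leaves additive slack $\Theta(\gamma f(\MS))$ on the marginal, which dwarfs $f_{\MS}(a)$ as soon as $f(\MS)=\Theta(\OPT)$ — the regime in which greedy spends most of its iterations. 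The block-and-threshold dichotomy does not close this: the branch "the scoring provably separates $\good$ from $\bad$" is asserted rather than argued, and it is not clear how any relative-error estimator of absolute values can distinguish two candidates whose marginals are both of order $\OPT/(100k)$ but differ by a constant factor once $f(\MS)$ is already a constant fraction of $\OPT$. The certification branch only fires when the best remaining marginal is $o(f(\MS)/k)$, leaving a wide uncovered middle.

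The idea you are missing is that the paper never tries to estimate $f$ or its marginals. It \emph{commits the auxiliary set $H$ to the output} and compares noisy surrogate values $\widetilde F(S\cup a)=\tfrac{1}{2^{\ell}}\sum_{H'\subseteq H}\widetilde f(S\cup H'\cup a)$ directly across candidates; the combinatorial lemma that $F_S(a)\ge F_S(b)$ implies $f_S(a)\ge f_{S\cup H}(b)$ (Claim~\ref{lem:black_and_white}) means the per-step greedy guarantee is measured against the \emph{shifted} benchmark $\texttt{OPT}_H=\max_{|T|\le k-\ell}f_H(T)$, not against $\OPT$. This shift is precisely what absorbs the imprecision your difference-of-bands argument hits head on. The loss from benchmarking against $\texttt{OPT}_H$ is then removed by running \textsc{Smooth-Greedy} with $1/\delta$ disjoint smoothing sets and a \textsc{Smooth-Compare} tournament, exploiting that some $H_l$ contributes at most a $\delta$-fraction to the union of the others. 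None of these three moves — include $H$ in the solution, benchmark against $\texttt{OPT}_H$, tournament over smoothing sets — appears in your proposal, and they are where the ratio actually comes from. Finally, the theorem covers all $k>2$, including constants; $\ell=\Theta(\log\log n)$ auxiliary elements cannot be spared when $k$ is small, and the paper needs two further, genuinely different algorithms (\textsc{SM-Greedy} with $O(1/\epsilon)$-bundles, and a singleton-smoothing scheme for constant $k$) that your single scheme does not subsume.
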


This proof is a summary of three results, each for a different regime of $k$.  For any $\epsilon>0$ we show:
\begin{itemize}
\item \textbf{$1-1/e-\epsilon$ guarantee for large $k$:} we say that $k$ is \emph{large} when $k\in \Omega( \log \log n/\epsilon^2)$.  For $k$ that is sufficiently larger than $\log \log n/\epsilon^2$ we give a deterministic algorithm which obtains a $(1-1/e-\epsilon)$ approximation guarantee w.h.p over the noise distribution;

\item \textbf{$1-1/e-\epsilon$ guarantee for small $k$:}  we say that $k$ is \emph{small} when $k\in O(\log \log n)\cap \Omega(1/\epsilon)$.  In this regime the problem is surprisingly harder.  We give a different deterministic algorithm which achieves the coveted $(1-1/e-\epsilon$) guarantee, w.h.p. over the noise distribution;

\item \textbf{Guarantees for very small $k$:}  We say that $k$ is \emph{very small} when it is an arbitrarily small constant.  For this case we give a randomized algorithm whose approximation ratio is $1-1/k-\epsilon$ w.h.p. over the randomization of the algorithm and the noise distribution.  Note that this gives $1-1/e-\epsilon$ for any $k>2$, and $1/2-\epsilon$ for $k=2$.  We also give a $k/(k+1)$ approximation which holds in expectation over the randomization of the algorithm. This achieves $1-1/e$ for $k=2$ and $1/2$ for $k=1$.  For $k=1$ no randomized algorithm can obtain an approximation ratio better than $1/2+O(1/\sqrt n)$ and $(2k - 1)/2k + O(1/\sqrt{n})$ for general $k$.
\end{itemize}
At their core, the algorithms are variants of the classic greedy algorithm.   In the presence of noise, greedy fails since it cannot identify the set whose value is maximal in each iteration.  To handle noise, we apply a natural approach we call \emph{smoothing}.  In general, by selecting a family of sets $\mathcal{H}$ we can define a surrogate function $F(S)=\sum_{H'\in \mathcal{H}}f(S \cup H')$ and its noisy analogue $\widetilde{F}(S)=\sum_{H'\in \mathcal{H}}\widetilde{f}(S \cup H')$ which we can evaluate.  Intuitively, when $\mathcal{H}$ is sufficiently large and chosen appropriately, submodularity and monotonicity can be used to argue that $\widetilde{F}(S) \approx F(S)$.  Thus, smoothing essentially makes the noise disappear and instead leaves us to deal with the implications of optimizing with the surrogate $F$ rather than $f$.  In that sense, a large part of the challenge is in using optimization over the surrogate $F$ to approximate the optimum over $f$, i.e.:
\begin{itemize}
\item \textbf{Large $k$.}
In this regime, we first define $\textsc{Smooth-Greedy}$ which takes an arbitrary set $H$ of size $\log\log n$ and runs the greedy algorithm with the surrogate $\widetilde{F}= \sum_{H'\subseteq H}\widetilde{f}(T\cup H')$ on $N\setminus H$.  In the analysis we show that its output together with $H$ is arbitrarily close to $1-1/e$ of the optimal solution evaluated on $f_{H}$ (not $f$).  The \textsc{Slick-Greedy} algorithm runs multiple instantiations of a slightly modified version of \textsc{Smooth-Greedy} with different smoothing sets, and obtains a guarantee arbitrarily close to $1-1/e$ of the true optimum;

\item \textbf{Small $k$.}  In this regime, we use a modified version of greedy which adds a bundle of $O(1/\epsilon)$ elements in each iteration.  For each such bundle $B$ we define a surrogate $\widetilde{F}$ with a smoothing neighborhood of elements which are at distance $2$ on the $\{0,1\}^n$ hypercube from $B$.  In each iteration $\textsc{SM-Greedy}$ identifies the bundle $A$ which maximizes $\widetilde{F}$, but doesn't take it.  Taking a random bundle $\hat{A}$ from the smoothing neighborhood of $A$ gives the $1-1/e$ guarantee but \emph{in expectation}.  To obtain the result w.h.p. $\textsc{SM-Greedy}$ takes the bundle $\hat{A}$ which maximizes $\widetilde{f}(B)$, over all bundles $B$ in the smoothing neighborhood of $A$.  The analysis is then quite technical and strongly leverages the properties of the noise distribution and that $k \in O(\log\log n)$.  It is for this reason it is crucial that $\textsc{Slick-Greedy}$ applies to $k \in \Omega(\log\log n)$;

\item \textbf{Very small $k$.} In this case we consider bundles of size $k$ and smoothing with singletons.
\end{itemize}

\subsection{Extensions}
One of the appealing aspects of the noise model and the algorithms, is that they can easily be extended to a rich variety of related models.  In Section~\ref{sec:extensions} we discuss application to additive noise, marginal noise, correlated noise, information degradation, and approximate submodularity, .

\subsection{Applications}\label{sec:applications}

\begin{itemize}

\item \textbf{Optimization under noise.}  When considering optimization under noise, queries can be independent or correlated in \emph{time} and in \emph{space}.  For $f:2^N \to \mathbb{R}$ the noisy oracle is defined as $\widetilde{f}(S) = \xi_{S}(t) f(S)$ where $\xi_{S}(t) \sim \mathcal{D}$, for every step the oracle is queried $t \in \mathbb{N}$ and $S \subseteq N$.
\begin{definition*}
Noise is \textbf{i.i.d in time} if $\xi_{S}(t)$ and $\xi_{S}(t')$ are independent for any $t\neq t' \in \mathbb{N}$ and $S\subseteq N$.  Similarly, we can say that noise is \textbf{i.i.d in in space} if $\xi_{S}(t)$ and $\xi_{T}(t')$ for any $S\neq T$ and $t,t' \in \mathbb{N}$.  The noise distribution is \textbf{correlated in time (space)} if it is not independent in time (space).
\end{definition*}
The case in which the oracle is inconsistent is one where the noise is i.i.d in time and in space.  From an algorithmic perspective this problem is largely solved, as discussed above.  From Theorem $\ref{thm:adversarial}$ we know that there is no poly-time approximation algorithm for the case in which the errors are arbitrarily correlated in time and in space, even when the support of the noise distribution is arbitrarily small.  The model we describe assumes the noise is \emph{arbitrarily} correlated in time, but i.i.d in space.  In Section~\ref{sec:extensions} we show how one can relax this assumption.  In particular, we show how to generalize the algorithms to obtain approximation ratios arbitrarily close to $1-1/e$ in a noise model where $\xi_{S}(t)$ and $\xi_{T}(t')$ are arbitrarily correlated in time and in space for any $t,t' \in \mathbb{N}$ and $S,T$ for which $|S\triangle T| \in O(\sqrt{k})$ when $k\in \Omega(\log\log n)$ and $|S\triangle T| \in O(1)$ when $k \in O(\log\log n)$.  To the best of our knowledge, this is the first step towards studying submodular optimization under any correlation. 

\item \textbf{Maximizing approximately submodular functions.}  There are cases where one may wish to optimize an \emph{approximately} submodular function.  Theorem~\ref{thm:adversarial} implies that being arbitrarily close to a submodular function is not sufficient.  In statistics and learning theory, to model the fact that data is generated by a function that is approximately in a class of well behaved functions, the function generating the data $\widetilde{f}$ is typically assumed to be a noisy version of a function $f$ from a well-behaved class of functions~\cite{ESL09,W2010,SB14}:
$$\widetilde{f}(\mathbf{x}) = f(\mathbf{x}) + \xi_{\mathbf{x}}, $$
where $\xi_{\mathbf{x}}$ is an i.i.d sample drawn from some distribution $\mathcal{D}$.  In regression problems for instance, one assumes that the data is generated by $\widetilde{f}(\mathbf{x}) = \mathbf{w}^\intercal \mathbf{x} +\xi_{\mathbf{x}}$.  This model captures the idea that some phenomena may not exactly behave in a linear manner, but can be approximated by such a model.  Making a good prediction then involves optimizing the noisy model.  This therefore seems like a natural model to study approximate submodularity, especially in light of Theorem~\ref{thm:adversarial}.  Notice that in this case we would be interested in the optimization problem: $\max_{S:|S|\leq k}\widetilde{f}(S)$.  In Section~\ref{sec:extensions} we describe a black-box reduction which allows one to use the algorithms described here to get optimal guarantees.

\item \textbf{Active learning.}  In \emph{active learning} one assumes a membership oracle that can be queried to obtain labeled data~\cite{Angluin1988}.  In noise-robust learning, the task is to get good approximations to the noise-free target $f$ when the examples are corrupted by some noise.  In this model the assumption is that noise is \emph{consistent and i.i.d}, exactly as in our model.  That is, we observe $\tilde{f}(\mathbf{x}) + \xi_{\mathbf{x}}$ where $\mathbf{x}$ is drawn i.i.d from $\mathcal{D}$ and multiple queries return the same answer (see e.g.~\cite{GKS90a,J94,ss-lesqrpl-95,JSS99,BF02,DF09}).  Our results apply to additive noise, and thus apply to active learning with noisy membership queries of submodular functions.  One example application of active learning where the function is submodular is experimental design~\cite{KSG08,KMGG07,HIM14}.

\item \textbf{Learning and sketching.}  In learning and sketching the goal is to generate a surrogate function which approximates the submodular function well (see e.g. ~\cite{GHIM09,DBH11,BCIW12-valuations,BDFKNR12-sketches,FKV13-submodular_trees,FV13-submodular_juntas,DLBS14_a,DLBS14_b,FK14-coverage,FV15-submodular_polynomials,Balcan15-AAMAS}).  Theorem~\ref{thm:adversarial} implies that a surrogate which approximates a submodular function arbitrarily well may be inapproximable.  Our main result shows that if when sets are sufficiently far the surrogate approximates the function via independent noise, then one can use the surrogate for optimization.  This can therefore be used as a stricter benchmark for learning and sketching which allows optimizing a function learned or sketched from data.
\end{itemize}

\subsection{Paper organization}
The main technical contribution of the paper is the algorithms for the three different regimes of $k$.  The exposition of the algorithms is contained in sections~\ref{sec:largek},~\ref{sec:smallk}, and~\ref{sec:very_smallk}, which can be read independently from each other.  For each algorithm, we suppress proofs and additional lemmas to the corresponding section in the appendix.  All the algorithms employ smoothing arguments which can be found in Appendix~\ref{sec:appendix_smoothing}.  The smoothing arguments are used as a black-box in the proofs of each algorithm, and are not required for reading the main exposition.
In Section~\ref{sec:extensions} we discuss extensions of the algorithms to related models.  In Section~\ref{sec:adversarial} we prove the result for adversarial noise.  Discussion about additional related work is in Section~\ref{sec:related}.

\newpage \section{Optimization for Large $k$}\label{sec:largek}
In this section we describe the \textsc{Slick-Greedy} algorithm whose approximation guarantee is arbitrarily close to $1-1/e$ for sufficiently large $k$.  The algorithm is deterministic and for any desired degree of accuracy $\epsilon>0$ can be applied when the cardinality constraint $k$ is in $ \Omega(\log\log n/\epsilon^2)$, or more specifically when $k\geq 3168\log\log n/\epsilon^2$.  We first describe and analyze the \textsc{Smooth-Greedy} algorithm.  This algorithm is then used as a subroutine by the \textsc{Slick-Greedy} algorithm.

\subsection{The Smooth Greedy Algorithm}
We begin by describing the smoothing technique used by \textsc{Smooth-Greedy}.  We select an \emph{arbitrary} set $H$ and for a given element $a$, the smoothing neighborhood is simply ${\mathcal{H} = \{ H' \subseteq H \ : \ H' \cup a \}}$.  Throughout the rest of this section we assume that $H$ is an arbitrary set of size $\ell$, where $\ell$ depends on $k$.  
 In the case where $k\geq 2400\log n$ we will use $\ell = 25\log n$, and when $k< 2400\log n$ we will use $\ell = 33\log\log n$~
\footnote{W.l.o.g. we assume that $k<n-25\log n$ as for sufficiently large $n$ this then implies that $k\geq (1-\epsilon)n$ and by submodularity optimizing with $k'=n-25\log n$ suffices to get the $1-1/e - \epsilon$ guarantee for any fixed $\epsilon>0$.}.
The precise choice for $\ell$ will become clear later in this section.  Intuitively, $\ell$ is on the one hand small enough so that we can afford to sacrifice $\ell$ elements for smoothing the noise, and on the other hand $\ell$ is large enough so that taking all its subsets gives us a large smoothing neighborhood which enables applying concentration bounds.
\begin{definition*}
For a set $S\subseteq N$ and some fixed set $H\subseteq N$ of size $\ell$, we use $H^{(1)},\ldots,H^{(t)}$ to denote all the subsets of $H$ and $k' = k -\ell$.
The \textbf{smooth value}, \textbf{noisy smooth value} and \textbf{smooth marginal contribution} are, respectively:
\begin{align*}
&(1)\ &   F(S\cup a)  			& := \ &\mathbb{E} \left [  f(S \cup (H^{(i)} \cup a)  \right ]  			& = \ & \frac{1}{t} \sum_{i=1}^{t}  f  \left ( S \cup (H^{(i)} \cup a) \right ) &;&&\\
&(2) \ &  \widetilde{F}(S\cup a)  	& := \ &\mathbb{E} \left [  \widetilde{f}(S \cup (H^{(i)} \cup a)  \right ]  	& = \ &\frac{1}{t} \sum_{i=1}^{t}  \widetilde{f}  \left (S\cup (H^{(i)}  \cup a) \right )&;&&\\
&(3) \ & F_{S}(a)   		& := \ & \mathbb{E} \left [ f_S((H^{(i)} \cup a   ) ) \right ] 				& = \ & \frac{1}{t} \sum_{i=1}^{t} f_S \left (H^{(i)} \cup a \right )&.\hspace{0.2in}&&
\end{align*}
\end{definition*}
\subsubsection{The algorithm}
The smooth greedy algorithm is a variant of the standard greedy algorithm which replaces the procedure of adding $\argmax_{a \in N}f(S \cup a)$ with its smooth analogue.  The algorithm receives a set of elements $H$ of size $\ell$, initializes $S=\emptyset$ and at every stage adds to $S$ the element $a\notin H$ for which the smooth noisy value $\widetilde{F}(S\cup a)$ is largest.  A formal description is added below.
\begin{algorithm}
\caption{\textsc{Smooth-Greedy}}
\label{alg:Smooth-Greedy}
\begin{algorithmic}[1]
\INPUT budget $k$, set ${H}$
	\STATE $S \leftarrow \emptyset$
\WHILE {$|S| <k - |{H}|$}
	\STATE $ S \leftarrow S \cup \arg\max_{a\notin H} \widetilde{F}(S \cup a)$
\ENDWHILE
\RETURN $S$
\end{algorithmic}
\end{algorithm}
\paragraph{Overview of the analysis.}
At a high level, the idea behind the analysis is to compare the performance of the solution returned by the algorithm against an optimal solution which ignores the value of $H$ and any of its partial substitutes. 
More specifically, let \texttt{OPT} denote the value of the optimal solution with $k$ elements evaluated on $f$ and $\texttt{OPT}_{H}$ denote the value of the optimal solution with $k' = k-\ell$ elements evaluated on $f_{H}$, where $f_{H}(T) = f(T\cup H) - f(H)$.  Essentially, we will show that at every step \textsc{Smooth-Greedy} selects an element whose marginal contribution is larger than that of an element from the optimal solution evaluated on $f_{H}$ (we illustrate this idea in Figure~\ref{fig:combinatorial}).  Together with an inductive argument this suffices for a constant factor approximation.

\paragraph{Relevant iterations.}  One of the artifacts of noise is that our comparisons are not precise.  Specifically, when we select an element that maximizes $\widetilde{F}(S\cup a)$, our smoothing guarantee will be that this element respects ${F_S(a) \geq (1-\delta)\max_{b\notin H}F_S(b)}$ for $\delta>0$ that depends on $\epsilon$ and $k$.  This can be guaranteed only for an iteration where two conditions are met: (i) there is at least a single element not yet selected (and not in $H$) whose marginal contribution is at least $\epsilon/k$ fraction of $\texttt{OPT}_H$, and (ii) $\texttt{OPT}_{H}$ is sufficiently large in comparison to $\texttt{OPT}$.  We call such iterations \emph{$\epsilon$-relevant}.  

\begin{definition*}
For a given iteration of {\textsc{Smooth-Greedy}} let $S$ be the set of elements selected in previous iterations.  The iteration is ${\textbf{$\epsilon$-relevant}}$ if (i) $\max_{b \notin H}f_{H \cup S}(b) \geq \frac{\epsilon \cdot \texttt{OPT}_{H}}{k}$ and (ii) $\texttt{OPT}_H \geq \frac{\texttt{OPT}}{e}$.
\end{definition*}
We will analyze \textsc{Smooth-Greedy} in the case where the iterations are $\epsilon$-relevant as it allows applying the smoothing arguments.  In the analysis we will then ignore iterations that are not $\epsilon$-relevant at the expense of a negligible loss in the approximation guarantee.  The main steps are:
\begin{enumerate}
\item In Lemma~\ref{lem:boundona} we show that in each $\epsilon$-relevant iteration the (non-noisy) smooth marginal contribution of the element selected in that iteration by the algorithm is w.h.p. an arbitrarily good approximation to $\max_{b \notin H}F_{S}(b)$.  To do so we need claims~\ref{lem:black_and_white},~\ref{clm:boundedvariance} and~\ref{claim:realslimshady};
\item Next, in Claim~\ref{lem:bound} we show that the element $a$ whose smooth marginal contribution $F_{S}(a)$ is maximal has true marginal contribution $f_{S}(a)$ that is roughly a $k'${th} fraction of the marginal contribution of the optimal solution over $f_{H}$;
\item Finally, in Lemma~\ref{lem:opt_h} we apply a standard inductive argument to show that the fact that the algorithm selects an element with large smooth value in each step results in an approximation arbitrarily close to $1-1/e$ to $\texttt{OPT}_{H}$ (not \texttt{OPT}).  In Corollary~\ref{cor:constant} we show that the bound against $\texttt{OPT}_{H}$ can already be used to give a constant factor approximation to $\texttt{OPT}$.  To get arbitrarily close to ${1-1/e}$, \textsc{Slick-Greedy} executes multiple instantiations of a generalization of \textsc{Smooth-Greedy} as later described in Section~\ref{sec:sl}.
\end{enumerate}

\begin{figure*}[t]
\begin{centering}
                \includegraphics[trim = 0mm 0mm 0mm 0mm, height=40mm]{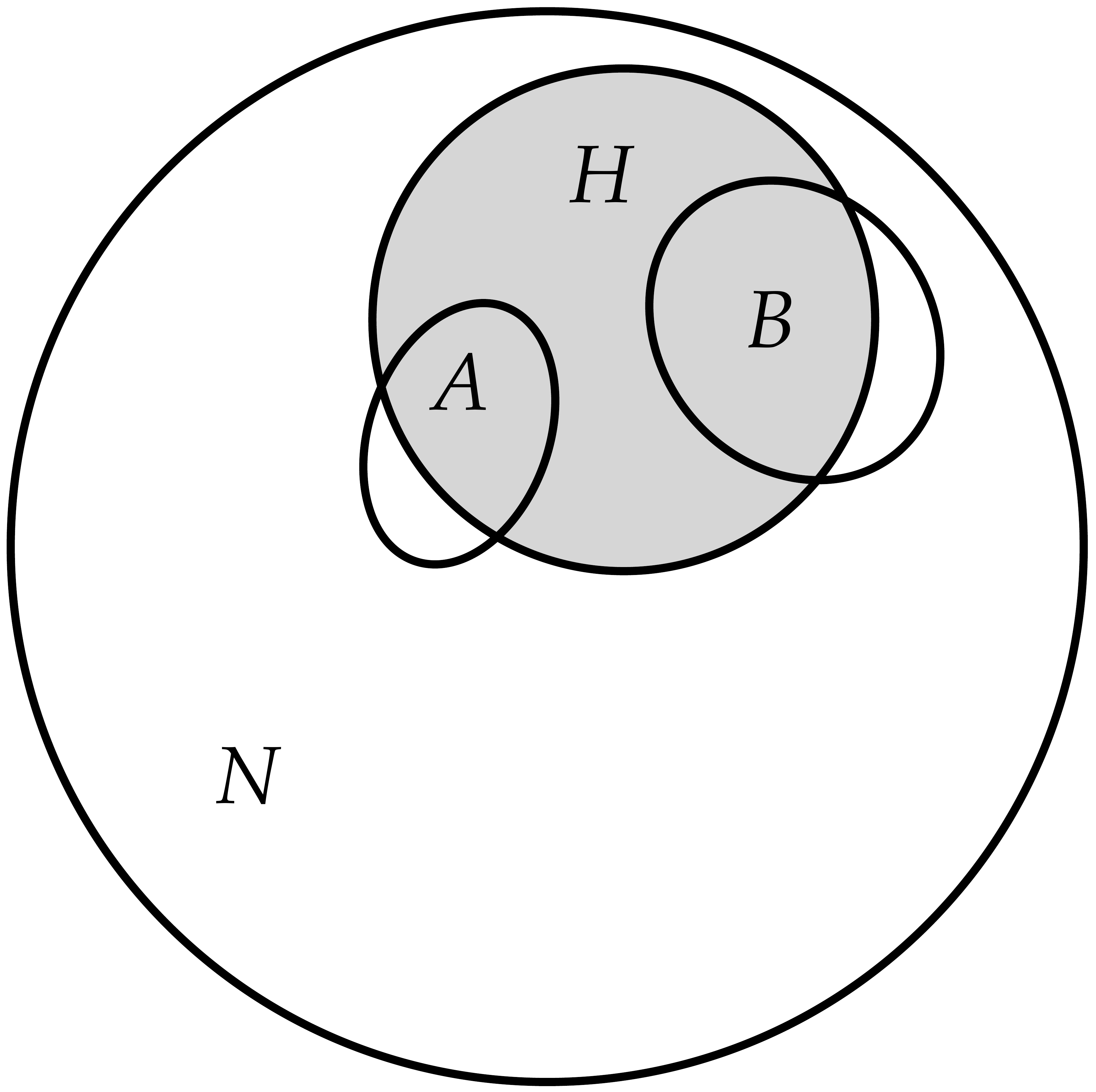}
                 \caption{\footnotesize{An illustration of Claim~\ref{lem:black_and_white} applied on a coverage function.  The set of all elements $N$ and $A,B,H \subset N$ are depicted as circles that illustrate the area of the universe they cover.  Claim~\ref{lem:black_and_white} essentially says that if we select $A$ rather than $B$ this means that the total area $A$ covers (white and grey) must be larger than the white-only (i.e. universe not covered by $H$) of $B$.  Stated in these terms, we use this idea to analyze the performance of \textsc{Smooth-Greedy} evaluated on the white and grey area against the optimal solution evaluated on the white-only area.}}\label{fig:combinatorial}
                 \end{centering}
\end{figure*}

\subsubsection{Smoothing guarantees}
The first step is to prove Lemma \ref{lem:boundona}.  This lemma shows that at every step as \textsc{Smooth-Greedy} adds the element that maximizes the noisy value $\argmax_{a\notin H} \widetilde{F}(S\cup a)$, that element nearly maximizes the (non-noisy) smooth marginal contribution $F_{S}$, with high probability.

\begin{lemma}\label{lem:boundona}
For any fixed $\epsilon >0$, consider an $\epsilon$-relevant iteration of \textsc{Smooth-Greedy} where $S$ is the set of elements selected in previous iterations and ${a \in \arg\max_{b \notin H}\widetilde{F}(S\cup b)}$.  Then for $\delta=\epsilon^2/4k$ and sufficiently large $n$ we have that w.p. $\geq 1-1/n^4$:
$$F_S(a) \geq (1-\delta) \max_{b \notin H}F_{S}(b).$$
\end{lemma}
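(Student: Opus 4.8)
The plan is to show that the noisy smooth value $\widetilde{F}$ is a multiplicatively tight proxy for the true smooth value $F$ on all the relevant candidate sets, so that maximizing $\widetilde{F}$ forces near-maximization of $F$. The point of the smoothing is that $\widetilde{F}(S\cup a) = \frac{1}{t}\sum_{i=1}^{t}\xi_{S\cup H^{(i)}\cup a}\, f(S\cup H^{(i)}\cup a)$ is an average of $t$ independent noise-perturbed values; since $t = 2^{\ell}$ is large (polynomial or polylog in $n$), concentration of this average around its mean will be the engine of the proof. The key subtlety is that the summands $f(S\cup H^{(i)}\cup a)$ are not all equal, so I cannot directly apply a scalar concentration bound; instead I need to control how ``spread out'' the summands are, and this is exactly where monotonicity and submodularity enter.

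First I would set up the relevant quantities: write $v_i = f(S\cup H^{(i)}\cup a)$, so that $F(S\cup a) = \frac{1}{t}\sum_i v_i$ and $\widetilde{F}(S\cup a) = \frac{1}{t}\sum_i \xi_i v_i$ with the $\xi_i$ i.i.d.\ from $\mathcal{D}$. By monotonicity, $f(S\cup a) \le v_i \le f(S\cup H \cup a)$, and by submodularity each $v_i$ differs from $f(S\cup H\cup a)$ by at most $\sum_{h\in H\setminus H^{(i)}} f_{S\cup a}(h)$; more importantly, I claim the $v_i$'s cannot vary by more than roughly an $\ell$-factor relative to the smooth marginal contribution $F_S(a)$ — this is the content that lets me relate the variance of $\widetilde{F}(S\cup a)$ to $F_S(a)^2$ up to a $\mathrm{poly}(\ell)$ factor. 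The $\epsilon$-relevance hypothesis (i) guarantees $\max_b F_S(b) \ge \epsilon\,\texttt{OPT}_H/k$ is not negligible, and (ii) relates $\texttt{OPT}_H$ to $\texttt{OPT}$, so that the ``signal'' $F_S(a)$ we need to detect is bounded below in a usable way. With $t = 2^{\ell}$ and $\ell = 25\log n$ (resp.\ $33\log\log n$), a Bernstein/Chernoff-type bound — which is precisely the role of Claims~\ref{lem:black_and_white}, \ref{clm:boundedvariance}, and \ref{claim:realslimshady} cited in the overview — gives $\widetilde{F}(S\cup a) = (1\pm\delta')F(S\cup a)$ simultaneously for all $a\notin H$, except with probability $\le 1/n^4$, where $\delta'$ is a small fraction of $\delta = \epsilon^2/4k$.

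Given this uniform approximation, the conclusion is a one-line argument: if $a \in \arg\max_{b\notin H}\widetilde{F}(S\cup b)$ and $b^\star \in \arg\max_{b\notin H}F_S(b)$, then
\[
F(S\cup a) \ge \frac{1}{1+\delta'}\widetilde{F}(S\cup a) \ge \frac{1}{1+\delta'}\widetilde{F}(S\cup b^\star) \ge \frac{1-\delta'}{1+\delta'}F(S\cup b^\star),
\]
and subtracting the common term $F(S) = \frac{1}{t}\sum_i f(S\cup H^{(i)})$ from $F(S\cup a)$ and $F(S\cup b^\star)$ — together with the lower bound on $F_S(b^\star)$ from $\epsilon$-relevance to absorb the additive slack from the multiplicative errors on $F(S)$ itself — yields $F_S(a) \ge (1-\delta)\max_{b\notin H}F_S(b)$ after choosing $\delta'$ appropriately in terms of $\epsilon$ and $k$.

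I expect the main obstacle to be the concentration step, specifically controlling the variance of the average $\frac{1}{t}\sum_i \xi_i v_i$ relative to $F_S(a)^2$ rather than relative to $F(S\cup a)^2$: the $v_i$ are dominated by the (possibly huge) baseline $f(S)$ common to all of them, so naive concentration only gives $\widetilde F(S\cup a)\approx F(S\cup a)$ with additive error scaling like $F(S\cup a)$, which is useless once we pass to marginals. Overcoming this requires the combinatorial insight illustrated in Figure~\ref{fig:combinatorial} — that $f(S\cup H^{(i)}\cup a) - f(S\cup H^{(i)}) \le f(S\cup a) - f(S) \le F(S\cup a)$ is \emph{not} the right bound, but rather a bound phrased in terms of $\texttt{OPT}_H$ and $\texttt{OPT}$ via relevance condition (ii) — so that the noise on the large common baseline, being an average of $t$ independent multipliers, is itself concentrated tightly enough that its fluctuation is small compared to $F_S(a)$. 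This is why both relevance conditions, and the specific choices of $\ell$, are needed, and it is the technical heart deferred to the three cited claims.
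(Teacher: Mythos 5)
Your proposal follows essentially the same route as the paper: use the large smoothing neighborhood $\mathcal H(a)=\{H'\cup a : H'\subseteq H\}$ of size $t=2^\ell$ to concentrate $\widetilde F$ around a noise-free quantity, control the spread of the summands $f(S\cup H^{(i)}\cup a)$ via submodularity, and invoke $\epsilon$-relevance to ensure the signal $F_S(a^\star)$ is large enough (a constant fraction $\epsilon/k$ of $\texttt{OPT}_H$, hence of $\texttt{OPT}$) to dominate the slack introduced by the fluctuations of the large common baseline $f(S)$ and the heavy tails of the neighborhood. The final ``one-line'' chain and the way you invoke relevance to absorb the additive error on $f(S)$ are both in the right spirit.

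The one imprecision worth flagging is the intermediate claim that the concentration step gives $\widetilde F(S\cup a)=(1\pm\delta')F(S\cup a)$ \emph{uniformly} for all $a\notin H$. That two-sided multiplicative statement is strictly stronger than what the paper's machinery (or any argument along these lines) actually delivers, and it cannot hold: for an element $b$ whose smoothing neighborhood has unbounded variation $v_S(\mathcal H(b))$ — say $f_S(b)$ tiny but $f_S(H\cup b)$ close to $\texttt{OPT}$ — the lower-bound side fails, since the fraction $(1-2v_S(\mathcal H(b))\,t^{-1/4})$ degenerates to zero. The paper's actual bounds are asymmetric: a multiplicative lower bound on $\widetilde F(S\cup a^\star)$ for the \emph{true} maximizer $a^\star$ only, which is exactly where Claim~\ref{lem:black_and_white} and Claim~\ref{clm:boundedvariance} enter (they show $v_S(\mathcal H(a^\star))<3k/\epsilon$ in an $\epsilon$-relevant iteration, so the lower bound of Lemma~\ref{lem:noisysmoothing_lowerbound} is non-vacuous), plus a one-sided upper bound on $\widetilde F(S\cup b)$ valid for all $b$ but carrying an \emph{additive} slack of order $t^{-1/4}\alpha_{\max}\le t^{-1/4}\texttt{OPT}$ (Lemma~\ref{lem:boundonb}). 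Claim~\ref{claim:realslimshady} then compares these two one-sided bounds to show that each ``bad'' $b$ with $F_S(b)<(1-\delta)F_S(a^\star)$ loses the noisy comparison against $a^\star$ with probability $1-O(n^{-5})$, and union-bounds over the at most $n$ candidates. Your chain $F(S\cup a)\ge\frac{1}{1+\delta'}\widetilde F(S\cup a)\ge\frac{1}{1+\delta'}\widetilde F(S\cup a^\star)\ge\frac{1-\delta'}{1+\delta'}F(S\cup a^\star)$ is logically equivalent once phrased with the correct additive slack, but presenting it as a two-sided multiplicative equivalence hides exactly where bounded variation is needed and where it is not. A smaller slip: the ``common term'' to subtract from $F(S\cup a)$ to obtain $F_S(a)$ is $f(S)$, not $\frac{1}{t}\sum_i f(S\cup H^{(i)})$; one has $F(S\cup a)=f(S)+F_S(a)$ directly from the definitions.
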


To prove the above lemma we use claims~\ref{lem:black_and_white},~\ref{clm:boundedvariance}, and~\ref{claim:realslimshady}.  The statements and proofs can be found in Appendix~\ref{sec:appendix_largek} and are best understood after reading the smoothing section in Appendix~\ref{sec:appendix_smoothing}.

\subsubsection{Approximation guarantee}
Lemma~\ref{lem:boundona} lets us forget about noise, at least for the remainder of the analysis of $\textsc{Smooth-Greedy}$.  We can now focus on the consequences of selecting an element $a$ which (up to factor $1-\delta$) maximizes $F_{S}$ rather than the true marginal contribution $f_S$.

\begin{claim}\label{claim:newboundona}
For any $\epsilon>0$, let $\delta\leq \epsilon^2/4k$.  Suppose that the iteration is $\epsilon$-relevant and let ${b^\star \in \argmax_{b\notin H } f_{H \cup S}(b)}$.  If $F_{S}(a) \geq (1-\delta)F_S(b^\star)$, then:
$$f_S(a) \geq (1-\epsilon)f_{H\cup S}(b^\star).$$
\end{claim}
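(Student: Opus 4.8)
Lemma~\ref{lem:boundona} has already removed the noise, so this is purely combinatorial: it follows from monotonicity and diminishing returns together with the two conditions defining an $\epsilon$-relevant iteration. The idea is to split off the part of the smooth marginal that does not depend on the added element. Using $f_S(H^{(i)}\cup x)=f_{S\cup H^{(i)}}(x)+f_S(H^{(i)})$,
$$F_S(x)=C+G_S(x),\qquad\text{where }C:=\frac{1}{t}\sum_{i=1}^{t}f_S(H^{(i)}),\quad G_S(x):=\frac{1}{t}\sum_{i=1}^{t}f_{S\cup H^{(i)}}(x),$$
so $C$ is a constant and $G_S(x)$ is an average of marginals. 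Since $\emptyset$ and $H$ are both among $H^{(1)},\ldots,H^{(t)}$, diminishing returns gives $f_{S\cup H}(x)\le f_{S\cup H^{(i)}}(x)\le f_S(x)$ for every $i$, and averaging yields the sandwich $f_{H\cup S}(x)\le G_S(x)\le f_S(x)$ for any $x\notin H\cup S$.

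First I would apply the right half of the sandwich to $a$ and the left half to $b^\star$. Rewriting the hypothesis $F_S(a)\ge(1-\delta)F_S(b^\star)$ in terms of $G_S$ as $G_S(a)\ge(1-\delta)G_S(b^\star)-\delta C$ and chaining,
$$f_S(a)\ \ge\ G_S(a)\ \ge\ (1-\delta)G_S(b^\star)-\delta C\ \ge\ (1-\delta)\,f_{H\cup S}(b^\star)-\delta C .$$
So the whole claim boils down to absorbing the single error term $\delta C$, i.e.\ to proving $(\epsilon-\delta)\,f_{H\cup S}(b^\star)\ge\delta C$.

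This is exactly where both parts of $\epsilon$-relevance are needed. By definition of $b^\star$, condition (i) gives $f_{H\cup S}(b^\star)\ge\epsilon\,\OPT_{H}/k$, and condition (ii) upgrades this to $f_{H\cup S}(b^\star)\ge\epsilon\,\OPT/(ek)$. On the other side, $S$ is disjoint from $H$, so submodularity and monotonicity give $f_S(H^{(i)})\le f(H)$ for each $i$ and hence $C\le f(H)\le\OPT$ (the last step uses $|H|=\ell\le k$); thus $\delta C\le\frac{\epsilon^{2}}{4k}\OPT$. It remains only to check that $(\epsilon-\delta)\cdot\frac{\epsilon}{ek}\ge\frac{\epsilon^{2}}{4k}$, i.e.\ $\delta\le\frac{4-e}{4}\,\epsilon$, which holds since $\delta\le\frac{\epsilon^{2}}{4k}\le\frac{\epsilon}{4}<\frac{4-e}{4}\,\epsilon$ for $0<\epsilon\le1\le k$ (using $4-e>1$).

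The one delicate point is this numerical check: the bound on $f_{H\cup S}(b^\star)$ must be pushed down to $\OPT$ rather than stopping at $\OPT_{H}$, since $C$ is controlled by $\OPT$ and not obviously by $\OPT_{H}$ — and the $1/e$ lost in condition (ii) is affordable only because $4-e>1$, which is precisely what makes the choice $\delta\le\epsilon^{2}/(4k)$ leave enough room. Everything else is bookkeeping with $F_S=C+G_S$ and the pointwise submodularity inequalities; in the picture of Figure~\ref{fig:combinatorial}, $G_S(a)\le f_S(a)$ is ``white $+$ grey for $a$'' and $G_S(b^\star)\ge f_{H\cup S}(b^\star)$ is ``white-only for $b^\star$''.
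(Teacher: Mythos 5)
Your proof is correct and takes essentially the same route as the paper's: both decompose $F_S(x)=\frac{1}{t}\sum_{H'\subseteq H}f_S(H')+\frac{1}{t}\sum_{H'\subseteq H}f_{S\cup H'}(x)$ (your $C+G_S(x)$), use the sandwich $f_{H\cup S}(x)\le$ average marginal $\le f_S(x)$ to extract $f_S(a)\ge(1-\delta)f_{H\cup S}(b^\star)-\delta C$, and then absorb the $\delta C$ term by bounding $C\le\OPT$ and using both halves of $\epsilon$-relevance to lower-bound $f_{H\cup S}(b^\star)$ by $\epsilon\OPT/(ek)$, with the same $4-e>1$ slack making the choice $\delta\le\epsilon^2/(4k)$ work. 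The paper folds the two bounds into a single chain ending in $\bigl(1-\delta(1+ek/\epsilon)\bigr)f_{H\cup S}(b^\star)\ge(1-4k\delta/\epsilon)f_{H\cup S}(b^\star)$, which is the same arithmetic you carry out separately; the difference is purely presentational.
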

The principle is similar to Claim~\ref{lem:black_and_white}.  In this version we have a weaker condition since $F_{S}(a)$ is not greater than $F_{S}(b^\star)$ but rather $(1-\delta)F_S(b^\star)$, but the claim is less general as it only needs to hold for $b^\star$.  We therefore use a slightly different approach to prove this claim (see Appendix~\ref{sec:appendix_largek}).

\begin{claim}\label{lem:bound}
For any fixed $\epsilon>0$, consider an $\epsilon$-relevant iteration of \textsc{Smooth-Greedy} with $S$ as the elements selected in previous iterations.  Let ${a \in \argmax_{b \notin H}\widetilde{F}(S\cup b)}$.  Then, w.p. $\geq 1-1/n^4$:
$$f_S(a)\geq \Big (1-\epsilon\Big)\left[\frac{1}{k'}\Big (\texttt{OPT}_H - f(S) \Big)\right].$$
\end{claim}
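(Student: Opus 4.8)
The plan is to combine Lemma~\ref{lem:boundona}, Claim~\ref{claim:newboundona}, and a standard comparison of the best single marginal contribution against a $k'$th fraction of the optimal residual. First I would invoke Lemma~\ref{lem:boundona}: since the iteration is $\epsilon$-relevant and $a \in \argmax_{b\notin H}\widetilde{F}(S\cup b)$, with probability at least $1-1/n^4$ we have $F_S(a) \geq (1-\delta)\max_{b\notin H}F_S(b)$ for $\delta = \epsilon^2/4k$. In particular, writing $b^\star \in \argmax_{b\notin H}f_{H\cup S}(b)$, we have $F_S(a) \geq (1-\delta)F_S(b^\star)$, so Claim~\ref{claim:newboundona} applies and gives
$$f_S(a) \geq (1-\epsilon)f_{H\cup S}(b^\star).$$
So it remains to show (deterministically, no further use of the noise) that $f_{H\cup S}(b^\star) \geq \frac{1}{k'}(\texttt{OPT}_H - f(S))$, and then the claim follows after absorbing constants into the $(1-\epsilon)$ factor (replacing $\epsilon$ by $\epsilon/2$ throughout if one wants the final bound exactly as stated, which is harmless).

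For the remaining inequality I would argue as in the classical greedy analysis, but on $f_H$ rather than $f$. Let $O$ be an optimal solution of size $k'$ for $f_H$, so $f_H(O) = \texttt{OPT}_H$. By monotonicity and submodularity of $f_H$ (which is monotone submodular since $f$ is),
$$\texttt{OPT}_H = f_H(O) \leq f_H(O \cup S) \leq f_H(S) + \sum_{o \in O} (f_H)_S(o) = f_H(S) + \sum_{o\in O} f_{H\cup S}(o).$$
Each term $f_{H\cup S}(o)$ is at most $\max_{b\notin H} f_{H\cup S}(b) = f_{H\cup S}(b^\star)$ — here I need $O \cap H = \emptyset$, which I can assume w.l.o.g.\ (elements of $H$ contribute nothing to $f_H$, so an optimal $O$ for $f_H$ can be taken disjoint from $H$, padding with arbitrary elements outside $H$ if needed). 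Since $|O| = k'$ this gives $\texttt{OPT}_H \leq f_H(S) + k' f_{H\cup S}(b^\star)$. Finally note $f_H(S) = f(S\cup H) - f(H) \leq f(S\cup H) - \ldots$; more carefully, I want $f_H(S) \le f(S)$ is \emph{not} what I need — rather I should carry $f_H(S)$ and observe that the statement of the claim has $f(S)$, so I will instead bound $\texttt{OPT}_H - f(S)$. Observe $f_H(S) = f(H\cup S) - f(H) \le f(S)$ fails in general; instead, since $f$ is submodular, $f(H \cup S) - f(H) \le f(S) - f(\emptyset) \le f(S)$ \emph{does} hold (diminishing returns applied to the set $S$ added to $\emptyset$ vs.\ to $H$). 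Hence $f_H(S) \le f(S)$, so $\texttt{OPT}_H - f(S) \le \texttt{OPT}_H - f_H(S) \le k' f_{H\cup S}(b^\star)$, i.e.\ $f_{H\cup S}(b^\star) \ge \frac{1}{k'}(\texttt{OPT}_H - f(S))$, as needed.

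Chaining the two displays, on the event of probability $\ge 1-1/n^4$ we get $f_S(a) \ge (1-\epsilon)\cdot\frac{1}{k'}(\texttt{OPT}_H - f(S))$, which is the claim. The only genuinely delicate point is the bookkeeping of which function ($f$, $f_H$, or $f_{H\cup S}$) the marginal contributions and residuals are measured against, and making sure the submodularity inequality $f_H(S)\le f(S)$ is invoked in the right direction; the rest is the textbook greedy argument plus a citation of the two preceding results. Everything after the application of Lemma~\ref{lem:boundona} is deterministic, so the failure probability of the claim is exactly that of Lemma~\ref{lem:boundona}, namely $1/n^4$.
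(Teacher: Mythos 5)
Your proof is correct and follows essentially the same route as the paper's: invoke Lemma~\ref{lem:boundona} together with Claim~\ref{claim:newboundona} to get $f_S(a)\ge(1-\epsilon)f_{H\cup S}(b^\star)$, then run the standard greedy/submodularity comparison to bound $f_{H\cup S}(b^\star)\ge\frac{1}{k'}(\texttt{OPT}_H-f(S))$ (the paper goes through $f_{H\cup S}(O)\ge f_H(O)-f(S)$ and subadditivity; you go through $f_H(S\cup O)\le f_H(S)+k'f_{H\cup S}(b^\star)$ and $f_H(S)\le f(S)$, but these are two renderings of the same telescoping-plus-subadditivity argument). The mid-paragraph wobble about whether $f_H(S)\le f(S)$ holds is a harmless false start — it does hold by submodularity (or subadditivity with $f(\emptyset)\ge 0$) — and your caveat about padding $\epsilon$ to $\epsilon/2$ is unnecessary since the two displayed inequalities chain to give exactly the stated bound.
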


The proof is in Appendix~\ref{lem:bound_proof}.  We can now state the main lemma of this subsection.

\begin{lemma}\label{lem:opt_h} 
Let $S$ be the set returned by \textsc{Smooth-Greedy} and $H$ its smoothing set.  Then, for any fixed $\epsilon>0$ when $k\geq 3\ell/\epsilon$ with probability of at least $1-1/n^3$ we have that:
$$f(S \cup H) \geq \left ( 1-1/e -\epsilon/3 \right) \texttt{OPT}_{H}.$$
\end{lemma}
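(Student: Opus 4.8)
The plan is to run the standard greedy induction, but with two modifications forced by the noisy/smoothed setting: (i) replace the usual per-step inequality $f_S(a) \ge \frac{1}{k'}(\texttt{OPT} - f(S))$ with the one supplied by Claim~\ref{lem:bound}, namely $f_S(a) \ge (1-\epsilon)\frac{1}{k'}(\texttt{OPT}_H - f(S))$, which holds \emph{only} in $\epsilon$-relevant iterations; and (ii) account separately for the iterations that are \emph{not} $\epsilon$-relevant, showing they cost only a negligible additive term. First I would set up notation: let $S_0 = \emptyset, S_1, \dots, S_{k'}$ be the sets produced by \textsc{Smooth-Greedy} after each of its $k' = k - \ell$ iterations, and let $a_j$ be the element added in iteration $j$, so $f(S_j) = f(S_{j-1}) + f_{S_{j-1}}(a_j)$.

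\textbf{Step 1: the recursion on $\epsilon$-relevant iterations.} For every iteration $j$ that is $\epsilon$-relevant, Claim~\ref{lem:bound} (applied with a union bound over the at most $k' \le k$ iterations, each failing with probability $\le 1/n^4$, so the total failure probability is $\le k/n^4 \le 1/n^3$) gives
\begin{equation*}
f(S_j) - f(S_{j-1}) \ \ge\ (1-\epsilon)\,\frac{1}{k'}\bigl(\texttt{OPT}_H - f(S_{j-1})\bigr).
\end{equation*}
Rearranging, $\texttt{OPT}_H - f(S_j) \le \bigl(1 - \tfrac{1-\epsilon}{k'}\bigr)\bigl(\texttt{OPT}_H - f(S_{j-1})\bigr)$. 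For an iteration $j$ that is \emph{not} $\epsilon$-relevant, condition~(ii) of $\epsilon$-relevance still holds throughout (since $\texttt{OPT}_H \ge \texttt{OPT}/e$ is a fixed fact about $H$, not about the current $S$ — if it fails, the lemma's conclusion is about $\texttt{OPT}_H$ so we may assume it holds, or handle that case trivially), so non-relevance means condition~(i) fails: $\max_{b\notin H} f_{H\cup S_{j-1}}(b) < \epsilon\,\texttt{OPT}_H/k$. By submodularity and monotonicity this forces $\texttt{OPT}_H - f(S_{j-1}) \le k' \cdot \epsilon\,\texttt{OPT}_H/k \le \epsilon\,\texttt{OPT}_H$, i.e. we are already within an $\epsilon$-fraction of $\texttt{OPT}_H$ and all remaining iterations (relevant or not) can only increase $f(S_j)$ by monotonicity, so $f(S_{k'}) \ge (1-\epsilon)\texttt{OPT}_H \ge (1-1/e-\epsilon/3)\texttt{OPT}_H$ and we are done. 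Hence it suffices to treat the case where every iteration is $\epsilon$-relevant.

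\textbf{Step 2: unrolling the recursion.} Assuming all $k'$ iterations are $\epsilon$-relevant, iterate the contraction to get $\texttt{OPT}_H - f(S_{k'}) \le \bigl(1 - \tfrac{1-\epsilon}{k'}\bigr)^{k'}\texttt{OPT}_H \le e^{-(1-\epsilon)}\texttt{OPT}_H$, using $1-x \le e^{-x}$ and $f(S_0)=f(\emptyset) \ge 0$ (we may assume $f(\emptyset)=0$, or absorb it). Therefore $f(S_{k'}) \ge (1 - e^{-(1-\epsilon)})\texttt{OPT}_H$. Finally, $e^{-(1-\epsilon)} = e^{-1}e^{\epsilon} \le e^{-1}(1 + 2\epsilon)$ for small $\epsilon$, so $1 - e^{-(1-\epsilon)} \ge 1 - 1/e - 2\epsilon/e \ge 1 - 1/e - \epsilon/3$ once $\epsilon$ is small enough (and $2/e < 1$ makes the constant comfortable; one can also just reparametrize $\epsilon$). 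Since $S = S_{k'}$ is the returned set, $f(S\cup H) \ge f(S_{k'}) \ge (1-1/e-\epsilon/3)\texttt{OPT}_H$ — wait, more carefully: $f(S\cup H) = f_H(S) + f(H) \ge f_H(S)$, and Claim~\ref{lem:bound}'s bound is already phrased in terms of $f$ on $S$ with the understanding that $\texttt{OPT}_H$ lives on $f_H$; I would make sure the bookkeeping between $f(S)$ and $f_H(S) = f(S\cup H) - f(H)$ is consistent with how $\texttt{OPT}_H$ is defined, concluding $f(S\cup H) \ge f_H(S) \ge (1-1/e-\epsilon/3)\texttt{OPT}_H$.

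\textbf{Main obstacle.} The routine part is the greedy unrolling; the delicate part is Step 1 — cleanly arguing that non-$\epsilon$-relevant iterations are harmless. The subtlety is that relevance can toggle on and off across iterations in principle, and one must check that once condition~(i) fails at some iteration it cannot be that a later iteration is relevant again in a way that breaks the argument. The clean resolution is the observation above: the moment condition~(i) fails we are already within $(1-\epsilon)\texttt{OPT}_H$, and monotonicity locks that in for good — so there is effectively a single "phase transition," after which nothing matters. I would also need to verify the hypothesis $k \ge 3\ell/\epsilon$ is used exactly where it is needed: it guarantees $k' = k-\ell \ge (1-\epsilon/3)k$, so the exponent $(1-\epsilon)k'/k' $ ... actually it ensures $k'$ is large enough that $(1-\tfrac{1-\epsilon}{k'})^{k'} \le e^{-(1-\epsilon)}$ loses essentially nothing and that $\ell$ elements spent on smoothing don't erode the ratio — this feeds into comparing against $\texttt{OPT}_H$ (on $k' = k-\ell$ elements) rather than $\texttt{OPT}$, which is precisely why the lemma's conclusion is stated against $\texttt{OPT}_H$ and the gap to $\texttt{OPT}$ is deferred to Corollary~\ref{cor:constant} and \textsc{Slick-Greedy}.
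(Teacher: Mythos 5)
Your high-level plan matches the paper's proof in two of its three cases: you run the greedy induction on relevant iterations, and you correctly argue that once condition~(i) of relevance fails at some iteration $j$ you already have $\texttt{OPT}_H - f(S_{j-1}) \le f_{H\cup S_{j-1}}(O_H) \le k'\cdot\epsilon\,\texttt{OPT}_H/k$, which by monotonicity locks in $f(S_{k'})\ge(1-\epsilon)\texttt{OPT}_H$. The union-bound bookkeeping over the $\le k$ iterations is also right.

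However, there is a genuine gap in how you dispose of the case where condition~(ii) fails, i.e.\ $\texttt{OPT}_H < \texttt{OPT}/e$. You wave this away as ``handle that case trivially,'' but it is not trivial and is in fact the \emph{only} place the hypothesis $k \ge 3\ell/\epsilon$ enters the proof. When $\texttt{OPT}_H<\texttt{OPT}/e$, \emph{no} iteration is $\epsilon$-relevant, so Claim~\ref{lem:bound} gives you nothing about $S$ at all; the bound must come entirely from $f(H)$. The paper shows this via a short submodularity chain: letting $O'\in\argmax_{|T|\le k'}f(T)$, the assumption $k\ge 3\ell/\epsilon$ gives $f(O')\ge(1-\epsilon/3)\texttt{OPT}$; then $f(O')\le f(H\cup O')=f(H)+f_H(O')\le f(H)+\texttt{OPT}_H< f(H)+\texttt{OPT}/e$, so $f(H)>(1-1/e-\epsilon/3)\texttt{OPT}\ge(1-1/e-\epsilon/3)\texttt{OPT}_H$. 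Without this argument your proof simply does not cover that case — note also that nothing about the unrolling of the recursion in your Step 2 uses $k\ge 3\ell/\epsilon$, which should have been a warning sign that the hypothesis was being used somewhere you hadn't looked.

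Two smaller points. First, as you yourself flag, running the recursion with parameter $\epsilon$ gives $1-e^{-(1-\epsilon)}\ge 1-1/e-2\epsilon/e$, and $2/e>1/3$, so you do \emph{not} get the stated $\epsilon/3$; the paper instantiates Claim~\ref{lem:bound} with a smaller parameter $\gamma=\min\{1/e,\epsilon/6\}$ throughout (both for relevance and for the recursion), which cleanly yields $(1-1/e-\gamma)\texttt{OPT}_H\ge(1-1/e-\epsilon/3)\texttt{OPT}_H$. This is the reparametrization you suspected was needed, and it should be done explicitly since the two usages of $\gamma$ (in the relevance definition and in the per-step bound) have to be consistent. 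Second, your unrolling style (pushing $(1-\epsilon)$ into the exponent to get $e^{-(1-\epsilon)}$) differs mildly from the paper's (pulling $(1-\gamma)$ out front to get $(1-\gamma)(1-(1-1/k')^{k'})$); both are valid, the paper's gives a marginally tighter constant for the same parameter, but neither reaches $\epsilon/3$ without shrinking the parameter.
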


To prove the lemma we show that if $\texttt{OPT}_{H}<\texttt{OPT}/e$ then $H$ alone provides the approximation guarantee.  Otherwise we can apply Claim~\ref{lem:bound} using a standard inductive argument to show that $S \cup H$ provides the approximation.  The subtle yet crucial aspect of the proof is that the inductive argument is applied to analyze the quality of the solution against the optimal solution for $f_H$ and not against the optimal solution on $f$.  The proof is in Appendix~\ref{lem:opt_h_proof}.

As we will soon see, Lemma~\ref{lem:opt_h} plays a key role in the analysis of the \textsc{Slick-Greedy} algorithm.  It is worth noting that this lemma can also be used to show that \textsc{Smooth-Greedy} alone provides a constant ($\approx 0.387$) albeit suboptimal approximation guarantee (Corollary \ref{cor:constant}).


\subsection{Slick Greedy: Optimal Approximation for Sufficiently Large $k$}\label{sec:sl}
  
The reason \textsc{Smooth-Greedy} cannot obtain an approximation arbitrarily close to $1-1/e$ is due to the fact that a substantial portion of the optimal solution's value may be attributed to $H$.  This would be resolved if we had a way to guarantee that the contribution of $H$ is small.  The idea behind \textsc{Slick-Greedy} is to obtain this type of guarantee.  Intuitively, by running a large albeit constant number of instances of \textsc{Smooth-Greedy} with different smoothing sets, selecting the ``best'' solution will ensure the contribution of the smoothing set is relatively minor.  

\subsubsection{The algorithm}
We can now describe the \textsc{Slick-Greedy} algorithm which is the main result of this section.  Given a constant $\epsilon>0$ we set $\delta = \epsilon/6$ and generate arbitrary sets ${H}_{1},\ldots,{H}_{1/\delta}$, each of size $\ell$ s.t. ${H}_i \cap {H}_j =\emptyset$ for every $i,j \in [1/\delta]$.  We then run a modified version of \textsc{Smooth-Greedy} $1/\delta$ times: in each iteration $j$ we initialize \textsc{Smooth-Greedy} with $R_j = \cup_{i\neq j}H_i$ 
\footnote{By initializing the \textsc{Smooth-Greedy} with $R_j$ we mean that the first iteration begins with $S=R_j$ rather than $S=\emptyset$ and following the initialization the algorithm greedily adds $k-|R_j|-|H_j|$ elements.}
and use ${H}_j$ to generate the smoothing neighborhood.  We denote this as $\textsc{Smooth-Greedy}(k,R_j, {H}_j)$.  We then compare the solution $T_j = S_j \cup {H}_j$ to the best $T_i=S_i \cup H_i$ we've seen so far using a procedure we call \textsc{Smooth-Compare} described below.  The \textsc{Smooth-Compare} procedure compares $T_i$ and $T_j$ by using a set ${H}_{ij}$ s.t. ${H}_{ij} \cap (T_j \cup T_i) = \emptyset$ and $|{H}_{ij}| = \ell$.  If $T_i$ wins, the procedure returns $T_i$ and otherwise returns $T_j$.  The \textsc{Slick-Greedy} then returns the set $T_i$ that survived the \textsc{Smooth-Compare} tournament.

\begin{algorithm}
\caption{\textsc{Slick-Greedy}}
\label{alg:Slick-Greedy}
\begin{algorithmic}[1]
\INPUT budget $k$
	\STATE Select $\ell/\delta$ elements in $N$ and partition them into disjoint
	sets of equal size ${H}_{1}\ldots,{H}_{1/\delta}$
	\STATE $T_i\leftarrow \emptyset$
\FOR{$j \in [1/\delta]$}
	\STATE $R_j \leftarrow \cup_{i\neq j}H_i$
	\STATE $ T_j  \leftarrow \textsc{Smooth-Greedy}(k,R_j, {H}_j) \cup {H}_j$
	\STATE ${H_{ij}}  \leftarrow$ arbitrary set of $\ell$ elements disjoint from $T_i\cup T_j$
	\STATE $T_i \leftarrow \textsc{Smooth-Compare}(\{T_i,T_{j}\},H_{ij})$
\ENDFOR
\RETURN $T_i$
\end{algorithmic}
\end{algorithm}

\paragraph{Overview of the analysis.}  Consider the smoothing sets $H_1,\ldots, H_{1/\delta}$.
Let $H_l$ be the smoothing set whose marginal contribution to the others is minimal, i.e. $H_l \in \argmin_{i\in [1/\delta]}f_{R_i}(H_i)$.
Notice that from submodularity we are guaranteed that $f_{R_l}( H_l) \leq \delta f(R_l \cup H_l )$.  In this case, the fact that the marginal contribution of $H_l$ to the rest of the smoothing sets $R_l$ is small, together with the fact that the solution is initialized with $R_l$, enables the tight analysis.  The two main steps are:
\begin{enumerate}
\item In Lemma \ref{claim:fr} we show that w.h.p. $T_l$ provides an approximation arbitrarily close to $(1 - 1/e)$.  Intuitively, this happens since the marginal contribution of $H_l$ to the rest of the smoothing sets $R_l = \cup_i H_i \setminus H_l$ is small, and since the solution to \textsc{Smooth-Greedy} is initialized with $R_l$, losing the value of $H_l$ is negligible.  The proof relies on Claim~\ref{claim:itslate} and Lemma~\ref{lem:opt_hr} that generalize the guarantees of \textsc{Smooth-Greedy} to the case it is initialized (see Appendix); 
\item We then describe and analyze the \textsc{Smooth-Compare} procedure.  In the absence of noise, one can simply select the set whose value is largest.  To overcome noise, we run a tournament to extract the solution whose value is approximately largest, or at least arbitrarily close to $(1-1/e)\texttt{OPT}$.  Specifically, we prove that w.h.p. the set $T_i$ that wins the \textsc{Smooth-Compare} tournament (i.e. the set $T_i$ returned by \textsc{Slick-Greedy}) satisfies ${f(T_i) \ge (1 - \epsilon/3) \min \{f(T_l)},(1-1/e - 2\epsilon/3)\texttt{OPT}\}$.  Since $f(T_l)$ is arbitrarily close to ${(1 - 1/e)\texttt{OPT}}$, this concludes the proof.
\end{enumerate}

\subsubsection{Generalizing guarantees of smooth greedy}

\begin{lemma}\label{claim:fr}
Let $S_l$ be the set returned by \textsc{Smooth-Greedy} that is initialized with $R_l$ and $H_l$ its smoothing set.  Then, for any fixed $\epsilon>0$ when $k\geq 36\ell/\epsilon^2$ w.p. at least $1-1/n^3$ we have that:
$$ f(S_l \cup H_l)  \geq (1-1/e -2\epsilon/3) \texttt{OPT}.$$
\end{lemma}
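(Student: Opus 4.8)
The plan is to run the \textsc{Smooth-Greedy} analysis on the one execution that matters, namely $\textsc{Smooth-Greedy}(k,R_l,H_l)$ --- the copy initialized with $R_l$ and smoothing with $H_l$, where $l\in\argmin_i f_{R_i}(H_i)$ --- and then pay for the price of initialization using the choice of $l$. Write $U:=\bigcup_{i\in[1/\delta]}H_i$, so that $R_l\cup H_l=U$ and, since the $H_i$ are disjoint, $|U|=\ell/\delta$. Two elementary facts drive everything. First, as already observed, submodularity plus an averaging argument over the $1/\delta$ disjoint smoothing sets gives $f_{R_l}(H_l)\le\delta f(U)$, equivalently $f(R_l)\ge(1-\delta)f(U)$; this is exactly the quantitative dividend of having run $1/\delta$ copies and kept the one with the cheapest smoothing set. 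Second, since $k\ge 36\ell/\epsilon^2$ and $\delta=\epsilon/6$ we get $|U|=(6/\epsilon)\ell\le(\epsilon/6)k$, so $f(U)\le\OPT$ and the number $m:=k-|U|$ of elements actually added by \textsc{Smooth-Greedy} satisfies $m\ge(1-\epsilon/6)k$.

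Next I would invoke the initialized versions of the earlier lemmas --- Claim~\ref{claim:itslate} in place of Claim~\ref{lem:bound}, and Lemma~\ref{lem:opt_hr} in place of Lemma~\ref{lem:opt_h} --- which re-run the same inductive argument starting from $S=R_l$. (The smoothing guarantee of Lemma~\ref{lem:boundona} still applies unchanged, since fixing $R_l$ merely re-indexes which sets are queried.) This should give that with probability at least $1-1/n^3$,
$$f(S_l\cup H_l)\;\ge\;f(R_l)+\Big(1-\tfrac1e-\tfrac\epsilon3\Big)\big(\Phi-f(U)\big),\qquad\Phi:=\max_{|T|\le m}f(T\cup U),$$
where, exactly as in the proof of Lemma~\ref{lem:opt_h}, the degenerate case in which $\Phi-f(U)$ is too small relative to $\OPT$ is already subsumed (then $U\subseteq S_l\cup H_l$ delivers the bound on its own). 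The only structural change from the non-initialized statement is that the quantity subtracted is $f(U)-f(R_l)=f_{R_l}(H_l)$ rather than $f(H_l)$: because $R_l$ is present from the start, only the marginal value of $H_l$ over $R_l$ is ever given away.

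To finish I would feed in the two facts from the first paragraph. On the one hand $\Phi\ge(1-\epsilon/6)\OPT$, since discarding from the optimal $k$-set its $k-m$ elements of smallest marginal contribution (each $\le\OPT/k$ by submodularity) leaves a size-$m$ set $T$ with $f(T)\ge(m/k)\OPT\ge(1-\epsilon/6)\OPT$, and $\Phi\ge f(T\cup U)\ge f(T)$. On the other hand, using $f(R_l)\ge(1-\delta)f(U)$ and $\delta=\epsilon/6\le\tfrac1e+\tfrac\epsilon3$,
$$f(S_l\cup H_l)\;\ge\;(1-\delta)f(U)+\Big(1-\tfrac1e-\tfrac\epsilon3\Big)\big(\Phi-f(U)\big)\;=\;\Big(1-\tfrac1e-\tfrac\epsilon3\Big)\Phi+\Big(\tfrac1e+\tfrac\epsilon3-\delta\Big)f(U)\;\ge\;\Big(1-\tfrac1e-\tfrac\epsilon3\Big)\Phi,$$
so $f(S_l\cup H_l)\ge(1-\tfrac1e-\tfrac\epsilon3)(1-\tfrac\epsilon6)\OPT\ge(1-\tfrac1e-\tfrac{2\epsilon}3)\OPT$. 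The threshold $k\ge 36\ell/\epsilon^2$ is precisely what makes the $\epsilon/6$ budget-loss term and the $\delta f(U)$ concession small enough for these inequalities to close.

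The main obstacle is the middle step: proving the initialized guarantees of Lemma~\ref{lem:opt_hr}/Claim~\ref{claim:itslate}, and in particular tracking carefully that the value conceded through the ``$\epsilon$-relevant'' bookkeeping and the inductive telescoping is exactly $f_{R_l}(H_l)$, not $f(H_l)$ or $f(U)$. This is where the choice of $l$ as the minimizer of $f_{R_i}(H_i)$ is indispensable: it is what upgrades the trivial $f(S\cup H)\ge f(H)$ fallback of \textsc{Smooth-Greedy} (which only yielded the suboptimal $\approx0.387$ of Corollary~\ref{cor:constant}) to a near-$(1-1/e)$ bound here.
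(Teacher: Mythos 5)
Your proof is correct and takes essentially the same route as the paper's: both exploit the minimizer $l$ to get $f_{R_l}(H_l)\leq\delta f(U)$ (equivalently $f(R_l)\geq(1-\delta)f(U)$), invoke the initialized guarantee of Lemma~\ref{lem:opt_hr}, bound $\Phi\geq(1-\beta)\texttt{OPT}$ by submodularity since $|U|/k<\delta$, and close with the $\delta=\epsilon/6$ arithmetic. The only cosmetic difference is that you carry $f(R_l)$ additively outside the $(1-1/e-\epsilon/3)$ factor (the stronger intermediate inequality that appears inside the proof of Lemma~\ref{lem:opt_hr}), whereas the paper works directly from the stated $\texttt{OPT}_{H,R}=\Phi-f_{R_l}(H_l)$ form; both rearrangements give the same bound.
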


\subsubsection{The smooth comparison procedure}
We can now describe the \textsc{Smooth-Compare} procedure we use in the algorithm.  For a given set $H_{ij} \subseteq N$ of size $\ell$ and two sets $T_i,T_j \subseteq N \setminus {H_{ij}}$, we compare $\widetilde{f}(T_i \cup H_{ij}')$ with $\widetilde{f}(T_j \cup H'_{ij})$ for all $H'_{ij} \subset H_{ij}$.  We select $T_i$ if in the majority of the comparisons with $H'_{ij} \subset H_{ij}$ (breaking ties lexicographically) we have that $\widetilde{f}(T_i\cup H'_{ij}) \geq \widetilde{f}(T_j\cup H'_{ij})$, and otherwise we select $T_{j}$.  

\begin{algorithm}
\caption{\textsc{Smooth-Compare}}
\label{alg:Smooth-Compare}
\begin{algorithmic}[1]
\INPUT $T_i, T_j, H_{ij} \subseteq N \setminus (T_i \cup T_j)$,
	\STATE Compare $\widetilde{f}(T_i \cup H'_{ij})$ with $\widetilde{f}(T_j \cup H'_{ij})$ for all $H'_{ij} \subset H_{ij}$	
	\STATE if $T_i$ won the majority of comparisons return $T_i$ otherwise return $T_j$
\end{algorithmic}
\end{algorithm}

\begin{lemma}\label{lem:boosting}
Assume $k \ge 96 \ell /\epsilon^2$.  Let $T_i$ be the set that won the $\textsc{Smooth-Compare}$ tournament.  Then, with probability at least $1-1/n^2$:
$$
f(T_i)\geq \left (1-\frac{\epsilon}{3} \right )\min \left \{ \left(1-\frac{1}{e}-\frac{2\epsilon}{3}\right)\texttt{OPT},\max_{j \in [1/\delta]}f(T_j)\right \}
$$
\end{lemma}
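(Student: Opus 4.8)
The plan is to decouple the argument into two parts: (i) a \emph{reliability} statement for a single call to \textsc{Smooth-Compare}, which is where the smoothing machinery does its work, and (ii) a purely combinatorial analysis of the sequential ``king‑of‑the‑hill'' tournament run by \textsc{Slick-Greedy}. For (i), I would establish (citing the smoothing lemmas of Appendix~\ref{sec:appendix_smoothing}) that there is an absolute constant $c$ so that, with probability at least $1-1/n^{4}$, the call $\textsc{Smooth-Compare}(\{T_i,T_j\},H_{ij})$ returns a set $T\in\{T_i,T_j\}$ with $f(T)\ge(1-\eta)\max\{f(T_i),f(T_j)\}$, where $\eta:=c\,\ell/k$; equivalently, it never returns the strictly worse set unless the two $f$-values are within a $(1\pm\eta)$ factor. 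Granting this, a union bound over the $1/\delta=6/\epsilon$ comparisons performed in the \textbf{for} loop, together with the $(1-1/n^{3})$ event of Lemma~\ref{claim:fr}, shows that with probability at least $1-1/n^{2}$ every comparison is reliable; condition on this henceforth.

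Now the tournament. Write $K_r$ for the set held in the variable $T_i$ after iteration $r$ of the loop, so $K_r\in\{K_{r-1},T_r\}$ and $K_{1/\delta}$ is the returned set. Let $T^{\star}\in\argmax_{j\in[1/\delta]}f(T_j)$ and let $j^{\star}$ be the round in which $T^{\star}$ is produced. Two elementary facts follow immediately from the reliability statement: (a) whenever the king changes, its value drops by at most a factor $(1-\eta)$, i.e. $f(K_r)\ge(1-\eta)f(K_{r-1})$ for every $r$ (trivial if $K_r=K_{r-1}$; if $K_r=T_r$, then \textsc{Smooth-Compare} returned $T_r$ over $K_{r-1}$, so $f(T_r)\ge(1-\eta)f(K_{r-1})$); and (b) right after $T^{\star}$ enters, $f(K_{j^{\star}})\ge(1-\eta)f(T^{\star})$ (if $T^{\star}$ wins that comparison then $K_{j^{\star}}=T^{\star}$; otherwise $f(K_{j^{\star}-1})\ge(1-\eta)f(T^{\star})$). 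Chaining (a) from round $j^{\star}$ to round $1/\delta$ and then invoking (b),
\[
 f(T_i)=f(K_{1/\delta})\ \ge\ (1-\eta)^{\,1/\delta-j^{\star}+1}f(T^{\star})\ \ge\ (1-\eta)^{1/\delta}\,f(T^{\star}).
\]
Since $\eta=c\ell/k\le c\epsilon^{2}/96$ by the hypothesis $k\ge 96\ell/\epsilon^{2}$, and $1/\delta=6/\epsilon$, Bernoulli's inequality gives $(1-\eta)^{1/\delta}\ge 1-\eta/\delta\ge 1-(c/16)\epsilon\ge 1-\epsilon/3$ (the $96$ in the hypothesis is exactly what is needed to absorb the constant $c$ coming out of the smoothing analysis). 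Hence $f(T_i)\ge(1-\epsilon/3)f(T^{\star})=(1-\epsilon/3)\max_{j}f(T_j)$, which is at least $(1-\epsilon/3)\min\{(1-1/e-2\epsilon/3)\texttt{OPT},\max_{j}f(T_j)\}$, as claimed. (Combining with Lemma~\ref{claim:fr}, which places a set of value $\ge(1-1/e-2\epsilon/3)\texttt{OPT}$ into the tournament, this $\min$ in fact equals $(1-1/e-2\epsilon/3)\texttt{OPT}$, so $f(T_i)\ge(1-\epsilon/3)(1-1/e-2\epsilon/3)\texttt{OPT}\ge(1-1/e-\epsilon)\texttt{OPT}$, finishing the analysis of \textsc{Slick-Greedy}.)

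It remains to prove the reliability statement (i), which is the technical heart. Fix a pair $T_i,T_j$; I may treat these as fixed sets by reserving a fresh pool of ground-set elements for each $H_{ij}$, so that the values $\widetilde f(T_i\cup H')$ and $\widetilde f(T_j\cup H')$ for $H'\subseteq H_{ij}$ have not been queried before and their noise multipliers are mutually independent and independent of the rest of the computation. The first ingredient is a \emph{marginal bound}: by submodularity, the regime $k\ge 96\ell/\epsilon^{2}$, and the greedy structure of $T_i,T_j$ as outputs of (initialized) \textsc{Smooth-Greedy} — along a run of which the smooth marginals are non-increasing, so every element outside the solution has small marginal contribution — one gets $f_{T_i}(H_{ij}),\,f_{T_j}(H_{ij})=O(\ell/k)\,\texttt{OPT}=O(\epsilon^{2})\,\texttt{OPT}$, while $f(T_i),f(T_j)=\Theta(\texttt{OPT})$ by Corollary~\ref{cor:constant}. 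Consequently each queried value $f(T_i\cup H')$ lies in $[f(T_i),(1+\eta)f(T_i)]$ and likewise for $T_j$, so if $f(T_i)>(1+3\eta)f(T_j)$ then $f(T_i\cup H')>f(T_j\cup H')$ for \emph{all} $H'\subseteq H_{ij}$. The second ingredient is noise-averaging: for such an $H'$ the single noisy comparison $\widetilde f(T_i\cup H')\ge\widetilde f(T_j\cup H')$ holds with probability $\tfrac12+\Omega(\eta)$ — here the generalized-exponential-tail assumption is used to guarantee that $\log\xi$ has a density bounded away from $0$ and $\infty$ on the relevant range, so a multiplicative gap of order $\eta$ yields a bias of order $\eta$; atoms at the supremum are handled by the lexicographic tie-break — and the $|H_{ij}|\approx 2^{\ell}$ such comparisons are mutually independent, so a Chernoff bound makes the majority vote select $T_i$ except with probability $\exp(-\Omega(\eta^{2}2^{\ell}))\ll 1/n^{4}$, using $2^{\ell}\ge(\log n)^{33}$. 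The symmetric statement holds for $T_j$, and when the two $f$-values differ by at most a $(1\pm 3\eta)$ factor there is nothing to prove; rescaling $\eta$ by the constant $3$ gives (i).

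I expect the reliability statement to be the main obstacle — in particular its two deterministic/probabilistic pieces: the marginal bound $f_T(H_{ij})=O(\ell/k)\texttt{OPT}$ for greedy-produced solutions, and the quantitative noise-averaging over the exponential-tail distribution. The tournament combinatorics above is short once (i) is available. In the write-up both pieces of (i) are cleanest to extract as black boxes from Appendix~\ref{sec:appendix_smoothing}, with the king-of-the-hill argument and the final arithmetic carried out in the body.
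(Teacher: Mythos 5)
Your decomposition into a per-comparison reliability statement (Claims~\ref{clm:slick} and~\ref{lem:tournament} in the paper) plus a king-of-the-hill chaining argument matches the paper's structure, and the final arithmetic is essentially the same. However, the reliability statement you are trying to establish is \emph{too strong}, and the gap is precisely what forces the $\min$ into the lemma statement.

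You claim that, with high probability, every call to \textsc{Smooth-Compare} returns a set whose value is within a factor $(1-\eta)$ of $\max\{f(T_i),f(T_j)\}$, \emph{unconditionally}. The key ingredient you cite for this is the marginal bound $f_{T_j}(H_{ij}) = O(\ell/k)\texttt{OPT}$, justified by "the greedy structure of $T_j$ \ldots smooth marginals are non-increasing, so every element outside the solution has small marginal contribution." This argument only goes through when every iteration of the run of \textsc{Smooth-Greedy} that produced $T_j$ was $\gamma$-relevant (or failed relevance only due to small marginals). It does \emph{not} go through when relevance fails because $\texttt{OPT[G]}_{H} < \texttt{OPT[G]}/e$ — in that regime the translation between smooth marginals $\widetilde{F}$ and true marginals $f$ (Claim~\ref{claim:itslate}) is simply unavailable, and one cannot bound $f_{T_j}(H'_{ij})$. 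The paper (Claim~\ref{clm:slick}) explicitly assumes $f(T_j) < (1-1/e-2\epsilon/3)\texttt{OPT}$ and shows that the problematic case contradicts this assumption; that conditional hypothesis is exactly what gives rise to the $\min$ in Lemma~\ref{lem:boosting}. As a consequence, the bound $f(T_i) \ge (1-\epsilon/3)\max_j f(T_j)$ that your tournament chaining produces is stronger than what is provable: \textsc{Smooth-Compare} may well return the strictly worse set when \emph{both} competitors have value near $(1-1/e)\texttt{OPT}$, and your facts (a) and (b) fail in that regime. The fix is to carry the invariant $f(K_r) \ge (1-\eta)^r \min\{(1-1/e-2\epsilon/3)\texttt{OPT},\;\max_{j\le r}f(T_j)\}$ through a case analysis: either the current comparison satisfies the conditional reliability guarantee, or the loser already exceeds $(1-1/e-2\epsilon/3)\texttt{OPT}$ and the $\min$ absorbs the loss.

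A secondary, more cosmetic issue: your noise-averaging step asserts that a multiplicative gap of order $\eta$ yields a per-comparison bias of order $\eta$ because "$\log\xi$ has a density bounded away from $0$ and $\infty$." For generalized exponential tails (e.g.\ Gaussian) the density of $\log\xi$ is not bounded below, and the paper instead conditions on $\nu_{\max}/\nu_{\min} \le n^{\tau}$, discretizes into $O(\log n / \beta)$ multiplicative intervals, and extracts a bias of $\Omega(\beta/\log n)$ via symmetry and Cauchy--Schwarz. Your Chernoff exponent should therefore read $\Omega(\eta^{2} 2^{\ell}/\log^{2} n)$, not $\Omega(\eta^{2} 2^{\ell})$; with $2^{\ell}\ge(\log n)^{33}$ the conclusion still holds, but the stated justification does not.
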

The proof of this lemma has two parts.
\begin{enumerate}
\item
First we show in Claim~\ref{clm:slick} that if a set $T_i$ has moderately larger value than another set $T_j$ (more specifically, if the gap is $1-\epsilon\delta/3$) then as long as $f(T_j)$ is not arbitrarily close to $(1-1/e)\texttt{OPT}$ then $f(T_i\cup H'_{ij})$ is larger than $f(T_j \cup H'_{ij})$, for any $H'_{ij} \subseteq H_{ij}$.  At a high level, this is because elements in $H'_{ij}$ are candidates for \textsc{Smooth-Greedy} and the fact that they are not selected indicates that their marginal contribution to $T_j = S_j\cup H_j$ is low.  Thus, elements in $H'_{ij}$ cannot add much value, and since $|H_{ij}|\ll k$ adding subsets of $H_{ij}$ does not distort the comparison by much.  If $f(T_j)$ is arbitrarily close to $(1-1/e)\texttt{OPT}$, we may have that $T_j$ beats $T_i$, but this would still ultimately result in an approximation arbitrarily close to $1-1/e$;  

\item The next step (Claim~\ref{lem:tournament}) then shows that if for every $H'_{ij}$ we have $f(T_i \cup H'_{ij}) \geq f(T_j \cup H'_{ij})$ then with high probability $T_i$ wins the comparison against $T_j$ in \textsc{Smooth-Compare}.
\end{enumerate}
Using these two parts we then conclude since we are running the \textsc{Smooth-Compare} tournament between $1/\delta$ sets, the winner is an $(1 - \epsilon \delta /3)^{1/\delta} \ge (1 - \epsilon/3)$ approximation to the competing set with the highest value or a set whose approximation is arbitrarily close to $1-1/e$.  The claims and proofs can be found in Appendix~\ref{sec:smooth_compare}.

\subsubsection{Approximation guarantee of \textsc{slick greedy}}

Finally, putting everything together, we can prove the main result of this section (see Appendix~\ref{thm:largek_main_proof}).

\begin{thm}\label{thm:largek_main}
Let $f:2^{N} \to \mathbb{R}$ be a monotone submodular function.  For any fixed $\epsilon>0$, when $k\geq 3168\log \log n/\epsilon^2$, then given access to a noisy oracle whose noise distribution has a generalized exponential tail, the $\textsc{Slick-Greedy}$ algorithm returns a set which is a $(1-1/e-\epsilon)$ approximation to $\max_{S:|S|\leq k}f(S)$, with probability at least $1-1/n$.
\end{thm}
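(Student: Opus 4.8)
The plan is to assemble \Cref{thm:largek_main} directly from the three ingredients already developed in this section: \Cref{claim:fr}, \Cref{lem:boosting}, and a union bound over the $1/\delta$ instantiations of \textsc{Smooth-Greedy} that \textsc{Slick-Greedy} runs. First I would fix $\epsilon>0$, set $\delta=\epsilon/6$ as in the algorithm, and verify the cardinality hypothesis: since $\ell \le 33\log\log n$ in the regime $k < 2400\log n$ (and $\ell = 25\log n$ otherwise, which is even more comfortably dominated when $k\ge 2400\log n$), the assumption $k \ge 3168\log\log n/\epsilon^2$ guarantees $k \ge 96\ell/\epsilon^2$, which is exactly what \Cref{lem:boosting} needs, and a fortiori $k \ge 36\ell/\epsilon^2$ as needed by \Cref{claim:fr}. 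I would spell out this arithmetic once so that both lemmas are applicable.

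Next I would identify the ``good'' smoothing set. Let $H_l \in \argmin_{i\in[1/\delta]} f_{R_i}(H_i)$ be the set whose marginal contribution to the union of the others is minimal; by submodularity and averaging, $f_{R_l}(H_l) \le \delta\, f(R_l\cup H_l)$, so $H_l$ is a legitimate instance to invoke \Cref{claim:fr} on. Applying \Cref{claim:fr} to the run initialized with $R_l$ and smoothed with $H_l$ yields $f(T_l) = f(S_l\cup H_l) \ge (1-1/e-2\epsilon/3)\OPT$ with probability at least $1-1/n^3$. Then I would invoke \Cref{lem:boosting}: the set $T_i$ returned by \textsc{Slick-Greedy} (the survivor of the \textsc{Smooth-Compare} tournament) satisfies, with probability at least $1-1/n^2$,
\[
f(T_i) \ge \left(1-\frac{\epsilon}{3}\right)\min\left\{\left(1-\frac{1}{e}-\frac{2\epsilon}{3}\right)\OPT,\ \max_{j\in[1/\delta]}f(T_j)\right\}.
\]
Since $T_l$ is one of the competing sets, $\max_j f(T_j) \ge f(T_l) \ge (1-1/e-2\epsilon/3)\OPT$ on the event where \Cref{claim:fr} holds, so the minimum inside the braces is at least $(1-1/e-2\epsilon/3)\OPT$. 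Hence $f(T_i) \ge (1-\epsilon/3)(1-1/e-2\epsilon/3)\OPT$, and a short expansion gives $(1-\epsilon/3)(1-1/e-2\epsilon/3) \ge 1-1/e-\epsilon$ for all $\epsilon \in (0,1)$ (the cross term $\tfrac{\epsilon}{3}(1/e+2\epsilon/3)$ is bounded by $\tfrac{2\epsilon}{3}$), completing the approximation bound.

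Finally I would handle the probability bookkeeping. There are at most $1/\delta = 6/\epsilon = O(1)$ runs of \textsc{Smooth-Greedy} and $O(1)$ invocations of \textsc{Smooth-Compare}; \Cref{lem:opt_h}/\Cref{claim:fr} each fail with probability at most $1/n^3$ and \Cref{lem:boosting} with probability at most $1/n^2$, so a union bound over the constantly many bad events gives total failure probability $O(1/n^2) \le 1/n$ for $n$ sufficiently large. I would also note the footnoted reduction handling $k$ close to $n$ (replacing $k$ by $k' = n - 25\log n$ changes $\OPT$ by at most a $(1-\epsilon)$ factor by monotonicity and submodularity), so the bound stated against the true $\max_{S:|S|\le k}f(S)$ is valid in all cases. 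The only genuinely substantive steps are \Cref{claim:fr} and \Cref{lem:boosting} themselves, which are deferred to the appendix; the main obstacle at this level is purely making sure the constant $3168$ in the hypothesis correctly propagates through $\delta=\epsilon/6$ and $\ell = 33\log\log n$ to satisfy the $96\ell/\epsilon^2$ requirement, i.e. checking $3168 \ge 96\cdot 33$, which holds with equality — so the constant is tight and must be carried carefully rather than absorbed into asymptotic notation.
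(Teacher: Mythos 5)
Your proof matches the paper's intended argument: apply Lemma~\ref{claim:fr} to the run whose smoothing set $H_l$ has minimal marginal contribution, apply Lemma~\ref{lem:boosting} to the tournament winner, verify $(1-\epsilon/3)(1-1/e-2\epsilon/3)\ge 1-1/e-\epsilon$, and union-bound over the $O(1/\delta)$ failure events. Your derivation $3168 = 96\cdot 33$ for the $\ell = 33\log\log n$ regime is exactly the tight constant the paper carries, and the bookkeeping (including the footnoted reduction for $k$ close to $n$) is sound.

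One small arithmetic over-claim to fix: in the other regime, with $\ell = 25\log n$, the condition $k \ge 96\ell/\epsilon^2$ becomes $k \ge 2400\log n/\epsilon^2$, which is \emph{not} implied by $k\ge 2400\log n$ alone when $\epsilon<1$, so it is not ``even more comfortably dominated'' as you write. The paper's own stated threshold of $k\ge 2400\log n$ for switching to $\ell=25\log n$ carries the same inconsistency; the intended cutoff is presumably $k\ge 2400\log n/\epsilon^2$, which makes the two regimes fit together and closes the range. This is a sentence-level slip rather than a gap in the structure of your argument, but it is worth stating the cutoff in the $\epsilon$-dependent form so the requirement $k\ge 96\ell/\epsilon^2$ is actually met in both branches.
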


\newpage \section{Optimization for Small $k$}\label{sec:smallk}
When $k$ is small we cannot use the smoothing technique from the previous section, since it requires including the smoothing set of size $\Theta(\log \log n)$ in the solution.  In this section we describe the \emph{sampled mean method} which can be applied to ${k \in \Omega(1/\epsilon) \cap O(\log \log n)}$ and results in a $1-1/e -\epsilon$ approximation.  This result is obtained by applying a greedy algorithm on a surrogate function $F:2^{N} \to \mathbb{R}_{+}$ which is what we call the \emph{sampled mean} of $f$.  The use of the surrogate function makes it relatively easy to obtain the $1-1/e-\epsilon$ approximation, albeit \emph{in expectation}.  The main technical challenge is the transition from a guarantee that holds in expectation to one that holds with high probability.  This difficulty is what limits this method to be applicable only when $k$ ranges between $\Omega(1/\epsilon)$ and $O(\log \log n)$, and heavily exploits the generalized exponential tail property.  

\subsection{Combinatorial averaging}
The sampled-mean method is based on averaging sets to find elements whose marginal contribution is high, which can then be greedily added to the solution.  The intuition for this method comes from continuous optimization.  Consider optimizing a function $f:\mathbb{R}^n \to \mathbb{R}$ given access to a noisy value oracle $\widetilde{f}:\mathbb{R}^n \to \mathbb{R}$ which for each point $\mathbf{x} \in \mathbb{R}^n$ returns $\widetilde{f}(\mathbf{x}) = \xi_{\mathbf{x}}f(\mathbf{x})$ where $\xi_\mathbf{x} \sim \mathcal{D}$.  A natural approach would be to sample $t$ points $\mathbf{x}_1,\ldots,\mathbf{x}_t$ from an $\epsilon$-ball $\mathcal{B}_{\epsilon}$ around $\mathbf{x}$, for some small $\epsilon>0$, and estimate the value of $\mathbf{x}$ using the sampled mean: 
$$\widetilde{F}(\mathbf{x}) := \mathbb{E} \left [ \widetilde{f}(\mathbf{x})\right ] = \frac{1}{t}\sum_{\mathbf{x}_i \sim \mathcal{B}_{\epsilon}}\widetilde{f}(\mathbf{x}_i)$$ 
Under some smoothness assumptions on $f$, for sufficiently large $t$ and small $\epsilon$, concentration bounds kick in, and one can apply an optimization algorithm on $\widetilde{F}$ to optimize $f$.  The method in this section translates this idea to a combinatorial domain.  To do so effectively, rather than considering singletons $a \in N$ we obtain multidimensionality by considering \emph{bundles} of size $c \in O(1/\epsilon)$.   

\begin{definition*}
Let $f:2^N \to \mathbb{R}$.  For a set $S\subseteq N$ and \textbf{bundle} $A\subseteq N$ of fixed size $c$, we define 
$A_{ij} := \left (A \setminus \{a_i\}\right ) \cup \{a_j\}$ for $a_i \in A$ and $a_j \notin S \cup A$, and $t = c(n-c-|S|)$.  
The \textbf{mean value}, \textbf{noisy mean value}, and \textbf{mean marginal contribution} of $A$ given $S$ are, respectively:
\begin{align*}
&(1)\ &   \score(S\cup A) 			& := \ &\mathbb{E} \left [  f(S \cup A_{ij})  \right ]  			& = \ &\frac{1}{t}&\sum_{i \in A}\sum_{j \notin S\cup A}  f(S \cup A_{ij});&&\\
&(2) \ &  \widetilde{\score}(S\cup A) 	& := \ &\mathbb{E} \left [  \widetilde{f}(S \cup A_{ij})  \right ]  	& = \ &\frac{1}{t}&\sum_{i \in A}\sum_{j \notin S\cup A}  \widetilde{f}(S \cup A_{ij});&&\\
&(3) \ & \score_{S}(A)			& := \ & \mathbb{E} \left [ f_S(A_{ij}) \right ] 				& = \ &\frac{1}{t}&\sum_{i \in A}\sum_{j \notin S\cup A}  f_S(A_{ij}).&&
\end{align*}
\end{definition*}

The above definition mimics the continuous case by considering a \emph{bundle} of elements $A$ of fixed size $c$ (we will use $c \approx 1/\epsilon$) as a point, and the points in the $\epsilon$-ball are modeled by all the sets $A_{ij}$ obtained by replacing an element from $A$ with an element from $N \setminus (S \cup A)$.  We illustrate this idea in Figure~\ref{fig:sm}.  Although the combinatorial analogue is not as well-behaved as the continuous case, the sampled mean approach defined here extracts some of its desirable properties.  

\subsection{The Sampled Mean Greedy Algorithm}\label{sec:sm-alg}
The \textsc{SM-Greedy} begins with the empty set $S$ and at every iteration considers all bundles of size $c \in O(1/\epsilon)$ to add to $S$.  At every iteration, the algorithm first identifies the bundle $A$ which maximizes the noisy mean value.  After identifying $A$, it then considers all possible bundles $A_{ij}$ and takes the one whose noisy mean value is largest.  We describe the algorithm formally below.

\begin{algorithm}
\caption{\textsc{SM-Greedy}}\label{alg:SM-Greedy}
\label{c}
\begin{algorithmic}[1]
\INPUT budget $k$, precision $\epsilon>0$, $c \in O(\frac{1}{\epsilon})$
	\STATE $S \leftarrow \emptyset$
\WHILE {$|S| < c \cdot \left \lfloor \frac{k}{c} \right \rfloor $}
	\STATE $A \leftarrow \argmax_{B : |B|=c}\widetilde{\score}(S\cup B)$~\label{alg:sm3}
	\STATE $ S \leftarrow S \cup \arg\max_{i\in A,j\notin S\cup A} \widetilde{f}(S \cup A_{ij})~\label{alg:sm4}$
\ENDWHILE
\RETURN $S$
\end{algorithmic}
\end{algorithm}

At a high level, the major steps in the analysis can be described as follows.
\begin{enumerate}
\item We begin with smoothing guarantees.  In Lemma~\ref{algoworks} we apply Lemma~\ref{lem:avg} as well as other arguments to show that w.h.p. in each iteration $A \in \argmax_{B:|B|=c}\widetilde{F}(S \cup B)$ well approximates the bundle with maximal (non-noisy) mean marginal contribution $\argmax_{B:|B|=c}F_{S}(B)$;
\item Lemma~\ref{lem:meanapx} argues that if the marginal contribution $f_{S}(\MS)$ of the set $\MS$ we select at every iteration is close to the mean marginal contribution $\score_S(A)$ we obtain an approximation arbitrarily close to $1-1/e$.  This suffices for an approximation guarantee that holds in expectation;
\item The last step is Lemma~\ref{single} which is the technical crux of this section.  We show that taking $\MS \in \argmax_{i,j} \widetilde{f}(S\cup A_{ij})$ in line~\ref{alg:sm4} of the algorithm gives us, with sufficiently high probability that the marginal contribution $f_{S}(\MS)$ is arbitrarily close to the mean marginal contribution $F_{S}(A)$.  We can therefore invoke Lemma~\ref{lem:meanapx} and recover the optimal approximation guarantee.
\end{enumerate}
\subsection{Smoothing Guarantees}

We first show that the largest marginal contribution is well approximated by its mean contribution.  

\begin{lemma}\label{lem:avg}
For any $\epsilon>0$ and any set $S \subset N$, let $A^\star \in \arg\max_{A:|A|=1/\epsilon}f_{S}(A)$.  Then:
$$ \left( 1- \epsilon\right)f_{S}(A^\star)\leq \score_{S}(A^\star)\leq f_{S}(A^{\star}).$$
\end{lemma}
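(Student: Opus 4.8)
The plan is to bound $\score_S(A^\star)$ both from above and from below, where $A^\star$ has size $c = 1/\epsilon$ and maximizes $f_S$ among bundles of that size. Recall $\score_S(A^\star) = \frac1t\sum_{i\in A^\star}\sum_{j\notin S\cup A^\star} f_S(A^\star_{ij})$, where $A^\star_{ij} = (A^\star\setminus\{a_i\})\cup\{a_j\}$. The upper bound is immediate: each $A^\star_{ij}$ has size exactly $c$, so $f_S(A^\star_{ij}) \le f_S(A^\star)$ by the maximality of $A^\star$, and averaging preserves the inequality. So the whole content is in the lower bound.

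For the lower bound, I would fix $i\in A^\star$ and analyze the inner average $\frac{1}{n-c-|S|}\sum_{j\notin S\cup A^\star} f_S(A^\star_{ij})$. Write $A^\star_{-i} = A^\star\setminus\{a_i\}$, a set of size $c-1$. Then $f_S(A^\star_{ij}) = f_S(A^\star_{-i}) + f_{S\cup A^\star_{-i}}(a_j)$. Summing the marginal term over all $j\notin S\cup A^\star$ and comparing it to $f_{S\cup A^\star_{-i}}(a_i)$: by submodularity, $f_{S\cup A^\star_{-i}}(a_i) \le f_{S\cup A^\star_{-i}}(a_j) + \big(f_{S\cup A^\star_{-i}}(a_i) - f_{S\cup A^\star_{-i}\cup a_j}(a_i)\big)$ is not quite the clean route; instead the cleaner approach is to observe that $f_S(A^\star) = f_S(A^\star_{-i}) + f_{S\cup A^\star_{-i}}(a_i)$, so it suffices to show the inner average is at least $f_S(A^\star_{-i}) + \frac{c-1}{c}f_{S\cup A^\star_{-i}}(a_i)$ — because then averaging over the $c$ choices of $i$ and using $\sum_i f_S(A^\star_{-i}) \ge (c-1)\cdot$(something)$\ldots$. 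Let me restructure: the key identity I want is that summing $f_S(A^\star_{ij})$ over all ordered pairs $(i,j)$ relates to $f_S(A^\star)$ via a telescoping/double-counting argument, where the loss is controlled by how many elements get "swapped out." Concretely, for each $i$, the average over $j$ of $f_{S\cup A^\star_{-i}}(a_j)$ is nonnegative by monotonicity, and at least $\frac{1}{n-c-|S|}$ times a sum that, combined across all $i$, telescopes against $f_S(A^\star)$; the fraction of value that can be "lost" by removing one of $c$ elements is at most $1/c = \epsilon$, giving the $(1-\epsilon)$ factor.

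The main obstacle — and the step I would be most careful about — is making the double-counting argument tight enough to yield exactly the factor $(1-\epsilon) = (1-1/c)$ rather than something weaker like $(1-2/c)$. The natural inequality is: $\sum_{i\in A^\star} f_S(A^\star_{-i}) \ge (c-1) f_S(A^\star)$? No — that is false in general (e.g. additive $f$ gives $\sum_i f_S(A^\star_{-i}) = (c-1)f_S(A^\star)$ with equality, and submodularity pushes it the wrong way: $f_S(A^\star_{-i}) \ge f_S(A^\star) - f_{S\cup A^\star_{-i}}(a_i)$, and summing gives $\sum_i f_S(A^\star_{-i}) \ge cf_S(A^\star) - \sum_i f_{S\cup A^\star_{-i}}(a_i) \ge cf_S(A^\star) - f_S(A^\star) = (c-1)f_S(A^\star)$, using $\sum_i f_{S\cup A^\star_{-i}}(a_i) \le f_S(A^\star)$ which itself follows from submodularity since the marginals $f_{S\cup A^\star_{-i}}(a_i)$ telescope). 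Good — so $\sum_{i} f_S(A^\star_{-i}) \ge (c-1)f_S(A^\star)$ does hold. Then $\score_S(A^\star) \ge \frac1c\sum_i f_S(A^\star_{-i}) + (\text{nonneg marginal terms}) \ge \frac{c-1}{c}f_S(A^\star) = (1-\epsilon)f_S(A^\star)$, using monotonicity to drop the $f_{S\cup A^\star_{-i}}(a_j)$ terms as nonnegative. That closes the lower bound, and the whole proof is short once this double-counting is set up correctly; I would just double-check the inequality $\sum_i f_{S\cup A^\star_{-i}}(a_i) \le f_S(A^\star)$, which is the one genuinely submodularity-using step (order the elements of $A^\star$ and use diminishing returns to compare each $f_{S\cup A^\star_{-i}}(a_i)$ to the corresponding telescoping increment).
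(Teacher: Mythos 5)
Your proposal is correct and follows essentially the same route as the paper: drop the nonnegative marginal terms $f_{S\cup A^\star_{-i}}(a_j)$ to reduce to $\frac1c\sum_i f_S(A^\star_{-i})$, then bound $\sum_i f_{S\cup A^\star_{-i}}(a_i)\le f_S(A^\star)$ by ordering $A^\star$ and telescoping with diminishing returns. The paper states the monotonicity step as $f_S(A^\star_{ij})\ge f_S(A^\star_{-i})$ rather than via the explicit decomposition, but that is the identical inequality.
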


The proof is in Appendix~\ref{lem:avg_proof} and exploits a natural property of submodular functions: the removal of a random element from a large set does not significantly affect its value, in expectation.

\paragraph{Significant iterations.}  Similar to the previous section, we define an assumption on the iterations of the algorithm which allows us to employ the smoothing technique in this section.
\begin{definition*}
Let $B\in \argmax_{B:|B|= c}f_S(B)$.  An iteration of \textsc{SM-Greedy} is \textbf{$\epsilon$-significant} if for the given set $S$ selected before the iteration we have that $f_{S}(B) \geq \frac{\epsilon\cdot c \cdot \texttt{OPT}}{k}$.
\end{definition*}

 \begin{figure*}[t]
\begin{centering}
                \includegraphics[trim = 0mm 0mm 0mm 0mm, height=47mm]{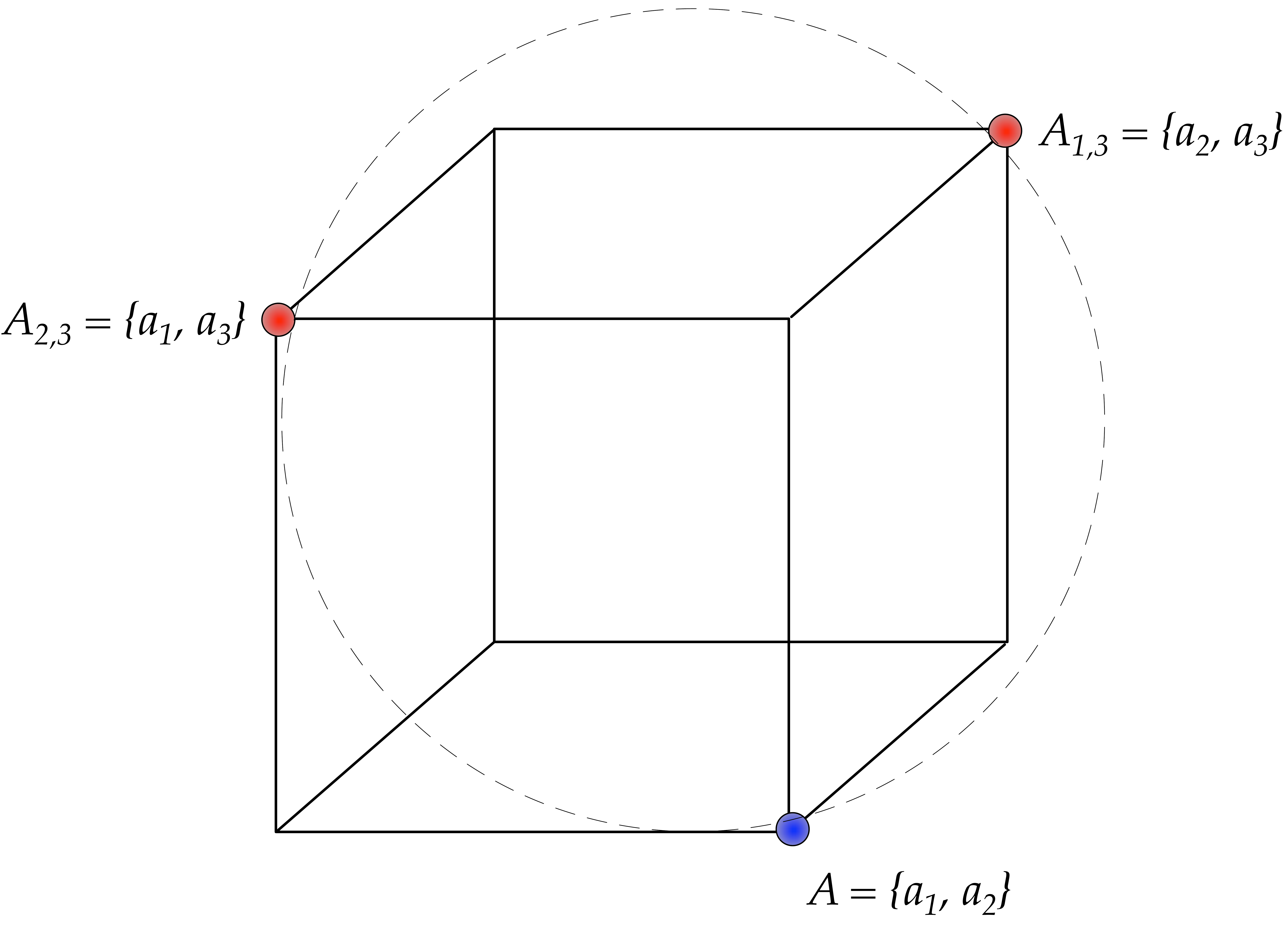}
                 \caption{\footnotesize{An illustration of the smoothing neighborhood.  In this example $N = \{a_1,a_2,a_3\}$, and the bundle we wish to evaluate is $A = \{a_1,a_2\}$.  We think of $A$ as a point in $\mathbb{R}^{3}$ and the smoothing neighborhood of $A = (1,1,0)$ is the points $A_{1,3} = \{a_2,a_3\} = (0,1,1)$ and $A_{2,3} = \{a_1,a_3\} = (1,0,1)$.  The circle illustrates the ball surrounding $A$.}}\label{fig:sm}
                 \end{centering}
\end{figure*}

The following lemma implies that at every step we add a bundle whose smooth marginal contribution is comparable with the largest smooth marginal contribution obtainable.

\begin{lemma}\label{algoworks}
Let $A \in \argmax_{B:|B|=c} \widetilde{\score}(S\cup B)$ where $c \geq \frac{16}{\epsilon}$, and assume that the iteration is $\frac{\epsilon}{4}$-significant.  Then, with probability at least $1-  e^{-\Omega(n^{{1}/{10}})} $ we have that:
$$\score_{S}(A) \geq (1-\epsilon) \max_{B:|B|=c}F_S(B).$$
\end{lemma}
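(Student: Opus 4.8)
The plan is to argue that the noisy mean value $\widetilde{\score}$ is uniformly close (multiplicatively) to the true mean value $\score$ across all bundles $B$ of size $c$, so that the bundle $A$ maximizing $\widetilde{\score}(S \cup B)$ also nearly maximizes $\score(S \cup B)$; then the one-sided comparison with mean marginal contributions follows by combining this with Lemma~\ref{lem:avg}. Concretely, the first step is a concentration statement: for a fixed bundle $B$, $\widetilde{\score}(S\cup B) = \frac{1}{t}\sum_{i,j}\xi_{S\cup B_{ij}} f(S\cup B_{ij})$ is an average of $t = c(n-c-|S|) = \Theta(n)$ independent terms (independence holds because the sets $S\cup B_{ij}$ are distinct, so the multipliers $\xi$ are i.i.d.\ draws from $\mathcal{D}$). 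Here I would invoke the smoothing machinery of Appendix~\ref{sec:appendix_smoothing} — the generalized exponential tail ensures the $\xi$'s have light enough tails that a Bernstein/Chernoff-type bound applies, while submodularity and monotonicity control the spread of the values $f(S\cup B_{ij})$ relative to their mean $\score(S\cup B)$, so that no single term dominates the sum. The upshot is that for a fixed $B$, with probability $1 - e^{-\Omega(n^{1/10})}$ we have $\widetilde{\score}(S\cup B) \in (1\pm\epsilon/4)\score(S\cup B)$, provided the iteration is $(\epsilon/4)$-significant so that $\score(S\cup B)$ is bounded away from zero on the scale that matters.

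The second step is a union bound over all $\binom{n}{c} = n^{O(1/\epsilon)}$ candidate bundles $B$; since $e^{-\Omega(n^{1/10})}$ beats any polynomial in $n$, the multiplicative closeness $\widetilde{\score}(S\cup B)\in(1\pm\epsilon/4)\score(S\cup B)$ holds simultaneously for every bundle $B$ of size $c$ with probability $1 - e^{-\Omega(n^{1/10})}$. Condition on this event. Let $A \in \argmax_B \widetilde{\score}(S\cup B)$ and let $B^\star \in \argmax_{B:|B|=c} f_S(B)$, and recall $\score_S(B) = \score(S\cup B) - f(S)$. Then
\begin{align*}
\score(S\cup A) &\;\geq\; \tfrac{1}{1+\epsilon/4}\,\widetilde{\score}(S\cup A) \;\geq\; \tfrac{1}{1+\epsilon/4}\,\widetilde{\score}(S\cup B^\star) \;\geq\; \tfrac{1-\epsilon/4}{1+\epsilon/4}\,\score(S\cup B^\star),
\end{align*}
so subtracting $f(S)$ from both sides and using monotonicity ($f(S) \le \score(S\cup B^\star)$ is not quite what we need — rather we bound $\score_S(A) = \score(S\cup A) - f(S) \ge \tfrac{1-\epsilon/4}{1+\epsilon/4}\score(S\cup B^\star) - f(S) \ge \tfrac{1-\epsilon/4}{1+\epsilon/4}\score_S(B^\star) - \tfrac{\epsilon/2}{1+\epsilon/4} f(S)$, and then absorb the $f(S)$ slack using $(\epsilon/4)$-significance, which guarantees $\score_S(B^\star) = \max_B f_S(B) \ge \epsilon c\,\texttt{OPT}/k$ is large relative to the loss). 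Finally, by Lemma~\ref{lem:avg} applied with bundle size $c\ge 16/\epsilon$ (where I would either note the lemma as stated covers $1/\epsilon$ and the argument extends verbatim to any $c$, or re-derive it for general $c$), $\score_S(B^\star) \ge (1-\epsilon/c)\max_B f_S(B)$; but we actually want to compare $\score_S(A)$ against $\max_B F_S(B)$, i.e.\ against $\max_B \score_S(B)$ directly, and since $A$ nearly maximizes $\score(S\cup \cdot)$ this is immediate: $\score_S(A) \ge \tfrac{1-\epsilon/4}{1+\epsilon/4}\max_B \score_S(B) - (\text{slack}) \ge (1-\epsilon)\max_{B:|B|=c}F_S(B)$ after choosing constants so that the accumulated losses from the $\tfrac{1-\epsilon/4}{1+\epsilon/4}$ factor and the $f(S)$ absorption sum to at most $\epsilon$.

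The main obstacle I anticipate is the concentration step, specifically handling the interaction between the generalized exponential tail of $\mathcal{D}$ and the potentially wide dynamic range of the summands $f(S\cup B_{ij})$: a priori some $B_{ij}$ could have value far larger than the average $\score(S\cup B)$, which would break a naive Chernoff bound. The resolution must come from submodularity — bounding $f(S\cup B_{ij}) \le f(S\cup B) + f_{S\cup B}(a_j) \le f(S\cup B) + f_S(a_j)$ and arguing the marginals $f_S(a_j)$ cannot all be large simultaneously relative to the optimum — combined with the tail bound on $\xi$ to control the contribution of the heavy terms. This is precisely the content of Lemma~\ref{lem:avg} and the smoothing lemmas in Appendix~\ref{sec:appendix_smoothing}, so in the write-up I would lean on those as black boxes; the remaining work is the bookkeeping of constants to turn the various $(1\pm\epsilon/4)$ and $(1-\epsilon/c)$ factors into the clean $(1-\epsilon)$ in the statement, and verifying the failure probability $e^{-\Omega(n^{1/10})}$ survives the union bound over $n^{O(1/\epsilon)}$ bundles (it does, since $n^{1/10}$ dominates $\log n$).
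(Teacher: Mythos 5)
Your inequality chain has the right skeleton — lower-bound $\widetilde{\score}(S\cup A^\star)$, upper-bound $\widetilde{\score}(S\cup A)$ for the noisy argmax $A$, then compare through $\widetilde{\score}(S\cup A)\ge\widetilde{\score}(S\cup A^\star)$ — which is essentially the structure of the paper's proof. The gap is the opening claim that for every fixed bundle $B$, with probability $1-e^{-\Omega(n^{1/10})}$, $\widetilde{\score}(S\cup B)\in(1\pm\epsilon/4)\score(S\cup B)$. This uniform two-sided multiplicative concentration is false, and the paper says so explicitly at the start of Appendix~\ref{sec:appendix_smoothing}: because of the combinatorial domain, the values $f(S\cup B_{ij})$ in a smoothing neighborhood can be wildly heterogeneous, so one ``cannot simply apply concentration bounds.'' Concretely, an adversarial $B$ can have a handful of swaps $B_{ij}$ with $f_S(B_{ij})\approx f_S(A^\star)$ and the rest near $0$, making $\score_S(B)$ tiny while $\alpha_{\max}/\score_S(B)=\Theta(n)$; no Bernstein/Chernoff bound survives that dynamic range, and $\epsilon/4$-significance does not rescue it, since significance lower-bounds $\max_B f_S(B)$, not $\score_S(B)$ for the particular $B$ being queried. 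Submodularity controls the spread only for $A^\star$ — that is precisely what Claim~\ref{cl:bound_on_var} establishes, and it leans essentially on the maximality of $A^\star$ — not for an arbitrary $B$.

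The correct fix, and what the paper does, is to invoke the smoothing machinery asymmetrically. Lemma~\ref{lem:noisysmoothing_lowerbound} gives a multiplicative lower bound on $\widetilde{\score}(S\cup A^\star)$ because $A^\star$'s variation is bounded (Claim~\ref{cl:bound_on_var}); Lemma~\ref{lem:boundonb} gives, for every $B$, an upper bound of the \emph{additive} form $\widetilde{\score}(S\cup B)\le(1+\lambda)\mu\left(f(S)+F_S(B)+3t^{-1/4}\alpha_{\max}\right)$, where the slack $3t^{-1/4}\alpha_{\max}\le 3t^{-1/4}f_S(A^\star)$ is not an $(1+\epsilon/4)$ fraction of $\score(S\cup B)$, but it is negligible relative to $f_S(A^\star)$ and can be absorbed against the lower bound. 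If you carry that additive slack through your inequality chain instead of folding it into a $(1+\epsilon/4)$ factor — and then absorb the $2\lambda f(S)$ cross term via significance, as you already sketch — you land exactly on the paper's calculation: for any $B$ with $F_S(B)<(1-\epsilon)F_S(A^\star)$, w.h.p.\ $\widetilde{\score}(S\cup A^\star)>\widetilde{\score}(S\cup B)$, and a union bound over the $n^{O(1/\epsilon)}$ bundles finishes. Your union-bound step and the end-game bookkeeping are fine; what fails is the fixed-$B$ two-sided concentration statement as written.
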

The proof relies on arguments from the smoothing framework (Appendix~\ref{sec:appendix_smoothing}).  In this case, the application of smoothing is a bit subtle as we do not apply smoothing on the noisy version of $F$ directly.  The proof uses Lemma~\ref{lem:avg} above as well as Claim~\ref{cl:bound_on_var} which bounds the variation in values of sets $A^\star_{ij}$, when $A^\star \in \argmax_{B:|B|=c}f_{S}(B)$.  Details and proofs are in Appendix~\ref{sec:appendix_smallk}.  

\subsection{Approximation Guarantee in Expectation}

\begin{lemma}\label{lem:meanapx}
Let $\delta>0$ and assume $k> 16 /\delta^2$, $c = 16/\delta$.  Suppose that in every $\delta/4$-significant iteration of \textsc{SM-Greedy} when $S$ are the elements selected in previous iterations, $A\in \argmax_{B:|B|=c} \widetilde{\score}(S \cup B)$, the bundle added $\hat{A}$ respects $f_{S}(\hat{A}) \geq (1-\delta)F_{S}(A)$.  Let $\bar{S}$ be the solution after $\lfloor k/c \rfloor$ iterations.  Then, w.p. $\geq 1-1/n^2$:
$$f(\bar{S}) = (1-1/e - 5\delta)\texttt{OPT}.$$
\end{lemma}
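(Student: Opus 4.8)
The plan is to run the standard greedy analysis on the surrogate function, but carefully account for three sources of loss: the gap between the mean value $F$ and the true value $f$ (controlled by Lemma~\ref{lem:avg}), the per-iteration slack $1-\delta$ between $f_S(\hat A)$ and $F_S(A)$ (given by hypothesis), and the iterations that fail to be $\delta/4$-significant (which we discard at negligible cost). First I would fix a $\delta/4$-significant iteration with current set $S$, let $B^\star\in\argmax_{B:|B|=c}f_S(B)$, and let $O$ be an optimal set of size $k$. A counting/averaging argument over the $\lfloor k/c\rfloor$ disjoint-ish chunks of $O$, combined with submodularity, gives a bundle $B'\subseteq O$ of size $c$ with $f_S(B')\geq \frac{c}{k}(\texttt{OPT}-f(S))$; hence $f_S(B^\star)\geq \frac{c}{k}(\texttt{OPT}-f(S))$. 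Then I chain the inequalities: $f_S(\hat A)\geq(1-\delta)F_S(A)$ by hypothesis; $F_S(A)\geq(1-\delta)\max_{B:|B|=c}F_S(B)$ — but wait, this is exactly the conclusion of Lemma~\ref{algoworks}, which I can invoke since $c\geq 16/\delta$ and the iteration is $\delta/4$-significant (applying Lemma~\ref{algoworks} with $\epsilon$ replaced by $\delta$), and it holds w.p. $\geq 1-e^{-\Omega(n^{1/10})}$; next $\max_{B}F_S(B)\geq F_S(B^\star)\geq(1-\delta)f_S(B^\star)$ by Lemma~\ref{lem:avg}. Multiplying, $f_S(\hat A)\geq(1-\delta)^3 f_S(B^\star)\geq(1-3\delta)\cdot\frac{c}{k}(\texttt{OPT}-f(S))$.

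Next I would handle non-significant iterations. If an iteration is not $\delta/4$-significant then $f_S(B^\star)<\frac{\delta c\,\texttt{OPT}}{4k}$, so even an optimal continuation can add at most $k/c$ bundles each worth at most this much, i.e. $\texttt{OPT}-f(S)<\frac{\delta}{4}\texttt{OPT}$, meaning $f(S)$ is already within $\delta/4$ of $\texttt{OPT}$ and we are done regardless of what the remaining iterations do (monotonicity ensures we never lose value). So assume henceforth every iteration up to the point of interest is $\delta/4$-significant. Now the recursion $\texttt{OPT}-f(S_{t+1})\leq\big(1-(1-3\delta)\frac{c}{k}\big)\big(\texttt{OPT}-f(S_t)\big)$ holds for each of the $m=\lfloor k/c\rfloor$ iterations (note $mc\leq k$ and $mc\geq k-c\geq k-\delta k/16$, so $mc/k\geq 1-\delta/16$). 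Unrolling, $\texttt{OPT}-f(\bar S)\leq\big(1-(1-3\delta)\frac{c}{k}\big)^m\texttt{OPT}\leq e^{-(1-3\delta)mc/k}\texttt{OPT}\leq e^{-(1-3\delta)(1-\delta/16)}\texttt{OPT}\leq e^{-1+4\delta}\texttt{OPT}\leq (1/e)(1+5\delta)\texttt{OPT}$ for small $\delta$, hence $f(\bar S)\geq(1-1/e-5\delta/e)\texttt{OPT}\geq(1-1/e-5\delta)\texttt{OPT}$; I would tune the constant bookkeeping so the final slack is exactly $5\delta$ as stated.

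Finally, the probability bound: the only randomness used is in the invocations of Lemma~\ref{algoworks}, one per iteration, each failing with probability $e^{-\Omega(n^{1/10})}$; there are at most $k\leq n$ iterations, so a union bound gives overall failure probability $\leq n\cdot e^{-\Omega(n^{1/10})}\leq 1/n^2$ for $n$ large, as claimed. I would remark that the hypothesis ``$f_S(\hat A)\geq(1-\delta)F_S(A)$ in every $\delta/4$-significant iteration'' is exactly what Lemmas~\ref{algoworks} and~\ref{single} are designed to supply — in expectation it is immediate, and the w.h.p.\ version is the content of Lemma~\ref{single} — so this lemma is really just the clean greedy-accounting wrapper. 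The main obstacle is not any single step but getting the constants to line up: one must be careful that the three compounding $(1-\delta)$ factors, the $mc/k$ rounding loss, and the conversion from the exponential bound to an additive $5\delta$ slack all fit inside the stated budget under the assumptions $k>16/\delta^2$ and $c=16/\delta$; a looser choice of intermediate precision (e.g.\ running Lemmas~\ref{lem:avg} and~\ref{algoworks} with parameter $\delta$ rather than something smaller) is what makes the arithmetic close.
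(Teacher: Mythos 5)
Your proposal is correct and follows essentially the same path as the paper's proof: condition on Lemma~\ref{algoworks} succeeding in every one of the $\lfloor k/c\rfloor$ iterations (a union bound gives the $1-1/n^2$), chain the hypothesis with Lemmas~\ref{algoworks} and~\ref{lem:avg} to get a per-iteration gain of roughly $(1-3\delta)\tfrac{c}{k}(\OPT-f(S))$, run the standard greedy recursion, and account separately for non-significant iterations and the $\lfloor k/c\rfloor\cdot c<k$ rounding loss. Your treatment of non-significant iterations --- once such an iteration occurs, $f(S)\geq(1-\delta/4)\OPT$ already and monotonicity finishes --- is in fact cleaner than the paper's terse assertion that the restriction $\hat O$ of $O$ to the significant prefix satisfies $f(\hat O)\geq(1-\delta/4)\OPT$. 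One small arithmetic slip: from $k>16/\delta^2$ and $c=16/\delta$ you only get $c/k<\delta$, not $\delta/16$, so $mc/k\geq 1-\delta$; this is harmless since your subsequent estimate only uses $(1-3\delta)(1-\delta)\geq 1-4\delta$, and the stated $5\delta$ budget still closes.
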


This lemma implicitly proves an approximation guarantee that holds \emph{in expectation}.  This is simply because we know that if we choose $\hat{A} = A\setminus \{a_i\}\cup \{a_j\}$ uniformly at random over all choices of $i\in[c],a_j\notin S\cup A$ we get $\mathbb{E} [f_{S}(\hat{A})] = F_{S}(A) > (1-\delta)F_{S}(A)$ in every iteration, and thus by Lemma~\ref{lem:meanapx} we would be arbitrarily close to $1-1/e$, in expectation over all our choices. 

\subsection{From Expectation to High Probability}
From Lemma~\ref{algoworks} we know that $A \in \argmax_{B:|B|=c}\widetilde{F}(S \cup B)$ has mean marginal contribution arbitrarily close to $\max_{B:|B|=c}F_{S}(B)$, but for Lemma \ref{lem:meanapx} to hold we need the true marginal contribution $f_{S}(\MS)$ to be arbitrarily close to $\max_{B:|B|=c}F_{S}(B)$.  Simply adding $A$ can easily lead to an arbitrarily bad approximation (see Appendix~\ref{sec:examples} ).  In order to prove that $\textsc{SM-Greedy}$ provides the desired approximation guarantee, we need to show that when $\hat{A} \in \argmax_{i\in[c],j\notin S\cup A}\widetilde{f}(S \cup A_{ij})$ then with sufficiently high probability $f_{S}(\hat{A})$ is arbitrarily close to $\score_{S}(A)$ as required by Lemma~\ref{lem:meanapx}.

\paragraph{High-level overview to show high probability guarantee.}  Let $A^\star \in \argmax_{B:|B|=c}f_{S}(B)$ and $A\in \argmax_{B:|B|=c} \widetilde{\score}(S\cup B)$.  We will define two kinds of sets in $\{A_{ij}\}_{i\in [c],j \notin S\cup A}$, called \textbf{good} and \textbf{bad}.  A good set is a set $G$  for which $f_{S}(G) \geq (1-2\epsilon) f_{S}(A^\star)$ and a bad set is a set $B$ for which $f_{S}(B)\leq (1-3\epsilon)f_{S}(A^\star)$.  Our goal is to prove $\argmax \{ \widetilde{f}(S\cup A_{ij}) :  a_i \in A, a_j \notin S\cup A  \}$ is w.h.p. not bad.  Doing so implies that in every iteration w.h.p. we add a bundle whose true marginal value is at least $(1-3\epsilon)$ of $f_{S}(A^\star)$ which is an upper bound on $\max_{B:|B|=c} F_{S}(B)$ (and thus also on $F_{S}(A)$).

\begin{lemma}\label{single}
For any $\epsilon>0$, suppose we run \textsc{SM-Greedy} where in each iteration we add a bundle of size $c= 16/\epsilon$.  For any $\epsilon/8$-significant iteration where the set previously selected is ${S: |S| \in O(\log\log n)}$, let $A \in \argmax \widetilde{\score}(S\cup A)$ and $\MS = \argmax_{(i,j)\in A \times N\setminus S\cup A} \widetilde{f}(S \cup A_{ij})$.  
Then, w.p. $\geq 1-3/\log n$ we have:
$$f_S(\MS) \ge (1 - 3\epsilon)\score_S(A).$$
\end{lemma}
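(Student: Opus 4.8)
The plan is to reduce the lemma to showing that the bundle $\MS$ added in the iteration is \emph{not bad}. Put $A^\star\in\argmax_{|B|=c}f_S(B)$. Since every swap $A_{ij}$ has size $c$, the average defining the mean marginal contribution satisfies $F_S(A)=\frac1t\sum_{i,j}f_S(A_{ij})\le f_S(A^\star)$ deterministically, so it suffices to prove $f_S(\MS)\ge(1-3\epsilon)f_S(A^\star)$. First I would establish that the swap-neighborhood of $A$ contains many \emph{good} swaps. Since the iteration is $\epsilon/8$-significant and $c=16/\epsilon$, the smoothing guarantee of Lemma~\ref{algoworks} (with the precision parameter relabeled) gives, with probability $1-e^{-\Omega(n^{1/10})}$, that $F_S(A)\ge(1-\epsilon)\max_{|B|=c}F_S(B)\ge(1-\epsilon)F_S(A^\star)\ge(1-\epsilon)^2 f_S(A^\star)$, the last step by Lemma~\ref{lem:avg}. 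As $F_S(A)$ is the average of the $t$ quantities $f_S(A_{ij})\in[0,f_S(A^\star)]$, a one-line averaging argument yields that an $\Omega(\epsilon)$-fraction (hence $\Omega(n)$) of the swaps satisfy $f_S(A_{ij})\ge(1-2\epsilon)f_S(A^\star)$ --- call these \emph{good}, and call those with $f_S(A_{ij})\le(1-3\epsilon)f_S(A^\star)$ \emph{bad}.

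Next comes a deterministic comparison. Order the swaps $T_{(1)},T_{(2)},\dots$ by decreasing noise multiplier, put $\xi_{(m)}:=\xi_{S\cup T_{(m)}}$, and note $\xi_{(1)}=\max_{i,j}\xi_{S\cup A_{ij}}$. Because $\widetilde f=\xi f$ with positive multipliers and $\MS$ maximizes $\widetilde f(S\cup\cdot)$ over the swaps, for every $m$
$$
\xi_{(1)}\bigl(f(S)+f_S(\MS)\bigr)\ \ge\ \xi_{S\cup\MS}\bigl(f(S)+f_S(\MS)\bigr)\ \ge\ \xi_{(m)}\bigl(f(S)+f_S(T_{(m)})\bigr),
$$
hence $f_S(\MS)\ge f_S(T_{(m)})-\bigl(1-\xi_{(m)}/\xi_{(1)}\bigr)\bigl(f(S)+f_S(A^\star)\bigr)$. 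Using feasibility of $S\cup A^\star$ (so $f(S),f_S(A^\star)\le\texttt{OPT}$) together with $\epsilon/8$-significance and $c=16/\epsilon$ (which force $f_S(A^\star)=\Omega(\texttt{OPT}/k)$, hence $f(S)+f_S(A^\star)=O(k\,f_S(A^\star))$), the error term is at most $\epsilon f_S(A^\star)$ provided $\xi_{(1)}/\xi_{(m)}\le 1+\epsilon/k$. So it suffices to find an index $m=\Theta(\log\log n)$ with $T_{(m)}$ good and $\xi_{(1)}/\xi_{(m)}\le 1+\epsilon/k$; then $f_S(\MS)\ge(1-2\epsilon)f_S(A^\star)-\epsilon f_S(A^\star)=(1-3\epsilon)f_S(A^\star)$.

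Two probabilistic facts, each failing with probability at most $1/\log n$, then finish it, with $r:=\Theta(\log\log n)$. (i) Since an $\Omega(\epsilon)$-fraction of the $t$ multipliers sit on good swaps and the $\{\xi_{S\cup A_{ij}}\}$ are i.i.d., the rank (by multiplier) of the best good swap is stochastically dominated by a geometric variable, so with probability $\ge 1-(1-\Omega(\epsilon))^{r}\ge 1-1/\log n$ one of $T_{(1)},\dots,T_{(r)}$ is good. (ii) For a generalized exponential-tail $\mathcal D$, the top $r$ order statistics of $t=\Theta(n)$ i.i.d.\ samples lie within a factor $1\pm O(\log\log n/\log n)$ of one another with probability $\ge 1-1/\log n$ --- the density $e^{-g(x)}$ with $g(x)=\sum_i a_ix^{\alpha_i}$, $\alpha_0\ge1$, places the extreme order statistics at scale $g^{-1}(\log t)$ with relative spacings $O(1/\log t)$, and a crude tail bound controls the very largest --- and since $k\in O(\log\log n)$ this gives $\xi_{(1)}/\xi_{(r)}\le 1+\epsilon/k$. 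On the intersection of (i), (ii), and the overwhelmingly likely event from the first paragraph, the index $m\le r$ from (i) also satisfies the multiplier bound by (ii), so the deterministic comparison yields $f_S(\MS)\ge(1-3\epsilon)f_S(A^\star)\ge(1-3\epsilon)F_S(A)$; the total failure probability is $\le 3/\log n$, which matches the statement.

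The main obstacle is a coupling that the argument above glosses over: the selected bundle $A$ is itself a function of the multipliers $\{\xi_{S\cup A_{ij}}\}$, all of which enter $\widetilde F(S\cup A)$, and ``$A$ is selected'' is the event $\widetilde F(S\cup A)\ge\widetilde F(S\cup B)$ for all $B$; so facts (i) and (ii), which treat those multipliers as fresh i.i.d.\ samples, are not immediately licensed. A naïve union bound over the $\binom{n-|S|}{c}=n^{\Theta(1/\epsilon)}$ candidate bundles is hopeless, since it would demand control of the extreme multipliers far below the scale $1+\epsilon/k$ that the tail actually affords. The way around it is that $\widetilde F(S\cup A)$ is an average of $t=\Theta(n)$ terms with coefficients $O(f_S(A^\star)/t)$, so deleting the $r=\Theta(\log\log n)$ largest local multipliers perturbs it by only $O(r\log n/n)\cdot f_S(A^\star)$ --- negligible against $\widetilde F(S\cup A)=\Theta(f_S(A^\star))$ --- whence ``$A$ is selected'' is, up to a negligible error, a function of the \emph{bulk} of the local multipliers, essentially independent of their top few values, which are exactly what facts (i)--(ii) constrain. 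Making this decoupling rigorous is the technical heart of the proof, and it is here that the hypothesis $|S|\in O(\log\log n)$ (hence small swap-sets and $r=\Theta(\log\log n)$) is used.
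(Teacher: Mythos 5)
Your plan follows the same architecture as the paper's proof: reduce to $f_S(\MS)\ge(1-3\epsilon)f_S(A^\star)$ by the upper bound $F_S(A)\le f_S(A^\star)$, argue that a substantial fraction of the swaps $A_{ij}$ around the selected $A$ are \emph{good}, invoke the generalized exponential tail to force the extreme multipliers of good and bad swaps to within a $1\pm O(1/\log\log n)$ factor of one another, and conclude that $\MS$ cannot be bad. The details differ in two places. First, you derive an $\Omega(\epsilon)$-fraction of good swaps by an averaging argument from $F_S(A)\ge(1-O(\epsilon))f_S(A^\star)$, whereas the paper's Claim~\ref{clm:goodandbad} derives (with the same ingredients) that at least \emph{half} are good; both are enough. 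Second, you reason via the sorted multipliers $\xi_{(1)},\xi_{(2)},\dots$ and the deterministic inequality $\xi_{(1)}(f(S)+f_S(\MS))\ge\xi_{(m)}(f(S)+f_S(T_{(m)}))$, while the paper packages the same order-statistic phenomenon into the threshold parameters $\mg,\mb$ (Claim~\ref{clm:probmamg} and Lemma~\ref{relationMbMg}) and compares $\max_{\texttt{good}}\widetilde f$ directly against $\max_{\texttt{bad}}\widetilde f$. These are equivalent ways of saying the same thing, and your translation from $|S|\in O(\log\log n)$ and $\epsilon/8$-significance into the $1+\epsilon/k$ tolerance is done correctly.

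The point that deserves comment is your final paragraph on the conditioning/coupling between the event that $A$ is the argmax of $\widetilde F$ and the multipliers $\{\xi_{S\cup A_{ij}}\}$ of the very swaps you then sort. You are right that this is a genuine subtlety: the paper applies Claim~\ref{clm:probmamg}, which is stated for fresh i.i.d.\ samples, to the multipliers of the \emph{selected} bundle, and does not explicitly discharge the conditioning; a brute union bound over the $n^{\Theta(1/\epsilon)}$ candidate bundles is indeed hopeless because it would require failure probability far below the $1/\log n$ that the tail provides. Your proposed decoupling (the argmax event is essentially a function of the bulk of the local multipliers, since deleting the top $r=\Theta(\log\log n)$ of them perturbs $\widetilde F(S\cup A)$ negligibly) is the right kind of idea, but note that it is itself delicate: because smoothing neighborhoods of different bundles overlap, removing the top-$r$ multipliers of $A$'s neighborhood perturbs $\widetilde F(S\cup B)$ for many nearby $B$ as well, not just for $A$, so one has to argue the argmax is stable under \emph{all} of those perturbations simultaneously, and one must also track that the coefficient of each multiplier in $\widetilde F$ is $\Theta\bigl((f(S)+f_S(A_{ij}))/t\bigr)$, which is of order $k f_S(A^\star)/t$ rather than $f_S(A^\star)/t$ once $f(S)$ grows. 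So while your plan identifies the same reduction and the same tail mechanism the paper uses, the decoupling step you correctly flag as the technical heart is left as a sketch rather than a completed argument; on this particular point you are more careful about stating the obstacle than the paper's own write-up, but neither text fully closes it.
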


At a high level, the proof follows the following steps: 
\begin{enumerate}
\item In Claim~\ref{clm:goodandbad} we show that for $A\in \argmax_{B:|B|=c}\widetilde{\score}(S \cup B)$, at least half of the sets in $\{A_{ij}\}_{i\in A,j \notin S \cup A}$ are good, and at most half are bad;

\item Next, we define two thresholds: $\mg$ and $\mb$.  Intuitively, $\mg$ is a lower bound on the maximum of noise multipliers from the good sets, and $\mb$ is an upper bound on the maximum of noise multipliers from bad sets.  We then show in Lemma~\ref{relationMbMg} that $\mg \geq (1-\eps)\mb$, for any $\eps = \Omega(1/ \log \log n)$.  This lemma is quite technical, and it is where we fully leverage the property of the generalized exponential tail distribution and the fact that $k \in O(\log\log n)$;  

\item From $\mg \geq (1-\eps)\mb$ and Claim~\ref{clm:goodandbad} we can prove that  w.h.p. there is at least one good set whose noisy value is sufficiently larger than the noisy value of a bad set.  The fact that a bad set loses to a good set implies that the value of the set we end up selecting must at least be as high as that of a bad set, i.e. $f_{S}(\MS)\geq (1-3\epsilon)f_{S}(A^\star)$.  Notice that by definition $f_S(A^\star)$ is an upper bound on $F_S(B)$ for any bundle $B$ of size $c$ which therefore completes the proof.  
\end{enumerate}

Lemma~\ref{single} above essentially tells us that at every iteration we select the bundle whose marginal contribution is almost maximal.  Together with previous arguments from this section, this proves our main theorem for the case in which ${k  \in \Omega({1}/{\epsilon^2}) \cap O(\log \log n)}$.  For $k \in \Omega(\frac{1}{\epsilon}) \cap O(\frac{1}{\epsilon^2})$ we run a single iteration of \textsc{SM-Greedy} with $c= k$ (o.w. the approximation is $\approx 1/2$, when $k = 2c-1$).

\begin{theorem}\label{final-thm}
For any monotone submodular function $f:2^N \to \mathbb{R}$ and $\epsilon>0$, when ${k  \in \Omega({1}/{\epsilon}) \cap O(\log \log n)}$, there is a $(1 - 1/e - \epsilon)$ approximation for $\max_{S:|S|\leq k} f(S)$, with probability $1-4/\log n$ given access to a noisy oracle whose distribution has a generalized exponential tail.
\end{theorem}

\newpage \section{Optimization for Very Small $k$}\label{sec:very_smallk}
The smoothing guarantee from the previous section actually necessitates selecting bundles of size $c \in\Theta(1/\epsilon)$ and does not apply to very small values of $k\in O(1/\epsilon)$\footnote{The dependency on $\epsilon$ originates in Claim~\ref{cl:bound_on_var} where we bound on the variation of $c-1$ sets $\Ai$, and thus smoothing depends on $c \geq 4/\epsilon$.}.  For small constants we propose a different algorithm that uses a different smoothing technique.  The algorithm is simple and applies the same principles as the ones from the previous section.   
We show that this simple algorithm obtains an approximation ratio arbitrarily close to $1-1/e$ w.h.p. when $k>2$ and in expectation when $k=2$.  For $k=1$ we get arbitrarily close to $1/2$, which is tight.  We show lower bounds for small values of $k$ and in particular when $k=1$ show that no algorithm can obtain an expected approximation ratio better than $1/2 +o(1)$.  All proofs and details are in Appendix~\ref{sec:very_smallk_appendix}.

\subsection{Smoothing Guarantees}
The smoothing here is straightforward.  For every set $A$ consider the smoothing neighborhood $\mathcal{H}(A) = \{A \cup x \ : \ x \notin A\}$, $F(A) = \mathbb{E}_{X \in \mathcal{H}(A)}[f(X)]$ and $\widetilde{F}(A) = \mathbb{E}_{X \in \mathcal{H}(A)}[\widetilde{f}(X)]$.    

\begin{lemma}\label{smoothing_verysmall}
Let $A \in \argmax_{B:|B|=k}\widetilde{F}(B)$.  Then, for any fixed $\epsilon>0$ w.p. $1-e^{-\Omega(\epsilon^2(n-k))}$:
$$F(A) \geq (1-\epsilon)\max_{B:|B|=k} F(B).$$
\end{lemma}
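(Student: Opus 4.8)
The plan is to show that the noisy mean value $\widetilde{F}$ concentrates tightly around the true mean value $F$ for every set $B$ of size $k$, uniformly, and then to conclude by a standard argmax-comparison argument. Fix a set $B$ with $|B| = k$. By definition, $\widetilde{F}(B)$ is an average of $n - k$ independent terms $\widetilde{f}(B \cup x) = \xi_{B \cup x} f(B \cup x)$, one for each $x \notin B$, where the $\xi_{B \cup x}$ are i.i.d.\ draws from $\mathcal{D}$ (note that the sets $B \cup x$ are pairwise distinct as $x$ ranges over $N \setminus B$, so the independence hypothesis of the noisy oracle applies). The first step is to invoke the smoothing framework of Appendix~\ref{sec:appendix_smoothing}: since $f$ is monotone and submodular, the values $f(B \cup x)$ for the various $x$ cannot be too spread out relative to their mean $F(B)$ — concretely, by monotonicity each $f(B \cup x) \ge f(B)$, and submodularity controls how large any single $f(B \cup x)$ can be relative to the average, so the generalized exponential tail of $\mathcal{D}$ gives subexponential (or bounded) contributions. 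This lets us apply a Bernstein/Hoeffding-type concentration inequality to conclude that $|\widetilde{F}(B) - F(B)| \le \tfrac{\epsilon}{2} F(B)$ with probability at least $1 - e^{-\Omega(\epsilon^2 (n-k))}$.

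The second step is a union bound over all $\binom{n}{k}$ candidate sets $B$ of size $k$. Since $k$ is a constant here (this is the ``very small $k$'' regime), $\binom{n}{k} = \mathrm{poly}(n)$, which is dominated by the $e^{-\Omega(\epsilon^2(n-k))}$ failure probability for $n$ sufficiently large; hence with probability $1 - e^{-\Omega(\epsilon^2(n-k))}$ we have $(1 - \tfrac{\epsilon}{2}) F(B) \le \widetilde{F}(B) \le (1 + \tfrac{\epsilon}{2}) F(B)$ simultaneously for all such $B$. The third and final step: let $A \in \argmax_{|B|=k} \widetilde{F}(B)$ and let $B^\star \in \argmax_{|B|=k} F(B)$. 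Then on the good event, $F(A) \ge \tfrac{1}{1 + \epsilon/2}\widetilde{F}(A) \ge \tfrac{1}{1+\epsilon/2}\widetilde{F}(B^\star) \ge \tfrac{1 - \epsilon/2}{1 + \epsilon/2} F(B^\star) \ge (1 - \epsilon) F(B^\star) = (1-\epsilon)\max_{|B|=k} F(B)$, which is the claim (after relabeling $\epsilon$ by a constant factor, absorbed into the $\Omega(\cdot)$).

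I expect the main obstacle to be the first step: establishing the right concentration bound for $\widetilde{F}(B)$. The terms $\xi_{B \cup x} f(B \cup x)$ are not bounded a priori — the generalized exponential tail allows $\xi$ to be arbitrarily large — so one cannot apply Hoeffding directly; instead one needs the variance bound (analogous to Claim~\ref{cl:bound_on_var} in the previous section) showing $\mathrm{Var}$ of each term is controlled by $F(B)^2$, together with a tail truncation argument exploiting the form $\rho(x) = e^{-g(x)}$ with $g$ having leading exponent $\alpha_0 \ge 1$, and then a Bernstein-type inequality. The submodularity-driven claim that no single $f(B \cup x)$ can dominate the average $F(B)$ by more than a bounded factor (because $f(B \cup x) - f(B) \le \sum_{y \in B\cup x}f_\emptyset(y)$-type bounds, or more carefully via the averaging property used in Lemma~\ref{lem:avg}) is what makes the variance bound go through; getting the exact dependence on $\epsilon$ and $n - k$ in the exponent is the technical heart, but it is a routine instance of the general smoothing lemmas, so I would state it by reference to Appendix~\ref{sec:appendix_smoothing} rather than reprove it.
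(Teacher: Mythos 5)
The first step of your proposal contains a genuine gap. You claim uniform two-sided multiplicative concentration, $|\widetilde{F}(B)-F(B)|\le\tfrac{\epsilon}{2}F(B)$ for \emph{all} $|B|=k$, justified by the assertion that ``submodularity controls how large any single $f(B\cup x)$ can be relative to the average $F(B)$.'' This is false for general $B$. Submodularity and monotonicity give $f(B\cup x)\le f(B)+f(x)\le 2f(A^\star)$ (a bound relative to the \emph{optimum}, not relative to $F(B)$), while the minimum over the smoothing neighborhood is only $f(B)$, which can be arbitrarily small compared to $\max_x f(B\cup x)$. Concretely, take a coverage instance with one ``strong'' element $a^\star$ covering the whole universe and the rest ``weak'' elements covering a single common item; for a weak singleton $B=\{b\}$, the term $f(B\cup a^\star)$ is a constant fraction of $\sum_x f(B\cup x)$, so $\widetilde F(B)$ inherits $\Theta(F(B))$ fluctuations from the single multiplier $\xi_{\{b,a^\star\}}$ and the event $\widetilde F(B)\le(1+\tfrac{\epsilon}{2})F(B)$ fails with constant probability for small $\epsilon$. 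This ruins the very first inequality in your argmax chain, $F(A)\ge\tfrac{1}{1+\epsilon/2}\widetilde F(A)$, since $A$ — the noisy argmax — is precisely the kind of set likely to have this pathology.

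What the paper does instead is asymmetric: it proves the \emph{lower} bound (Lemma~\ref{lem:noisysmoothing_lowerbound}, which needs bounded variation) only for $A^\star=\argmax_B F(B)$, whose variation is bounded by $2$ via the subadditivity argument above, and it proves an \emph{upper} bound on $\widetilde F(B)$ for arbitrary $B$ using Lemma~\ref{lem:boundonb}, which makes no variation assumption but returns a mixed bound $\widetilde F(B)\lesssim(1+\lambda)\mu\bigl(F(B)+3t^{-1/4}\alpha_{\max}\bigr)$ with $\alpha_{\max}\le 2f(A^\star)$. The additive error is measured against the optimum, not against $F(B)$, which is exactly what survives when $F(B)$ is small; the conclusion then comes from subtracting the two and checking that the $t^{-1/4}$ terms are below $\epsilon F(A^\star)$. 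Your instinct that the variation/domination issue is the technical crux is correct, but citing Appendix~\ref{sec:appendix_smoothing} as if it delivers uniform multiplicative concentration misreads what those lemmas give. (Two smaller points: the rate $e^{-\Omega(\epsilon^2(n-k))}$ is what one would get from bounded noise and bounded variation via Hoeffding, not from the paper's generalized-exponential-tail machinery which passes through $\omega$ and $t^{1/4}$; and the union bound over $\binom{n}{k}$ sets is fine as you say and is also what the paper implicitly does.)
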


\subsection{An Approximation Algorithm for Very Small $k$}

\paragraph{Approximation guarantee in expectation.}  The algorithm will simply select the set $\MS$ to be a random set of $k$ elements from a random set of $\mathcal{H}(A)$ where $A \in \argmax_{B:|B|=k} \widetilde{F}(B)$.  For any constant $k$ and any fixed $\epsilon>0$ this is a $\left(k/(k+1)-\epsilon\right)$ approximation \emph{in expectation} (see Theorem~\ref{thm:tinyk}).

\paragraph{High probability.} To obtain a result that holds w.h.p. we will consider a modest variant of the algorithm above.  The algorithm enumerates all possible subsets of size $k-1$, and identifies the set $A\in \argmax_{B:|B|=k-1}\widetilde{F}(B)$.  The algorithm then returns $\MS \in \argmax_{X \in \mathcal{H}(A)} \widetilde{f}(X)$. 

\begin{theorem}\label{thm:verysmallkwhp}
For any submodular function $f:2^N \to \mathbb{R}$ and any fixed $\epsilon>0$ and constant $k$, there is a $\left(1-1/k-\epsilon\right)$-approximation algorithm for $\max_{S:|S|\leq k}f(S)$ which only uses a generalized exponential tail noisy oracle, and succeeds with probability at least $1-6/\log n$.
\end{theorem}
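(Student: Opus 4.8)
The plan is to run the "enumerate size-$(k-1)$ bundles, then smooth with singletons" algorithm and argue in three stages, mirroring the structure used for \textsc{SM-Greedy} in Section~\ref{sec:smallk} but in the much simpler setting where there is only \emph{one} greedy step. First I would fix $\epsilon>0$ and apply Lemma~\ref{smoothing_verysmall} to the set $A \in \argmax_{B:|B|=k-1}\widetilde{F}(B)$: with probability $1 - e^{-\Omega(\epsilon^2(n-k))}$ (which is $1 - o(1/\log n)$ for constant $k$ and large $n$) we have $F(A) \geq (1-\epsilon)\max_{B:|B|=k-1}F(B)$. Next I would observe that, by submodularity and monotonicity exactly as in Lemma~\ref{lem:avg}, the mean value $F(A^\star) = \mathbb{E}_{x}[f(A^\star \cup x)]$ of the \emph{true} best size-$(k-1)$ bundle $A^\star$ is close to $\max_{S:|S|=k}f(S) =: \OPT_k$: adding a uniformly random element from $N\setminus A^\star$ to $A^\star$ recovers, in expectation, at least a $(1 - 1/k)$ fraction of the marginal one would get from the best single element to add to $A^\star$ (there are $k$ "missing" slots if we think of completing $A^\star$ to an optimal $k$-set, and dropping a random one loses a $1/k$ fraction in expectation). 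Chaining these, $F(A) \geq (1-\epsilon)F(A^\star) \geq (1-\epsilon)(1-1/k)\OPT_k$, up to the usual care that $F(A^\star)$ is a mean and $\max_B F(B) \geq F(A^\star)$.

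The remaining and genuinely delicate step is to pass from this \emph{expectation} guarantee on $\mathcal{H}(A)$ to a \emph{high-probability} one on the single set $\MS \in \argmax_{X\in\mathcal{H}(A)}\widetilde{f}(X)$ that the algorithm actually returns. This is the analogue of Lemma~\ref{single}, and I would reuse its machinery essentially verbatim: call $X \in \mathcal{H}(A)$ \emph{good} if $f(X) \geq (1-2\epsilon)F(A^\star)$ and \emph{bad} if $f(X) \leq (1-3\epsilon)F(A^\star)$; first show (as in Claim~\ref{clm:goodandbad}, via a counting/averaging argument using $F(A)\geq(1-\epsilon)F(A^\star)$ and $f(X)\le F(A^\star)$ for all $X\in\mathcal{H}(A)$) that at least a constant fraction of the $n-k+1$ sets in $\mathcal{H}(A)$ are good and at most half are bad. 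Then introduce thresholds $\mg,\mb$ — a high-probability lower bound on $\max_{X \text{ good}}\xi_X$ and an upper bound on $\max_{X \text{ bad}}\xi_X$ — and invoke the generalized-exponential-tail property to conclude $\mg \geq (1-\gamma)\mb$ for a suitable $\gamma = o(1)$; since here $|\mathcal{H}(A)| = \Theta(n)$ (as opposed to $\Theta(\log\log n)$ bundles in Section~\ref{sec:smallk}) the tail estimates are actually \emph{easier}, and the $\gamma$ we can afford is polynomially small rather than $O(1/\log\log n)$. From $\mg \ge (1-\gamma)\mb$, with probability $1 - O(1/\log n)$ the good set attaining the max noise multiplier beats every bad set in noisy value, so $\MS$ is not bad, i.e. $f(\MS) \geq (1-3\epsilon)F(A^\star) \geq (1-3\epsilon)(1-1/k)\OPT_k \geq (1 - 1/k - O(\epsilon))\OPT_k$. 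Rescaling $\epsilon$ by a constant gives the stated $(1-1/k-\epsilon)$ bound, and a union bound over the (at most two) failure events — the smoothing event of Lemma~\ref{smoothing_verysmall} and the noise-comparison event — yields the claimed success probability $1 - 6/\log n$.

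The main obstacle is the last stage: controlling $\max_{X \text{ bad}}\xi_X$ versus $\max_{X \text{ good}}\xi_X$ when the two collections have comparable (linear-in-$n$) sizes, since a priori the maximum of many bad noise multipliers could exceed the maximum of the good ones. This is exactly where the generalized-exponential-tail assumption is essential — it forces the order statistics of $\xi$ to be tightly concentrated so that taking a max over $\Theta(n)$ samples from the good pool is within a $(1-o(1))$ factor of taking a max over $\Theta(n)$ samples from the (possibly larger) bad pool. I expect the argument here to be a streamlined version of the proof of Lemma~\ref{relationMbMg}, simplified because $k$ is constant (so there is no $k\in O(\log\log n)$ bookkeeping) and because there is only a single greedy iteration (so no union bound over $\lfloor k/c\rfloor$ steps and no dependence on previously-selected $S$). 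Everything else — the smoothing bound and the $(1-1/k)$ combinatorial loss from completing a size-$(k-1)$ set to a size-$k$ set — is routine given the lemmas already proved.
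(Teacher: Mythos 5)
Your high-level plan is the same as the paper's: run the ``enumerate size-$(k-1)$ bundles, smooth by singletons'' algorithm, invoke Lemma~\ref{smoothing_verysmall}, establish a good/bad dichotomy via a counting argument, and use the $\mg/\mb$ comparison from Lemma~\ref{relationMbMg} to rule out bad sets with probability $1-O(1/\log n)$. That matches the paper's proof of Theorem~\ref{thm:verysmallkwhp} step for step, and the counting argument you sketch for ``a constant fraction of $\mathcal{H}(A)$ are good'' goes through whether you calibrate good/bad against $F(A^\star)$ (your choice) or $f(A^\star)$ (the paper's).

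There is, however, a genuine gap in the step that produces the $(1-1/k)$ factor. You assert that $F(A^\star) = \mathbb{E}_x[f(A^\star\cup x)]$ is close to $\OPT_k$ because ``adding a uniformly random element from $N\setminus A^\star$ to $A^\star$ recovers, in expectation, at least a $(1-1/k)$ fraction of the marginal one would get from the best single element to add to $A^\star$.'' This is false as stated: if exactly one element of $N\setminus A^\star$ carries all the marginal value, a uniformly random element captures only an $O(1/n)$ fraction of it, not $(1-1/k)$. The invocation of Lemma~\ref{lem:avg} is also out of place --- that lemma is proved for the \emph{swap} neighborhood $\{A_{ij}\}$ of Section~\ref{sec:smallk}, not the \emph{add} neighborhood $\{A^\star\cup x\}$ used here, and its conclusion is that removing and re-adding an element approximately preserves $f_S(A^\star)$, not that it approximately recovers $\OPT_k$. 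The inequality you want, $F(A^\star)\geq(1-1/k)\OPT_k$, does hold, but the correct route is the paper's two-step argument: $F(A^\star)\geq f(A^\star)$ by monotonicity (every set in $\mathcal{H}(A^\star)$ contains $A^\star$), and $f(A^\star)\geq(1-1/k)\OPT$ because the optimal $k$-set $O$ has some element whose marginal contribution to $O\setminus\{o\}$ is at most $f(O)/k$ (telescoping plus submodularity). You should replace the ``random addition recovers $(1-1/k)$'' claim with this direct combinatorial bound; with that fix the rest of your chain $F(A)\geq(1-\epsilon)F(A^\star)\geq(1-\epsilon)(1-1/k)\OPT_k$ is sound.

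One smaller factual slip: the smoothing neighborhood in Section~\ref{sec:smallk} already has $\Theta(n)$ sets (it is $c(n-c-|S|)$ with $c$ constant); the $O(\log\log n)$ there bounds the cardinality constraint $k$ and hence $|S|$. The reason the $\mg/\mb$ argument is easier in this section is that there is a single greedy step so $S=\emptyset$, removing the $\Theta(1/\log\log n)$ restriction on $\gamma$ that appears in Lemma~\ref{single} --- not because $|\mathcal{H}(A)|$ grows from $\Theta(\log\log n)$ to $\Theta(n)$.
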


\subsection{Information Theoretic Lower Bounds for Constant $k$}
Surprisingly, even for $k=1$ no algorithm can obtain an approximation better than $1/2$, which proves a separation between large and small $k$.  In Claim~\ref{clm:lower_constant} we show no randomized algorithm with a noisy oracle can obtain an approximation better than $1/2 +O(1/\sqrt{n})$ for $\max_{a \in N} f(a)$, and in Claim~\ref{clm:evenlower_constant} approximation better than $(2k-1)/2k +O(1/\sqrt{n})$ for the optimal set of size $k$.

\newpage \section{Extensions}\label{sec:extensions}
In this section we consider extensions of the optimization under noise model.  In particular, we show that the algorithms can be applied to several related problems: additive noise, marginal noise, correlated noise, degradation of information, and approximate submodularity.

\subsection{Additive Noise}
Throughout this paper we assumed the noise is multiplicative, i.e. we defined the noisy oracle to return $\widetilde{f}(S) = \xi_{S}\cdot f(S)$.  An alternative model is one where the noise is \emph{additive}, i.e. $\widetilde{f}(S) = f(S)+\xi_{S}$, where $\xi_{S} \sim \mathcal{D}$. The impossibility results for adversarial noise apply to the additive case as well.

From a modeling perspective, the fact that the noise may be independent of the value of the set queried may be an advantage or a disadvantage, depending on the setting.  From a technical perspective, the problem remains non-trivial.  Fortunately, all the algorithms described above apply to the additive noise model, modulo the smoothing arguments which become straightforward.  That is, we still need to apply smoothing on the surrogate functions, but it is easy to show arguments like $A \in \argmax_{B} \widetilde{F}(S\cup B)$ implies w.h.p. $F_{S}(A) \geq (1-\delta)\max_{b}F_{S}(B)$.  In the additive noise model:
$$\widetilde{F}(S\cup A) = \sum_{X \in \mathcal{H}(A)} \widetilde{f}(S\cup X) = \sum_{X \in \mathcal{H}(A)} \left ( f(S\cup X) + \xi_{S\cup X} \right )= \sum_{X \in \mathcal{H}(A)} f(S\cup X) + \sum_{X \in \mathcal{H}(X) }\xi_{S\cup X}$$
Thus, by applying a concentration bound we can show that a set $A$ whose smooth value is maximal implies that its non-noisy smooth marginal contribution $F_{S}(A)$ is approximately maximal as well.

\subsection{Marginal Noise}
An alternative noise model is one where the noise acts on the marginals of the distribution. In this model, a query to the oracle is a pair of sets $S,T \subseteq N$ and the oracle returns $\xi_{S,T} \cdot f_S(T)$ in the \emph{multiplicative marginal noise} model and $f_S(T) + \xi_{S,T}$ in the \emph{additive marginal noise} model.

\paragraph{Adversarial additive marginal noise is generally impossible.}  If the error is adversarial, and the noise is additive, the lower bound of \ref{thm:adversarial} follows for any magnitude of the noise. Letting $\epsilon$ denote the maximal magnitude of the noise, we consider a function in which no element ever gives a contribution higher than $\epsilon$, and then getting marginal information does not help.

\paragraph{Adversarial multiplicative marginal noise is approximable.}  If the marginal error is adversarial but multiplicative within factor $\alpha$, it is well known one can obtain a $1 - 1/e^{\alpha}$ approximation.

\paragraph{Marginal i.i.d noise is approximable.}  If one is allowed to query the oracle on any two sets $S,T$ and get $\xi_{S,T}\cdot f_S(T)$ (or $f_S(T) + \xi_{S,T}$) where $\xi_{S,T}$ is drawn i.i.d for any pair $S,T$, then one can simply apply all the algorithms and analysis as is, by always considering $f_{\emptyset}(S \cup T)$.  If one is only allowed to query $S,T$ where $|T|=1$, the algorithms still work, but we need to be careful with the analysis, since we need to show that we are  calling the oracle on different sets.  It is easy to show that if the noise is weak and multiplicative (e.g. $\xi \in [1 - \epsilon, 1 + \epsilon]$) we can obtain a $( 1 - 1/e -\epsilon)$ approximation.

\subsection{Correlated Noise}
As discussed in the Introduction, Theorem~\ref{thm:adversarial} implies that no algorithm can optimize a monotone submodular function under a cardinality constraint given access to a noisy oracle whose noise multipliers are arbitrarily correlated across sets, even when the support of the distribution is arbitrarily small.  In light of this, one may wish to consider special cases of correlated distributions.  We first show that even very simple correlations can result in inapproxiability.  We then show an interesting class of distributions we call \emph{$d$-correlated}, for which optimal guarantees are obtainable.

\paragraph{Impossibility result for correlated distributions.} Having taken the first step showing algorithms for the i.i.d. in space model, a natural question is whether this assumption is necessary.

\begin{theorem}
Even for unit demand functions there are simple space-correlated distributions for which no algorithm can achieve an approximation strictly better than $1/n$.
\end{theorem}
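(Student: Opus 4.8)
The plan is to exhibit a family of unit-demand (i.e.\ rank-one / single-item) functions together with a single space-correlated noise distribution so that the noisy oracle reveals essentially nothing about which singleton is the special one. Fix the ground set $N=[n]$. For each index $m\in N$ let $f^{(m)}$ be the unit-demand function whose value on a set $S$ is $1$ if $m\in S$ and $0$ otherwise; the optimum under the cardinality constraint $k=1$ is $1$, achieved only by $\{m\}$, and any element other than $m$ gives value $0$. An algorithm that does not identify $m$ is forced to output a set of value $0$, hence cannot beat the trivial $1/n$ guarantee (pick a singleton at random). So it suffices to design a correlated noise distribution under which the transcript of any sequence of queries has a distribution that is (nearly) independent of $m$.

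The key idea is to let the noise be driven by one shared random object rather than by independent per-set multipliers. Concretely, draw a single uniformly random permutation $\pi$ of $N$ (or a single uniform random index, depending on how tight a bound one wants), and define $\xi_S$ as a deterministic function of $S$ and $\pi$ chosen so that $\widetilde f^{(m)}(S)=\xi_S\,f^{(m)}(S)$ has a law not depending on $m$. For instance, for a set $S$ containing exactly one element, return a value that encodes only $\pi$-information about $S$ and is scaled to ``look like'' the value of the special element regardless of whether $S$ actually is $\{m\}$; for sets $S$ with $|S|\ge 2$ one can again return a quantity built solely from $\pi$ and $|S|$. Since the same $\pi$ is reused for every queried set, the noise is arbitrarily correlated across sets — exactly the regime Theorem~\ref{thm:adversarial}'s impossibility lives in — yet here it is a genuine distribution, not adversarial. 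The point is that with correlation allowed, a single hidden coin can be ``spent'' to mask the identity of the optimum across all exponentially many sets simultaneously, something independent noise provably cannot do.

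With such a construction in hand, the argument is a standard indistinguishability / Yao-style bound. Put a uniform prior on $m\in N$. Because the noisy oracle's output on any set $S$ has a distribution independent of $m$ (by design of $\xi_S$ as a function of $\pi$ alone), the entire query-answer transcript seen by a deterministic algorithm — and hence by any randomized algorithm, after conditioning on its internal coins — is independent of $m$. Therefore the algorithm's final output $\hat S$ is independent of $m$, so $\Pr[m\in\hat S]\le 1/|\hat S|\le 1$, and more precisely $\Pr_{m,\text{noise}}[f^{(m)}(\hat S)=1]=\Pr[m\in\hat S]\le 1/n$ when $|\hat S|\le 1$; averaging over $m$ gives expected approximation ratio at most $1/n$. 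If the masking is only approximate (the transcript law depends on $m$ up to total-variation $o(1)$), the same computation yields ratio $1/n+o(1)$, still ruling out any ratio strictly better than $1/n$ up to lower-order terms.

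The main obstacle is the middle step: engineering a single correlated distribution $\mathcal D$ on the vector $(\xi_S)_{S\subseteq N}$ such that, simultaneously for all $m$, the map $S\mapsto \xi_S f^{(m)}(S)$ has the same (or nearly the same) marginal — and in fact joint — law, while $\xi_S$ still depends only on a shared hidden variable and never on $m$. One has to be careful that querying, say, both $\{i\}$ and $\{i,j\}$ does not leak $m$ through the combination of answers; this is what forces the answers on all sets to be functions of one common random object (the permutation $\pi$) whose distribution carries no information about $m$. Getting the bookkeeping right so that the revealed numbers are consistent with *some* unit-demand function for every $m$, yet pin down *no* particular $m$, is the crux; everything before and after it is routine.
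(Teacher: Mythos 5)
There is a genuine gap, and in fact the specific construction you sketch cannot work. You chose the $\{0,1\}$-valued unit-demand function $f^{(m)}(S)=\mathbf{1}[m\in S]$. Since the oracle in this model is multiplicative, $\widetilde f^{(m)}(S)=\xi_S f^{(m)}(S)$, every query on a set $S$ with $m\notin S$ necessarily returns $0$, regardless of how $\xi_S$ is chosen — you cannot multiply a zero into something else. So the noisy oracle reveals, for free, the set of $S$ that contain $m$, and a handful of singleton queries pins down $m$ exactly. No choice of the shared random object $\pi$ can mask this; the requirement you write down, that $S\mapsto\xi_S f^{(m)}(S)$ have a law independent of $m$, is literally unsatisfiable for this $f^{(m)}$ unless $\xi_S\equiv 0$. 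Beyond that, you explicitly label the construction of the correlated distribution as ``the crux'' and do not carry it out, so even setting the zero problem aside the argument is an outline rather than a proof.

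The fix is to avoid zeros. The paper uses a unit-demand function where every singleton has \emph{positive} value: one special element $a^\star$ has $f(a^\star)=M$ and every other element has value $1$, so $f(S)=M$ if $a^\star\in S$ and $f(S)=1$ otherwise for nonempty $S$. The correlated noise is then just $\xi_S=1/M$ on sets containing $a^\star$ and $\xi_S=1$ otherwise, which makes $\widetilde f(S)=1$ on every nonempty set. The oracle is constant, hence carries no information about $a^\star$, and the best strategy is to pick a random element, giving ratio $(M+n-1)/(nM)\to 1/n$ as $M\to\infty$. Your indistinguishability framing (uniform prior on the special element, transcript independent of it, hence output independent of it) is the right high-level argument and matches the paper's; what is missing is precisely the concrete noise construction, and your choice of function class blocks any such construction.
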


\begin{proof}
Consider a unit demand function $f(S)=\max_{a\in S}f(a)$ which operates on a ground set with $n$ elements. There are $n-1$ \emph{regular} elements and one \emph{special} element $a^\star$. The value of $f$ on any regular element is $1$, but $f(a^\star) = M$ for some arbitrarily large $M$. The noise distribution is such that it returns $1$ on sets which do not contain $a^\star$, and $1/M$ on sets that contain $a^\star$. The best one can do in this case is to choose a random element without querying the oracle at all.
\end{proof}

\paragraph{Guarantees for $d$-correlated distributions.}
Our algorithms can be extended to a model in which querying similar sets may return results that are arbitrarily correlated, as long as querying sets which are sufficiently far from each other gives independent answers. 

\begin{definition*}
We say that the noise distribution is \textbf{$d$-correlated} if for any two sets $S$ and $T$, such that $|S \setminus T| + |T \setminus S| > d$ we have that the noise is applied independently to $S$ and to $T$.
\end{definition*}

Notice that if a distribution is $d$-correlated, any two points on the hypercube at distance at most $d$ can be arbitrarily correlated.  For this model we show that when $k \in \Omega(\log\log n)$ then we can obtain an approximation arbitrarily close to $1-1/e$ for $O(\sqrt{k})$-correlated distributions.  Alternatively, in this regime we can get this approximation guarantee for any distribution that is arbitrarily correlated when querying two sets $S,T$ whose symmetric difference is larger than $\sqrt{\max\{|T|,|S|\}}$.  When $k \in \Omega(\log\log n)$ we can get arbitrarily close to $1-1/e$ for $O(1)$-correlated noise.

\paragraph{Modification of algorithms for large $k$ for $\sqrt{k}$-correlated noise.}  For large $k$, if we have that $k \gg d^2$, then the approximation guarantee we get is still arbitrarily close to $1 - 1/e$ even when $\mathcal{D}$ is $d$-correlated. To do this, we modify the smoothing neighborhood and the definition of smooth values as follows.  Recall that in \textsc{Smooth-Greedy}, we select an arbitrary set of elements $H$ of size $\ell$ for smoothing, and compute the noisy smooth value of $S\cup a$ by averaging all subsets of $H$:
\[ \widetilde{F}(S\cup a) = \frac{1}{2^\ell}\sum_{H' \subset H} \widetilde f \left(S \cup \left ( a \cup H' \right ) \right).\]

In the $d$-correlated case, for each $1 \le i \le d$ and $1 \le j \le \ell$ we choose a \emph{bundle} $h(i)_j$ of $d$ elements, such that every two bundles are disjoint. Denote $H(i) = \{h(i)_1, \ldots h(i)_\ell$, and $H = \Cup_{i,j}h(i)_j$ the set of all elements we used.
The noisy smooth value with smoothing set $H(i)$ is now:

\begin{align*}
\widetilde{F}^{(i)}(S\cup a) =\frac{1}{2^\ell}\sum_{H' \subset H(i)} \widetilde f(S \cup a \cup H')
\end{align*}
where we abuse notation and use $S \cup a \cup H'$ instead of $S \cup \{a\} \cup_{h(i)_j \in H'}h(i)_j$.

We will run \textsc{Smooth-Greedy} with the smoothing sets $H(1),\ldots,H(d)$, where in each iteration $i \mod d$ we use $H(i)$ as the smoothing set.  Exactly as in the original algorithm, we generate $S$ by iteratively adding $k - |H|$ elements from $N\setminus H$ that maximize the smooth value in every iteration, and we then return $S\cup H$.  As before, \textsc{Slick- Greedy} employs \textsc{Smooth-Greedy}. 

To prove correctness of the algorithm we need to show that the evaluations of the surrogate functions are independent.  We will first show by induction on $|S|$ that between iterations, the oracle calls are independent.  

\begin{claim}
Any oracle call at iteration $i$ is independent of any previous oracle call at iteration $r<i$.
\end{claim}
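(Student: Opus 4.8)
The plan is to proceed by induction on the iteration number $i$, showing that every oracle query made during iteration $i$ is performed on a set whose symmetric difference with every set queried in any earlier iteration $r < i$ exceeds $d$; once this is established, $d$-correlation immediately gives the desired independence. The base case $i=0$ is vacuous. For the inductive step, I would fix $i$ and let $r < i$ be an earlier iteration. I need to understand precisely the structure of the sets queried in each iteration: in iteration $r$ the algorithm evaluates sets of the form $S_r \cup \{a\} \cup H'$ where $S_r$ is the (already fixed) partial solution at the start of iteration $r$, $a \notin H$ is a candidate element, and $H'$ ranges over sub-unions of bundles in $H(r \bmod d)$; similarly at iteration $i$ the queried sets are $S_i \cup \{b\} \cup H''$ with $H'' \subseteq H(i \bmod d)$.

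**Key steps.**
First I would observe that the elements added to the running solution $S$ are always drawn from $N \setminus H$, so $S_r$ and $S_i$ contain no elements of any smoothing bundle; the smoothing parts $H'$ and $H''$ are therefore disjoint from the "solution parts" and interact with each other only through the smoothing structure. Second, I would split into two cases according to whether $i \equiv r \pmod d$. If $i \not\equiv r \pmod d$, then $H(i \bmod d)$ and $H(r \bmod d)$ are built from disjoint families of bundles, each bundle having size $d$; since at iteration $i$ we always include a full non-empty collection of bundles — or, more carefully, since the relevant queries differ in a whole bundle of $d$ fresh elements — the symmetric difference is at least $d$ (one must check the algorithm indeed forces a full bundle's worth of elements to appear, which is where the "bundle of $d$ elements" design is used rather than single elements). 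If $i \equiv r \pmod d$, the smoothing sets coincide, but now $S_i \supsetneq S_r$: at least one greedy element was added strictly between iterations $r$ and $i$, and since iterations are at least $d$ apart when they share a smoothing set — no, more simply, $|S_i| - |S_r| \ge i - r \ge d$ — the solution parts already differ in at least $d$ elements, none of which lie in $H$, so they cannot be cancelled by differences in the smoothing part. Either way $|{\cdot}\triangle{\cdot}| > d$, and $d$-correlation finishes the step.

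**Main obstacle.**
The delicate point is the case $i \equiv r \pmod d$ with $i - r$ possibly as small as $d$: I need the bound $|S_i \triangle S_r| \ge d$ to be strict, i.e. $> d$, not merely $\ge d$, and I need to be sure that the extra element queried in a comparison (the candidate $a$ versus $b$, which may coincide or differ) does not conspire with the smoothing sets to bring the symmetric difference back down to $d$. This requires carefully bookkeeping which coordinates can differ: the solution coordinates (disjoint from $H$) contribute $\ge i - r \ge d$ to the symmetric difference and cannot be reduced by anything happening inside $H$, while the smoothing coordinates and the single candidate coordinate only ever increase it. So the real work is a clean accounting lemma separating "solution coordinates," "candidate coordinate," and "smoothing coordinates" into three disjoint blocks of $N$, after which monotonicity of symmetric difference across disjoint blocks does the rest. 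I would also need to handle the boundary bundles $h(i)_j$ versus the way \textsc{Smooth-Compare} and the initialization with $R_j$ interact, but those only add more disjoint fixed blocks and do not change the argument.
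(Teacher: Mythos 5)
Your proposal is correct and uses essentially the same mechanism as the paper: bound the symmetric difference of the two queried sets directly, using (a) that the partial solution grows by one element per iteration outside $H$, and (b) that the $d$-element bundles behind $H(i \bmod d)$ and $H(r\bmod d)$ are disjoint when $i\not\equiv r\pmod d$. The only real difference is the case split. The paper splits on $r\le i-d$ versus $r>i-d$, handling the first case entirely via the solution growth $|S(i)\setminus S(r)|\ge d$ and the second via disjoint bundles; you split on $i\equiv r\pmod d$ versus not, handling the first case via solution growth (since a congruent $r<i$ forces $i-r\ge d$) and the second via disjoint bundles. The two partitions overlap on $\{r\le i-d,\ i\not\equiv r\}$, where either sub-argument applies, so both are valid. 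Your split has the small advantage that the two sub-arguments are applied on disjoint regimes, whereas the paper's split applies the solution-growth argument even where the bundle argument would also work. The subtlety you flag at the end — that the definition of $d$-correlated requires strict inequality $|S\triangle T|>d$, and the solution-growth bound alone only guarantees $\ge d$ when $a=b$ and $H'=H''$ — is genuine; the paper's own write-up glosses over it (it writes ``at least of size $d$'' and then asserts independence), and resolving it cleanly requires either the extra contribution from the candidate/smoothing coordinates you mention or a slight adjustment of constants. Note also that the induction you announce is never actually used: the argument is a direct comparison for each pair $(i,r)$, exactly as in the paper.
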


\begin{proof}
Let $S(i)$ be the set of elements we have already committed to in stage $i$.  Consider an evaluation of
$\widetilde{f} (S(i) \cup a \cup H')$ for some non empty $H' \subset H(i \mod d)$ at iteration $i$, and an oracle evaluation $\widetilde{f} (S(r) \cup b \cup H'')$ made at some iteration $r < s$ with some non empty $H'' \subset H(r \mod d)$ and $b \notin S(r) \cup H$.
If $r \le i - d$, then the symmetric difference between $S(i) \cup a$ and $S(r) \cup b$ is at least of size $d$.  Since $a,b \notin H$, and $S(i) \cap H = \emptyset$, this means that the symmetric difference of $S(i) \cup a  \cup H'$ and $S(r) \cup b  \cup H''$ is at least of size $d$, for any $H''\subset H(r \mod d)$, and thus the calls are independent.  If $r > s - d$, then $i \mod d \neq r\mod d$, and hence $S(i) \cup a \cup H'$ and $S(r) \cup b \cup H''$ are independent because of the symmetric difference between $H'$ and $H''$.
\end{proof}

\begin{claim}\label{independentElement}
When evaluating $\widetilde{F}^{(i)}(S\cup a)$, all noise multipliers are independent.
\end{claim}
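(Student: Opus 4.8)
The statement to prove is that the noise multipliers $\{\xi_{S\cup\{a\}\cup B'} : H'\subseteq H(i),\ B'=\bigcup_{h(i)_j\in H'}h(i)_j\}$ encountered while computing $\widetilde{F}^{(i)}(S\cup a)$ are mutually independent. The plan is to reduce this to a purely combinatorial fact about symmetric differences and then invoke the $d$-correlated property, exactly as in the previous claim but now \emph{within} a single evaluation rather than across iterations.

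First I would fix two distinct subsets $H',H''\subseteq H(i)$, set $B'=\bigcup_{h(i)_j\in H'}h(i)_j$ and $B''=\bigcup_{h(i)_j\in H''}h(i)_j$, and compute the symmetric difference of the two queried sets $S\cup\{a\}\cup B'$ and $S\cup\{a\}\cup B''$. Since \textsc{Smooth-Greedy} only ever adds elements of $N\setminus H$ to $S$, and $a\notin H$, the common part $S\cup\{a\}$ is disjoint from $H$, and in particular from $B'\cup B''\subseteq H$; hence it cancels in the symmetric difference, which equals $B'\triangle B''$. Because all the bundles are pairwise disjoint, $B'\triangle B''=\bigcup_{h(i)_j\in H'\triangle H''}h(i)_j$, so $|B'\triangle B''|=d\cdot|H'\triangle H''|\ge d$ (using $H'\neq H''$, so that at least one bundle differs). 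By $d$-correlatedness this forces $\xi_{S\cup\{a\}\cup B'}$ and $\xi_{S\cup\{a\}\cup B''}$ to be independent.

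The one point that deserves care — and is really the whole content of the claim — is that we need the \emph{entire} family to be mutually independent, not merely pairwise independent, since the concentration arguments behind the smoothing lemmas (the $d$-correlated analogues of Lemma~\ref{lem:boundona} and the variance claims) are applied to the sum $\sum_{H'\subseteq H(i)}\widetilde f(S\cup a\cup B')$. This is precisely what ``the noise is applied independently'' in the definition of a $d$-correlated distribution is meant to deliver on a set of pairwise $>d$-separated points: such a set carries jointly independent noise. Given that reading, the claim follows immediately from the symmetric-difference computation above; the only bookkeeping is the disjointness of $S\cup\{a\}$ from $H$, an invariant maintained throughout \textsc{Smooth-Greedy}. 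I do not expect a genuine obstacle here — it is the single-evaluation companion to the already-established cross-iteration independence claim, and both together give that every oracle call made over the whole run of the algorithm is independent.
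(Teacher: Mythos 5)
Your proof is correct and takes essentially the same route as the paper: both arguments reduce the claim to the observation that any two queried sets $S\cup\{a\}\cup B'$ and $S\cup\{a\}\cup B''$ differ only in their $H(i)$-part, and since the $\ell$ bundles comprising $H(i)$ are pairwise disjoint and each of size $d$, the symmetric difference of two distinct such sets is at least $d$, so $d$-correlatedness applies. Your side remark that the $d$-correlated definition is phrased pairwise while the concentration lemmas need joint independence is a fair reading of the definition, but the intended interpretation (and the one the paper uses throughout) is exactly the one you adopt, so nothing is missing.
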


\begin{proof}
When evaluating $\widetilde{F}^{(i)}(S\cup a)$ we call the noisy oracle on sets of the form $S \cup a \cup H'$. Since each $H'$ corresponds to a different subset of $H(i)$, and $H(i)$ is a collection of $\ell$ bundles of size $d$, the symmetric difference between every two sets $H',H'' \subseteq H(i)$, is at least $d$.
\end{proof}

As in the original \textsc{Smooth-Greedy} procedure, we can show that at every iteration, when $S$ is the set of elements we selected in previous iterations, an element $a$ added to $S$ implies that w.h.p. $F(S\cup a)$ is arbitrarily close to $\max_{b \notin H} F(S\cup b)$ (see Claim \ref{independentElement}). Let
$a_1, a_2, \ldots a_{n - |S| - |H|}$ denote the elements which are being considered. For each element $a_i$, we have that if $F(S\cup a_i)$ is non negligible then w.h.p $\widetilde F(S\cup a_i)$ approximates $F(S\cup a_i)$, and if $F(S\cup a_i)$ is negligible then so is $\widetilde F(S\cup a_i)$. While for $a_i$, $a_j$ these events may well be correlated, since the probability of failure is inverse polynomially small and there are only $n - |S| - |H|$ events, we can take a union bound and say that with high probability for every $i$ if $F(S\cup a_i)$ is negligible so is $\widetilde F(S\cup a_i)$, and if $F(S\cup a_i)$ is non negligible then it is well approximated by $\widetilde F(S\cup a_i)$.

Thus, we know that at every iteration $i$ when $S$ is the set of elements selected in previous iterations, we have selected the element $a$ that is arbitrarily close to  $\max_{b \notin H} F^{(i)}(S\cup b)$.  From the arguments in the paper we know that this implies that for an arbitrarily small $\gamma>0$ we have: 
$$f_{S}(a)\geq (1-\gamma)f_{S\cup H(i)}(b) \geq (1-\gamma)f_{S\cup H}(b)$$
where the right inequality is due to submodularity and the fact that $H(i) \subseteq H$.  The guarantees of \textsc{Smooth-Greedy} therefore apply in this case as well.  What remains to show is that \textsc{Slick-Greedy} is unaffected by this modification.  This is easy to verify as \textsc{Slick-Greedy} takes $1/\delta$ disjoint sets $H_{1},\ldots,H_{1/\delta}$, and the arguments discussed apply for every such set.  Since we apply \textsc{Smooth-Compare} $1/\delta$ times with sets of size $\ell$ it is easy to implement as well. 

\paragraph{Modification of algorithms for small $k$ for $O(1)$-correlated noise.}  A similar idea works also for the small $k$ case, assuming $d$ is constant. In this case, we add $c \gg d/\epsilon$ elements at each phase of the algorithm. We modify the definition of $\widetilde F$ in the following way. First we take a an arbitrary partition $P_1, \ldots P_{(n - |S|)/d}$ on the elements not in $S$, in which each $P_i$ is of size $d$, and a partition $Q_1 \ldots Q_{(|S|+|A|)/d}$ of the elements in $S \cup A$. We estimate the value of a set $A$ given $S$ using:
\[\widetilde F(S \cup A) = \frac{d^2}{(|S|+|A|) (|N| - |S| - |A|)} \sum {Q_i \in A} \sum_{P_j} \widetilde f(((S \cup A)\setminus Q_i) \cup P_j)\]
and modify the rest of the algorithm accordingly. 

Correctness relies on three steps:
\begin{enumerate}
\item First, when we are in iteration $i$ of the algorithm (after we already added $(i - 1)c$ elements to $S$), all the sets we apply the oracle on are of size $c\cdot i$, and hence they are independent of any set of size $c(i-1)$ or less which were used in previous phases;

\item Second, when we evaluate $\widetilde F(S \cup A)$ for a specific set $A$, we only use sets which are independent in the comparison. Here we rely on changing $d$ elements in $A$ each time, and replacing them by another set of $d$ elements;

\item Finally, we treat each set $A$ separately, and show that if its marginal contribution is negligible then w.h.p its mean smooth value is not too large, and if its marginal contribution is not negligible, then w.h.p. $\widetilde F(S \cup A)$ approximates $F(S \cup A)$ well. Taking a union bound over all the bad events we get that the set $A$ chosen has large (non-noisy) smooth mean value.
\end{enumerate}

\subsection{Information Degradation}
We have written the paper as if the algorithm gains no additional information for querying a point twice. The generalization to a case where the algorithm gets more information each time but there is a degradation of information is simple: whenever the algorithms we presented here want to query a point just query it multiple times, and feed the expected value of the point given all the information one has to the algorithm. Hence it makes sense to focus on the extreme case where only the first query is helpful, as common in the literature of noisy optimization (e.g. \cite{braverman2008noisy})

\subsection{Approximate Submodularity}
In this paper our goal is to obtain near optimal guarantees as defined on the original function that was distorted through noise.  That is, we assume that there is an underlying submodular function which we aim to optimize, and we only get to observe noisy samples of it.  An alternative direction would be to consider the problem of optimizing functions that are approximately submodular:
$$\max_{S:|S|\leq k}\widetilde{f}(S) $$
The notion of approximate submodularity has been studied in machine learning~\cite{KC10,DK11,DDK12,EKDN16}.  More generally, given the desirable guarantees of submodular functions, it is interesting to understand the limits of efficient optimization with respect to the function classes we aim to optimize.

\paragraph{Impossibility for $\epsilon$-adversarial approximation.}  If we assume that the function is an adversarial $(1\pm \epsilon)$ approximation of a submodular function, our lower bound from Section~\ref{sec:adversarial} for erroneous oracles implies that no polynomial time algorithm can obtain a non-trivial approximation.

\paragraph{Trivial reduction for noise in $[1-\epsilon,1+\epsilon]$.}  When $\mathcal{D} \subseteq [1-\epsilon,1+\epsilon]$, and the noise is i.i.d across sets, the algorithms in the paper obtain a solution arbitrarily close to $\left(\frac{1-\epsilon}{1+\epsilon}\right)\left(1-\frac{1}{e}\right)$ of $\max_{S:|S|\leq k} \widetilde{f}(S)$. 

\paragraph{Impossibility for unbounded noise.}  If we assume that a noisy process of a distribution with unbounded support altered a submodular function, then there are trivial impossibility results. Suppose that the initial submodular function is the constant function that gives $1$ to every set. If we apply (e.g.) Gaussian noise to it, then the optimal algorithm is just to try random sets and hope for the best, and no polynomial time algorithm can achieve a constant factor approximation.

\paragraph{Optimal approximation via black-box reduction.}
First, note that there is an algorithm which runs in time $n^k$ and finds the optimal subset of size $k$: query $\widetilde f$ on all subsets of size at most $k$, and choose the maximal one.  Notice that this is in contrast to the setting we study throughout the paper in which there is a lower bound of $(2k - 1)/2k + O(1/\sqrt{n})$.  The interesting regime is $k = \omega(1)$, where there is a black-box reduction from the problem of maximizing a submodular function given an approximately submodular function, to the problem of maximizing an approximately submodular function.  Since we can solve the original problem within a factor arbitrarily close to $1-1/e$ we get an optimal approximation guarantee in this case as well.  Let $\max\DD(t) = \mathbb{E}[ \max_{\xi_1, \ldots \xi_{t} \sim \DD}\{\xi_1, \ldots, \xi_t\}]$ be the expected maximum value of $t$ i.i.d samples of $\DD$.

\begin{lemma}
An algorithm which uses $t \le {n \choose k}$ queries to $\widetilde f$ cannot achieve approximation ratio better than:
  \[\frac{\max\DD(t)}{\max \DD({{n \choose k}} )}.\]
\end{lemma}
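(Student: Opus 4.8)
The plan is to set up an information-theoretic argument showing that no algorithm making $t \le \binom{n}{k}$ queries can distinguish the ``true'' optimum from a carefully chosen alternative. The key observation is that with the consistent noisy oracle, the only thing an algorithm ever sees is a collection of at most $t$ values $\widetilde{f}(S) = \xi_S f(S)$, and the $\xi_S$ are i.i.d.\ from $\mathcal{D}$. Consider the hard instance where $f$ is constant, say $f(S) = v$ for every nonempty $S$ (so every set of size $k$ is an optimal solution with $\OPT = v$). Then $\widetilde{f}(S) = \xi_S v$, and the algorithm is simply trying to find a set maximizing $\xi_S$ over the $\binom{n}{k}$ candidate sets of size $k$, while only being allowed to examine $t$ of them.

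First I would argue that the expected value of the best set \emph{found} by any (possibly adaptive, possibly randomized) algorithm making $t$ queries is at most $v \cdot \max\DD(t)$. This is because the maximum of the noise multipliers over any $t$ sets the algorithm chooses to inspect is stochastically dominated by the maximum of $t$ i.i.d.\ draws from $\mathcal{D}$ — adaptivity does not help here, since the oracle values are independent across sets and the algorithm cannot ``steer'' toward a larger $\xi_S$ without first querying that set. (Formally: condition on the sequence of queried sets; since their $\xi$-values are i.i.d.\ and independent of everything queried before, the running maximum after $t$ queries has exactly the distribution of $\max$ of $t$ i.i.d.\ samples, regardless of the adaptive query rule; moreover the algorithm's output set, if among the queried ones, has value at most this running max, and if not queried its value $v$ is no larger than $v\cdot\mathbb{E}[\xi]$ which is dominated too.) Hence the algorithm's output has expected value at most $v\cdot\max\DD(t)$.

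Next I would note that $\OPT = \max_{|S|\le k} f(S) = v$, but in fact the ``fair'' benchmark against which the noisy-oracle problem should be measured — and what the $n^k$-query brute-force algorithm achieves — is $v \cdot \max\DD(\binom{n}{k})$, the expected value obtained by inspecting \emph{all} $\binom{n}{k}$ sets of size $k$ and returning the best according to $\widetilde{f}$. Taking the ratio of the achievable expected value to this benchmark gives exactly $\max\DD(t)/\max\DD(\binom{n}{k})$, which is the claimed bound. I would then remark that this instance lies (after rescaling) in the class of approximately submodular functions under consideration, so the bound is not vacuous.

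The main obstacle, and the step deserving the most care, is the stochastic-domination argument for adaptive algorithms: one must be precise that because the oracle is \emph{consistent} and the $\xi_S$ are \emph{independent across sets}, an adaptive querying strategy gains no advantage in terms of the distribution of the running maximum of inspected noise values — the filtration generated by past queries is independent of the next unqueried set's multiplier. A clean way to handle this is to fix the algorithm's (randomized) decision tree, expose the $\xi_S$ one query at a time, and observe that at each step the newly revealed multiplier is a fresh i.i.d.\ sample regardless of the path taken; a short coupling then shows the running max is distributed as $\max$ of $t$ i.i.d.\ draws. Everything else is bookkeeping.
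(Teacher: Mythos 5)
Your proof follows essentially the same route as the paper: take $f$ to be constant, so the noisy oracle exposes nothing but i.i.d.\ multipliers, and observe that a $t$-query algorithm can see at most $t$ of them while the benchmark $\max_{|S|\le k}\widetilde f(S)$ averages to $\max\DD(\binom{n}{k})$. Your additional care in spelling out why adaptivity cannot help (the running maximum over any $t$ queried sets is distributed as the max of $t$ fresh i.i.d.\ draws) is a useful, if minor, tightening of the paper's one-line assertion that "the best the algorithm can do is query $t$ sets and output the maximal one."
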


\begin{proof}
Suppose that $f(S) = 1$ for every set $S$. The best that the algorithm can do is query $t$ sets with at most $k$ elements, and output the maximal one. The approximation ratio of this is exactly
  \[\frac{\max \DD(t)}{\max \DD({{n \choose k}} )}\]
If the algorithm queries sets with more than $k$ elements, the approximation would deteriorate.
\end{proof}

\begin{lemma}
Suppose there exists an algorithm which given $k \in \omega(1)$ returns a solution $S$ s.t. $f(S) \ge \gamma \max_{T: |T| \le k}f(T)$ using $q$ queries to a noisy oracle.  Then, for any $t \in \poly(n)$ there is an algorithm that uses $q+t$ to a noisy oracle and returns a solution $S'$ s.t.:
$$\widetilde{f}(S') \geq \Big (\gamma - o(1) \Big ) 
\left ( \frac{\max\mathcal{D}(t)}{\max \mathcal{D}({n \choose k})}\right ) \max_{T:|T|\leq k }\widetilde{f}(T).$$
\end{lemma}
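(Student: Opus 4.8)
The plan is to give a black-box reduction: on input an approximately submodular function $\widetilde f$, we synthesize a "noisy oracle" on which we can run the given $\gamma$-approximation algorithm, and then post-process its output using an extra $t$ queries to boost the value. The key conceptual point is that $\widetilde f(S) = \xi_S f(S)$ where $f$ is the underlying submodular function and the $\xi_S$ are i.i.d.\ from $\mathcal D$; so $\widetilde f$ \emph{is already} a generalized exponential tail noisy oracle for $f$. Hence we may run the given algorithm directly on $\widetilde f$ (viewed as a noisy oracle of $f$), using $q$ queries, to obtain a set $S$ with $f(S) \ge (\gamma - o(1))\max_{T:|T|\le k} f(T)$ with high probability. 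This is the first step, and it is essentially free given what is assumed.

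The second step is the boosting. Given $S$ with $|S| \le k$ and $f(S) \ge (\gamma-o(1))\,\OPT_f$ where $\OPT_f = \max_{T:|T|\le k} f(T)$, we want to produce $S'$ with $\widetilde f(S')$ close to $(\gamma - o(1))\big(\max\mathcal D(t)/\max\mathcal D(\binom{n}{k})\big)\max_{T:|T|\le k}\widetilde f(T)$. I would do this as follows: enumerate/sample $t$ sets contained in $S$ (or, more carefully, $t$ sets of size $\le k$ that are "close" to $S$ in value — the cleanest choice is to take $t$ subsets of a fixed large set whose true $f$-value is comparable to $f(S)$, exploiting that by monotone submodularity many subsets of $S$ together with a few padding elements retain most of $f(S)$), query $\widetilde f$ on each, and output the one maximizing $\widetilde f$. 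On these $t$ sets the noise multipliers are i.i.d.\ from $\mathcal D$, so the maximum observed value is, in expectation, at least $f(S)\cdot\max\mathcal D(t)$ up to lower-order corrections, i.e.\ at least $(\gamma-o(1))\,\OPT_f\cdot\max\mathcal D(t)$. On the other side, $\max_{T:|T|\le k}\widetilde f(T) = \max_T \xi_T f(T) \le \OPT_f \cdot \max_{T:|T|\le k}\xi_T$, and since there are at most $\binom{n}{k}$ such sets the maximum multiplier is at most $\max\mathcal D(\binom{n}{k})$ in expectation (and concentrated, using the generalized exponential tail to control the upper tail of a maximum of $\binom{n}{k}$ i.i.d.\ samples). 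Dividing the two bounds gives the claimed ratio, with the $o(1)$ absorbing both the failure probability of the first step and the concentration slack in estimating $\max\mathcal D(t)$ and $\max\mathcal D(\binom{n}{k})$.

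To make the concentration rigorous I would invoke the generalized exponential tail assumption: it guarantees that $\mathcal D$ has sub-exponential (or lighter) upper tails, so that the maximum of $m$ i.i.d.\ samples is concentrated around $\max\mathcal D(m)$ with deviations that are $o(1)\cdot\max\mathcal D(m)$ for $m$ polynomial (or even $\binom nk$) in $n$ — precisely the kind of estimate already used throughout the paper for the smoothing arguments. For the lower bound on the $t$-sample maximum we only need $t = \omega(1)$, which holds since $t\in\poly(n)$, together with $k = \omega(1)$ to ensure there are enough distinct near-optimal sets to query (this is where $k\in\omega(1)$ enters, matching the hypothesis). For the upper bound we need the maximum of $\binom nk$ i.i.d.\ multipliers not to exceed $(1+o(1))\max\mathcal D(\binom nk)$ with probability $1-o(1)$, which is again the generalized exponential tail bound.

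I expect the main obstacle to be the second step's lower bound: arranging that the $t$ queried sets genuinely have underlying $f$-value $\ge (1-o(1))f(S)$ \emph{and} are distinct, so that their noise multipliers are i.i.d.\ and the observed maximum is $\ge (1-o(1))f(S)\max\mathcal D(t)$. The clean way around this is to pick a fixed "anchor" set $U \supseteq S$ of size exactly $k$ (pad $S$ with arbitrary elements; monotonicity gives $f(U)\ge f(S)$), and then take $t$ distinct size-$k$ sets of the form $(U\setminus\{u\})\cup\{v\}$; by submodularity the expected value of such a set is at least $(1-k^{-1})f(U)\ge(1-o(1))f(S)$ since $k=\omega(1)$, and a standard averaging/Markov argument shows a $1-o(1)$ fraction of them have $f$-value $\ge (1-o(1))f(S)$, leaving $\Theta(t)$ good distinct sets to take the noisy maximum over — which suffices. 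Everything else is bookkeeping with the tail bounds already established in the paper.
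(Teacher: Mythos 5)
Your high-level blueprint matches the paper's: run the given algorithm treating $\widetilde f$ as a noisy oracle for the latent $f$, manufacture $t$ distinct near-optimal sets, query $\widetilde f$ on each, and output the argmax, picking up a $\max\mathcal{D}(t)$ factor from the maximum of $t$ i.i.d.\ multipliers. The genuine gap is in how you manufacture those $t$ sets. You run the given algorithm with full budget $k$, pad its output to a size-$k$ set $U$, and enumerate single-element swaps $(U\setminus\{u\})\cup\{v\}$. There are only $k(n-k)=O(kn)$ such sets, while $t$ may be an arbitrary polynomial in $n$ (bounded only by $\binom{n}{k}$, via the companion lemma). Already for $k=O(\log\log n)$ and $t=n^{2}$ your pool is exhausted, and even for $k=\Theta(n)$ the same problem recurs once $t\gg n^{2}$. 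Your averaging/Markov argument for the swaps is sound as far as it goes (lower bound a swap by $f(U\setminus\{u\})$ via monotonicity, then apply Markov to $f(U)-f(U\setminus\{u\})$), but it cannot rescue the count.

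The paper's fix is the one move you've missed: run the given algorithm with a \emph{reduced} budget $k-r$, where $r$ is a constant chosen so that $\binom{n-k}{r}\ge t$ (possible precisely because $t$ is polynomial). The output $G$ has size $k-r$, and the candidates are the completions $G\cup\{x_1,\dots,x_r\}$ with $x_i\notin G$: there are $\binom{n-k+r}{r}\ge t$ distinct ones, and by monotonicity every single one satisfies $f(G\cup X)\ge f(G)\ge\gamma\max_{T:|T|\le k-r}f(T)\ge(1-r/k)\gamma\,\OPT$. This simultaneously supplies an arbitrarily large polynomial pool of candidates and removes the need for any averaging or Markov step; the $(1-r/k)$ loss from shrinking the budget is absorbed into the $o(1)$ since $r$ is constant and $k\in\omega(1)$ (which is exactly where that hypothesis is used). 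The concentration bookkeeping you outline for the two $\max\mathcal D(\cdot)$ quantities is consistent with the paper.
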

\begin{proof}
  Let $r$ be such that ${n-k \choose r} \ge t$. Since $t$ is polynomial in $n$, we have that $r$ is constant. Run the algorithm to obtain a set $G$ of size $k-r$. From submodularity and the fact that $r$ is constant:
  \[f(G) \ge \gamma \max_{S: |S|\le k-r} f(S) \ge (1 - r/k) \gamma \max_{S:|S|\le k} f(S) \ge (1 - o(1)) \gamma \max_{S:|S|\le k} f(S)\]
For every set of $r$ elements $\{x_1, \ldots, x_r\}$ where $x_i \not \in G$, the algorithm queries $\widetilde f$ on $G \cup \{x_1, \ldots x_r\}$, and chooses the set with maximum value. It is easy to see that the expected value of this set would be at least
  $\max \mathcal{D}(t) (1 - r/k) \gamma \max_{S:|S|\le k} f(S)$, which gives the ratio.
\end{proof}

\newpage \section{Impossibility for Adversarial Noise}\label{sec:adversarial}
In this section we show that there are very simple submodular functions for which no randomized algorithm with access to an $\epsilon$-erroneous oracle can obtain a reasonable approximation guarantee with a subexponential number of queries to the oracle.  Intuitively, the main idea behind this result is to show that a noisy oracle can make it difficult to distinguish between two functions whose values can be very far from one another.  The functions we use are similar to those used to prove information theoretic lower bounds for submodular optimization and learning~\cite{MSV08,PSS08,FMV11,DBH11,V13}.

\begin{theorem}\label{thm:adversarial}
No randomized algorithm can obtain an approximation strictly better than $O(n^{-1/2+\delta})$ to maximizing monotone submodular functions under a cardinality constraint using $e^{n^{\delta}}/n$ queries to an $\epsilon$-erroneous oracle, for any 
fixed $\epsilon,\delta<1/2$.
\end{theorem}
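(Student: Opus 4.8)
The plan is a Yao-style information-theoretic argument: I will exhibit a distribution over instances $(f,\widetilde f)$, with $f$ monotone submodular and $\widetilde f$ a legal $\epsilon$-erroneous oracle for $f$, such that the algorithm's \emph{view} $\widetilde f$ carries essentially no information about where the high-value solution hides. By Yao's principle it suffices to fool every \emph{deterministic} algorithm making at most $q=e^{n^{\delta}}/n$ oracle calls. The instances are indexed by a uniformly random set $R\subseteq N$ with $|R|=\Theta(n^{1/2+\delta})$, and the cardinality constraint $k$ is chosen in a suitable range (roughly $\sqrt n \le k \le |R|$). The function is the ``masked'' submodular function
\[
f_R(S)=\min\{\,|S|\,,\ |S\cap R|+\mu\,\},\qquad \mu=\Theta(k\,n^{-1/2+\delta}),
\]
a minimum of two monotone modular functions and hence monotone and submodular; it is of the same flavor as the hard instances behind the lower bounds of~\cite{MSV08,PSS08,FMV11,DBH11,V13}. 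A size-$k$ subset of $R$ gives $f_R=k$, while any set chosen without knowledge of $R$ has $|S\cap R|\approx|S|\cdot|R|/n$ and therefore value $\approx |R|/n\cdot k+\mu=\Theta(k\,n^{-1/2+\delta})$, so $\texttt{OPT}(f_R)$ beats the value of every ``oblivious'' set by a factor $\Omega(n^{1/2-\delta})$.

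The crux is to produce a single \emph{canonical} oracle $\widetilde f_0$ that does not depend on $R$ yet is a legal $\epsilon$-erroneous oracle for $f_R$ on all ``typical'' queries. Call $S$ typical for $R$ if $|S\cap R|$ lies within $c\sqrt{n^{\delta}}$ standard deviations of its expectation $|S|\cdot|R|/n$; a hypergeometric tail bound makes any fixed $S$ atypical with probability $\le e^{-\Omega(c^{2}n^{\delta})}$, which for a large constant $c$ is $o(1/q)$. On typical $S$ I set $\widetilde f(S):=\widetilde f_0(S)=\min\{\,|S|\,,\ |S|\cdot|R|/n+\mu\,\}$, which depends only on the public data $|S|,|R|,n,\mu$; on the remaining sets I let $\widetilde f(S):=f_R(S)$, trivially inside the window. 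The one genuine technical step is checking that on typical $S$ the ratio $\widetilde f_0(S)/f_R(S)$ lies in $[1-\epsilon,1+\epsilon]$: replacing $|S\cap R|$ by its mean perturbs the ``small'' branch $|S\cap R|+\mu$ by at most $c\sqrt{n^{\delta}}$ standard deviations, and one verifies that with $\mu=\Theta(k\,n^{-1/2+\delta})$ and $k=\Omega(\sqrt n)$ this is an $\epsilon$-fraction of the branch value, while on the ``$|S|$'' branch the two coincide. (This is exactly the computation that forces $|R|=\Theta(n^{1/2+\delta})$ and $\mu=\Theta(k\,n^{-1/2+\delta})$, and hence the gap $\Theta(n^{1/2-\delta})$.)

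Granting the above, the argument finishes cleanly. Fix a deterministic algorithm and run it against the $R$-free oracle $\widetilde f_0$; this produces a fixed query sequence $S_1,\dots,S_q$ and a fixed output $S^{\star}$ with $|S^{\star}|\le k$, all independent of $R$. By a union bound, $\Pr_R[\exists\, i\le q:\ S_i\text{ atypical for }R]\le q\cdot e^{-\Omega(c^{2}n^{\delta})}=o(1)$. On the complementary event the true oracle $\widetilde f_R$ agrees with $\widetilde f_0$ on every $S_i$, so the real execution coincides with the canonical one and the algorithm still outputs $S^{\star}$. Since $S^{\star}$ is a fixed set, $|S^{\star}\cap R|$ concentrates around $|S^{\star}|\cdot|R|/n$, so $f_R(S^{\star})=O(k\,n^{-1/2+\delta})$, whereas $\texttt{OPT}(f_R)\ge k$; the ratio is $O(n^{-1/2+\delta})$. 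Taking expectations over $R$, and bounding the approximation ratio by $1$ on the $o(1)$-probability failure event, gives the claimed bound for the average instance, hence for some fixed instance $(f,\widetilde f)$, and Yao's principle converts this back to randomized algorithms.

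The step I expect to be the main obstacle is the tension between the two roles of $f_R$: its value must depend on $|S\cap R|$ \emph{sharply} enough that an oblivious set is a factor $n^{1/2-\delta}$ below $\texttt{OPT}$, yet \emph{smoothly} enough that the $R$-oblivious oracle $\widetilde f_0$ never escapes the multiplicative $(1\pm\epsilon)$ window on a typical query --- and the enormous query budget $e^{n^{\delta}}$ forces the ``typical'' window to be as wide as $c\sqrt{n^{\delta}}$ standard deviations, which pins down $|R|$, $\mu$, and the admissible range of $k$ almost uniquely. Fitting these constraints together while keeping $f_R$ monotone and submodular is the real work; everything past it is concentration plus a union bound over the query transcript.
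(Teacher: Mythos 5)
Your proof is correct and is essentially the paper's own argument dressed slightly differently: the paper likewise hides a random set of size roughly $n^{1/2+\delta}$, takes a truncated affine monotone submodular function of $|S\cap X|$ as the hard instance, and defines an adversarial oracle that answers with an $X$-oblivious public function $f_2$ whenever that answer lies in the $(1\pm\epsilon)$ window --- which is exactly the role of your canonical oracle $\widetilde f_0$ --- and then concentrates $|S\cap X|$ to show any fixed query is typical with probability $1-e^{-\Omega(n^{\delta})}$. Your only real departure is in passing from one query to a transcript of $e^{n^{\delta}}/n$ queries: the paper reframes this as a distinguish-$f_1$-from-$f_2$ decision problem and invokes a guess-the-index reduction, whereas you simulate deterministically against the $R$-free canonical oracle, fix the transcript, and union-bound over it, which is a cleaner rendering of the same calculation rather than a different proof.
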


\begin{proof}
We will consider the problem of $\max_{S:|S|\leq k}f(S)$ where $k=n^{1/2+\delta}$.  Let $X \subseteq N$ be a random set constructed by including every element from $N$ with probability $n^{-1/2+\delta}$.  We will use this set to construct two functions that are close in expectation but whose maxima have a large gap, and show that access to a noisy oracle implies distinguishing between these two functions.  The functions are:

\begin{itemize}
\item $f_{1}(S) = \min \left \{ |S \cap X| \cdot n^{1/2}+ \frac{n^{1/2+\delta}}{\epsilon}, |S|\cdot n^{1+\delta} \right \}$ 
\item $f_{2}(S) = \min \left \{|S| \cdot n^{\delta} + \frac{n^{1/2+\delta}}{\epsilon}, |S|\cdot n^{1+\delta} \right \}$
\end{itemize}
Notice that both functions are normalized monotone submodular: when $S=\emptyset$ both functions evaluate to $0$, and otherwise are affine. 
By the Chernoff bound we know that $|X| \geq n^{1/2+\delta}/2$ with probability $1-e^{-\Omega(n^{1/2+\delta})}$.  Conditioned on this event we have that $\max_{S:|S|\leq k}f_{1}(S) = f_{1}(X) \in O(n^{1+\delta})$ whereas $f_{2}$ is symmetric and $\max_{S:|S|\leq k}f_{2}(S) \in O(n^{1/2+2\delta})$.  Thus, an inability to distinguish between these two functions implies there is no approximation algorithm with approximation better than $O(n^{-1/2+\delta})$.
We define the erroneous oracle as follows.  If the function is $f_{2}$, its oracle returns the exact same value as $f_{2}$ for any given set.  Otherwise, the function is $f_{1}$ and its erroneous oracle is defined as:
\[
    \widetilde{f}(S)=
\begin{cases}
    f_{2}(S), & \text{if } (1-\epsilon)f_{1}(S) \leq f_{2}(S) \leq (1+\epsilon)f_{1}(S)\\
    f_1(S)   & \text{otherwise}
\end{cases}
\]
Notice that this oracle is $\epsilon$-erroneous, by definition.

Suppose now that the set $X$ is unknown to the algorithm, and the objective is $\max_{S:|S|\leq k}f_1(S)$.  We will first show that no deterministic algorithm that uses a single query to the erroneous oracle $\widetilde{f}$ can distinguish between $f_{1}$ and $f_{2}$, with exponentially high probability (equivalently, we will show that a single query to the algorithm cannot find a set $S$ for which $f_1(S) < (1 - \epsilon) f_2(S)$ or $f_1(S) > (1 + \epsilon) f_2(S)$ with exponentially high probability).  For a single query algorithm, we can imagine that the set $X$ is chosen after the algorithm chooses which query to invoke, and compute the success probability over the choice of $X$. In this case, all the elements are symmetric, and the function value is only determined by the size of the set that the single-query algorithm queries.  

In case the query is a set $S$ of cardinality smaller or equal to $n^{1/2}$, by the Chernoff bound we have that $|S \cap X| \leq (1+\beta)n^{\delta}$ for any $\beta<1$ with probability at least ${1-e^{-\Omega(\beta^{2}n^{\delta})}}$.  Thus:
\begin{align*}
\hspace{2in}
\frac{n^{1/2+\delta}}{\epsilon}& \leq & f_{1}(S)  & \leq & \Big (1+\beta+\frac{1}{\epsilon}\Big)n^{1/2+\delta}&\hspace{2in}\\
\hspace{2in}
\frac{n^{1/2+\delta}}{\epsilon}  & \leq & f_{2}(S)  & \leq & \Big (1+\frac{1}{\epsilon}\Big )n^{1/2+\delta} &\hspace{2in}\\
\end{align*}
It is easy to verify that for $\beta<\epsilon/(1-\epsilon)$: $(1-\epsilon)f_{1}(S )\leq f_{2}(S) \leq (1+\epsilon)f_{1}(S)$.  Thus, for any query of size less or equal to $n^{1/2}$ the likelihood of the oracle returning $f_{1}$ is $1-e^{-\Omega(n^{\delta})}$.

In case the oracle queries a set of size greater than $n^{1/2}$ then again by the Chernoff bound, for any $\beta<1$ we have that with probability at least ${1-e^{-\Omega(\beta^{2}n^{1/2})}}$:
$$\Big (1-\beta \Big ) \frac{|S|}{n^{1/2-\delta}}\leq |S \cap X| \leq \Big (1+\beta \Big)\frac{|S|}{n^{1/2-\delta}}$$
For $\beta \leq \epsilon/(1-\epsilon)$, this implies that:
$$(1-\epsilon)f_{1}(S) \leq f_{2}(S) \leq (1+\epsilon)f_{1}(S)$$
Therefore, for any fixed $\epsilon \in (0,1)$, the algorithm cannot distinguish between $f_1$ and $f_2$ with probability $1-e^{-\Omega{(n^{\delta})}}$ by querying the erroneous oracle with a set larger than $n^{1/2}$.  To conclude, by a union bound we get that with probability $1-e^{-\Omega{(n^{\delta})}}$ no algorithm can distinguish between $f_1$ and $f_2$ using a single query to the erroneous oracle, and the ratio between their maxima is $O(n^{1/2-\delta})$.

To complete the proof, suppose we had an algorithm running in time $e^{n^{\delta}}/n$ which can approximate the value of a submodular function, given access to an $\epsilon$-erroneous oracle with approximation ratio strictly better than $O(n^{-1/2+\delta})$ which succeeds with probability 2/3. This would let us solve the following decision problem:
{\em Given access to an $\epsilon$-erroneous oracle for either $f_1$ or $f_2$, determine which function is being queried.}  To solve the decision problem, given access to an erroneous oracle of unknown function, we would use the hypothetical approximation algorithm to estimate the  value of the maximal set of size $n^{1/2 + \delta}$. If this value is strictly more than $n^{1/2 + 2\delta}$, the function is $f_1$ (since $f_1(X) = O(n^{1 + \delta}))$, and otherwise it is $f_2$.

The reduction allows us to show that distinguishing between the functions in time $e^{n^{\delta}}/n$ and success probability $2/3$ is impossible.  For purpose of contradiction, suppose that there is a (randomized) algorithm for the decision problem, and let $p$ denote the probability that it outputs $f_2$ if it sees an oracle which is fully consistent with $f_2$. To succeed with probability $2/3$, it must be the case that whenever the algorithm gets $f_1$ as an input, it finds a set $S$ for which the noisy oracle returns $f_1(S)$ with probability at least $2/3 - p/2 \ge 1/6$. Whenever it finds such a set, the algorithm is done, since it can compute $f_2(S)$ without calling the oracle, and hence it knows that $f_1$ was chosen in the decision problem.

In this case, we know that the algorithm makes up to $e^{n^{\delta}}/n$ queries, until it sees a set for which it gets $f_1(S)$. But this means that there is an algorithm with success probability at least $O(n/6e^{n^{\delta}})$ that makes a single query. This algorithm guesses some index $i < e^{n^{\delta}}/n$, and simulates the original algorithm for $i - 1$ steps (by feeding it with $f_2$ without using the oracle), and then using the oracle in step $i$. If the algorithm guesses $i$ to be the first index in which the exponential time algorithm sees $f_1(S)$, then the single query algorithm would succeed.  Hence, since we showed that no single query (randomized) algorithm can find a set $S$ such that $f_1(S) < (1 - \epsilon) f_2(S)$ or $f_1(S) > (1 + \epsilon) f_2(S)$ with just one query this concludes the proof. 
\end{proof}

The following remarks are worth mentioning:
\begin{itemize}
\item The functions we used in the lower bound are very simple examples of coverage functions;
\item If one does not require the function to be normalized, then the lower bound holds for affine functions, i.e. $f(S) = \sum_{a\in S}f(a) + C$, where $C$ independent of $S$;
\item The lower bound is tight: for any $\epsilon$-erroneous oracle there is a $\frac{1-\epsilon}{1+\epsilon}\cdot \max\{n^{-1/2},1/k\}$ approximation by simply partitioning the ground sets to arbitrary sets of size $\min\{\sqrt{n},k\}$, and select the set whose value according to the erroneous oracle is maximal;
\item The lower bound applies to additive noise by simply applying an additive version of the Chernoff bound.
\end{itemize}

Somewhat surprisingly, the above theorem suggests that a good approximation to a submodular function does not suffice to obtain reasonable approximation guarantees.  In particular, guarantees from learning or sketching where the goal is to approximate a submodular function up to constant factors may not necessarily be meaningful for optimization.  It is important to note that for some classes of submodular functions such as additive functions ($f(S)=\sum_{a \in S}f(a)$), we can obtain algorithms that are robust to adversarial noise.  A very interesting open question is to characterize the class of submodular functions that are robust to adversarial noise.

\newpage \section{More related work}\label{sec:related}
\paragraph{Submodular optimization.}  Maximizing monotone submodular functions under cardinality and matroid constraints is heavily studied.  The seminal works of \cite{nemhauser1978analysis,fisher1978analysis} show that the greedy algorithm gives a factor of $1-1/e$ for maximizing a submodular function under a cardinality constraint and a factor $1/2$ approximation for matroid constraints.  For max-cover which is a special case of maximizing a submodular function under a cardinality constraint, Feige shows that no poly-time algorithm can obtain an approximation better than 1-1/e unless P=NP~\cite{feige1998threshold}.  Vondrak presented the continuous greedy algorithm which gives a $1 - 1/e$ ratio for maximizing a monotone submodular function under matroid constraints~\cite{vondrak08}.  This is optimal, also in the value oracle model~\cite{MSV08,khot2005inapproximability,nemhauser1978best}.  It is interesting to note that with a demand oracle the approximation ratio is strictly better than $1 - 1/e$~\cite{feige2006approximation}.  When the function is not monotone, constant factor approximation algorithms are known to be obtainable as well~\cite{feige2011maximizing,LMNS09,BFNS12,BFNS14}.  In general, in the past decade there has been a development in the theory of submodular optimization, through concave relaxations~\cite{AS04,CE11}, the multilinear relaxation~\cite{calinescu2007maximizing,vondrak08,CJV15}, and general rounding technique frameworks~\cite{contention}.  In this paper, the techniques we develop arise from first principles: we only rely on basic properties of submodular functions, concentration bounds, and the algorithms are variants of the standard greedy algorithm.

\paragraph{Submodular optimization in game theory.}  Submodular functions have been studied in game theory almost fifty years ago~\cite{sh71}.  In mechanism design submodular functions are used to model agents' valuations~\cite{LLN01} and have been extensively studied in the context of combinatorial auctions (e.g.~\cite{DNS05,dobzinski2006improved,DLN08,MSV08,SODA10,DFK11,PP11,DRY11,DV12}).  Maximizing submodular functions under cardinality constraints have been studied in the context of combinatorial public projects~\cite{PSS08,SS08,BSS10,LSST13} where the focus is on showing the computational hardness associated with not knowing agents valuations and having to resort to incentive compatible algorithms.  Our adversarial lower bound implies that if agents err in their valuations, optimization may be hard, regardless of incentive constraints.

\paragraph{Submodular optimization in machine learning.}  In the past decade submodular optimization has become a central tool in machine learning and data mining (see surveys~\cite{krgue11acm,icmltut2,nipstut}).  Problems include identifying influencers in social networks~\cite{KKT03,gor11} sensor placement~\cite{leskovec07,golovin10}, learning in data streams~\cite{sgk09,gomes10,KMVV13,BMKK14}, information summarization~\cite{libi11,linb11}, adaptive learning~\cite{golkr11}, vision~\cite{jegelkaB11,jb11inference,kohliOJ13}, and general inference methods~\cite{krgu07,jb11inference,djolonga14}.  In many cases the submodular function is learned from data, and our work aims to address the case in which there is potential for noise in the model.

\paragraph{Learning submodular functions.}  One of the main motivations we had for studying optimization under noise is to understand whether submodular functions that are learned from data can be optimized well.  The standard framework in the literature for learning set functions is  \emph{Probably Mostly Approximately Correct} (\texttt{PMAC}) learnability due to Balcan and Harvey~\cite{BH11-PMAC}.  This framework nicely generalizes Valiant's notion of \emph{Probably Approximately Correct} (\texttt{PAC}) learnability~\cite{Valiant84-PAC}.  Informally, PMAC-learnability guarantees that after observing polynomially-many samples of sets and their function values, one can construct a surrogate function that is with constant probability over the distributions generating the samples, likely to be an approximation of the submodular function generating the data.  Since the seminal paper of  Balcan and Harvey there has been a great deal of work on learnability of submodular functions~\cite{FK14-coverage, BCIW12-valuations,  BDFKNR12-sketches, FV13-submodular_juntas, feldman2015tight, Balcan15-AAMAS}.  As discussed in the paper, our lower bounds imply that one cannot optimize the surrogate function \texttt{PMAC} learned from data.  If the approximation is via i.i.d noise on sets sufficiently far, this may be possible. 

\paragraph{Approximate submodularity.}  The concept of approximate submodularity has been studied in machine learning for dictionary selection and feature selection in linear regression~\cite{KC10,DK11,DDK12,EKDN16}.  Generally speaking, this line of work considers approximate submodularity by defining a notion of the \emph{submodularity ratio} of a function, defined in terms of how close it is to have a diminishing returns property.  This ratio depends on the instance, which in the worst-case may result in a function that poorly approximates a submodular function.  In practice however, these works show that in a broad range of applications the functions of interest are sufficiently close to submodular.  Recently, the notion of approximate \emph{modularity} (i.e. additivity) has been studied in~\cite{DCD0K15} which give an optimal algorithm for approximating an approximately modular function via a modular function.  These notions of approximate modularity and approximate submodularity are the model in which we have noise on the marginals.  As discussed in Section~\ref{sec:extensions}, if the error on the marginals is adversarial, there are regimes in which non-trivial guarantees are impossible.  If one assumes the marginal approximations are i.i.d our positive results apply.    

\paragraph{Combinatorial optimization under noise.}  Combinatorial optimization with noisy inputs can be largely studied through consistent (independent noisy answers when querying the oracle twice) and inconsistent oracles. For inconsistent oracles, it usually suffices to repeat every query $O(\log n)$ times, and eliminate the noise. To the best of our knowledge, submodular optimization has been studied under noise only in instances where the oracle is inconsistent or equivalently small enough so that it does not affect the optimization~\cite{KKT03,KG05}.  One line of work studies methods for reducing the number of samples required for optimization (see e.g. \cite{feige1994computing,ben2008bayesian}), primarily for sorting and finding elements. On the other hand, if two identical queries to the oracle always yield the same result, the noise can not be averaged out so easily, and one needs to settle for approximate solutions, which has been studied in the context of tournaments and rankings~\cite{kenyon2007rank,braverman2008noisy,ajtai2009sorting}. 

\paragraph{Convex optimization under noise.}  Maximizing functions under noise is also an important topic in convex optimization.  The analogue of our model here is one where there is a zeroth-order noisy oracle to a convex function.  As discussed in the paper, the question of polynomial-time algorithms for noisy convex optimization is straightforward and the work in this area largely aims at improving the convergence rate~\cite{elster1995grid,glad1977optimization,khuri1996response,kushner1978stochastic,polyak1987introduction}.

\newpage \section{Acknowledgements}
A.H. was supported by ISF 1241/12; Y.S. was supported by NSF grant CCF-1301976, CAREER CCF-1452961, a Google Faculty Research Award, and a Facebook Faculty Gift.  We thank Vitaly Feldman who pointed out the application to active learning.  We are deeply indebted to Lior Seeman, who has carefully read previous versions of the manuscript and made multiple invaluable suggestions.

\clearpage
\newpage
\bibliographystyle{abbrv}
\bibliography{noise-bib}
\clearpage
\newpage

\appendix
\addcontentsline{toc}{section}{Appendices}
\section*{Appendix}
\newpage \section{Combinatorial Smoothing}\label{sec:appendix_smoothing}
In this section we illustrate a general framework we call \emph{combinatorial smoothing} that we will use in the subsequent sections.  Intuitively, combinatorial smoothing mitigates the effects of noise and enables finding elements whose marginal contribution is large.

\paragraph{Some intuition.}  Recall from our earlier discussion that implementing the greedy algorithm requires identifying $\arg\max f(S \cup a)$ for a given set $S$ of elements selected by the algorithm in previous iterations.  Thus, if for some $a,b \in N$ we can compare $S \cup a$ and $S \cup b$ and decide whether $f(S\cup a) > f(S\cup b)$ or vice versa, we can implement the greedy algorithm.  Put differently, viewing a set as a point on the hypercube, given two points in $\{0,1\}^n$ we need to be able to tell which one has the larger true value, using a noisy oracle.  In a world of continuous optimization, a reasonable approach to estimate the true value of a point in $[0,1]^{n}$ with access to a noisy oracle is to take a small neighborhood around the point, sample values of points in its neighborhood, and average their values.  Taking polynomially-many samples allows concentration bounds to kick in, and using a small enough diameter can often guarantee that the averaged value is a reasonable estimate of the point's true value.  Surprisingly, the spirit of this idea can used in submodular optimization.

\paragraph{Smoothing neighborhood.}  For a given subset $A\subseteq N$ a \emph{smoothing function} is a method which assigns a family of sets $\mathcal{H}(A)$ called the \emph{smoothing neighborhood}.  The smoothing function will be used to create a smoothing neighborhood for a small set $A$.  This set $A$ whose marginal contribution we aim to evaluate, is essentially a candidate for a greedy algorithm.  In the application in Section~\ref{sec:largek} the set $A$ is simply be a single element, whereas in Section~\ref{sec:smallk} the set $A$ is of size $O(1/\epsilon)$.

\begin{definition}
For a given function ${f:2^{N} \to \mathbb{R}}$, $A,S \subseteq N$, and smoothing neighborhood $\mathcal{H}(A)$:
\begin{itemize}
\item  $F_{S}(A) := \mathbb{E}_{X\in \mathcal{H}(A)} \left [ f_{S}(X) \right ]$ (called the \emph{smooth marginal contribution} of $A$),
\item $F(S\cup A) := \mathbb{E}_{X\in \mathcal{H}(A)}   \left [  f(S \cup X) \right ]$ (called the smooth value of $S \cup A$)
\item $\widetilde{F}(S\cup A) := \mathbb{E}_{X\in \mathcal{H}(A)} \left [ \widetilde{f}(S \cup X) \right ]$ (called the \textbf{noisy} smooth value of $S\cup A$).
\end{itemize}
\end{definition}

The idea behind combinatorial smoothing is to select a smoothing neighborhood which includes sets whose value is in some sense close to the value of the set $A$ whose marginal contribution we wish to evaluate.  Intuitively, when the sets are indeed close, by averaging the values of the sets in $\mathcal{H}(A)$ we can mitigate the effects of noise and produce meaningful statistics (see Figure~\ref{fig:smoothing}).

\begin{figure*}[t]
\begin{centering}
                \includegraphics[trim = 0mm 0mm 0mm 0mm, height=110mm]{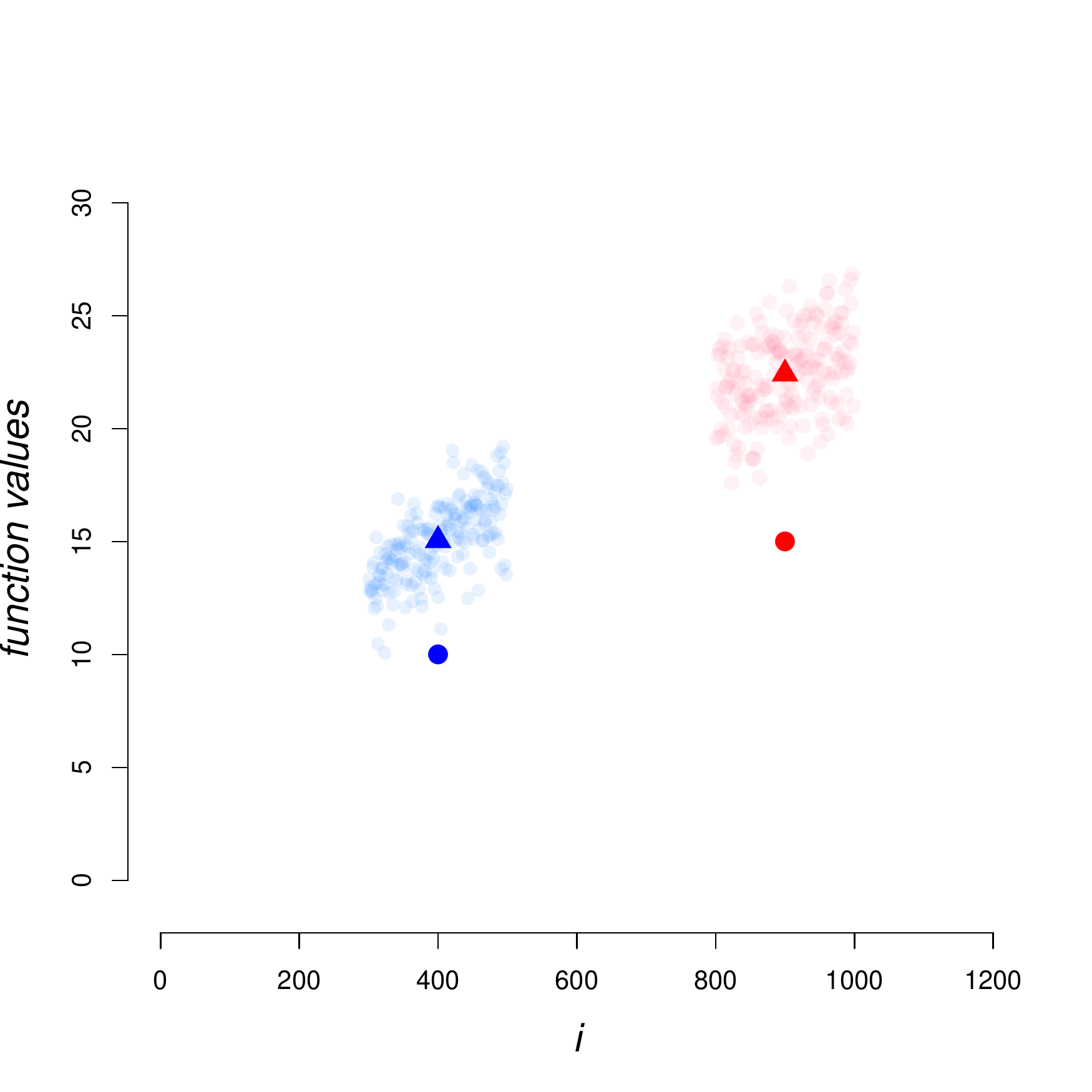}
                 \caption{\footnotesize{An illustration of smoothing.  For every element in the ground set we associate an index $i \in [n]$ and define the submodular function as $f(S) = \sqrt{\sum_{i\in S}i}/2 -c$ for a constant $c>0$.  The blue dot depicts the \emph{true} value of the element $a$ associated with the index $i=400$ and the red dot depicts the \emph{true} value of the element $b$ associated with the index $j=900$.  The light blue and light red dots depict the noisy function values of elements associated with indices $i$ in the range $|i-400|\leq 100$ and $|i-900|\leq 100$.  For $S=\emptyset$, and smoothing neighborhoods $\mathcal{H}(a) =\{i: |i-a| \leq 100\}$ and $\mathcal{H}(b) =\{i: |i-b| \leq 100\}$ we depict $\widetilde{F}(S\cup a)$ and $\widetilde{F}(S\cup b)$ as the blue and red triangles, respectively.  Intuitively, an algorithm which needs to decide whether $a$ (blue point) is larger than $b$ (red point) will decide by comparing $\widetilde{F}(S\cup a)$ (blue triangle) and $\widetilde{F}(S\cup b)$ (red triangle).}}\label{fig:smoothing}
                 \end{centering}
\end{figure*}

\subsection*{Smoothing arguments}
In our model, the algorithm may only access $\widetilde{F}(S\cup A)$.  
Ideally, given a set $S$ and a smoothing neighborhood $\mathcal{H}(A)$ we would have liked to apply concentration bounds and show that the noisy smooth value is arbitrarily close to the non-noisy smooth value, i.e. $F(S\cup A) \approx \widetilde{F}(S\cup A)$ or: 
$$ \sum_{i \in \mathcal{H}(A)} f(S \cup X_i) \approx  \sum_{i \in \mathcal{H}(A)}\xi_i f(S \cup X_i)$$
If the values in $\{f(S \cup X_i)\}_{i=1}^{|\mathcal{H}(A)|}$ were arbitrarily close, we could simply apply a concentration bound by taking the value of any one of the sets, say $S \cup X_j$, and for $v_j = f(S \cup X_j)$, since all the values are close, we would be guaranteed that: 
$$ \sum_{i \in \mathcal{H}(A)}\xi_i f(S \cup X_i) \approx \sum_{i \in \mathcal{H}(A)}\xi_i f(S \cup X_j) =  v_j \cdot  \sum_{i \in \mathcal{H}(A)}\xi_i$$ 
In continuous optimization this is usually the case when averaging over an arbitrarily small ball around the point of interest, and concentration bounds apply.  In our case, due to the combinatorial nature of the problem, the values of the sets in the smoothing neighborhood may take on very different values.  For this reason we cannot simply apply concentration bounds.  The purpose of this section is to provide machinery that overcomes this difficulty.  The main ideas can be summarized as follows:
\begin{enumerate}
\item In general, there may be cases in which we cannot perform smoothing well and cannot get the noisy smooth values to be similar to the true smooth values.  We therefore define a more modest, yet sufficient goal.  Since our algorithms essentially try to replace the step of adding the element $a \in\argmax_{b}f(S\cup b)$ in the greedy algorithm with $a' \in \argmax_{b} F(S\cup b)$, it suffices to guarantee that for the set $A$ which maximizes the noisy smooth values, that set also well approximates the (non-noisy) smooth values.  More precisely our goal is to show that if for an arbitrarily small $\delta>0$ we have that $A\in \argmax_{B} \widetilde{F}(S\cup B)$ then ${F(S\cup A) \geq (1-\delta) \max_{B} F(S\cup B)}$;
\item To show that $A\in \argmax \widetilde{F}(S\cup A)$ implies $F(S\cup A) \geq (1-\delta) \max_{B} F(S\cup B)$ for an arbitrarily small $\delta>0$, we prove two bounds.  
Lemma~\ref{lem:noisysmoothing_lowerbound} lower bounds the noisy smooth contribution of a set in terms of its (true) smooth contribution.  Lemma~\ref{lem:boundonb} upper bounds the smooth noisy contribution of any element against its smooth contribution.  The key difference between these lemmas is that Lemma~\ref{lem:noisysmoothing_lowerbound} lower bounds the value in terms the \emph{variation} of the smoothing neighborhood.  The variation of the neighborhood is the ratio between the set with largest value and that with lowest value in the neighborhood.  Intuitively, for elements with large values the variation of the neighborhood is bounded, and thus we can show that the noisy smooth value of these elements is nearly as high as their true smooth values.  

\item Together, these lemmas are used in subsequent sections to show that an element with the largest noisy smooth marginal contribution is an arbitrarily good approximation to the element with the largest (non-noisy) smooth marginal contribution.  This is achieved by showing that the lower bound on the smooth value of an element with large (non-noisy) smooth marginal contribution beats the upper bound on the smooth (non-noisy) value of an element with slightly smaller smooth contribution.
\end{enumerate} 

The first lemma gives us tail bounds on the upper and lower bounds of the value of the noise multiplier in any of the calls made by a polynomial-time algorithm.  We later use these tail bounds in concentration bounds we use in the smoothing procedures. 

\begin{lemma}\label{lem:upperBoundNoise}
Let $\omega_{\max} = \max\{\xi_{1},\ldots, \xi_{m}\}$ and $\omega_{\min} = \min\{\xi_{1},\ldots, \xi_{m}\}$, where $\xi_i \sim \mathcal{D}$ and $\mathcal{D}$ is a noise distribution with a generalized exponential tail.  For any $\delta > 0$ and sufficiently large $m$, we have that: 
\begin{itemize}
\item $\Pr[\omega_{\max} < m^{\delta}] >1 -  e^{-\Omega( m^\delta/\ln m)}$
\item $\Pr[\omega_{\min} > m^{-\delta}] >1 -  e^{-\Omega( m^\delta/\ln m)}$
\end{itemize}
\end{lemma}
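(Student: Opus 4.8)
The plan is to reduce everything to a single-sample tail estimate and then take a union bound over the $m$ multipliers. For the upper bound, note first that since $\rho(x)=e^{-g(x)}$ is a genuine probability density on a neighborhood of $+\infty$, the function $g$ must tend to $+\infty$, so the leading coefficient $a_0$ is positive (otherwise $e^{-g}$ would fail to be integrable); combined with $\alpha_0\ge\alpha_1\ge\cdots$ and $\alpha_0\ge 1$, the leading term $a_0x^{\alpha_0}$ dominates, so there is a constant $x_1\ge\max\{x_0,1\}$ — depending only on $\mathcal D$, not on $m$ — with $g(x)\ge\tfrac{a_0}{2}x^{\alpha_0}\ge\tfrac{a_0}{2}x$ for all $x\ge x_1$. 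Hence, for every $t\ge x_1$,
\[
\Pr[\xi>t]=\int_t^\infty e^{-g(x)}\,dx\le\int_t^\infty e^{-\frac{a_0}{2}x}\,dx=\frac{2}{a_0}\,e^{-\frac{a_0}{2}t}.
\]
When $\mathcal D$ has bounded support the claim for $\omega_{\max}$ is immediate, since then $\xi$ is at most a fixed constant, which is below $m^{\delta}$ once $m$ is large.

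I would then set $t=m^{\delta}$, which exceeds $x_1$ for $m$ large enough, and union-bound over $\xi_1,\dots,\xi_m$:
\[
\Pr[\omega_{\max}\ge m^{\delta}]\le m\cdot\frac{2}{a_0}\,e^{-\frac{a_0}{2}m^{\delta}}\le e^{-\Omega(m^{\delta})},
\]
since $\ln m+\ln(2/a_0)=o(m^{\delta})$; in particular this is at most $e^{-\Omega(m^{\delta}/\ln m)}$, as stated (the $\ln m$ slack is harmless). For the lower bound $\omega_{\min}>m^{-\delta}$ I would run the symmetric argument on the lower tail of $\mathcal D$: for the distributions in the generalized-exponential-tail class the mass near the infimum of the support is either trivially bounded away from $0$ (bounded support) or decays at least as fast as a generalized exponential tail, so $\Pr[\xi<m^{-\delta}]\le e^{-\Omega(m^{\delta}/\ln m)}$, and a union bound over the $m$ multipliers completes the proof.

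I do not expect a genuine obstacle here — the entire content is the clean single-sample tail estimate, and the union bound is routine. The one step that needs care is the uniform inequality $g(x)\ge\tfrac{a_0}{2}x$ for all $x$ past a constant threshold: one must derive it from the fully general shape $g(x)=\sum_i a_ix^{\alpha_i}$, where the exponents $\alpha_i$ need not be integers and the lower-order coefficients may be negative and of large magnitude, so that the pointwise control on $\rho(t)$ actually upgrades to control of the whole tail integral $\int_t^\infty e^{-g(x)}\,dx$ rather than of $\rho(t)$ alone.
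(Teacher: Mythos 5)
Your approach matches the paper's: derive a single-sample tail estimate from the generalized-exponential-tail form and union-bound over the $m$ multipliers. The paper's own proof is only a few sentences; you flesh out the single-sample bound via $g(x)\ge \tfrac{a_0}{2}x$ past a constant threshold, which is a correct (if slightly informal) way to see that the leading term dominates and to upgrade pointwise control of the density into control of the tail integral. The $\omega_{\max}$ half is therefore sound, and in fact you prove the stronger rate $e^{-\Omega(m^\delta)}$.

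The $\omega_{\min}$ half, however, does not go through as written — in your version or in the paper's. The generalized-exponential-tail definition constrains only the density for $x>x_0$, i.e.\ the upper tail; it imposes no condition near $0$ or near the infimum of the support. The assertion you invoke — that the mass near the infimum of the support decays at least as fast as a generalized exponential tail — is an unstated additional hypothesis, and it is false for members of the class as defined. For $\mathcal D=\mathrm{Exp}(1)$, which the paper explicitly places in the class, $\Pr[\xi < m^{-\delta}] = 1 - e^{-m^{-\delta}} \approx m^{-\delta}$, so $\Pr[\omega_{\min} > m^{-\delta}] = e^{-m^{1-\delta}}$, which tends to $0$ for $\delta<1$ and otherwise leaves failure probability $\approx m^{1-\delta}$ — polynomially, not superpolynomially, small. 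The paper's one line ``the proof of the lower bound is equivalent'' papers over exactly the same gap, so you are no worse off than the source, but the second bullet needs an additional assumption on the lower tail (e.g.\ that $\mathcal D$ is bounded away from $0$, as in the uniform-on-$[1-\epsilon,1+\epsilon]$ case) that neither proof makes explicit.
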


\begin{proof}
As $m$ tends to infinity, this lemma trivial for any noise distribution which is bounded, or has finite support. If the noise distribution is unbounded, we know that its tail is subexponential. Thus, at any given sample the probability of seeing the value $m^{\delta}$ is at most $e^{-O(m^\delta)}$ where the constant in the big $O$ notation depends on the magnitude of the tail. Iterating this a polynomial number of times gives the bound. The proof of the lower bound is equivalent.
\end{proof}

The definition below of the variation of the neighborhood quantifies the ratio between the largest possible value and the smallest possible value achieved by a set in the neighborhood.

\begin{definition}\label{def:variation}
For given sets $A,S \subseteq N$, the \textbf{variation} of the neighborhood denoted $v_S(\mathcal{H}(A))$ is:
$$v_S(\mathcal{H}(A)) = \frac{\max_{T \in \mathcal{H}(A)}f_S(T)}{\min_{T \in \mathcal{H}(A)}f_S(T)}.$$
\end{definition}

The following lemma gives a lower bound on the noisy smooth value in terms of the (non-noisy) smooth value and the variation.  Intuitively, when an element has large value its variation is bounded, and the lemma implies that its noisy smooth value is close to its smooth value.  Essentially, when the variation is bounded $\widetilde{F}(S)\approx (1-\lambda)(1-\epsilon)F(S)$ for $\lambda$ and $\epsilon$ that vanish as $n$ grows large. 

\begin{lemma}\label{lem:noisysmoothing_lowerbound}
Let $f:2^{N} \to \mathbb{R}$, $A,S\subset N$, $\omega = \max_{A_i \in \mathcal{H}(A)}\xi_{A_i}$, and $\mu$ be the mean of the noise distribution.  For 
$\epsilon = \min \left \{1, 2v_S(\mathcal{H})\cdot |\mathcal{H}(A)|^{-1/4} \right \}$ for any $\lambda<1$ w.p $1- e^{-\Omega(\frac{\lambda^2 t^{{1}/{4}}}{\omega})}$ we have:
$$\widetilde{F}(S\cup A)
> (1-\lambda)\mu \cdot \left ( f(S) +  (1-\epsilon) \cdot F_{S}(A) \right ).
$$
\end{lemma}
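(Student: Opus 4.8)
The plan is to unfold $\widetilde{F}(S\cup A)$ into an average of independent noisy evaluations and then apply a Bernstein-type inequality, using the variation of the neighborhood to control the variance of that average. Enumerate $\mathcal{H}(A)=\{A_1,\dots,A_t\}$ with $t=|\mathcal{H}(A)|$; since the sets $S\cup A_i$ are pairwise distinct, $\widetilde{F}(S\cup A)=\frac1t\sum_{i=1}^t \xi_{S\cup A_i}\,f(S\cup A_i)$ is a non-negative weighted average of i.i.d.\ draws $\xi_i:=\xi_{S\cup A_i}\sim\mathcal{D}$ with weights $w_i:=f(S\cup A_i)=f(S)+f_S(A_i)\ge 0$ (here $f_S(A_i)\ge0$ uses monotonicity), and its expectation is $\mu\bar w$ where $\bar w:=\frac1t\sum_i w_i=f(S)+F_S(A)$. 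The crucial structural point is that the weights cannot be too spread out: adding the non-negative constant $f(S)$ to numerator and denominator only decreases a ratio that is at least $1$, so $w_{\max}/w_{\min}\le \max_i f_S(A_i)/\min_i f_S(A_i)=v_S(\mathcal{H})$, and since $w_{\min}\le\bar w$ we get $w_{\max}\le v_S(\mathcal{H})\,\bar w$. Hence $\sum_i w_i^2\le w_{\max}\sum_i w_i\le v_S(\mathcal{H})\,t\,\bar w^2$, and, exposing $\omega=\max_i\xi_i$, each summand obeys $\xi_i w_i\le \omega\,v_S(\mathcal{H})\,\bar w$.

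Next I would reduce the target: because $F_S(A)\ge0$ we have $(1-\lambda)\mu\big(f(S)+(1-\epsilon)F_S(A)\big)\le(1-\lambda)\mu\bar w$, so it suffices to prove $\widetilde{F}(S\cup A)>(1-\lambda)\mu\bar w$, i.e.\ that the average falls below its mean by less than a $\lambda$-fraction. The parameter $\epsilon$ is present only to keep the statement meaningful when the neighborhood is too spread out to concentrate usefully: if $\min_i f_S(A_i)=0$, or more generally if $v_S(\mathcal{H})>\tfrac12 t^{1/4}$, then $\epsilon=1$ and the asserted bound is merely $\widetilde{F}(S\cup A)\ge(1-\lambda)\mu f(S)$, which follows from $\widetilde{F}(S\cup A)\ge f(S)\cdot\frac1t\sum_i\xi_i$ together with concentration of $\frac1t\sum_i\xi_i$ around $\mu$ — no dependence on the variation. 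In the remaining regime $v_S(\mathcal{H})\le\tfrac12 t^{1/4}$ one applies Bernstein's inequality to $\sum_i\xi_i w_i$ with variance at most $\sigma^2 v_S(\mathcal{H})\,t\,\bar w^2$ ($\sigma^2=\operatorname{Var}(\mathcal{D})$), centered per-term bound $M\le\omega\,v_S(\mathcal{H})\,\bar w$, and deviation $\lambda\mu t\bar w$; the factors $\bar w$ cancel and the exponent is $\Omega\!\big(\lambda^2\mu^2 t/(v_S(\mathcal{H})(\sigma^2+\omega\lambda\mu))\big)=\Omega\!\big(\lambda^2 t^{3/4}/(\sigma^2+\omega\lambda\mu)\big)$, which dominates the claimed $\Omega(\lambda^2 t^{1/4}/\omega)$ for large $t$ in both the subgaussian case (where $\sigma^2=O(1)$ is dominant) and the subexponential case (where the linear-in-$\omega$ term is dominant, giving $\Omega(\lambda t^{3/4}/\omega)$).

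The main obstacle is precisely this last step: obtaining a single clean tail bound of the stated shape that is valid uniformly over all generalized-exponential-tail distributions. When $\mathcal{D}$ has a heavy (but still subexponential) right tail, the sample maximum $\omega$ can be polynomially large, so a Hoeffding bound — quadratic in the per-term range $\omega v_S(\mathcal{H})\bar w$ — is too weak, and one genuinely needs the Bernstein form whose ``$Ms$'' term is only linear in $\omega$ (using, if desired, Lemma~\ref{lem:upperBoundNoise} to guarantee $\omega$ is at most polylogarithmic on the relevant event); symmetrically, one must make sure the variance estimate in the second step carries the factor $v_S(\mathcal{H})$ and not $v_S(\mathcal{H})^2$, otherwise the subgaussian case is lost. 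Everything else is routine: unfolding the definition, the elementary ``ratio shrinks under a common positive shift'' estimate, the identity $\bar w=f(S)+F_S(A)$, and gluing the $\epsilon<1$ and $\epsilon=1$ cases into one statement.
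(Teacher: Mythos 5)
Your argument is correct and takes a genuinely different route from the paper. The paper proves this lemma via a discretization/binning argument: it splits $\{f_S(A_i)\}_i$ into $t^{1/4}$ bins of width $\alpha_{\max}t^{-1/4}$, applies a multiplicative Chernoff bound to $\sum_{i\in\textsc{bin}_q}\xi_i$ separately within each \emph{dense} bin (where values are nearly equal and so can be replaced by a single representative), discards the $O(t^{1/2})$ points lying in sparse bins, and accumulates a multiplicative loss of $(1-2v_S(\mathcal{H})t^{-1/4})$ from the bin width and the sparse-bin discard. You instead bound the second moment directly — the observation that $w_{\max}/w_{\min}\le v_S(\mathcal{H})$ via the ``common positive shift shrinks ratios $\ge 1$'' estimate, hence $\sum_i w_i^2\le v_S(\mathcal{H})\,t\,\bar w^2$ — and feed that into a one-sided Bernstein inequality on $\sum_i\xi_iw_i$. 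Your version is slicker, and in fact proves a \emph{stronger} inequality than the one stated: you show $\widetilde{F}(S\cup A)>(1-\lambda)\mu\bar w=(1-\lambda)\mu(f(S)+F_S(A))$ with no $(1-\epsilon)$ loss at all whenever $\epsilon<1$, so the $(1-\epsilon)$ factor in the lemma turns out to be an artifact of the binning technique, not something intrinsic. The trade is that Bernstein pulls in $\sigma^2=\operatorname{Var}(\mathcal{D})$ explicitly (fine here since it is a constant for generalized exponential tails), whereas the paper's Chernoff-on-bins needs only the boundedness of the noise on the relevant event. Both proofs share the same informal step of ``exposing'' $\omega=\max_i\xi_i$ — strictly one should truncate $\xi_i$ at a threshold furnished by Lemma~\ref{lem:upperBoundNoise} and union-bound, since conditioning on $\max_i\xi_i\le\omega$ destroys independence, but the paper glosses over this too so it is not a gap specific to your argument. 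Your accounting of the exponent, $\Omega\big(\lambda^2\mu^2 t/(v_S(\mathcal{H})(\sigma^2+\omega\lambda\mu))\big)\ge\Omega\big(\lambda^2 t^{3/4}/(\sigma^2+\omega\lambda\mu)\big)$ in the regime $v_S(\mathcal{H})\le\tfrac12 t^{1/4}$, does dominate the claimed $\Omega(\lambda^2 t^{1/4}/\omega)$ for $t$ large; and your separate handling of $\epsilon=1$ via $\widetilde F\ge f(S)\cdot\frac1t\sum_i\xi_i$ plus concentration is exactly right and closes the case $\min_i f_S(A_i)=0$ that would otherwise make $v_S(\mathcal{H})$ undefined.
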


\begin{proof}
Let $A_{1},\ldots,A_{t}$ be the sets in $\mathcal{H}(A)$ and let $\alpha_{1},\ldots, \alpha_{t}$ denote the corresponding marginal contributions and $\xi_{1}\ldots,\xi_{t}$ denote their noise multipliers.  In these terms the noisy smooth value is:
\begin{align}
{\widetilde{F}}(S \cup A)
= \frac{1}{t}\sum_{i=1}^{t}\xi_{i} ( f(S) + \alpha_i) = \frac{1}{t}\sum_{i=1}^{t}\xi_{i}  f(S)  +  \frac{1}{t}\sum_{i=1}^{t}\xi_{i} \alpha_i.\label{eq:sk1}
\end{align}
Let $\omega$ be the upper bound on the value of the noise multiplier.  Applying the Chernoff bound, we get that for any $\lambda <1$ with probability at least $1-e^{-\Omega(\lambda^2 t/ \omega)}$:
$$\frac{1}{t}\sum_{i=1}^{t}\xi_{i} f(S) \geq (1-\lambda)\mu f(S).$$
To complete the proof we need to argue about concentration of the second term in~(\ref{eq:sk1}).  To do so, in our analysis we will consider a fine discretization of $\{\alpha_i\}_{i \in [t]}$ and apply concentration bounds on each discretized value.  Define $\alpha_{\max} = \max_{i \in [t]}\alpha_i$ and $\alpha_{\min} = \min_{i \in [t]}\alpha_i$.  We can divide the set of values $\{\alpha_{i}\}_{i \in [t]}$ to $t^{1/4}$ bins $\textsc{bin}_1,\ldots,\textsc{bin}_{t^{1/4}}$, where a value $\alpha_{i}$ is placed in the bin $\textsc{bin}_{q}$ if
$$(q-1)\cdot \alpha_{\max} t^{-1/4} \leq \alpha_{i} \leq q\cdot \alpha_{\max} t^{-1/4}$$

Say a bin is \emph{dense} if it contains at least $t^{1/4}$ values and \emph{sparse} otherwise.
Consider some dense bin $\textsc{bin}_q$ and let $\alpha_{\min(q)} = \min_{i \in \textsc{bin}_q}\alpha_i$ and $\alpha_{\max(q)} = \max_{i \in \textsc{bin}_q}\alpha_i$.  Since every bin is of width $\alpha_{\max}\cdot t^{-1/4}$ we know that:
$$\alpha_{\min(q)} \geq \alpha_{\max(q)} - \alpha_{\max}\cdot t^{-1/4}$$
Applying concentration bounds as above, we get that $\sum_{i \in \textsc{bin}_q}\xi_{i} \geq (1-\lambda) \mu \cdot |\textsc{bin}_q|$ with probability at least $1-e^{-\Omega(\lambda^2 t^{1/4}/ \omega)}$ for any $\lambda <1$.  Thus, with this probability:
\begin{align*}
\sum_{i \in \textsc{bin}_q}\xi_{i}\alpha_{i}
& \geq \sum_{i \in \textsc{bin}_q}\xi_{i} \alpha_{\min(q)} \\
& \geq (1-\lambda)\mu \cdot |\textsc{bin}_q| \cdot  \alpha_{\min(q)}\\
& \geq (1-\lambda)\mu \cdot |\textsc{bin}_q| \cdot \left ( \max\left\{0,\alpha_{\max(q)} - \alpha_{\max}\cdot {t^{-1/4}}\right\} \right )  \\
& > (1-\lambda) \mu  \cdot |\textsc{bin}_q| \cdot \left  ( \max\left\{0,1 -  \frac{\alpha_{\max}}{\alpha_{\max(q)}}  \cdot t^{-1/4}\right\}\right )\alpha_{\max(q)}\\
& \geq (1-\lambda) \mu  \cdot |\textsc{bin}_q| \cdot \left  ( \max\left\{0,1 -  \frac{\alpha_{\max}}{\alpha_{\min}} \cdot t^{-1/4}\right\}\right )\alpha_{\max(q)}\\
& = (1-\lambda) \mu \cdot |\textsc{bin}_q| \cdot \left  ( \max\left\{0,1 -  v_S\left(\mathcal{H}(A)\right) \cdot t^{-1/4}\right\}\right )\alpha_{\max(q)}
\end{align*}
Taking a union bound over all (at most $t^{1/4}$) dense bins, we get that with probability $1- e^{-\Omega(\lambda^2 t^{1/4}/ \omega)}$:
\begin{align}
\sum_{i \in \textrm{\tiny{dense}}}\xi_{i} \alpha_{i}
& \geq (1-\lambda)\mu \cdot \left (1 -   \max\left\{0,  v_S\left(\mathcal{H}(A)\right) \cdot t^{-1/4} \right\} \right ) \sum_{\textsc{bin}_q \in \textrm{\tiny{dense}}}|\textsc{bin}_q|\cdot \alpha_{\max(q)}\nonumber \\
& \geq (1-\lambda)\mu \cdot \left (\max\left\{0, 1 -   v_S\left(\mathcal{H}(A)\right)\cdot t^{-1/4}\right\} \right )\sum_{i \in \textrm{\tiny{dense}}}\alpha_i.\label{eq:conc1}
\end{align}
Let $\alpha=\frac{1}{t}\sum_{i=1}^{t}\alpha_{i}$.  Since we have less than $t^{1/4}$ elements in a sparse bin, and in total $t^{1/4}$ bins, the number of elements in sparse bins is at most $t^{1/2}$.  We can use this to effectively lower bound the values in sparse bins in terms of $\alpha$:
\begin{align}
\sum_{i \in \textrm{\tiny{dense}}}\alpha_{i}\nonumber
& =  \sum_{i=1}^{t}\alpha_i - \sum_{i \in \textrm{\tiny{sparse}} }\alpha_i\nonumber\\
& \geq \max\left\{0,\sum_{i=1}^{t}\alpha_i - t^{1/2}\alpha_{\max}\right \}\nonumber\\
& \geq \max\left\{0,t \alpha  - t^{1/2} \alpha_{\max}\right \}\nonumber\\
& > \max\left\{0,t \cdot  \left ( 1 -\frac{\alpha_{\max}}{\alpha_{\min}} \cdot  t^{-1/2}\right ) \alpha\right \}\nonumber\\
& = \max\left\{0,t \cdot  \left ( 1 -v_S(\mathcal{H}) \cdot t^{-1/2}\right )\alpha \right\} \label{eq:conc2}
\end{align}
Putting (\ref{eq:conc1}) and (\ref{eq:conc2}) we get that for any $\lambda<1$, with probability $1- e^{-\Omega(\lambda^2 t^{1/4}/\omega)}$:
\begin{align*}
{\widetilde{F}}_{S}(A)
& = \frac{1}{t}\sum_{i =1}^{t}\xi_{i}\cdot \alpha_i \\
& \geq \frac{1}{t}\sum_{i \in \textrm{\tiny{dense}}}\xi_{i}\cdot \alpha_i \\
& \geq (1-\lambda)\mu \cdot (\max \left \{0,1-v_S\left(\mathcal{H}(A)\right)\cdot t^{-1/4} \right \})\cdot \frac{1}{t}\sum_{i \in \textrm{\tiny{dense}}}\alpha_{i}\\
& \geq (1-\lambda)\mu\cdot (\max \left \{0,1-v_S\left(\mathcal{H}(A)\right)\cdot t^{-1/4}\right \})(\max \left \{0,1-v_S\left(\mathcal{H}(A)\right)\cdot t^{-1/2}\right\})\alpha\\
& > (1-\lambda)\mu\cdot (\max \left \{0,1-2v_S\left(\mathcal{H}(A)\right)\cdot t^{-1/4}\right \})\alpha\\
& = (1-\lambda)\mu\cdot (\max \left \{0,1-2v_S\left(\mathcal{H}(A)\right)\cdot t^{-1/4}\right \})F_{S}(A)
\end{align*}
Taking a union bound we get that for any positive $\lambda<1$ with probability $1- e^{-\Omega(\lambda^2 t^{{1}/{4}}/\omega)}$:
\begin{align*}
\widetilde{F}(S \cup A)
& = \frac{1}{t}\sum_{i=1}^{t}\xi_{i}  f(S)  +  \frac{1}{t}\sum_{i=1}^{t}\xi_{i} \alpha_i \\
 & > (1-\lambda)\mu \cdot \Big ( f(S) +  (\max \left \{0,1-2v_S(\mathcal{H}(A))\cdot t^{-1/4}) \cdot F_{S}(A)\right \} \Big )\\
& = (1-\lambda)\mu \cdot \Big ( f(S) +  (1- \min \left \{1,2v_S(\mathcal{H}(A))\cdot t^{-1/4}) \cdot F_{S}(A)\right \} \Big ).\qedhere
\end{align*}
\end{proof}

The next lemma gives us an upper bound on the noisy smooth value.  The bound shows that for sufficiently large $t$ (the size of the smoothing neighborhood, which always depends on $n$), for small $\lambda>0$ we have that $\widetilde{F}(S) \approx (1+\lambda)F(S)+3t^{-1/4}\cdot \alpha_{\max}$.  In our applications of smoothing $\alpha_{\max}\leq\texttt{OPT}$, and $t$ is large.  Since we use this upper bound to compare against elements whose value is at least some bounded factor of $\texttt{OPT}$, the dependency of the additive term on $\alpha_{\max}$ will be insignificant. 

\begin{lemma}\label{lem:boundonb}
Let $f:2^{N} \to \mathbb{R}$, $A,S \subseteq N$, $\omega = \max_{A_i \in \mathcal{H}(A)}\xi_{A_i}$, $\alpha_{\max} = \max_{A_i \in \mathcal{H}(A)}f_S(A_i)$ and $\mu$ be the mean of the noise distribution.  For $\epsilon = 3t^{-1/4} \alpha_{\max}$ we have that for any $\lambda<1$
with probability $1-e^{-\Omega(\lambda^2 t^{1/4}/\omega)}$:
$$ \widetilde{F}(S\cup A) < (1+\lambda) \mu \cdot \left (f(S)+ F_{S}(A) + \epsilon \right ).
$$
\end{lemma}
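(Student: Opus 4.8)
The plan is to mirror the structure of the proof of Lemma \ref{lem:noisysmoothing_lowerbound}, but now extracting an \emph{upper} bound on $\widetilde F(S \cup A)$ rather than a lower bound. As before, write $A_1, \ldots, A_t$ for the sets in $\mathcal{H}(A)$, let $\alpha_i = f_S(A_i)$ be the marginal contributions, and let $\xi_1, \ldots, \xi_t$ be the corresponding noise multipliers, so that
$$\widetilde F(S \cup A) = \frac{1}{t}\sum_{i=1}^{t}\xi_i f(S) + \frac{1}{t}\sum_{i=1}^{t}\xi_i \alpha_i.$$
The first sum is handled directly by a Chernoff bound: for any $\lambda < 1$, with probability at least $1 - e^{-\Omega(\lambda^2 t/\omega)}$ we have $\frac{1}{t}\sum_i \xi_i f(S) \le (1+\lambda)\mu f(S)$. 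The work is in the second sum.

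First I would discretize the $\alpha_i$'s into $t^{1/4}$ bins of width $\alpha_{\max}\cdot t^{-1/4}$ exactly as in Lemma \ref{lem:noisysmoothing_lowerbound}, calling a bin \emph{dense} if it holds at least $t^{1/4}$ elements and \emph{sparse} otherwise. For each dense bin $\textsc{bin}_q$ with maximal value $\alpha_{\max(q)}$, applying the Chernoff upper bound gives $\sum_{i\in \textsc{bin}_q}\xi_i \le (1+\lambda)\mu\,|\textsc{bin}_q|$ with probability $1 - e^{-\Omega(\lambda^2 t^{1/4}/\omega)}$; hence $\sum_{i\in\textsc{bin}_q}\xi_i\alpha_i \le (1+\lambda)\mu\,|\textsc{bin}_q|\,\alpha_{\max(q)}$. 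Taking a union bound over the at most $t^{1/4}$ dense bins and summing, one obtains that the dense contribution is at most $(1+\lambda)\mu \sum_{\textsc{bin}_q \text{ dense}} |\textsc{bin}_q|\,\alpha_{\max(q)}$. Since within a bin $\alpha_{\max(q)} \le \alpha_{\min(q)} + \alpha_{\max}t^{-1/4}$, this telescopes to at most $(1+\lambda)\mu\big(\sum_{i\in\text{dense}}\alpha_i + t\cdot\alpha_{\max}t^{-1/4}\big) \le (1+\lambda)\mu\big(t\alpha + t^{3/4}\alpha_{\max}\big)$. The sparse bins are bounded crudely: there are at most $t^{1/2}$ elements in sparse bins, each with $\alpha_i \le \alpha_{\max}$ and (after another Chernoff bound on the relevant $\xi_i$, or simply using $\omega$ as the upper bound on each multiplier, whichever the paper's conventions prefer) contributing at most $(1+\lambda)\mu\cdot t^{1/2}\alpha_{\max}$ to the sum. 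Collecting these, $\frac{1}{t}\sum_i\xi_i\alpha_i \le (1+\lambda)\mu\big(\alpha + t^{-1/4}\alpha_{\max} + t^{-1/2}\alpha_{\max}\big) \le (1+\lambda)\mu\big(F_S(A) + 3t^{-1/4}\alpha_{\max}\big)$ for $t$ large, where the factor of $3$ absorbs the sparse term and the slack in the bin-width estimate. Adding the $f(S)$ term and taking a final union bound gives $\widetilde F(S\cup A) < (1+\lambda)\mu\big(f(S) + F_S(A) + \epsilon\big)$ with $\epsilon = 3t^{-1/4}\alpha_{\max}$, as claimed.

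The main obstacle is handling the sparse bins cleanly: in the lower-bound lemma they could simply be discarded (dropping terms only decreases a lower bound), but here they must be bounded from above, which requires controlling $\sum_{i\in\text{sparse}}\xi_i\alpha_i$ rather than just $\sum_{i\in\text{sparse}}\alpha_i$. The cleanest route is to note that the sparse indices form a set of size at most $t^{1/2}$, apply the Chernoff bound on just those multipliers (or bound each by $\omega$ and note $\omega \le t^{\delta'}$ w.h.p.\ via Lemma \ref{lem:upperBoundNoise}), and check that the resulting $t^{-1/2}$-order term is dominated by the $3t^{-1/4}\alpha_{\max}$ budget; I would also double-check that the union bound over dense bins, sparse indices, and the $f(S)$ term keeps the overall failure probability at $e^{-\Omega(\lambda^2 t^{1/4}/\omega)}$, which it does since each event fails with at most that probability and there are only $\mathrm{poly}(t)$ of them.
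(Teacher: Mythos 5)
Your decomposition, binning into widths $\alpha_{\max}t^{-1/4}$, and treatment of dense bins all match the paper. You also correctly name the sparse bins as the real obstacle. However, neither of your two proposed fixes for the sparse contribution is actually sound as written, and you miss the specific device the paper uses.

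Your option~(a) --- apply a Chernoff bound on just the sparse multipliers --- breaks when there are few sparse indices. Chernoff on a set of size $s$ gives a failure probability of $e^{-\Omega(\lambda^2 s/\omega)}$; if $s \ll t^{1/4}$ (there is nothing forcing $s$ to be large, it could even be zero), this does not yield $e^{-\Omega(\lambda^2 t^{1/4}/\omega)}$, so the intermediate claim that the sparse part contributes ``at most $(1+\lambda)\mu\cdot t^{1/2}\alpha_{\max}$'' with the required failure probability is not justified. Your option~(b) --- bound each sparse $\xi_i$ by $\omega$ and invoke $\omega \le t^{\delta'}$ w.h.p.\ --- can be made to work, but it does not give a bound in the advertised $(1+\lambda)\mu(\cdot)$ form: you get $\omega\,t^{-1/2}\alpha_{\max}$, which you must then compare to the $3t^{-1/4}\alpha_{\max}$ budget using the extra conditioning on $\omega$, and the conditioning event has a different failure exponent ($e^{-\Omega(t^{\delta'}/\log t)}$) than the one in the lemma statement. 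So the displayed chain $\le (1+\lambda)\mu(\alpha + t^{-1/4}\alpha_{\max} + t^{-1/2}\alpha_{\max})$ does not follow from either of your two routes as stated.

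The paper's actual device is a padding trick: it takes the densest bin $\textsc{bin}_\ell$, pulls out the set $L$ of its $t^{1/4}/2$ largest indices, and applies the single Chernoff upper bound to $L \cup \mathrm{sparse}$ rather than to $\mathrm{sparse}$ alone. The padded set always has size between $t^{1/4}/2$ and $t^{1/4}/2 + t^{1/2}$, so Chernoff gives the failure probability $e^{-\Omega(\lambda^2 t^{1/4}/\omega)}$ unconditionally (no case split on the sparse count, no auxiliary conditioning on $\omega$), and the size upper bound yields $(1+\lambda)\mu\cdot 2t^{1/2}\alpha_{\max}$ directly, which divides by $t$ into the $2t^{-1/2}\alpha_{\max}$ slack you wanted. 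The dense-bin analysis is then carried out only over $\mathrm{dense}\setminus L$, but that only tightens the bound. This is the missing idea in your proposal; with it in place the rest of your calculation goes through as intended.
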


\begin{proof}
As in the proof of Lemma~\ref{lem:noisysmoothing_lowerbound} let $A_1,\ldots,A_{t}$ denote the sets in $\mathcal{H}(A)$, and for each set $A_i$ we will again use $\alpha_i$ to denote the marginal value $f_{S}(A_i)$ and $\xi_{i}$ to denote the noise multiplier $\xi_{S\cup \{A_{i}\}}$.
\begin{align}
\widetilde{F}(S\cup A) = \frac{1}{t}\sum_{i=1}^{t}\xi_i f(S) + \frac{1}{t}\sum_{i=1}^{t}\xi_i \alpha_i.\label{eq:fb}
\end{align}
As before, we will focus on showing concentration on the second term.  Define ${\alpha_{\max} = \max_{i}\alpha_i}$ and $\alpha_{\min}=\min_{i}\alpha_i$.  To apply concentration bounds on the second term, we again partition the values of $\{\alpha_i\}_{i \in [t]}$ to bins of width $\alpha_{\max}\cdot t^{-1/4}$ and call a bin dense if it has at least $t^{1/4}$ values and sparse otherwise.  Using this terminology:
$$ \sum_{i=1}^{t}\xi_i\alpha_i = \sum_{i \in \textrm{\tiny{dense}}}\xi_i\alpha_i + \sum_{i \in \textrm{\tiny{sparse}}}\xi_i\alpha_i.$$
Let $\textsc{bin}_{\ell}$ be the dense bin whose elements have the largest values.  Consider the $t^{1/4}/2$ largest values in $\textsc{bin}_{\ell}$ and call the set of indices associated with these values $L$.  We have:
\begin{align*}
\sum_{i=1}^{t}\xi_i\alpha_i
= \sum_{i \in \textrm{\tiny{dense}} \setminus L}\xi_i\alpha_i + \sum_{i \in L \cup {\textrm{\tiny{sparse}}}}\xi_i\alpha_i
\end{align*}
The set $L \cup {\textrm{\tiny{sparse}}}$ is of size at least $t^{1/4}/2$ and at most $t^{1/4}/2 + t^{1/2}$.  This is because $L$ is of size exactly $t^{1/4}/2$ and there are at most $t^{1/2}$ values in bins that are sparse since there are $t^{1/4}$ bins and a bin that has at least $t^{1/4}$ is already considered dense.  Thus, when $\omega$ is an upper bound on the value of the noise multiplier, from Chernoff, for any $\lambda < 1$ with probability $1-e^{-\Omega(\lambda^2 t^{1/4}/ \omega)}$:
\begin{align*}
\sum_{i \in L \cup {\textrm{\tiny{sparse}}}}\xi_i\alpha_i
& \leq \sum_{i \in L \cup {\textrm{\tiny{sparse}}}}\xi_{i}\alpha_{\max} \\
& < (1+\lambda)\mu \cdot |L \cup{\textrm{\tiny{sparse}}}|\cdot \alpha_{\max} \\
& \leq (1+\lambda) \mu \cdot \left (\frac{t^{1/4}}{2} + t^{1/2}\right)\alpha_{\max}  \\
& < (1+\lambda)\mu\cdot 2t^{1/2} \alpha_{\max}
\end{align*}
We will now use the same logic as in the proof of Lemma~\ref{lem:noisysmoothing_lowerbound} to apply concentration bounds on the values in the dense bins.  For a dense bin $\textsc{bin}_q$, let $\alpha_{\max(q)}$ and $\alpha_{\min(q)}$ be the maximal and minimal values in the bin, respectively.  As in Lemma~\ref{lem:noisysmoothing_lowerbound}, for any $\lambda <1$ with probability $1-e^{-\Omega(\lambda^2 t^{1/4}/ \omega)}$:
\begin{align*}
\sum_{i \in \textsc{bin}_q}\xi_{i}\alpha_{i}
& \leq \sum_{i \in \textsc{bin}_q}\xi_{i} \cdot \alpha_{\max(q)} \\
& \leq (1+\lambda)\mu \cdot  \alpha_{\max(q)}\cdot |\textsc{bin}_q|\\
& \leq (1+\lambda)\mu \cdot \left ( \alpha_{\min(q)} + \alpha_{\max}\cdot {t^{-1/4}} \right )  \cdot |\textsc{bin}_q|\\
& < (1+\lambda) \mu \cdot \left ( |\textsc{bin}_q|\cdot  {\alpha_{\min(q)}} +  |\textsc{bin}_q|\alpha_{\max}  \cdot t^{-1/4} \right )
\end{align*}
Applying a union bound we get with probability $1-e^{-\Omega(\lambda^2 t^{1/4}/ \omega)}$:
\begin{align*}
\sum_{i \in \textrm{\tiny{dense}}\setminus L} \xi_{i}\alpha_i
& < \sum_{q} (1+\lambda) \mu \cdot \left ( |\textsc{bin}_q|\cdot  {\alpha_{\min(q)}} +  |\textsc{bin}_q|\alpha_{\max}  \cdot t^{-1/4} \right )\\
& < (1+\lambda) \mu \cdot t \left (\alpha + t^{-1/4} \alpha_{\max} \right)
\end{align*}
Together we have:
\begin{align*}
\frac{1}{t}\sum_{i=1}^{t}\xi_i \alpha_i
& =  \frac{1}{t} \left ( \sum_{i \in \textrm{\tiny{dense}}\setminus L} \xi_{i}\alpha_i + \sum_{i \in L \cup {\textrm{\tiny{sparse}}}}\xi_i\alpha_i \right )\\
& < (1+\lambda) \mu \cdot \left (\alpha + t^{-1/4} \alpha_{\max} + 2t^{-1/2}\alpha_{\max} \right)\\
& < (1+\lambda)  \mu \cdot \left (\alpha + 3t^{-1/4} \alpha_{\max} \right)\\
& < (1+\lambda)  \mu \cdot \left (F_{S}(A) + 3t^{-1/4} \alpha_{\max} \right)\\
\end{align*}
By a union bound we get that with probability $1-e^{-\Omega(\lambda^2 t^{1/4}/\omega)}$:
\begin{equation*}
\widetilde{F}(S\cup A)
= \frac{1}{t}\sum_{i=1}^{t}\xi_i f(S) + \frac{1}{t}\sum_{i=1}^{t}\xi_i \alpha_i \\
\leq (1+\lambda) \mu \cdot \left (f(S)+ F_{S}(A) + 3t^{-1/4} \alpha_{\max} \right ).\qedhere
\end{equation*}
\end{proof}

\newpage \section{Optimization for Large $k$}
\subsection*{The Smooth Greedy Algorithm}\label{sec:appendix_largek}

\subsubsection*{Smoothing guarantees}

\begin{lemma*}[\ref{lem:boundona}]
For any fixed $\epsilon >0$, consider an $\epsilon$-relevant iteration of \textsc{Smooth-Greedy} where $S$ is the set of elements selected in previous iterations and ${a \in \arg\max_{b \notin H}\widetilde{F}(S\cup b)}$.  Then for $\delta=\epsilon^2/4k$ and sufficiently large $n$ we have that w.p. $\geq 1-1/n^4$:
$$F_S(a) \geq (1-\delta) \max_{b \notin H}F_{S}(b).$$
\end{lemma*}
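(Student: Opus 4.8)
The plan is to derive the statement from the two smoothing bounds of Appendix~\ref{sec:appendix_smoothing}: Lemma~\ref{lem:noisysmoothing_lowerbound} lower bounds $\widetilde F(S\cup b)$ by (essentially) $\mu\,F(S\cup b)$ whenever the variation $v_S(\mathcal H(b))$ of the smoothing neighborhood is small, while Lemma~\ref{lem:boundonb} upper bounds $\widetilde F(S\cup b)$ by $\mu\,F(S\cup b)$ plus a $t^{-1/4}\alpha_{\max}$ additive term that is tiny since $t=2^\ell$. Write $c^\star\in\argmax_{b\notin H}f_{H\cup S}(b)$ and $b^\star\in\argmax_{b\notin H}F_S(b)$, and recall $a\in\argmax_{b\notin H}\widetilde F(S\cup b)$, so in particular $\widetilde F(S\cup a)\ge\widetilde F(S\cup c^\star)$. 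The combinatorial Claims~\ref{lem:black_and_white},~\ref{clm:boundedvariance},~\ref{claim:realslimshady} are what is needed to control the variation and to relate $F_S$, $f_S$, and $f_{H\cup S}$ in the steps below.

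First I would bound the variation of the relevant neighborhoods using $\epsilon$-relevance. Condition (i) gives $f_{H\cup S}(c^\star)\ge \epsilon\,\texttt{OPT}_H/k$ and condition (ii) gives $\texttt{OPT}_H\ge \texttt{OPT}/e$, so by submodularity $f_S(c^\star)\ge f_{S\cup H}(c^\star)=f_{H\cup S}(c^\star)\ge \epsilon\,\texttt{OPT}/(ek)$. For every $H'\subseteq H$, monotonicity gives $f_S(H'\cup c^\star)\ge f_S(c^\star)$ and submodularity gives $f_S(H'\cup c^\star)=f_S(c^\star)+f_{S\cup c^\star}(H')\le f_S(c^\star)+\ell\,\texttt{OPT}$ (each single element has marginal at most $\texttt{OPT}$); hence $v_S(\mathcal H(c^\star))\le 1+\ell\,\texttt{OPT}/f_S(c^\star)=O(k\ell/\epsilon)=\mathrm{polylog}(n)$, which is Claim~\ref{clm:boundedvariance}.

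Next I would combine the bounds with a slack $\lambda=\Theta(\epsilon^3/k^2)$. Apply Lemma~\ref{lem:noisysmoothing_lowerbound} to $c^\star$ and Lemma~\ref{lem:boundonb} to $a$: since $t=2^\ell$ is at least $(\log n)^{33}$ (and $n^{25}$ when $k\ge 2400\log n$), both the variation error $2v_S(\mathcal H(c^\star))\,t^{-1/4}$ and the additive error $3t^{-1/4}\alpha_{\max}$ (with $\alpha_{\max}\le(\ell+1)\texttt{OPT}$) are below $\tfrac{\delta}{3}\cdot\epsilon\,\texttt{OPT}/(ek)$, and the failure probability $e^{-\Omega(\lambda^2 t^{1/4}/\omega)}$ is below $1/n^4$ once we condition on $\omega\le t^{1/10}$ (Lemma~\ref{lem:upperBoundNoise}), because $\lambda^2 t^{1/4}/\omega$ is a fixed positive power of $\log n$ times an inverse power of $\log\log n$, which dwarfs $\ln n$. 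Chaining $\widetilde F(S\cup a)\ge\widetilde F(S\cup c^\star)$ through the two inequalities and using $f(S)\le\texttt{OPT}$ together with $F_S(c^\star)\ge\epsilon\,\texttt{OPT}/(ek)$ yields $F_S(a)\ge(1-\delta)F_S(c^\star)$.

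Finally I would pass from $F_S(c^\star)$ to $\max_{b\notin H}F_S(b)=F_S(b^\star)$ by a two-regime split (the combinatorial content here is packaged in Claims~\ref{lem:black_and_white} and~\ref{claim:realslimshady}). Write $F_S(b)=\bar H+G(b)$ with $\bar H=\tfrac1t\sum_{H'\subseteq H}f_S(H')$ independent of $b$ and $f_{H\cup S}(b)\le G(b)\le f_S(b)$. If $f_S(b^\star)\ge\tfrac{\delta}{2}\cdot\epsilon\,\texttt{OPT}/(ek)$, the variation of $\mathcal H(b^\star)$ is again $\mathrm{polylog}(n)$, so the previous step applies verbatim with $b^\star$ in place of $c^\star$. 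Otherwise $F_S(b^\star)\le\bar H+\tfrac{\delta}{2}\cdot\epsilon\,\texttt{OPT}/(ek)\le(1+\tfrac{\delta}{2})F_S(c^\star)$ because $F_S(c^\star)\ge\bar H+\epsilon\,\texttt{OPT}/(ek)$, whence $F_S(a)\ge(1-\delta)F_S(c^\star)\ge(1-2\delta)F_S(b^\star)$; running the whole argument with half the slack recovers the stated $\delta=\epsilon^2/4k$. The main obstacle is the tension around the choice of $\ell$: it must be large enough that the fixed power $t^{-1/4}=2^{-\ell/4}$ simultaneously kills the polylog variation, kills the additive $\alpha_{\max}$ term, and forces $e^{-\Omega(\lambda^2 t^{1/4}/\omega)}<1/n^4$, yet small enough ($\Theta(\log\log n)$, or $\Theta(\log n)$ only when $k$ is already that large) that $H$ can be absorbed into the solution essentially for free, so every estimate has to be tracked as an explicit power of $\log\log n$ against the exponent $33/4$.
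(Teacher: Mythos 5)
Your plan follows essentially the same route as the paper's proof: combine the lower smoothing bound (Lemma~\ref{lem:noisysmoothing_lowerbound}, conditioned on bounded variation) with the upper smoothing bound (Lemma~\ref{lem:boundonb}), use $\epsilon$-relevance to control the variation and to compare $F_S$ against $\texttt{OPT}$, and conclude via a union bound over the $n$ candidates. Two cosmetic deviations: you anchor the argument on $c^\star \in \argmax_{b\notin H}f_{H\cup S}(b)$ instead of on $a^\star \in \argmax_{b\notin H}F_S(b)$, which is what forces your closing two-regime split, and you re-derive a looser (but still sufficient) variation bound rather than invoking Claim~\ref{clm:boundedvariance} directly. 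In fact your second regime never arises: with $F_S(b)=\bar H+G(b)$ and $G(b)\le f_S(b)$ you always have $f_S(b^\star) \ge G(b^\star) \ge G(c^\star) \ge f_{H\cup S}(c^\star) \ge \epsilon\,\texttt{OPT}/(ek) > \tfrac{\delta}{2}\epsilon\,\texttt{OPT}/(ek)$, so regime~1 always holds; that chain of inequalities is exactly what Claims~\ref{lem:black_and_white} and~\ref{clm:boundedvariance} package. You should also make the union bound explicit: Lemma~\ref{lem:boundonb} must hold for all $n$ candidates simultaneously because $a$ is noise-dependent, which is why Claim~\ref{claim:realslimshady} drives the per-comparison failure probability to $O(1/n^5)$ before the union bound.

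The one real gap is the $\omega$ conditioning. Conditioning on $\omega \le t^{1/10}$ is the right move in Section~\ref{sec:smallk}, where $t = \Theta(n)$, but it is far too weak here. Your assertion that $\lambda^2 t^{1/4}/\omega$ is ``a fixed positive power of $\log n$ times an inverse power of $\log\log n$'' tacitly assumes $k = \Theta(\log\log n)$, whereas $k$ ranges up to $n$ in this lemma. Concretely, with $\lambda = \Theta(\epsilon^3/k^2)$ and $\omega \le t^{1/10}$ the exponent is at most $\lambda^2 t^{3/20}$: when $t=(\log n)^{33}$ and $k\approx 2400\log n$ this is $\Theta(\epsilon^6(\log n)^{0.95})$, which does not dwarf $\ln n$; when $t=n^{25}$ and $k$ is a polynomial in $n$ it can even tend to zero (e.g.\ $k=n^{0.95}$ gives $\Theta(\epsilon^6/n^{0.05})$). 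Claim~\ref{claim:realslimshady} conditions instead on $\omega \le 6\log n$ (which the subexponential tail yields with probability $\ge 1-1/n^5$); then $\lambda^2 t^{1/4}/\omega = \Omega(\epsilon^6(\log n)^{3.25})$ in the polylog regime and $\Omega(\epsilon^6 n^{2.25}/\log n)$ in the polynomial regime, both of which exceed $5\ln n$ for large $n$. Without replacing $t^{1/10}$ by $O(\log n)$, the claimed $1-1/n^4$ success probability does not follow in the upper part of the $k$ range.
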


To prove the above lemma we will need claims~\ref{lem:black_and_white},~\ref{clm:boundedvariance} and~\ref{claim:realslimshady}.  After proving~\ref{claim:realslimshady} the proof will follow by verifying that the number of sets in the smoothing set is sufficient to obtain the desired approximation $(1-\delta)$.

\begin{claim}\label{lem:black_and_white}
If $F_{S}(a) \geq F_{S}(b)$ then $f_{S}(a) \geq f_{S \cup H}(b)$.
\end{claim}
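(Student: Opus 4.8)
The plan is to prove the claim by a short double–counting argument based on the diminishing–returns form of submodularity, exploiting the fact that the two averages $F_S(a)$ and $F_S(b)$ are anchored at a common ``baseline'' quantity coming from the subsets of $H$ alone, which will cancel.

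First I would unfold the definition $F_S(a) = \frac{1}{t}\sum_{i=1}^{t} f_S(H^{(i)}\cup a)$ and split each summand using the chain rule for marginals, $f_S(H^{(i)}\cup a) = f_S(H^{(i)}) + f_{S\cup H^{(i)}}(a)$. Setting $\Phi := \frac{1}{t}\sum_{i=1}^{t} f_S(H^{(i)})$, this gives $F_S(a) = \Phi + \frac{1}{t}\sum_{i=1}^{t} f_{S\cup H^{(i)}}(a)$, and, since the family $H^{(1)},\dots,H^{(t)}$ of subsets of $H$ is the same for $b$, also $F_S(b) = \Phi + \frac{1}{t}\sum_{i=1}^{t} f_{S\cup H^{(i)}}(b)$.

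Next come the two submodularity estimates, applied in opposite directions — this asymmetry is the only point requiring care. For $a$: since $S \subseteq S\cup H^{(i)}$, diminishing returns yields $f_{S\cup H^{(i)}}(a) \le f_S(a)$ for every $i$, hence $F_S(a) \le \Phi + f_S(a)$. For $b$: since $H^{(i)}\subseteq H$ we have $S\cup H^{(i)} \subseteq S\cup H$, and as $b \notin S\cup H$ diminishing returns yields $f_{S\cup H^{(i)}}(b) \ge f_{S\cup H}(b)$ for every $i$, hence $F_S(b) \ge \Phi + f_{S\cup H}(b)$.

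Finally I would chain the inequalities: $\Phi + f_S(a) \ge F_S(a) \ge F_S(b) \ge \Phi + f_{S\cup H}(b)$, and cancel the common term $\Phi$ to conclude $f_S(a) \ge f_{S\cup H}(b)$. There is no substantive obstacle; the proof is essentially a one–line computation once one notices that both averages are built on the same baseline $\Phi$ and that the element we select needs an \emph{upper} bound on its smooth value while the competitor needs a \emph{lower} bound.
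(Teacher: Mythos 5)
Your proof is correct, and it is essentially the paper's argument reorganized: the paper proves the claim by contradiction, showing $f_S(a) < f_{S\cup H}(b)$ would force $f_S(H'\cup a) < f_S(H'\cup b)$ for every $H'\subseteq H$ (via the same two submodularity estimates, $f_{S\cup H'}(a)\le f_S(a)$ and $f_{S\cup H'}(b)\ge f_{S\cup H}(b)$, packaged through subadditivity and the identity $f_S(H'\cup b)=f_S(H')+f_{S\cup H'}(b)$), and then averages to contradict $F_S(a)\ge F_S(b)$. Your direct version—splitting both averages over the common baseline $\Phi=\frac{1}{t}\sum_i f_S(H^{(i)})$ and chaining—is a clean equivalent presentation of the same content.
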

\begin{proof}
Assume for purpose of contradiction that $f_{S}(a) < f_{S \cup H}(b)$.  Since $f$ is a submodular function, $f_{S}(T)= f(S \cup T) - f(S)$ is also submodular (hence also subadditive).  Therefore $\forall H'\subseteq H$:
\begin{align*}
f_{S}(H' \cup a) & \leq f_{S}(H') + f_{S}(a) \rmk{subadditivity of $f_S$}\\
& < f_{S}(H') + f_{S \cup H}(b) 	\rmk{by assumption}\\
& \leq f_{S}(H') + f_{S \cup H'}(b) \rmk{submodularity of $f_S$}\\
& = f_{S}(H' \cup b).
\end{align*}
Notice however, that this contradicts our assumption:
\begin{equation*}
F_{S}(a) = \frac{1}{t}\sum_{H' \subseteq H}f_{S}(H' \cup a) < \frac{1}{t}\sum_{H' \subseteq H}f_{S}(H' \cup b) = F_{S}(b).\qedhere
\end{equation*}
\end{proof}

The following claim bounds the variation (see Definition~\ref{def:variation}) of the smoothing neighborhood of the element we selected.  This is a necessary property for later applying the smoothing arguments.

\begin{claim}\label{clm:boundedvariance}
Let $\epsilon>0$.  For an $\epsilon$-relevant iteration of \textsc{Smooth-Greedy}, let $S$ be the set of elements selected in previous iterations.   If ${a^\star \in \arg\max_{a \notin H}F_{S}(a)}$ then ${v_S\left(\mathcal{H}(a^\star)\right) < 3k/\epsilon}$.
\end{claim}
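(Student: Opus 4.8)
The plan is to compute the numerator and denominator of $v_S(\mathcal{H}(a^\star))$ separately, using monotonicity to identify the extremal sets of the neighborhood and then bounding them against $\OPT$ and $\OPT_H$. Recall that $\mathcal{H}(a^\star)=\{H'\cup a^\star : H'\subseteq H\}$, so by monotonicity of $f_S$ the minimum of $f_S$ over this family is attained at the smallest set, $f_S(a^\star)$, and the maximum is attained at the largest set, $f_S(H\cup a^\star)$; hence $v_S(\mathcal{H}(a^\star))=f_S(H\cup a^\star)/f_S(a^\star)$, and it suffices to lower bound the denominator and upper bound the numerator.

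For the denominator, I would invoke Claim~\ref{lem:black_and_white}. Since $a^\star\in\arg\max_{a\notin H}F_S(a)$, we have $F_S(a^\star)\ge F_S(b)$ for every $b\notin H$, and therefore $f_S(a^\star)\ge f_{S\cup H}(b)$ for every $b\notin H$ by that claim; taking the maximum over $b\notin H$ gives $f_S(a^\star)\ge \max_{b\notin H}f_{S\cup H}(b)$. By the first defining condition of an $\epsilon$-relevant iteration this is at least $\epsilon\cdot\OPT_H/k$, which in particular is strictly positive (we may assume $\OPT_H>0$, else the second $\epsilon$-relevance condition forces $\OPT=0$ and the statement is vacuous), so the variation is well defined. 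For the numerator, note that $H\cup\{a^\star\}$ has size $\ell+1\le k$ (we are in the regime where $k$ is far larger than $\ell$), so it is a feasible set and $f(H\cup a^\star)\le\OPT$; combined with submodularity (diminishing returns, $\emptyset\subseteq S$) we get $f_S(H\cup a^\star)\le f(H\cup a^\star)\le\OPT$. By the second $\epsilon$-relevance condition $\OPT\le e\cdot\OPT_H$, hence $f_S(H\cup a^\star)\le e\cdot\OPT_H$.

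Putting the two bounds together yields
$$v_S(\mathcal{H}(a^\star))=\frac{f_S(H\cup a^\star)}{f_S(a^\star)}\le\frac{e\cdot\OPT_H}{\epsilon\cdot\OPT_H/k}=\frac{ek}{\epsilon}<\frac{3k}{\epsilon},$$
using $e<3$, which is exactly the claimed bound. The only delicate point — and the main thing to get right — is the lower bound on $f_S(a^\star)$: the neighborhood contains the singleton-like set $\{a^\star\}$ whose marginal could a priori be tiny (or even zero) relative to the rest of the neighborhood, so the whole argument hinges on coupling Claim~\ref{lem:black_and_white} with the $\epsilon$-relevance hypothesis to certify $f_S(a^\star)\ge\epsilon\cdot\OPT_H/k$. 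One should also keep track that $a^\star\notin S$ (so that $f_S(a^\star)$ genuinely measures a new element), which holds because \textsc{Smooth-Greedy} only ever adds elements outside $H$ that are not already selected.
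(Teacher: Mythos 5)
Your proof is correct and follows essentially the same route as the paper's: you identify the extremes of the smoothing neighborhood by monotonicity, lower bound the denominator via Claim~\ref{lem:black_and_white} together with the first $\epsilon$-relevance condition, and upper bound the numerator by $\OPT \le e\cdot\OPT_H$ via submodularity, size, and the second $\epsilon$-relevance condition. The only difference is that you make a couple of implicit steps explicit (the $f_S(X)\le f(X)$ step, well-definedness when $\OPT_H>0$), which is fine.
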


\begin{proof}
Let $b ^\star \in \argmax_{b \notin H} f_{H \cup S}(b)$.  By the maximality of $a^\star$ we have that $F_{S}(a^\star) \geq F_{S}(b^\star)$, and thus by Claim~\ref{lem:black_and_white} we get ${{f_{S}(a^\star) \geq f_{H\cup S}(b^\star)}}$.  Since the iteration is $\epsilon$-relevant we have that $f_{H\cup S}(b^\star) \geq \epsilon \cdot \texttt{OPT}_{H}/k$, and from monotonicity of $f$ we get:
$$\min_{H' \subseteq H}f_S(H' \cup a^\star) \geq f_{S}(a^\star)  \geq f_{H\cup S}(b^\star) \geq \frac{\epsilon \cdot  \texttt{OPT}_{H}}{k}$$
and since every set in $\mathcal{H}(a^\star)$ is of size at most $k$ we know that $\max_{H' \subseteq H}f_S(H' \cup a^\star) \leq \texttt{OPT}$.  Together with the fact that $\texttt{OPT} \leq e\cdot\texttt{OPT}_{H}$ we get:
\begin{equation*}
v_S\left(\mathcal{H}(a^{\star})\right) = \frac{\max_{H' \subseteq H}f_S(H' \cup a^\star)}{\min_{H' \subseteq H}f_S(H' \cup a^\star)} \leq \frac{\texttt{OPT}}{\texttt{OPT}_{H}}\cdot \frac{k}{\epsilon} < \frac{3k}{\epsilon}.\qedhere
\end{equation*}
\end{proof}

We can now show that in $\epsilon$-relevant iterations the value of the element which maximizes the noisy smooth value is comparable to that of the (non-noisy) smooth value, with high probability.  Recall that we use $t$ to denote the size of the smoothing neighborhood.

\begin{claim}\label{claim:realslimshady}
Given $\epsilon>0$ assume $t\geq \left (\frac{110k\cdot \log n}{\epsilon\delta}\right)^8$.  For an $\epsilon$-relevant iteration of \textsc{Smooth-Greedy}, let $S$ be the elements selected in previous iterations and ${a \in \arg\max_{b \notin H}\widetilde{F}(S\cup b)}$.  Then, w.p. $\geq 1-1/n^4$:
$$F_S(a) \geq (1-\delta) \max_{b \notin H}F_{S}(b).$$
\end{claim}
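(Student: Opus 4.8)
The plan is to compare the element $a = \argmax_{b\notin H}\widetilde{F}(S\cup b)$ picked by the algorithm against $a^\star \in \argmax_{b\notin H}F_S(b)$, using the optimality relation $\widetilde{F}(S\cup a)\ge \widetilde{F}(S\cup a^\star)$ and sandwiching both of these noisy smooth values between their non-noisy counterparts via the two smoothing lemmas. Fix $\lambda := \epsilon\delta/(8ek)$. First I would apply Lemma~\ref{lem:upperBoundNoise} with $m=t$ and an arbitrarily small exponent (say $1/100$), union-bounded over the at most $n$ candidate elements, to get that with probability $\ge 1-1/n^5$ every smoothing neighborhood $\mathcal{H}(b)$ relevant to this iteration has noise-multiplier maximum $\omega_b < t^{1/100}$.

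On that event, apply the lower bound of Lemma~\ref{lem:noisysmoothing_lowerbound} to $A=a^\star$: by Claim~\ref{clm:boundedvariance} its variation obeys $v_S(\mathcal{H}(a^\star)) < 3k/\epsilon$, so the lemma's slack is at most $\epsilon_0 := \min\{1,\,6k/(\epsilon t^{1/4})\}$ and we get $\widetilde{F}(S\cup a^\star) > (1-\lambda)\mu\bigl(f(S)+(1-\epsilon_0)F_S(a^\star)\bigr)$. Apply the upper bound of Lemma~\ref{lem:boundonb} simultaneously to every $A=b$ (hence to $A=a$ in particular); here $\alpha_{\max}=\max_{H'\subseteq H}f_S(H'\cup b)\le \OPT$ since every set in the neighborhood has at most $k$ elements and $f$ is monotone, so $\widetilde{F}(S\cup b) < (1+\lambda)\mu\bigl(f(S)+F_S(b)+3t^{-1/4}\OPT\bigr)$ for all $b\notin H$. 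Chaining $\widetilde{F}(S\cup a^\star)\le\widetilde{F}(S\cup a)$ through these two bounds, cancelling the common positive factor $\mu$, and rearranging gives
\[ F_S(a) \;\ge\; (1-2\lambda)(1-\epsilon_0)\,F_S(a^\star) \;-\; 2\lambda f(S) \;-\; 3t^{-1/4}\OPT. \]

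The last step is to absorb the two additive error terms into the $(1-\delta)$ factor using $\epsilon$-relevance. By Claim~\ref{lem:black_and_white} applied to $a^\star$ and $b^\star\in\argmax_{b\notin H}f_{H\cup S}(b)$, together with monotonicity, $F_S(a^\star)\ge f_S(a^\star)\ge f_{H\cup S}(b^\star)\ge \epsilon\,\OPT_H/k$; with condition (ii) this gives $\OPT\le e\,\OPT_H\le (ek/\epsilon)F_S(a^\star)$, and likewise $f(S)\le\OPT\le (ek/\epsilon)F_S(a^\star)$ since $|S|<k$. Substituting, it suffices that $2\lambda+\epsilon_0+(2\lambda+3t^{-1/4})\,ek/\epsilon \le \delta$; the choice of $\lambda$ makes the $\lambda$-terms at most $\delta/4+o(\delta)$, and the hypothesis $t\ge (110k\log n/(\epsilon\delta))^8$ forces $t^{1/4}\ge (110k\log n/(\epsilon\delta))^2$, so $\epsilon_0$ and $3t^{-1/4}ek/\epsilon$ are both $O(kt^{-1/4}/\epsilon)=o(\delta)$ — hence the inequality holds with room to spare. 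The same lower bound on $t$ also controls the failure probability: it yields $\lambda^2 t^{1/4}\ge \Omega((\log n)^2)$, and since in the operative regimes $t$ is either $n^{\Theta(1)}$ or $(\log n)^{\Theta(1)}$ one may take the exponent in Lemma~\ref{lem:upperBoundNoise} small enough that $\omega_b$ is negligible beside $t^{1/4}$, so the per-element failure probability $e^{-\Omega(\lambda^2 t^{1/4}/\omega_b)}$ of Lemmas~\ref{lem:noisysmoothing_lowerbound} and~\ref{lem:boundonb}, union-bounded over the $\le n$ candidates (together with the event for $\omega$), stays below $1/n^4$. I expect this bookkeeping to be the only real obstacle: one must tune $\lambda$ small enough for the $(1-\delta)$ guarantee yet keep $\lambda^2 t^{1/4}/\omega$ large enough — which is exactly what the generous exponent $8$ buys — so that every concentration event and the union bound over all candidate elements succeed together with probability $1-1/n^4$.
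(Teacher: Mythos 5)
Your overall strategy is the paper's: bound the variation of $\mathcal{H}(a^\star)$ via Claim~\ref{clm:boundedvariance}, sandwich the noisy smooth values of $a^\star$ (from below, Lemma~\ref{lem:noisysmoothing_lowerbound}) and of the competitors (from above, Lemma~\ref{lem:boundonb}), chain through optimality of the selected element, absorb the additive $f(S)$ and $\alpha_{\max}$ terms using $\epsilon$-relevance, and union-bound over the $\le n$ candidates. The only packaging difference is that you pass through $\widetilde F(S\cup a^\star)\le\widetilde F(S\cup a)$ directly, whereas the paper shows that every $b$ with $F_S(b)<(1-\delta)F_S(a^\star)$ loses to $a^\star$; these are the same computation.

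There is, however, a genuine gap in the probability accounting. You assert that with probability $\ge 1-1/n^5$ every smoothing neighborhood has maximum noise multiplier $\omega_b < t^{1/100}$, by applying Lemma~\ref{lem:upperBoundNoise} with $m=t$. This fails in the regime $k<2400\log n$, where $\ell=33\log\log n$ and $t=2^\ell=(\log n)^{33}$: then $t^{1/100}=(\log n)^{1/3}$ and the lemma's failure probability $e^{-\Omega(t^{1/100}/\ln t)}=e^{-\Omega((\log n)^{1/3}/\log\log n)}$ is \emph{not} $o(1/n^4)$ — it is barely sub-constant. Intuitively, you cannot hope for $\omega = \text{polyloglog}(n)$ with inverse-polynomial confidence when the tail is merely subexponential. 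The paper instead bounds $\omega$ in absolute terms — $\omega\le C\log n$ for a constant $C$ depending only on the distribution — which \emph{is} achievable with probability $1-1/n^5$ by a direct union bound over polynomially many queries using the subexponential tail. Your choice of $\lambda$ and the hypothesis $t\ge(110k\log n/(\epsilon\delta))^8$ then give $\lambda^2 t^{1/4}\ge\Omega((\log n)^2)$, so $\lambda^2 t^{1/4}/\omega\ge\Omega(\log n)$ as needed, and the rest of your argument goes through unchanged. So the fix is small: replace the relative threshold $t^{1/100}$ with the absolute threshold $C\log n$; everything else in your sketch is sound.
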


\begin{proof}
Let $a^\star$ be the element which maximizes smooth marginal contribution: 
$$a^\star \in \argmax_{b \notin H}F_{S}(a)$$  
We will show that for any element $b$ whose smooth marginal contribution is a factor of $(1-\delta)$ smaller than the smooth marginal contribution of $a^\star$, then w.h.p. its \emph{noisy} value of is smaller than that of $a^\star$.  That is, for any $b \notin H$ for which ${F_{S}(b) < (1-\delta) F_{S}(a^{\star})}$ we get that $\widetilde{F}(S\cup b) < \widetilde{F}(S\cup a^{\star})$ with probability at least $\Omega(1-1/n^{5})$.  The result will then follow by taking a union bound over all comparisons.  We will show that $a^\star$ likely beats $b$ by lower bounding $\widetilde{F}(S \cup a^\star)$ and upper bounding $\widetilde{F}(S \cup b)$ using the smoothing arguments from the previous section.  We use $\omega$ to denote the value of the largest noise multiplier realized throughout the iterations of the algorithm.  We later argue that we can upper bound $\omega\leq 6\log n$ as the noise distribution has an exponentially decaying tail.
\begin{itemize}
\item \textbf{Lower bound on $\widetilde{F}(S \cup a^\star)$:}
First, from Claim~\ref{clm:boundedvariance} we know that $v_S(\mathcal{H}(a^\star)) \leq 3k/\epsilon$.  Together with Lemma~\ref{lem:noisysmoothing_lowerbound} we get that $\forall \lambda<1$ with probability $1-e^{-\Omega(\lambda^2t^{1/4}/\omega)}$:
\begin{align}
\widetilde{F}(S \cup a^\star) > (1-\lambda)\mu \cdot \Big ( f(S) +  \left(1-\frac{6k}{\epsilon}\cdot t^{-1/4}\right) \cdot F_{S}(a^{\star}) \Big )\label{eq:upperboundonastar}
\end{align}
\item \textbf{Upper bound on $\widetilde{F}(S\cup b)$:} Letting $\beta_{\max} = \max_{X \in \mathcal{H}(b)}f(X)$, from Lemma~\ref{lem:boundonb}, we get that $\forall \lambda<1$ with probability $1- e^{-\Omega(\lambda^2 t^{1/4}/\omega)}$:
\begin{align}
\widetilde{F}(S \cup b) < (1+\lambda)\mu \cdot \left ( f(S) + F_{S}(b) + 3t^{-1/4}\beta_{\max}\right )\label{eq:fas}
\end{align}
We'll express this inequality in terms of $f(S)$ and $F_{S}(a^\star)$ as well.  First, since all sets in $\mathcal{H}(b)$ are of size at most $k$ we also know that $\beta_{\max} \leq \texttt{OPT}$.  Thus:
\begin{align}\label{eq:boundonbeta}
3t^{-1/4}\beta_{\max} \leq 3t^{-1/4}\cdot \texttt{OPT}
\end{align}
We will now bound $\texttt{OPT}$ in terms of $F_{S}(a^\star)$.  Since every set in $\mathcal{H}(a^{\star})$ includes $a^\star$, from monotonicity we get that $F_{S}(a^\star) \geq f_{S}(a^\star)$.  Let ${b^\star \in \argmax_{b\notin H} f_{H\cup S}(b)}$.  Due to the maximality of $a^\star$ we have that $F_S(a^\star) \geq F_{S}(b^\star)$ and by Claim~\ref{lem:black_and_white} we know that $f_{S}(a^\star) \geq f_{S \cup H}(b^\star)$. Since the iteration is $\epsilon$-relevant we get:
\begin{equation}\label{eq:boundonF}
F_{S}(a^\star) \geq f_{S}(a^\star) \geq f_{S \cup H}(b^\star) \geq \frac{f_{S \cup H}(O_H)}{k} \geq \frac{\epsilon\cdot \texttt{OPT}_{H}}{k} > \frac{\epsilon\cdot \texttt{OPT}}{3k}
\end{equation}
Putting~(\ref{eq:boundonF}) together with~(\ref{eq:boundonbeta}) we get:
$$3t^{-1/4}\beta_{\max} \leq \frac{k }{\epsilon} \cdot 9t^{-1/4}\cdot F_{S}(a^\star) $$
Plugging into~(\ref{eq:fas}) and using the assumption that $F_S(b)<(1-\delta)F_S(a^\star)$ we get:
\begin{align}
\widetilde{F}(S\cup b)
& < (1+\lambda)\mu \cdot \left ( f(S)+ F_S(b) + \left (  9t^{-1/4}\cdot \frac{k}{\epsilon}\right )F_{S}(a^\star)\right )\\
& < (1+\lambda)\mu \cdot \left ( f(S)+\left (  9t^{-1/4}\cdot \frac{k}{\epsilon} + (1-\delta)\right )F_{S}(a^\star)\right )\label{eq:fab}
\end{align}
\end{itemize}
Putting~(\ref{eq:upperboundonastar}) together with~(\ref{eq:fab}) we get that $\forall \lambda<1$ with probability at least $1-2e^{-\Omega(\lambda^2t^{1/4}/\omega)}$:
\begin{align*}
\widetilde{F}(S \cup a^\star) -  \widetilde{F}(S \cup b)
& > \mu \cdot \left (F_{S}(a^\star) \left [(1-\lambda) \left ( 1 - \frac{6k}{\epsilon} t^{-1/4} \right)  - (1+\lambda)\left(\frac{9k}{\epsilon} t^{-1/4} + (1-\delta) \right )\right ] - 2\lambda f(S) \right )\\
& \geq \mu \cdot \left (F_{S}(a^\star) \left [(1-\lambda) \left ( 1 - \frac{6k}{\epsilon} t^{-1/4} \right)  - (1+\lambda)\left(\frac{9k}{\epsilon} t^{-1/4} + (1-\delta)\right )\right ] - 2\lambda \texttt{OPT} \right )\\
& > \mu \cdot \left (F_{S}(a^\star) \left [(1-\lambda) \left ( 1 - \frac{6k}{\epsilon} t^{-1/4} \right)  - (1+\lambda)\left(\frac{9k}{\epsilon} t^{-1/4} + (1-\delta)\right )\right ] - 2\lambda \frac{3k}{\epsilon}F_{S}(a^\star) \right )\\
& = \mu \cdot \left (F_{S}(a^\star) \left [(1-\lambda) \left ( 1 - \frac{6k}{\epsilon} t^{-1/4} \right)  - (1+\lambda)\left(\frac{9k}{\epsilon} t^{-1/4} + (1-\delta)\right )  - 2\lambda \frac{3k}{\epsilon}\right ]  \right )\\
& = \mu \cdot \left (F_{S}(a^\star) \left [\delta - \frac{15k}{\epsilon}\cdot t^{-1/4} - \lambda \left ( (2-\delta) + \frac{3k}{\epsilon}\cdot t^{-1/4} + \frac{6k}{\epsilon}    \right ) \right ]  \right )\\
%
%
%
& > \mu \cdot \left ( F_{S}(a^\star) \left [\delta - \frac{k}{\epsilon}\left(15t^{-1/4} +10\lambda\right) \right ] \right ) \\
\end{align*}
The second inequality above is an application of~(\ref{eq:boundonF}) and the fact that $f(S)\leq \texttt{OPT}$ since $|S| \leq k$.  The third is from~(\ref{eq:boundonF}).  

For the result to hold we need the above difference to be strictly positive, and hold with probability $\Omega(1-1/n^5)$.  Thus, sufficient conditions would be:
\begin{enumerate}
\item $\frac{k}{\epsilon}\cdot 15 t^{-1/4} \leq \frac{\delta}{2}$, and
\item $10\lambda \leq \frac{\delta}{2}$, and
\item $1-2\exp(\frac{-\lambda^2t^{1/4}}{\omega}) \in \Omega(1-1/n^{5}).$
\end{enumerate} 

The first condition holds when $t\geq (30k/\epsilon\delta)^4$; the second condition holds when $\lambda =  \epsilon\delta/20k$.  For $\omega = 6\log n$ and $\lambda =  \epsilon\delta/20k$, the third condition is satisfied when:
$$ \frac{(\epsilon\delta)^2 t^{1/4}}{20^2k^2 \omega} = \frac{(\epsilon\delta)^2 t^{1/4}}{20^2k^2 6\log n} \geq 5\log n$$
rearranging:
$$ t\geq 12000^4\left ( \frac{k \log n}{ \epsilon\delta }  \right ) ^8 $$
Thus, since $t$ in the lemma statement respects:
$$t\geq \left ( \frac{110 k\log n}{ \epsilon\delta}\right )^8 > 12000^4\left ( \frac{k \log n}{ \epsilon\delta }  \right ) ^8$$
we have that the first, second, and third conditions are met conditioned on $\omega\leq 6\log n$.  That is, we have that the difference is positive with probability $1-2\exp(\frac{-\lambda^2t^{1/4}}{\omega}) \geq 1-2/n^{5}$, conditioned on $\omega \leq 6\log n$.  From lemma~\ref{lem:upperBoundNoise} we know that the probability of $\omega > 6\log n$ is smaller than $1/n^5$ for sufficiently large $n$.  Therefore, by taking a union bound on the probability of the event in which the difference is negative and the probability that  $\omega > 6\log n$, both occurring with probability smaller than $2/n^{5}$ we have that the probability of the difference being positive is at least $1-4/n^5 \in \Omega(1-1/n^5)$, as required.
\end{proof}

\begin{proof}[\textbf{Proof of Lemma~\ref{lem:boundona}}]
By Claim~\ref{claim:realslimshady}, when $\delta = \epsilon^2/4k$ for any fixed $\epsilon>0$ we need to verify that for sufficiently large $n$:
$$t > \left ( \frac{110 k\log n}{ \epsilon\delta}\right )^8 =   \frac{(440k^{2}\log n)^8}{\epsilon^3}$$  
In the case where $k\geq \log n$ we use $\ell = 25\log n$ and thus $t = 2^{\ell} = n^{25}$ and the above inequality holds.  When $k < \log n$ we use $\ell = 33\log\log n$ and thus $t=\log^{33}n$ and the above inequality holds in this case as well.  We therefore have the result with probability at least $1-1/n^4$.\footnote{Note that we could have used smaller values of $\ell$ to achieve the desired bound.  The reason we exaggerate the values of $\ell$ is to be consistent with the analysis of \textsc{Slick-Greedy} which necessitates these slightly larger values of $\ell$.}
\end{proof}

\subsection*{Approximation guarantee}

\begin{claim*}[\ref{claim:newboundona}]
For any $\epsilon>0$, let $\delta\leq \epsilon^2/4k$.  Suppose that the iteration is $\epsilon$-relevant and let ${b^\star \in \argmax_{b\notin H } f_{H \cup S}(b)}$.  If $F_{S}(a) \geq (1-\delta)F_S(b^\star)$, then:
$$f_S(a) \geq (1-\epsilon)f_{H\cup S}(b^\star).$$
\end{claim*}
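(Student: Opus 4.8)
The plan is to argue by contradiction, mimicking the proof of Claim~\ref{lem:black_and_white} but carrying the slack $\epsilon$ through the averaging. Assume $f_S(a) < (1-\epsilon) f_{H\cup S}(b^\star)$. First I would record, for every $H' \subseteq H$, the marginal identity $f_S(H') + f_{H'\cup S}(b^\star) = f_S(H'\cup b^\star)$ and the submodularity fact $f_{H'\cup S}(b^\star) \ge f_{H\cup S}(b^\star)$ (valid since $S\cup H' \subseteq S\cup H$). Then, using subadditivity of the (normalized, submodular) marginal function $f_S$,
\[
f_S(H'\cup a) \le f_S(H') + f_S(a) < f_S(H') + (1-\epsilon)f_{H\cup S}(b^\star) \le f_S(H'\cup b^\star) - \epsilon\, f_{H\cup S}(b^\star).
\]
Averaging this over all $t$ subsets $H' \subseteq H$ yields $F_S(a) < F_S(b^\star) - \epsilon\, f_{H\cup S}(b^\star)$.

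Next I would plug in the hypothesis $F_S(a) \ge (1-\delta) F_S(b^\star)$ to get $\epsilon\, f_{H\cup S}(b^\star) < \delta\, F_S(b^\star)$, and then bound the right-hand side. Every set appearing in the average defining $F_S(b^\star)$ has size at most $k$ (during the \textsc{Smooth-Greedy} loop $|S| \le k-\ell-1$, and $|H'|\le \ell$, so $|S\cup H'\cup b^\star|\le k$), so monotonicity gives $F_S(b^\star) \le \texttt{OPT}$. On the other hand, $\epsilon$-relevance gives $f_{H\cup S}(b^\star) \ge \epsilon\,\texttt{OPT}_H/k \ge \epsilon\,\texttt{OPT}/(ek) > \epsilon\,\texttt{OPT}/(3k)$, i.e. $\texttt{OPT} < (3k/\epsilon) f_{H\cup S}(b^\star)$. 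Combining these with $\delta \le \epsilon^2/4k$:
\[
\delta\, F_S(b^\star) \le \delta\,\texttt{OPT} < \frac{\epsilon^2}{4k}\cdot\frac{3k}{\epsilon}\, f_{H\cup S}(b^\star) = \frac{3\epsilon}{4}\, f_{H\cup S}(b^\star) < \epsilon\, f_{H\cup S}(b^\star),
\]
which contradicts $\epsilon\, f_{H\cup S}(b^\star) < \delta\, F_S(b^\star)$. Hence the assumption fails and $f_S(a) \ge (1-\epsilon) f_{H\cup S}(b^\star)$.

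The step I expect to be the only real subtlety is the one where the slack must survive the averaging: unlike in Claim~\ref{lem:black_and_white}, where a uniform per-subset domination $f_S(H'\cup a) < f_S(H'\cup b)$ directly contradicts $F_S(a) \ge F_S(b)$, here the per-subset inequality carries an additive deficit $\epsilon\, f_{H\cup S}(b^\star)$, so the contradiction becomes quantitative and must be closed using both parts of the $\epsilon$-relevance hypothesis (to relate $f_{H\cup S}(b^\star)$, $\texttt{OPT}_H$ and $\texttt{OPT}$) together with the specific bound $\delta \le \epsilon^2/4k$. Everything else — subadditivity of $f_S$, the marginal identity, and the cardinality bound giving $F_S(b^\star)\le\texttt{OPT}$ — is routine bookkeeping.
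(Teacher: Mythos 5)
Your proof is correct and uses the same ingredients as the paper's: subadditivity of $f_S$, submodularity giving $f_{S\cup H'}(b^\star)\ge f_{S\cup H}(b^\star)$, and both parts of $\epsilon$-relevance together with $\delta\le\epsilon^2/4k$ to control the slack. The paper organizes it as a direct computation (upper-bounding $F_S(a)$, lower-bounding $(1-\delta)F_S(b^\star)$, subtracting, and then bounding $\frac{1}{t}\sum_{H'\subseteq H} f_S(H')\le f_S(H)\le\texttt{OPT}$) whereas you run the contradiction scheme of Claim~\ref{lem:black_and_white} with the additive deficit carried through and bound $F_S(b^\star)\le\texttt{OPT}$ instead, but this is a cosmetic reframing of the same argument.
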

\begin{proof}
%
First, we upper bound $F_S(a)$:
\begin{align*}
F_{S}(a)
& = \frac{1}{t} \sum_{H'\subseteq H} f_{S}(H'\cup a)    \rmk{by definition of $F_S$}\\
& = \frac{1}{t} \sum_{H'\subseteq H}\left ( f_{S}(H') + f_{S\cup H'}(a)    \right)\\
& \leq \frac{1}{t} \sum_{H'\subseteq H}\left ( f_{S}(H') + f_{S}(a) \right)  \rmk{by submodularity of $f$}  \\
& = f_{S}(a) + \frac{1}{t} \sum_{H'\subseteq H} f_{S}(H') \rmk{$t = 2^{|H|}$}\\
\end{align*}
Next, we lower bound $(1-\delta)F_{S}(b^\star)$:
\begin{align*}
(1-\delta)F_{S}(b^\star)
& = (1-\delta)\frac{1}{t} \sum_{H'\subseteq H} f_{S}(H'\cup b^\star)    \rmk{by definition of $F_S$}\\
& = (1-\delta)\frac{1}{t} \sum_{H'\subseteq H}\left ( f_{S}(H') + f_{S\cup H'}(b^\star)    \right)\\
& \geq (1-\delta) \frac{1}{t} \sum_{H'\subseteq H}\left ( f_{S}(H') + f_{S \cup H}(b^\star) \right)  \rmk{by submodularity of $f$}  \\
& = (1-\delta)f_{H \cup S}(b^\star) - \delta \frac{1}{t} \sum_{H'\subseteq H} f_{S}(H') + \frac{1}{t} \sum_{H'\subseteq H} f_{S}(H') \rmk{$t = 2^{|H|}$}
\end{align*}
Since $F_{S}(a) \geq (1-\delta)F_{S}(b^\star)$ this implies that:
\begin{align*}
f_S(a) 
& \geq (1-\delta)f_{H\cup S}(b^\star) - \delta \frac{1}{t}\sum_{H'\subseteq H}f_S(H')\\
& \geq (1-\delta)f_{H\cup S}(b^\star) - \delta \frac{1}{t}\sum_{H'\subseteq H}f_S(H)\rmk{monotonicity of $f$}\\
& \geq (1-\delta)f_{H\cup S}(b^\star) - \delta f_S(H) \rmk{$t=|H'|$}\\
& \geq (1-\delta)f_{H\cup S}(b^\star) - \delta \texttt{OPT} \rmk{$|H|\leq k$}\\
& \geq (1-\delta)f_{H\cup S}(b^\star) - e\delta \texttt{OPT}_H \rmk{$\texttt{OPT}_H \geq \texttt{OPT}/e$}\\
& \geq (1-\delta)f_{H\cup S}(b^\star) - e\delta \cdot \frac{k}{\epsilon}\cdot f_{H\cup S}(b^\star) \rmk{$\epsilon$-relevant iteration}\\
& = \left (1-\delta \left (1+ \frac{e\cdot k}{\epsilon}\right) \right )f_{H\cup S}(b^\star) \\
& \geq \left (1-\delta \left (\frac{4k}{\epsilon}\right) \right )f_{H\cup S}(b^\star) \\
& = (1-\epsilon)f_{H\cup S}(b^\star). \rmk{$\delta \leq \epsilon^2/4k$}\qedhere
\end{align*}
\end{proof}

\begin{claim*}[\ref{lem:bound}]
For any fixed $\epsilon>0$, consider an $\epsilon$-relevant iteration of \textsc{Smooth-Greedy} with $S$ as the elements selected in previous iterations.  Let ${a \in \arg\max_{b \notin S \cup H}\widetilde{F}(S\cup b)}$.  Then, w.p. $\geq 1-1/n^4$:
$$f_S(a)\geq \Big (1-\epsilon\Big)\left[\frac{1}{k'}\Big (\texttt{OPT}_H - f(S) \Big)\right].$$
\end{claim*}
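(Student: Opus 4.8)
The plan is to chain three facts: Lemma~\ref{lem:boundona}, which says the element $a$ chosen by the algorithm nearly maximizes the (non-noisy) smooth marginal contribution $F_S$; Claim~\ref{claim:newboundona}, which converts ``nearly maximizes $F_S$'' into a bound against the true greedy candidate $b^\star$; and a standard greedy-type inequality relating $\texttt{OPT}_H$ to $f(S)$ and to the best available marginal contribution.

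First I would apply Lemma~\ref{lem:boundona} with $\delta = \epsilon^2/4k$: since the iteration is $\epsilon$-relevant, with probability at least $1-1/n^4$ the element $a \in \argmax_{b\notin S\cup H}\widetilde{F}(S\cup b)$ satisfies $F_S(a) \geq (1-\delta)\max_{b\notin H}F_S(b)$. Let $b^\star \in \argmax_{b\notin H} f_{H\cup S}(b)$; since $b^\star \notin H$ this gives $F_S(a) \geq (1-\delta)F_S(b^\star)$, and then Claim~\ref{claim:newboundona} (applicable since $\delta \leq \epsilon^2/4k$ and the iteration is $\epsilon$-relevant) yields $f_S(a) \geq (1-\epsilon)f_{H\cup S}(b^\star)$. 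It remains only to prove the deterministic bound $f_{H\cup S}(b^\star) \geq \tfrac{1}{k'}\bigl(\texttt{OPT}_H - f(S)\bigr)$. Let $O_H$ be an optimal $k'$-element set for $f_H$; we may assume $O_H \cap H = \emptyset$, since any element of $O_H$ inside $H$ contributes nothing to $f_H$ and may be swapped for an element of $N\setminus H$ without decreasing the value, by monotonicity. Then monotonicity of $f_H$ and subadditivity of marginals give $\texttt{OPT}_H = f_H(O_H) \leq f_H(S\cup O_H) = f_H(S) + f_{H\cup S}(O_H) \leq f_H(S) + \sum_{o\in O_H} f_{H\cup S}(o) \leq f_H(S) + k'\,f_{H\cup S}(b^\star)$, using $|O_H| = k'$ and $O_H\subseteq N\setminus H$. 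Finally, submodularity (diminishing returns) gives $f_H(S) = f(S\cup H) - f(H) \leq f(S) - f(\emptyset) \leq f(S)$, so $\texttt{OPT}_H \leq f(S) + k'\,f_{H\cup S}(b^\star)$, which rearranges to the desired inequality; combining with the previous step gives $f_S(a) \geq (1-\epsilon)\tfrac{1}{k'}(\texttt{OPT}_H - f(S))$.

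The only randomized ingredient is Lemma~\ref{lem:boundona}, so the whole statement holds with probability at least $1-1/n^4$. I do not expect a real obstacle here: the argument is essentially a composition of already-established lemmas with a textbook greedy inequality. The one point that needs care is the bookkeeping between $f_H(S)$ and $f(S)$: the greedy bound naturally produces the shifted value $f_H(S)$, and one must invoke submodularity to replace it by the larger quantity $f(S)$ so that the statement is in exactly the form required by the inductive argument of Lemma~\ref{lem:opt_h}; one should also be explicit that $a\notin H$ (so that $b^\star$ is a legitimate comparison point) and justify the reduction $O_H\cap H = \emptyset$.
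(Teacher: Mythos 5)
Your proof is correct and follows essentially the same approach as the paper's: apply Lemma~\ref{lem:boundona} and then Claim~\ref{claim:newboundona} to get $f_S(a)\geq(1-\epsilon)f_{H\cup S}(b^\star)$, then use subadditivity of marginals and the bound $f_H(S)\leq f(S)$ to relate $f_{H\cup S}(b^\star)$ to $\frac{1}{k'}(\texttt{OPT}_H-f(S))$. The paper phrases the last step as $f_{H\cup S}(O)\geq f_H(O)-f(S)$ via monotonicity plus subadditivity rather than via $f_H(S)\leq f(S)$, and it does not bother assuming $O_H\cap H=\emptyset$ (since elements of $O_H\cap H$ have zero marginal contribution to $f_{H\cup S}$, the comparison with $b^\star$ still goes through), but these are cosmetic differences.
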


\begin{proof}\label{lem:bound_proof}
Let $O \in \textrm{argmax}_{T:|T| \leq k'}f_{H}(T)$, $o^{\star} \in \textrm{argmax}_{o \in O} f_{H\cup S}(o)$ and $b^\star \in \argmax_{b \notin H }f_{H\cup S}(b)$.
From Lemma~\ref{lem:boundona} we know that with probability $1-1/n^4$ we have $F_{S}(a)\geq (1-\delta)F_{S}(b^\star)$ for $\delta=\epsilon^2/4k$, and together with Claim~\ref{claim:newboundona} we get:
$$f_{S}(a) \geq (1-\epsilon)f_{H\cup S}(b^\star)\geq (1-\epsilon) f_{H \cup S}(o^{\star})$$
From subadditivity $f_{H \cup S}(o^{\star}) \geq f_{H \cup S}(O)/k'$ and thus:
\begin{equation*}
f_{S}(a)   \geq (1-\epsilon) f_{H \cup S}(o^{\star}) \geq \left (\frac{1-\epsilon}{ k' } \right) f_{H \cup S}(O)
\geq  \left (\frac{1-\epsilon }{ k' }\right)  \Big ( f_{H}(O) - f (S) \Big ).\qedhere
\end{equation*}
\end{proof}

\begin{lemma*}[\ref{lem:opt_h}]
Let $S$ be the set returned by \textsc{Smooth-Greedy} and $H$ its smoothing set.  Then, for any fixed $\epsilon>0$ when $k\geq 3\ell/\epsilon$ with probability of at least $1-1/n^3$ we have that:
$$f(S \cup H) \geq \left ( 1-1/e -\epsilon/3 \right) \texttt{OPT}_{H}.$$
\end{lemma*}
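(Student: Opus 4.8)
The plan is to distinguish two cases according to whether the smoothing set $H$ contributes a large fraction of $\texttt{OPT}$. First I would handle the easy case: if $\texttt{OPT}_H < \texttt{OPT}/e$, then by definition $f(H) = f(\emptyset\cup H) \geq f(H)$ and more to the point $f(S\cup H) \geq f(H) = \texttt{OPT} - \texttt{OPT}_H > \texttt{OPT}(1-1/e) \geq (1-1/e)\cdot e\cdot \texttt{OPT}_H \geq (1-1/e-\epsilon/3)\texttt{OPT}_H$ — so $H$ alone (hence any superset) already beats the target, using monotonicity and that $\texttt{OPT}_H = \texttt{OPT} - f(H)$ by the definition $f_H(T) = f(T\cup H)-f(H)$ so that $\texttt{OPT}_H = \max_{|T|\le k'} f(T\cup H) - f(H) \le \texttt{OPT} - f(H)$. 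In this branch no randomness is needed; the claimed probability bound is vacuous.

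The main case is $\texttt{OPT}_H \geq \texttt{OPT}/e$. Here I would run the standard greedy induction, but executed with respect to $f_H$ rather than $f$. Let $S_0 = \emptyset \subseteq S_1 \subseteq \cdots \subseteq S_{k'}$ be the sets built by \textsc{Smooth-Greedy}, where $a_{j+1}$ is the element added in iteration $j+1$ and $S_{j+1} = S_j \cup \{a_{j+1}\}$. The key tool is Claim~\ref{lem:bound}: in any $\epsilon$-relevant iteration, with probability $\geq 1-1/n^4$, the added element satisfies $f_{S_j}(a_{j+1}) \geq (1-\epsilon)\frac{1}{k'}(\texttt{OPT}_H - f(S_j))$. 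I would first argue that we may assume all iterations are $\epsilon$-relevant for the purpose of the bound: condition (ii) of $\epsilon$-relevance ($\texttt{OPT}_H \geq \texttt{OPT}/e$) holds throughout in this case; condition (i) fails only when $\max_{b\notin H} f_{H\cup S_j}(b) < \epsilon\,\texttt{OPT}_H/k$, and then by subadditivity the entire remaining optimal increment over $f_H$ is at most $k' \cdot \epsilon\,\texttt{OPT}_H/k \le \epsilon\,\texttt{OPT}_H$, so stopping (or continuing to add low-value elements) costs only an additive $\epsilon\,\texttt{OPT}_H$ — which I fold into the final $\epsilon/3$ slack. Then, writing $\Delta_j = \texttt{OPT}_H - f(S_j \cup H) + f(H) = \texttt{OPT}_H - f_H(S_j)$ (the residual gap measured on $f_H$, noting $f(S_j\cup H) = f_H(S_j) + f(H)$ and that $f_{S_j}(a_{j+1}) \ge f_{S_j\cup H}(a_{j+1})$ is the wrong direction, so more carefully I track $g_j := f_H(S_j)$ and use $g_{j+1} - g_j = f_{S_j\cup H}(a_{j+1})$, which by submodularity and monotonicity I lower bound using $f_{S_j}(a_{j+1})$ only through the $\epsilon$-relevant machinery already packaged in Claim~\ref{lem:bound}, whose statement is already phrased in terms of $\texttt{OPT}_H - f(S)$). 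Applying the recursion $\texttt{OPT}_H - f_H(S_{j+1}) \le (1 - (1-\epsilon)/k')(\texttt{OPT}_H - f_H(S_j))$ for $k'$ steps gives $\texttt{OPT}_H - f_H(S_{k'}) \le (1-(1-\epsilon)/k')^{k'}\texttt{OPT}_H \le e^{-(1-\epsilon)}\texttt{OPT}_H \le (1/e + \epsilon)\texttt{OPT}_H$ for small $\epsilon$, so $f(S\cup H) = f_H(S) + f(H) \ge f_H(S) \ge (1 - 1/e - \epsilon)\texttt{OPT}_H$. Rescaling $\epsilon$ (replace $\epsilon$ by $\epsilon/9$ at the start, and recall $\delta = \epsilon^2/4k$ with the hypothesis $k\ge 3\ell/\epsilon$ ensuring the induction's per-step loss stays controlled) absorbs the two additive $\epsilon$-terms into a single $\epsilon/3$.

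For the probability bound: there are at most $k' \le k \le n$ iterations, each invoking Claim~\ref{lem:bound} which fails with probability at most $1/n^4$; a union bound gives total failure probability at most $k/n^4 \le 1/n^3$, as claimed. The hypothesis $k \ge 3\ell/\epsilon$ is what guarantees $\ell/k \le \epsilon/3$, so that the $\ell = |H|$ elements "wasted" on the smoothing set and the discrepancy between building $k$ versus $k' = k-\ell$ elements only costs an additional $O(\epsilon)$ factor that I again absorb into the slack.

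The main obstacle I anticipate is bookkeeping the interplay between three quantities — the true marginal $f_{S_j}(a_{j+1})$ that the algorithm's guarantee (Claim~\ref{lem:bound}) controls, the marginal $f_{S_j\cup H}(a_{j+1})$ that actually drives the growth of $f(S_j\cup H)$, and the optimum $\texttt{OPT}_H$ of $f_H$ — and making sure the greedy recursion is genuinely run against $\texttt{OPT}_H$ and not $\texttt{OPT}$. Claim~\ref{lem:bound} is already stated with $\texttt{OPT}_H - f(S)$ on the right-hand side, which is exactly the residual needed for the induction, so the chain $\texttt{OPT}_H - f(S_{j+1}) \le (1-(1-\epsilon)/k')(\texttt{OPT}_H - f(S_j))$ should go through directly once I note $f(S_{j+1}) = f(S_j) + f_{S_j}(a_{j+1})$; the only subtlety is that Claim~\ref{lem:bound} bounds $f_{S_j}(a_{j+1})$, not $f(S_{j+1}) - f(S_j)$ measured after adjoining $H$, but since the final quantity of interest $f(S\cup H) \ge f(S)$ by monotonicity, bounding $f(S)$ against $\texttt{OPT}_H$ suffices and no $H$-adjustment is needed in the recursion itself — the $H$ only re-enters at the very end through monotonicity. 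Handling the non-$\epsilon$-relevant iterations cleanly (they can occur in the middle, not just at the end, if condition (i) dips below threshold and then — by submodularity it cannot come back, so actually once (i) fails it fails forever, which simplifies matters) is the last piece to get right.
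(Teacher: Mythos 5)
Your Case~2 (the greedy induction when $\texttt{OPT}_H \geq \texttt{OPT}/e$) matches the paper: you correctly notice that Claim~\ref{lem:bound} is phrased with $\texttt{OPT}_H - f(S)$ on the right, run the recursion $\texttt{OPT}_H - f(S_{j+1}) \le \bigl(1-(1-\epsilon)/k'\bigr)\bigl(\texttt{OPT}_H - f(S_j)\bigr)$ on $f(S_j)$ rather than $f_H(S_j)$, re-introduce $H$ only at the end by monotonicity, and your observation that once condition~(i) of $\epsilon$-relevance fails it fails forever and leaves at most an additive $\epsilon\,\texttt{OPT}_H$ on the table is also sound and essentially the paper's handling of non-relevant iterations. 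The union bound giving $1-1/n^3$ is the same.

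However, your Case~1 ($\texttt{OPT}_H < \texttt{OPT}/e$) has a reversed inequality that breaks the argument. You assert $f(H) = \texttt{OPT} - \texttt{OPT}_H$, but from the definitions the only inequality that follows is $\texttt{OPT}_H = f(T^*\cup H) - f(H) \le \texttt{OPT} - f(H)$, i.e.\ $f(H) \le \texttt{OPT} - \texttt{OPT}_H$ — an \emph{upper} bound on $f(H)$, not the lower bound your chain $f(S\cup H) \ge f(H) \ge \cdots$ requires. (Your own last sentence of that paragraph correctly derives the $\le$, so the preceding $=$ cannot be right.) The correct direction needs the complementary bound $f(H) + \texttt{OPT}_H \ge (1-\epsilon/3)\texttt{OPT}$: let $O'$ be the optimal set of $k'$ elements for $f$ itself; then by monotonicity $f(H) + \texttt{OPT}_H \ge f(O'\cup H) \ge f(O')$, and $f(O') \ge (1-\ell/k)\texttt{OPT} \ge (1-\epsilon/3)\texttt{OPT}$ by submodularity and the hypothesis $k \ge 3\ell/\epsilon$ — a hypothesis your Case~1 argument never invokes, which is a tell-tale sign the derivation is off. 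Combining with $\texttt{OPT}_H < \texttt{OPT}/e$ then gives $f(H) > (1-1/e-\epsilon/3)\texttt{OPT} \ge (1-1/e-\epsilon/3)\texttt{OPT}_H$, which is what the lemma wants.
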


\begin{proof}\label{lem:opt_h_proof}
In case  $\texttt{OPT}_{H}<\texttt{OPT}/e$ then $H$ alone provides a $1-1/e - \epsilon/3$ approximation.  To see this, let $O \in \argmax_{T:|T|\leq k}f(T)$ and $O' \in \argmax_{T:|T|\leq k'}f(T)$, and $O_H \in \argmax_{T:|T|\leq k'}f_H(T)$.  We get:
\begin{align*}
(1-\epsilon/3)f(O)
& \leq f(O') \rmk{$k'=k-\ell$ and $k \geq 3\ell/\epsilon$} \\
& \leq f(H \cup O') \rmk{monotonicity }      \\
& = f(H) + f_{H}(O') \\
& \leq f(H)+f_{H}(O_{H}) \rmk{optimality of $O_{H}$}  \\
& < f(H) + f(O)/e \rmk{$e\texttt{OPT}_{H} < \texttt{OPT}$}
\end{align*}
Thus:
$$f(H) \geq \left (1-\frac{1}{e} - \frac{\epsilon}{3} \right)\texttt{OPT} \geq \left (1-\frac{1}{e} - \frac{\epsilon}{3} \right )\texttt{OPT}_H$$

In case $\texttt{OPT}_{H} \geq \texttt{OPT}/e$ we set $\gamma=\min\{1/e,\epsilon/6\}$.  We will use the following notation.  At every iteration $i \in [k']$ of the \emph{while} loop in the algorithm, we will use $a_i$ to denote the element that was added in that step, and $S_i : =\{a_1,\ldots,a_i\}$. 

First, notice that if there exists an iteration $i$ that is not $\gamma$-relevant, our bound trivially holds:
%
%
\begin{align*}
f_{H \cup S_i}(O_H)  \leq k'\cdot \max_{o \in O_H} f_{H\cup S_i}(o) \leq k' \cdot \max_{b \notin S_i \cup H} f_{H\cup S_i}(b) \leq k'\cdot \frac{\gamma \texttt{OPT}_H}{k} <  \gamma \texttt{OPT}_H
\end{align*}
Since $f_{H\cup S_i}(O_H) = f(H\cup S_i \cup O_H) - f(H \cup S_i)$, the above inequality implies that $f(H \cup S_i) > f(H\cup S_i \cup O_H) - \gamma\texttt{OPT}_{H}$.  But this implies:
\begin{align*}
f(S \cup H) 
& \geq f(S_i \cup H) \\
& > f(O_H \cup S_i \cup H)  - \gamma \texttt{OPT}_H\\
& \geq f(O_H)  - \gamma \texttt{OPT}_H\\
& \geq f_H(O_H)  - \gamma \texttt{OPT}_H\\
& = (1- \gamma) \texttt{OPT}_H\\
& \geq (1- 1/e) \texttt{OPT}_H
\end{align*}

It remains to prove the approximation guarantee in the case that every iteration is $\gamma$-relevant.  To do so, we can apply a standard inductive argument on Claim~\ref{lem:bound} to show that $S$ alone provides a $1-1/e - \epsilon/3$ approximation.  Claim~\ref{lem:bound} states that for $\gamma$-relevant iterations, at every stage $i \in [k']$:
\begin{align}
f(S_{i+1}) - f(S_{i})\geq (1-\gamma)\left [\frac{1}{k'} \left ( f_H(O_{H}) - f(S_{i})\right )\right].\label{eq:induct}
\end{align}
We will show that at every stage $i \in [k']$:
$$f(S_{i}) \geq (1-\gamma)\left(1- \left(1- \frac{1}{k'}\right)^i\right)f_H(O_H).$$
The proof is by induction on $i$.  For $i=1$ we have that $S_i = \{a_1\}$ and invoking Claim~\ref{lem:bound} with ${S=\emptyset}$ we get that $f(a_i) \geq(1-\gamma) \frac{1}{k'}f_H(O_H)$.  Therefore:
$$f(S_1) = f(a_1) \geq (1-\gamma)\frac{1}{k'}f_H(O_H) = (1-\gamma)\left(1 - \left (1-\frac{1}{k'}\right)\right)f_H(O_H).$$
We can now assume the claim holds for $i= l < k'$ and show that it holds for $i= l+1$:
\begin{align*}
f(S_{l+1})
&\geq (1-\gamma) \left ( \frac{1}{k'}  \left ( f_H(O_H) - f(S_{l})\right ) \right )+ f(S_{l}) \rmk{By (\ref{eq:induct})}\\
&> (1-\gamma) \left ( \left ( \frac{1}{k'} f_H(O_H) \right )  +  \left(1 - \frac{1}{k'}  \right) f(S_{l}) \right ) \rmk{$\delta>0$}\\
&\geq (1-\gamma) \left ( \frac{1}{k'} f_H(O_H) \right )  + (1-\gamma) \left(1 - \frac{1}{k'}  \right) \left(1-\left(1-\frac{1}{k'}\right)^{l}\right)f_{H}(O_H) \rmk{inductive hypothesis} \\
& = (1-\gamma)\left(1 - \left (1-\frac{1}{k'} \right)^{l+1}\right)  f_H(O_H)
\end{align*}
Note that for any $l > 1$ we have that $(1-{1}/{l})^{l} \leq {1}/{e}$, and thus:
\begin{align*}
f(S)
& = f(S_{k'}) \rmk{}\\
& \geq (1-1/e - \gamma)f_{H}(O_H) \rmk{by the induction}\\
& >  (1- 1/e - \epsilon/3)\texttt{OPT}_{H}.\rmk{$\gamma=\epsilon/6$} \qedhere
\end{align*}
\end{proof}

\begin{cor}\label{cor:constant}
Let $S$ be the set returned by \textsc{Smooth-Greedy} and $H$ be its smoothing set.  For any fixed $\epsilon>0$ and $k>{3\ell}/{\epsilon}$, we have that with probability at least $1-1/n^3$:
$$f(S\cup H) >\left (\frac{e-1}{2e-1-\epsilon}-2\epsilon \right) \texttt{OPT}.$$
\end{cor}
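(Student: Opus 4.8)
The plan is to combine the guarantee of Lemma \ref{lem:opt_h}, which bounds $f(S\cup H)$ against $\texttt{OPT}_H$, with a case analysis on the relative size of $\texttt{OPT}_H$ compared to $\texttt{OPT}$. Recall that Lemma \ref{lem:opt_h} gives, with probability at least $1-1/n^3$ when $k\geq 3\ell/\epsilon$, that $f(S\cup H)\geq(1-1/e-\epsilon/3)\texttt{OPT}_H$. The key observation is that $\texttt{OPT}_H$ can always be lower bounded in terms of $\texttt{OPT}$ by considering the optimal solution on $f$: writing $O\in\argmax_{T:|T|\leq k}f(T)$, we have $\texttt{OPT}=f(O)\leq f(H\cup O)=f(H)+f_H(O)$, and since $|O|\leq k$ we also have (by subadditivity, dropping at most $\ell$ elements as in the proof of Lemma \ref{lem:opt_h}) that $f_H(O)$ is within a $(1-\epsilon/3)$ factor of $\texttt{OPT}_H$. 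Thus either $f(H)$ alone is large, or $\texttt{OPT}_H$ is large — and in the first case $f(S\cup H)\geq f(H)$ by monotonicity, while in the second we invoke Lemma \ref{lem:opt_h}.

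Concretely, I would set a threshold: suppose first that $f(H)\geq\alpha\,\texttt{OPT}$ for the value $\alpha=\frac{e-1}{2e-1-\epsilon}$ we are aiming for. Then by monotonicity $f(S\cup H)\geq f(H)\geq\alpha\,\texttt{OPT}$ and we are done (even without the $-2\epsilon$ slack). Otherwise $f(H)<\alpha\,\texttt{OPT}$, and then from $\texttt{OPT}\leq f(H)+f_H(O)\leq f(H)+\frac{1}{1-\epsilon/3}\texttt{OPT}_H$ we deduce $\texttt{OPT}_H\geq(1-\epsilon/3)(1-\alpha)\texttt{OPT}$. Plugging into Lemma \ref{lem:opt_h}:
\[
f(S\cup H)\geq\left(1-\frac1e-\frac\epsilon3\right)\texttt{OPT}_H\geq\left(1-\frac1e-\frac\epsilon3\right)\left(1-\frac\epsilon3\right)(1-\alpha)\texttt{OPT}.
\]
It then remains to check that $(1-1/e)(1-\alpha)=\alpha$ for the chosen $\alpha=\frac{e-1}{2e-1}$ (the $\epsilon=0$ value), which is a one-line algebraic identity: $(1-1/e)(1-\frac{e-1}{2e-1})=\frac{e-1}{e}\cdot\frac{e}{2e-1}=\frac{e-1}{2e-1}$. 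The $\epsilon$-dependence of $\alpha$ and the two $(1-\epsilon/3)$ factors only perturb this by $O(\epsilon)$, which the $-2\epsilon$ slack in the statement absorbs. The probability bound $1-1/n^3$ is inherited directly from Lemma \ref{lem:opt_h}, since that is the only probabilistic ingredient.

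The main obstacle is purely bookkeeping: one must verify that the first-order error terms coming from (i) replacing $f_H(O)$ by $\texttt{OPT}_H$ at a cost of a $(1-\epsilon/3)$ factor, (ii) the $-\epsilon/3$ inside Lemma \ref{lem:opt_h}, and (iii) the $\epsilon$-dependence of the optimal balancing point $\alpha$, all telescope into something bounded by $2\epsilon\cdot\texttt{OPT}$. This is routine — each term is manifestly $O(\epsilon)\cdot\texttt{OPT}$ and the constants are comfortably small — so I would carry it out by crude bounding rather than optimizing constants, e.g. using $\frac{1}{1-\epsilon/3}\leq 1+\epsilon/2$ for small $\epsilon$ and $(1-\epsilon/3)^2\geq 1-\epsilon$. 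The only genuinely substantive point, that the unconstrained optimum of $\min\{\alpha,(1-1/e)(1-\alpha)\}$ over $\alpha$ is achieved at $\alpha=\frac{e-1}{2e-1}$ with value $\frac{e-1}{2e-1}\approx 0.387$, is the elementary observation that a minimum of an increasing and a decreasing linear function is maximized where they cross.
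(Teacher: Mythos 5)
Your proposal is correct and matches the paper's proof in all essentials: both rely on Lemma~\ref{lem:opt_h} together with the decomposition $\texttt{OPT}\lesssim f(H)+\texttt{OPT}_H$ (via $f(O')\le f(O_H\cup H)=f(H)+f_H(O_H)$) and monotonicity $f(S\cup H)\ge f(H)$. The only cosmetic difference is that the paper sums the two inequalities $f(H)\le f(S\cup H)$ and $\texttt{OPT}_H\le \frac{e}{e-1-\epsilon}f(S\cup H)$ into one chain, whereas you perform the equivalent case split on $f(H)\gtrless\alpha\,\texttt{OPT}$ and balance the threshold; both produce the same constant $\frac{e-1}{2e-1}$ and the same $O(\epsilon)$ bookkeeping.
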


\begin{proof}
Let $O_{H} \in \textrm{argmax}_{T:|T| \leq k'}f_{H}(T)$.  From Lemma~\ref{lem:opt_h}, with probability at least $1-1/n^{3}$:
\begin{align}
f(S \cup H) > \left (1-\frac{1}{e}-\frac{\epsilon}{3}\right)f(O_H)\label{eq:constantfactor}
\end{align}
Let $O' \in \argmax_{T:|T|\leq k-|H|}f(T)$.  From submodularity and the fact that $k\geq 3\ell/\epsilon>|H|/\epsilon$ we get that $(1-\epsilon)\texttt{OPT} \leq f(O')$.  Putting everything together:
\begin{align*}
(1-\epsilon)\texttt{OPT}
& \leq f(O') \rmk{submodularity of $f$}\\
& \leq f(O_{H} \cup H) \rmk{monotonicity of $f$}\\
& \leq f(O_{H}) + f(H) \rmk{subadditivity of $f$}\\
& \leq \left (\frac{e}{e-1-\epsilon} \right) f(S \cup H) + f(H) \rmk{by~(\ref{eq:constantfactor})}\\
& \leq \left (\frac{2e-1-\epsilon}{e-1-\epsilon} \right)f(S \cup H).\rmk{monotonicity of $f$}
\end{align*}
Therefore $f(S\cup H) > \left (\frac{e-1}{2e-1-\epsilon}-2\epsilon \right)  \texttt{OPT}$ as required.
\end{proof}


\subsection*{Slick Greedy: Optimal Approximation for Sufficiently Large $k$}

As described in the main body of the paper, in \textsc{Slick-Greedy} we apply a slightly more general version of $\textsc{Smooth-Greedy}$ where in each iteration $i \in [1/\delta]$ the algorithm $\textsc{Smooth-Greedy}$ is initialized with the set of elements $R_i = \cup_{j \neq i} H_{j}$ and uses the smoothing set $H_i$.  $\textsc{Smooth-Greedy}$ from the previous section is a special case in which $R_i = \emptyset$.  As one might imagine, the guarantees from the previous section carry over, using the appropriate definitions.  

\subsubsection*{Generalizing guarantees of smooth greedy}  
To make the transition to the case in which $\textsc{Smooth-Greedy}$ is being initialized with $R_i$ of size $\ell/\delta - \ell$ and selects $k'' = k - |R_i| - |H_i| = k-\ell/\delta$ elements, we extend our definitions as follows.  For a given set $R_i$ used for initialization, it'll be convenient to consider the function $g_i(T) = f_{R_i}(T)$, and its smooth value $G_i(a) = \frac{1}{t}\sum_{i=1}^t g\left (S \cup \left (H_i\cup  a \right )  \right)$.  When the smoothing set is clear from context we will generally use $R,H,g,G$ instead of $R_i,H_i,g_i,G_i$.  The value of the optimal solution here is $\texttt{OPT[G]} = \max_{T:|T|\leq k''}g(T)$ where $k'' = k-|R|-|H|$.  We can then also define $\texttt{OPT[G]}_{H} =\max_{T:|T|\leq k''} g_{H}(T)$.  For a given set $S$ of elements selected by $\textsc{Smooth-Greedy}$ and $b^\star \in \argmax_{b \notin H }g_{S\cup H}(b)$, an \emph{$\epsilon$-relevant iteration} is one in which $g_{H\cup S}(b^\star) \geq \epsilon \texttt{OPT[G]}_{H}/k$ and $\texttt{OPT[G]}_{H}\geq \texttt{OPT[G]}/e$.

\paragraph{Lower bounding the marginal contribution in each iteration.}  We first show that when \textsc{Smooth-Greedy} is initialized with a set $R$ and run with smoothing set $H$, then in every {$\gamma$-relevant} iteration the element $a$ selected respects $g_{S}(a)\geq (1-\gamma)g_{H\cup H}(b^\star)$.  This claim is necessary for proving Lemma~\ref{lem:opt_hr} which shows the approximation guarantee of \textsc{Smooth-Greedy} in each iteration of \textsc{Slick-Greedy} as well as for proving guarantees of \textsc{Smooth-Compare} in Lemma~\ref{lem:boosting}.  

\begin{claim}\label{claim:itslate}
For a given set $R\subset N$, let $g(T) = f_{R}(T)$.  For any fixed $\gamma>0$ consider a $\gamma$-relevant iteration of \textsc{Smooth-Greedy} initialized with some set $R$ using smoothing set $H$ s.t. $H\cap R=\emptyset$, and let $S$ be the set of elements selected before the iteration.  If $a \in \argmax_{b\notin H} \widetilde{F}(R\cup S \cup b)$ then w.p.$\geq 1-1/n^4$:
$$ g_{S}(a) \geq (1-\gamma)g_{H\cup S}(b^\star)$$
\end{claim}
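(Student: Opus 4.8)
The plan is to reduce the claim to the already-established analysis of \textsc{Smooth-Greedy} (Lemma~\ref{lem:boundona} together with Claim~\ref{claim:newboundona}) by observing that initialization with $R$ amounts to running the same algorithm on the monotone submodular function $g = f_R$. Concretely, every oracle call made in this iteration has the form $\widetilde f\bigl(R\cup S\cup(H'\cup b)\bigr)$ with $H'\subseteq H$ and $b\notin R\cup S\cup H$, and
\[
f\bigl(R\cup S\cup(H'\cup b)\bigr) \;=\; f(R\cup S) + g_S(H'\cup b),
\]
so that $\widetilde F(R\cup S\cup b) = \frac1t\sum_{H'\subseteq H}\xi_{R\cup S\cup H'\cup b}\bigl(f(R\cup S)+g_S(H'\cup b)\bigr)$. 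This is exactly the object $\widetilde F$ analysed in Section~\ref{sec:largek}, with the constant base value $f(R\cup S)$ playing the role of $f(S)$ and the marginals $g_S(H'\cup b)$ playing the role of $f_S(H'\cup b)$. Since $H\cap R=\emptyset$ and $b\notin R\cup S\cup H$, the map $H'\mapsto R\cup S\cup H'\cup b$ is injective, so the multipliers $\xi_{R\cup S\cup H'\cup b}$ are i.i.d.\ draws from $\mathcal D$, exactly as in the un-initialized analysis; hence Lemmas~\ref{lem:upperBoundNoise},~\ref{lem:noisysmoothing_lowerbound}, and~\ref{lem:boundonb} apply verbatim to $\widetilde F(R\cup S\cup\,\cdot\,)$.

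With this dictionary in hand I would replay the proof of Lemma~\ref{lem:boundona} with $f\mapsto g$, $\texttt{OPT}\mapsto\texttt{OPT[G]}$, $\texttt{OPT}_H\mapsto\texttt{OPT[G]}_H$, and $\epsilon$-relevance read in the initialized sense. First, the analogue of Claim~\ref{lem:black_and_white} holds because it uses only submodularity and monotonicity of $g$: $G_S(a^\star)\ge G_S(b^\star)$ forces $g_S(a^\star)\ge g_{S\cup H}(b^\star)$. This yields the analogue of Claim~\ref{clm:boundedvariance}: by $\gamma$-relevance $g_{S\cup H}(b^\star)\ge\gamma\,\texttt{OPT[G]}_H/k$, every set in $\mathcal H(a^\star)$ has $g_S$-value at most $\max_{|T|\le k} f_R(T)\le 2\,\texttt{OPT[G]}$ (the factor $2$ arises only because a queried set can have size up to $k$ while $\texttt{OPT[G]}$ ranges over sets of size $k''=k-|R|-|H|\ge k/2$, and $g$ is subadditive), and $\texttt{OPT[G]}\le e\,\texttt{OPT[G]}_H$, so $v_S(\mathcal H(a^\star)) = O(k/\gamma)$. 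Feeding this into the concentration computation of Claim~\ref{claim:realslimshady} — lower-bounding $\widetilde F(R\cup S\cup a^\star)$ via Lemma~\ref{lem:noisysmoothing_lowerbound}, upper-bounding $\widetilde F(R\cup S\cup b)$ via Lemma~\ref{lem:boundonb} for every competitor $b$ with $G_S(b)<(1-\delta)G_S(a^\star)$, and checking that $t=2^\ell$ (with the same $\ell=25\log n$ or $\ell=33\log\log n$) still exceeds the required $\bigl(110k\log n/\gamma\delta\bigr)^8$ for $\delta=\gamma^2/4k$ — gives $G_S(a)\ge(1-\delta)\max_{b\notin H}G_S(b)$ with probability $\ge 1-1/n^4$. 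Finally, the analogue of Claim~\ref{claim:newboundona} (again using only submodularity/monotonicity of $g$ and the $\gamma$-relevance bound $g_{S\cup H}(b^\star)\ge\gamma\,\texttt{OPT[G]}_H/k$) converts $G_S(a)\ge(1-\delta)G_S(b^\star)$ into $g_S(a)\ge(1-\gamma)g_{H\cup S}(b^\star)$, which is the assertion.

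I do not expect a genuine obstacle: the argument is entirely inherited from the $R=\emptyset$ case, and the work is bookkeeping. The three things to verify are that $g=f_R$ is still monotone and submodular with $g(\emptyset)=0$ (hence subadditive); that the multipliers entering a single evaluation of $\widetilde F(R\cup S\cup b)$ are mutually independent (injectivity of $H'\mapsto R\cup S\cup H'\cup b$ together with $H\cap R=\emptyset$); and that the mild discrepancy between the size $\le k$ of queried sets and the size $k''$ used to define $\texttt{OPT[G]}$ inflates the variation bound and the additive $\beta_{\max}$-type terms by at most a constant factor (bounded using $k\le 2k''$, which holds since \textsc{Slick-Greedy} is only invoked for $k$ much larger than $|R|+|H|=\ell/\delta$). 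That constant is swallowed into the $O(\cdot)$'s governing the required smoothing size $t$ and the final approximation, exactly as the $\texttt{OPT}\le e\cdot\texttt{OPT}_H$ slack was in Section~\ref{sec:largek}.
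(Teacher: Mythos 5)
Your proof is correct and follows essentially the same strategy as the paper's: reduce to the uninitialized analysis by observing that running \textsc{Smooth-Greedy} initialized with $R$ on $f$ is the same as running it on $g=f_R$, re-derive the smoothing guarantee (Claim~\ref{claim:realslimshady}) for $g$ using the variation bound and $\gamma$-relevance in the \texttt{OPT[G]} sense, then finish with the $g$-analogue of Claim~\ref{claim:newboundona}. The paper's own write-up takes a slightly longer route — it applies Claim~\ref{claim:realslimshady} verbatim with $S\mapsto R\cup S$, then passes $F_{R\cup S}(\cdot)\to G(S\cup\cdot)\to G_S(\cdot)$, introducing an auxiliary tolerance $\alpha=\gamma\lambda/3k$ so the $G(S)$ offset can be absorbed — whereas you short-circuit via the identity $f_{R\cup S}(\cdot)=g_S(\cdot)$, so the smoothing step already hands you the $G_S$ comparison directly. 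Your bookkeeping for the $k$ vs.\ $k''=k-|R\cup H|$ discrepancy (bounding the numerator of the variation and the $\beta_{\max}$ term by $O(1)\cdot\texttt{OPT[G]}$ via subadditivity and $k\le 2k''$) is exactly the detail that needs care, and you handle it correctly; the slack is indeed a constant-factor inflation swallowed by the choice of $\ell$.
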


\begin{proof}
Let $G$ denote the smooth value function of $g$, i.e. $G(S\cup a) = \frac{1}{t}\sum_{H'\subset H}g(S\cup H'\cup a)$.  The proof in a chaining of four simple arguments.  Let $\lambda = \gamma^2/4k $ and $\alpha = \gamma\lambda/3k$. We show:

\begin{align*}
&1.& \widetilde{F}(R\cup S \cup a) & \geq & &&     &&\widetilde{F}(R\cup S \cup b^\star) &\hspace{0.4in}\implies\hspace{0.4in} &F_{R\cup S}(a) &\geq& &(1-\alpha)& F_{R\cup S}(b^\star) \\
&2.& F_{R\cup S}(a) &\geq & &(1-\alpha)& & & F_{R\cup S}(b^\star)&\hspace{0.4in} \implies \hspace{0.4in}& G(S\cup a) &\geq& &(1-\alpha)& G(S\cup b^\star) \\
&3.& G(S\cup a) &\geq & &(1-\alpha)& & &G(S\cup b^\star) &\hspace{0.4in}\implies\hspace{0.4in}& G_{S}(a) &\geq& &(1-\lambda)&G_{S}(b^\star) \\
&4.& G_{S}(a) & \geq & &(1-\lambda)& & & G_{S}(b^\star) &\hspace{0.4in}\implies\hspace{0.4in} & g_{S}(a) &\geq & &(1-\gamma) & g_{H\cup S}(b^\star)
\end{align*}
The above arguments can be justified as follows:
\begin{enumerate}
\item To see $\widetilde{F}(R\cup T \cup a) \geq \widetilde{F}(R\cup T \cup b^\star)$ implies $F_{R\cup T}(a) \geq (1-\alpha) F_{R\cup T}(b^\star)$, we invoke Claim~\ref{claim:realslimshady} on $S = R\cup T$.  To do so, since $\alpha \leq \gamma^3/24k^2$ for sufficiently large $n$ we need to verify: 
$$t>  \left (\frac{110k\log n}{\gamma\alpha} \right)^8 = \left (\frac{2640k^3\log n}{\gamma^3} \right)^8$$
In the case where $k\geq 2400\log n$ we use $\ell = 25\log n$ and thus $t = 2^{\ell} = n^{25}$ and the above inequality holds.  When $k < 2400\log n$ we use $\ell = 33\log\log n$ and thus $t=\log^{33}n$ and the above inequality holds in this case as well.  We therefore have the result w.p. $\geq 1-1/n^4$. 

\item Assuming that $F_{R\cup S}(a) \geq (1-\alpha) F_{R\cup S}(b^\star)$ we will show that $G(S\cup a) \geq (1-\alpha) G(S\cup b^\star)$:
\begin{align*}
	 	 & F_{R\cup S}(a) & \geq&  (1-\alpha) F_{R\cup S}(b^\star)\\
\implies	 & \frac{1}{t}\sum_{H'\subset H}f_{R\cup S} (H' \cup a) & \geq &  (1-\alpha)\frac{1}{t}\sum_{H'\subset H}f_{R\cup S} (H' \cup b^\star) \\
\implies & \frac{1}{t}\sum_{H'\subset H} \left ( f(R \cup S \cup H' \cup a) - f(R\cup S) \right ) & \geq &  (1-\alpha)\frac{1}{t}\sum_{H'\subset H} \left ( f (R\cup S \cup H' \cup b^\star) - f(R\cup S) \right )\\
\implies & \frac{1}{t}\sum_{H'\subset H} \left ( f(R \cup S \cup H' \cup a) - f(R) \right ) & \geq &  (1-\alpha)\frac{1}{t}\sum_{H'\subset H} \left ( f (R\cup S \cup H' \cup b^\star) - f(R) \right )\\
\implies &\frac{1}{t} \sum_{H'\subset H}  f_R(S\cup H' \cup a) 							   & \geq &  (1-\alpha)\frac{1}{t}\sum_{H'\subset H}  f_R (S\cup H' \cup b^\star)\\
\implies &\frac{1}{t} \sum_{H'\subset H}  g(S\cup H' \cup a) 							   	   & \geq &  (1-\alpha)\frac{1}{t}\sum_{H'\subset H}  g (S\cup H' \cup b^\star)\\
\implies &G(S\cup a) 							   	   & \geq &  (1-\alpha)G (S\cup b^\star)
\end{align*}

\item $G(S\cup a) \geq (1-\alpha) G(S\cup b^\star) \implies G_{S}(a) \geq (1-\lambda)G_{S}(b^\star)$:
We first argue $G_{S}(b^\star) > \frac{\gamma \texttt{OPT[G]}}{e\cdot k''}$:
\begin{align*}
G_S(b^\star) 
& = \frac{1}{t}\sum_{H' \subset H} \left ( g(S \cup b^\star \cup H') - g(S) \right ) \\
& \geq \frac{1}{t}\sum_{H' \subset H} \left ( g(S \cup b^\star \cup H') - g(S \cup H') \right ) \rmk{monotonicity of $g$}\\
& \geq \frac{1}{t}\sum_{H' \subset H} \left ( g(S \cup b^\star \cup H) - g(S \cup H) \right ) \rmk{submodularity of $g$}\\
& =  g(S \cup b^\star \cup H) - g(S \cup H) \\
& =  g_{S \cup H}( b^\star ) \\
& \geq \frac{\gamma}{k''}\texttt{OPT[G]}_{H} \rmk{$\gamma$-relevant iteration}\\
& > \frac{\gamma}{e\cdot k''}\texttt{OPT[G]} \rmk{$\texttt{OPT[G]}_{H} > \texttt{OPT[G]}/e$}
\end{align*}
Now, in a similar fashion to Claim~\ref{claim:newboundona}:
\begin{align*}
G_{S}(a) 
& = G(S\cup a) - G(S) \\
& \geq (1-\alpha)\left ( G(S\cup b^\star) - G(S)  \right ) - \alpha G(S)\\
& \geq (1-\alpha)\left ( G(S\cup b^\star) - G(S)  \right ) - \alpha \texttt{OPT[G]} \\
& \geq (1-\alpha)\left ( G(S\cup b^\star) - G(S)  \right ) - \alpha \frac{e\cdot k''}{\gamma}\cdot G_{S}(b^\star)\rmk{$G_{S}(b^\star) >\frac{\gamma \texttt{OPT[G]}}{e\cdot k''}$}\\
& = (1-\alpha)\left ( G_{S}(b^\star)  \right ) - \alpha \frac{e\cdot k''}{\gamma}\cdot G_{S}(b^\star)\\
& = \left (1-\alpha \left (1+\frac{e\cdot k''}{\gamma} \right) \right) G_{S}(b^\star)   \\
& = \left (1-\lambda \right) G_{S}(b^\star)  \rmk{$\alpha=\epsilon\lambda/3k$ and $k\geq k''+1$}
\end{align*}
\item $G_{S}(a) \geq (1-\lambda)G_{S}(b^\star) \implies g_{S}(a) \geq (1-\gamma)g_{H\cup S}(b^\star)$:
by direct application of Claim~\ref{claim:newboundona} \qedhere
\end{enumerate}
\end{proof}

\begin{definition} Given two disjoint sets $H$ and $R$, let $\texttt{OPT}_{H,R} = f(H \cup R \cup O_{H,R}) - f_R(H)$ where: 
$$O_{H,R} \in \argmax_{T:|T|\leq k - |H\cup R|} f(H \cup R \cup T).$$
\end{definition}
Notice that when $R=\emptyset$ we have that $O_{H, R} = O_{H} \in \argmax_{T:|T|\leq k - |H|} f_H(T)$ as defined in the previous subsection.  In that sense, the value of $O_{H, R}$ is that of the optimal solution evaluated on $f_{H}$ when initialized with $R$.  In the same way Lemma~\ref{lem:opt_h} shows \textsc{Smooth-Greedy} obtains a $1-1/e-\epsilon/3$ approximation to $\texttt{OPT}_H$, the following lemma shows that when \textsc{Smooth-Greedy} is initialized with $R$ it obtains the same guarantee against $\texttt{OPT}_{H, R}$.  Details are in Appendix~\ref{sec:slick_apx_appendix}.

\begin{lemma}\label{lem:opt_hr}
Let $S$ be the set returned by \textsc{Smooth-Greedy} that is initialized with a set ${R\subseteq N}$ and has $H$ as its smoothing set of size $\ell$, which is disjoint from $R$ and $S$.  Then, for any fixed $\epsilon>0$ when ${k\geq 3|H\cup R|/\epsilon}$ with probability of at least $1-1/n^3$ we have that:
$$f(R\cup S \cup H) \geq \left ( 1-1/e -\epsilon/3 \right) \texttt{OPT}_{H, R}.$$
\end{lemma}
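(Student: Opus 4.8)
The plan is to mimic the proof of Lemma~\ref{lem:opt_h}, transposing every statement about $f$, $\texttt{OPT}_H$, and empty initialization into the corresponding statement about the initialized process with $g(\cdot) = f_{R}(\cdot)$ and the modified optimum $\texttt{OPT}_{H,R}$. First I would set $g(T) = f_R(T)$ and record that $g$ is monotone submodular, that $\texttt{OPT[G]} = \max_{T:|T|\le k''}g(T)$ with $k'' = k - |R| - |H|$ satisfies $f(R\cup S\cup H) = f(R) + g(S\cup H)$, and that $\texttt{OPT}_{H,R} = g_H(O_{H,R})$ where $O_{H,R}$ is the maximizer of $g(H\cup T)$ over $|T|\le k''$. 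I would split into the two cases exactly as in Lemma~\ref{lem:opt_h}: either $\texttt{OPT[G]}_{H} < \texttt{OPT[G]}/e$, in which case $H$ alone (on top of $R$) already yields the $1-1/e-\epsilon/3$ bound against $\texttt{OPT}_{H,R}$ by the same short chain of submodularity/monotonicity inequalities (using $k'' = k - |H\cup R|$ and the hypothesis $k \ge 3|H\cup R|/\epsilon$ to pass from $k''$ elements to $k''$ elements losing only an $\epsilon/3$ factor); or $\texttt{OPT[G]}_{H} \ge \texttt{OPT[G]}/e$, which is the substantive case.

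In the substantive case I would set $\gamma = \min\{1/e,\epsilon/6\}$ and run the same inductive argument on the iterations $i \in [k'']$ of \textsc{Smooth-Greedy}. The key per-iteration inequality is supplied by Claim~\ref{claim:itslate}: with probability $\ge 1 - 1/n^4$, in each $\gamma$-relevant iteration the element $a$ added satisfies $g_S(a) \ge (1-\gamma)g_{H\cup S}(b^\star)$ where $b^\star \in \argmax_{b\notin H}g_{S\cup H}(b)$. Combining this with subadditivity of $g_{H\cup S}$ over the (at most $k''$) elements of $O_{H,R}$ gives the analogue of Claim~\ref{lem:bound}: $g_S(a) \ge (1-\gamma)\frac{1}{k''}(\texttt{OPT}_{H,R} - g(S))$ (here $g(S)$ replaces $f(S)$ and $\texttt{OPT}_{H,R}$ replaces $f_H(O_H)$, noting $g_H(O_{H,R}) = \texttt{OPT}_{H,R}$). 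I would then handle the possibility of a non-$\gamma$-relevant iteration exactly as in Lemma~\ref{lem:opt_h}: if iteration $i$ fails $\gamma$-relevance, then $g_{H\cup S_i}(O_{H,R}) \le k'' \cdot \gamma\texttt{OPT}_{H,R}/k < \gamma \texttt{OPT}_{H,R}$, which forces $g(S_i\cup H) > g(O_{H,R}\cup S_i\cup H) - \gamma\texttt{OPT}_{H,R} \ge (1-\gamma)\texttt{OPT}_{H,R} \ge (1-1/e)\texttt{OPT}_{H,R}$, and we are done. Otherwise every iteration is $\gamma$-relevant, and a routine induction (identical to the one in Lemma~\ref{lem:opt_h}, using $(1-1/\ell)^\ell \le 1/e$) yields $g(S) = g(S_{k''}) \ge (1 - 1/e - \gamma)\texttt{OPT}_{H,R} > (1-1/e-\epsilon/3)\texttt{OPT}_{H,R}$. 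Since $f(R\cup S\cup H) = f(R) + g(S\cup H) \ge g(S\cup H) \ge g(S) $... more carefully, $g(S\cup H) \ge g(S)$ by monotonicity and also $g(S\cup H) \ge $ the bound just derived since the inductive bound is really on $g(S)$ and monotonicity only helps; in the cleanest version one derives the bound on $g(S\cup H)$ directly, so that $f(R\cup S\cup H) \ge g(S\cup H) \ge (1-1/e-\epsilon/3)\texttt{OPT}_{H,R}$, giving the claim. Finally I would take a union bound over the $k'' \le n$ invocations of Claim~\ref{claim:itslate}, each failing with probability $\le 1/n^4$, to get overall success probability $\ge 1 - 1/n^3$.

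The main obstacle, and the only place where genuine care beyond transcription is needed, is verifying that the hypothesis $k \ge 3|H\cup R|/\epsilon$ plays the role that $k \ge 3\ell/\epsilon$ played in Lemma~\ref{lem:opt_h} — in particular that it suffices both for the "$H$ alone" case (passing from a budget of $k'' = k - |H\cup R|$ to the full budget while losing only $\epsilon/3$) and for the bookkeeping in the inductive case, and that the $\gamma$-relevance definition for the initialized process (which references $\texttt{OPT[G]}_H$ and $k$, not $k''$) is exactly the one under which Claim~\ref{claim:itslate} was proved. I would also double-check that $|S| \le k$ throughout so that $g(S) \le \texttt{OPT[G]} \le e\,\texttt{OPT[G]}_H$ wherever that substitution is invoked, as in the corresponding step of Claim~\ref{claim:newboundona}. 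None of these require new ideas; they are the routine but necessary consistency checks that make the analogy airtight.
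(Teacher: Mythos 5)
Your plan — set $g=f_R$, invoke Claim~\ref{claim:itslate} for the per-iteration guarantee, replicate the induction from Lemma~\ref{lem:opt_h} verbatim on $g$ — is exactly the route the paper takes. But there is a genuine bookkeeping error in your translation between $\texttt{OPT[G]}_H$ and $\texttt{OPT}_{H,R}$, and it is precisely the one the paper's final chain of inequalities is designed to handle.

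You assert $\texttt{OPT}_{H,R}=g_H(O_{H,R})$. This is false when $f(R)>0$. Unwinding definitions: $g_H(O_{H,R})=g(H\cup O_{H,R})-g(H)=f(R\cup H\cup O_{H,R})-f(R\cup H)$, whereas $\texttt{OPT}_{H,R}=f(R\cup H\cup O_{H,R})-f_R(H)=f(R\cup H\cup O_{H,R})-f(R\cup H)+f(R)$. So $\texttt{OPT}_{H,R}=g_H(O_{H,R})+f(R)=\texttt{OPT[G]}_H+f(R)$. The induction you run (which is the proof of Lemma~\ref{lem:opt_h} applied to $g$) yields $g(S\cup H)\ge(1-1/e-\epsilon/3)\,\texttt{OPT[G]}_H$, i.e.\ a bound against $g_H(O_{H,R})$, not against $\texttt{OPT}_{H,R}$. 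Your closing step ``$f(R\cup S\cup H)\ge g(S\cup H)\ge(1-1/e-\epsilon/3)\texttt{OPT}_{H,R}$'' therefore does not follow: you discard the $f(R)$ summand from $f(R\cup S\cup H)=f(R)+g(S\cup H)$ and simultaneously inflate the target from $\texttt{OPT[G]}_H$ to the strictly larger $\texttt{OPT}_{H,R}$. These two errors roughly cancel, which is why the final statement looks right, but the derivation as written is invalid.

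The fix is a one-line rearrangement and is exactly what the paper does: keep the $f(R)$. From $g(S\cup H)\ge(1-1/e-\epsilon/3)\bigl(\texttt{OPT}_{H,R}-f(R)\bigr)$ you get
\[
f(R\cup S\cup H)=f(R)+g(S\cup H)\ \ge\ f(R)+\bigl(1-\tfrac{1}{e}-\tfrac{\epsilon}{3}\bigr)\bigl(\texttt{OPT}_{H,R}-f(R)\bigr)\ \ge\ \bigl(1-\tfrac{1}{e}-\tfrac{\epsilon}{3}\bigr)\texttt{OPT}_{H,R},
\]
where the last step uses $f(R)\ge 0$. This is precisely the content of the paper's final chain of inequalities, which carries $f(R)$ through and only drops it at the end when its coefficient is nonnegative. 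You flagged the wrong worry (whether the induction gives a bound on $g(S)$ or $g(S\cup H)$ — monotonicity handles that trivially); the actual point requiring care is the $f(R)$ offset between $\texttt{OPT[G]}_H$ and $\texttt{OPT}_{H,R}$.
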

\begin{proof}
Notice that the proof of Lemma~\ref{lem:opt_h} applies for the application of \textsc{Smooth-Greedy} on any submodular function $v$ where in every $\gamma$-relevant iteration $v_{S}(a) \geq (1-\gamma)v_{S\cup H}(b^\star)$ with probability $1-1/n^4$, for $\gamma \in \min\{1/e,\epsilon/6\}$, and $S$ being the elements added in the previous iteration.  From Claim~\ref{claim:itslate} we have that for any $\gamma$-relevant iteration $g_{S}(a) \geq (1-\gamma)g_{S\cup H}(b^\star)$ w.p. $\geq 1-1/n^4$.  We can therefore apply the exact same proof on $g$ and get: 
\begin{align}\label{eq:gh}
g(S\cup H) \geq (1-1/e -\epsilon/3)\texttt{OPT[G]}_{H}
\end{align}
Let $O_{H} \in \argmax_{T:|T|\leq k - |R\cup H|}g(T)$ and let $O_{H,R} \in \argmax_{T:|T|\leq k - |H\cup R|}f(H\cup R\cup T)$.  Observe that by definition of $g(X) = f_{R}(X)$ we have that: 
$$f(H \cup R \cup O_{H,R}) =  f(H \cup R \cup O_{H})$$ 
and thus from~(\ref{eq:gh}) we get:
\begin{align*}
f(R \cup S \cup H) - f(R)
& = f_{R}(S\cup H)\\
& = g(S \cup H) \\
& \geq (1-1/e - \epsilon/3)g_H(O_{H})\\
& \geq (1-1/e - \epsilon/3)\left ( g(O_{H} \cup H) - g(H) \right )\\
& = (1-1/e - \epsilon/3) \left ( f_R(O_{H} \cup H) - f_R(H) \right )\\
& \geq (1-1/e - \epsilon/3) \left ( f(R \cup O_{H} \cup H) - f(R) - f_R(H) \right )\\
& \geq (1-1/e - \epsilon/3) \left ( f(R \cup O_{H,R} \cup H) - f_R(H) \right ) - (1-1/e - \epsilon/3)f(R)
\end{align*}
and we therefore have that $f(R \cup S \cup H)\geq (1-1/e - \epsilon/3) \left ( f(R \cup O_{H,R} \cup H) - f_R(H) \right )$.
\end{proof}

We will instantiate the Lemma with $R=R_l$ and $H=H_l$ as discussed above: for any $i\in [1/\delta]$ we will define $R_i = \cup_{j\neq i}H_j$ and use the index $l$ to denote the smoothing set in $\{H_{i}\}_{i=1}^{1/\delta}$ which has the least marginal contribution to the rest, i.e. $H_l = \argmin_{i\in [1/\delta]} f_{R_i}(H_i)$.  We first show that the iteration of \textsc{Slick-Greedy} on $l$ finds a solution arbitrarily close to $1-1/e$ for sufficiently large $k$.

\begin{lemma*}[\ref{claim:fr}]
Let $S_l$ be the set returned by \textsc{Smooth-Greedy} that is initialized with $R_l$ and $H_l$ its smoothing set.  Then, for any fixed $\epsilon>0$ when $k\geq 36\ell/\epsilon^2$ with probability of at least $1-1/n^3$ we have:
$$ f(S_l \cup H_l)  \geq (1-1/e -2\epsilon/3) \texttt{OPT} $$
\end{lemma*}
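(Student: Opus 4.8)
The plan is to run the bound of Lemma~\ref{lem:opt_hr} on the instance of \textsc{Smooth-Greedy} that is initialized with $R_l$ and smoothed with $H_l$, and then check that the benchmark $\texttt{OPT}_{H_l,R_l}$ appearing there is within a $(1-\epsilon/3)$ factor of $\texttt{OPT}$. Recall $\delta=\epsilon/6$, the $H_1,\dots,H_{1/\delta}$ are disjoint sets of size $\ell$, $R_i=\bigcup_{j\neq i}H_j$, and $l$ is chosen so that $f_{R_l}(H_l)=\min_{i\in[1/\delta]}f_{R_i}(H_i)$; note $R_l\cup H_l=\bigcup_i H_i$ and $S_l\supseteq R_l$, so $R_l\cup S_l\cup H_l=S_l\cup H_l$ and $H_l$ is disjoint from both $R_l$ and $S_l$. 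Since $k\geq 36\ell/\epsilon^2$ we have $|H_l\cup R_l|=\ell/\delta=6\ell/\epsilon\leq k$ and $k\geq 18\ell/\epsilon^2=3|H_l\cup R_l|/\epsilon$, so the hypotheses of Lemma~\ref{lem:opt_hr} are met and it yields, with probability at least $1-1/n^3$,
\[
f(S_l\cup H_l)\;\geq\;\left(1-\tfrac1e-\tfrac{\epsilon}{3}\right)\texttt{OPT}_{H_l,R_l}.
\]
Everything after this is deterministic: it remains to show $\texttt{OPT}_{H_l,R_l}\geq(1-\epsilon/3)\texttt{OPT}$ and then compose the two factors.

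I would then estimate the two parts of $\texttt{OPT}_{H_l,R_l}=f(H_l\cup R_l\cup O_{H_l,R_l})-f_{R_l}(H_l)$ separately. For the subtracted term, order the disjoint blocks and telescope: $f\!\left(\bigcup_i H_i\right)\geq\sum_i f_{H_1\cup\cdots\cup H_{i-1}}(H_i)\geq\sum_i f_{(\bigcup_j H_j)\setminus H_i}(H_i)=\sum_i f_{R_i}(H_i)$, using $H_1\cup\cdots\cup H_{i-1}\subseteq(\bigcup_j H_j)\setminus H_i$ and submodularity; hence, by the minimality defining $l$, $f_{R_l}(H_l)\leq\delta\sum_i f_{R_i}(H_i)\leq\delta f(R_l\cup H_l)\leq\delta\,\texttt{OPT}$, the last step because $|R_l\cup H_l|\leq k$. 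For the additive term, let $O$ be an optimal $k$-set; by the standard fact that deleting the $j$ least useful elements from a $k$-set costs a monotone submodular function at most a $j/k$ fraction of its value, there is $O'\subseteq O$ with $|O'|=k-\ell/\delta$ and $f(O')\geq(1-\tfrac{\ell/\delta}{k})\texttt{OPT}\geq(1-\epsilon/6)\texttt{OPT}$, using $\ell/\delta=6\ell/\epsilon$ and $k\geq36\ell/\epsilon^2$. Since $O'$ is feasible in the definition of $O_{H_l,R_l}$ and $f$ is monotone, $f(H_l\cup R_l\cup O_{H_l,R_l})\geq f(H_l\cup R_l\cup O')\geq f(O')\geq(1-\epsilon/6)\texttt{OPT}$, so $\texttt{OPT}_{H_l,R_l}\geq(1-\epsilon/6-\delta)\texttt{OPT}=(1-\epsilon/3)\texttt{OPT}$.

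Substituting into the displayed inequality and using $(1-\tfrac1e-\tfrac{\epsilon}{3})(1-\tfrac{\epsilon}{3})\geq 1-\tfrac1e-\tfrac{2\epsilon}{3}$ (valid since $1-1/e\leq1$) completes the argument, with the stated probability inherited verbatim from Lemma~\ref{lem:opt_hr}. The one genuinely load-bearing step is the bound on the subtracted term $f_{R_l}(H_l)$: without the pigeonhole choice of the cheapest smoothing set this term could be as large as $\texttt{OPT}$, making the benchmark $\texttt{OPT}_{H_l,R_l}$ worthless, whereas the telescoping inequality forces it down to a $\delta$-fraction --- precisely the slack $\delta=\epsilon/6$ reserved for it. The rest is bookkeeping to line up the constants $36\ell/\epsilon^2$, $3|H\cup R|/\epsilon$, $\ell/\delta$, together with the routine ``deletion loses a $j/k$ fraction'' property already used in the proof of Lemma~\ref{lem:opt_h}.
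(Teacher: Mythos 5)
Your proposal is correct and follows essentially the same route as the paper: invoke Lemma~\ref{lem:opt_hr} with $R=R_l$, $H=H_l$, use the pigeonhole/minimality property of $H_l$ to bound $f_{R_l}(H_l)\le\delta f(H_l\cup R_l)$, and use the deletion bound to control $f(H_l\cup R_l\cup O_{H_l,R_l})$. The only cosmetic difference is in the final bookkeeping --- the paper keeps $f_{R_l}(H_l)$ bounded by $\delta f(H_l\cup R_l)$ and absorbs it into $f(H_l\cup R_l\cup O)$ via monotonicity before applying the $(1-\beta)$ bound, whereas you bound both pieces of $\texttt{OPT}_{H_l,R_l}$ directly against $\texttt{OPT}$; both yield the stated constant.
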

\begin{proof}
To ease notation, let $R=R_l$, $H=H_l$, and $O=O_l$ where $O_l$ is the solution which maximizes $f(H\cup R \cup T)$ over all subsets $T$ of size at most $k-|H\cup R|$.  Let $\beta = |H\cup R|/k$.  Notice that by submodularity we have that:
\begin{align}
f(H \cup R \cup O)  \geq\left  (1-\frac{|H\cup R|}{k} \right )\texttt{OPT} = (1-\beta)\texttt{OPT}\label{eq:anotherboundonopt}
\end{align}
Notice also that by the minimality of $H=H_l$ and submodularity we have that $f_R(H)\leq \delta f(H\cup R)$.  Recall also that $\delta = \epsilon/6$ and notice that whenever $k\geq \ell/\delta^2 = 36\ell/\epsilon^2$ we have that $\beta<\delta$ and hence $\beta+\delta < \epsilon/3$.  Therefore, by application of Lemma~\ref{lem:opt_hr} we get that with probability $1-1/n^3$:
\begin{align*}
f(S\cup R \cup H)
& \geq \left (1-\frac{1}{e} -\frac{\epsilon}{3} \right ) \texttt{OPT}_{H, R} \rmk{by Lemma \ref{lem:opt_hr} }\\
& = \left (1-\frac{1}{e} -\frac{\epsilon}{3} \right ) \left (   f(H\cup R \cup O) - f_R(H)     \right ) \rmk{by definition}\\
& \geq \left (1-\frac{1}{e} -\frac{\epsilon}{3} \right ) \left (   f(H\cup R \cup O) -\delta \cdot f(H\cup R)     \right ) \rmk{$f_R(H)\leq \delta f(H\cup R)$}\\
& \geq \left (1-\frac{1}{e} -\frac{\epsilon}{3} \right ) \left ( (1-\delta)  f(H\cup R \cup O)  \right ) \rmk{monotonicity of $f$}\\
& \geq \left (1-\frac{1}{e} -\frac{\epsilon}{3} -\delta \right ) \left ( f(H\cup R \cup O)  \right ) \\
& \geq \left (1-\frac{1}{e} -\frac{\epsilon}{3} -\delta \right ) \left ( 1- \beta \right ) \texttt{OPT} \rmk{by~(\ref{eq:anotherboundonopt})}\\
& \geq \left (1-\frac{1}{e} -\frac{2\epsilon}{3} \right )\texttt{OPT}.\rmk{$\beta+\delta<\epsilon/3$}\qedhere
\end{align*}
\end{proof}

\subsubsection*{The smooth comparison procedure}\label{sec:smooth_compare}
\begin{lemma*}[\ref{lem:boosting}]
Assume $k \ge 96 \ell /\epsilon^2$.  Let $T_i$ be the set that won the $\textsc{Smooth-Compare}$ tournament.  Then, with probability at least $1-1/n^2$:
$$f(T_i)\geq \left (1-\frac{\epsilon}{3} \right )\min \left \{ \left(1-\frac{1}{e}-\frac{2\epsilon}{3}\right)\texttt{OPT},\max_{j \in [1/\delta]}f(T_j)\right \}$$
\end{lemma*}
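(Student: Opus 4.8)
The plan is to run the \textsc{Smooth-Compare} tournament and track how a near-optimal solution fares as it is compared against all the other $T_j$'s. Let $T^\star \in \argmax_{j \in [1/\delta]} f(T_j)$ and let $j^\star$ be its index. Set $\rho = \min\{(1-1/e-2\epsilon/3)\texttt{OPT}, f(T^\star)\}$ — this is the target value we want the winner to be within a $(1-\epsilon/3)$ factor of. I would first establish the per-comparison guarantee via Claim~\ref{clm:slick}: if $T_i$ and $T_j$ are both candidate solutions produced by an initialized \textsc{Smooth-Greedy} run, and $f(T_i) \geq (1-\epsilon\delta/3)f(T_j)$ while $f(T_j)$ is \emph{not} already within a $(1-1/e-2\epsilon/3)$ factor of $\texttt{OPT}$, then $f(T_i \cup H'_{ij}) \geq f(T_j \cup H'_{ij})$ for every $H'_{ij} \subseteq H_{ij}$. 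The intuition, as described in the overview, is that elements of $H_{ij}$ were available to \textsc{Smooth-Greedy} when it built $T_j = S_j \cup H_j$ and were not selected, so by Claim~\ref{claim:itslate} (applied to the run that produced $T_j$, with $H_{ij}$ playing the role of leftover candidates) their marginal contribution to $T_j$ is at most a $\gamma$-fraction of $\texttt{OPT[G]}_H$, hence $o(\texttt{OPT})$; since $|H_{ij}| = \ell \ll k$, adding any subset of $H_{ij}$ to either $T_i$ or $T_j$ distorts the comparison by at most an $O(\ell/k) \leq O(\epsilon\delta/3)$ additive fraction of $\texttt{OPT}$, which is absorbed by the multiplicative slack $1-\epsilon\delta/3$ as long as $f(T_j)$ is bounded away from $(1-1/e)\texttt{OPT}$.

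Next I would invoke Claim~\ref{lem:tournament}: if $f(T_i \cup H'_{ij}) \geq f(T_j \cup H'_{ij})$ for \emph{every} $H'_{ij} \subseteq H_{ij}$, then with probability at least $1 - 1/n^{O(1)}$ the noisy majority vote in \textsc{Smooth-Compare} also favors $T_i$, since every one of the $2^\ell$ noisy comparisons $\widetilde{f}(T_i \cup H'_{ij})$ vs.\ $\widetilde{f}(T_j \cup H'_{ij})$ independently breaks the right way with good probability (here the i.i.d.-in-space noise model is used: the sets $T_i \cup H'_{ij}$ are all distinct across $H'_{ij}$, so their noise multipliers are independent, and a Chernoff/concentration bound on the fraction of correct votes applies; the failure probability over all $\binom{1/\delta}{2} = O(1)$ pairwise tournaments is still inverse-polynomial). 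Combining the two claims: whenever $T^\star$ is compared against some $T_i$ and $f(T^\star)$ is not yet within $(1-1/e-2\epsilon/3)$ of $\texttt{OPT}$, either $T^\star$ wins (if $f(T^\star) \geq (1-\epsilon\delta/3)f(T_i)$) or, if it loses, the victor $T_i$ satisfies $f(T_i) > (1-\epsilon\delta/3)^{-1}f(T^\star) \geq f(T^\star)$ — wait, more carefully: each time the current champion is beaten, the new champion has value at least $(1-\epsilon\delta/3)$ times the old champion's value (this is the contrapositive of Claim~\ref{clm:slick} plus Claim~\ref{lem:tournament}), unless the champion was already $(1-1/e-2\epsilon/3)$-good. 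Since the tournament is a sequence of $1/\delta - 1$ comparisons against a running champion, the final winner $T_i$ loses at most a factor $(1-\epsilon\delta/3)$ at each of the at most $1/\delta$ steps relative to the best competitor it dethroned, giving $f(T_i) \geq (1-\epsilon\delta/3)^{1/\delta}\, f(T^\star) \geq (1-\epsilon/3)f(T^\star)$, or else $f(T_i) \geq (1-1/e-2\epsilon/3)\texttt{OPT}$ directly because some champion along the way was already that good and the same $(1-\epsilon\delta/3)^{1/\delta}$ degradation argument applies to that value. Either way $f(T_i) \geq (1-\epsilon/3)\rho$.

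The main obstacle is the bookkeeping in Claim~\ref{clm:slick}: one must argue that the elements of the fresh comparison set $H_{ij}$ genuinely had their marginals to $T_j$ driven down by \emph{both} \textsc{Smooth-Greedy} runs (the one producing $T_i$ and the one producing $T_j$), since $H_{ij}$ is chosen disjoint from $T_i \cup T_j$ but was not necessarily a candidate in either run in the obvious way — one needs that \textsc{Smooth-Greedy} either selected $k - \ell/\delta$ elements all of marginal value $\geq \gamma\texttt{OPT}/k$, in which case $T_j$ already accounts for nearly all of $\texttt{OPT}_{H_j}$ and is therefore $(1-1/e-2\epsilon/3)$-good (handled by the first branch of the min), or there was a non-relevant iteration, in which case any element outside $T_j$ (in particular those in $H_{ij}$) has marginal $< \gamma\texttt{OPT}/k$ to $T_j$. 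This dichotomy is exactly the one already used in the proof of Lemma~\ref{lem:opt_h}, so I would lift it verbatim. The constant $k \geq 96\ell/\epsilon^2$ is then what is needed so that $\ell/k \leq \epsilon^2/96 \leq \epsilon\delta/(3\cdot 2)$ (recall $\delta = \epsilon/6$), leaving enough room after the $(1-1/e-2\epsilon/3)$ guarantee of Lemma~\ref{claim:fr} (which needs $k \geq 36\ell/\epsilon^2$) and the tournament degradation. The full details, including the concentration bound for Claim~\ref{lem:tournament} and the additive-error accounting for Claim~\ref{clm:slick}, are deferred to Appendix~\ref{sec:smooth_compare}.
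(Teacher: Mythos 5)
Your proposal follows the paper's architecture closely: identify Claim~\ref{clm:slick} to reduce the noisy comparison to a deterministic one over every $H'_{ij}\subseteq H_{ij}$, invoke Claim~\ref{lem:tournament} (Chernoff over the $2^\ell$ independent votes) to lift this to the noisy majority, and chain the $1/\delta$ tournament rounds to eat a $(1-\epsilon\delta/3)^{1/\delta}\ge 1-\epsilon/3$ degradation. That skeleton matches the paper.

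However, the dichotomy you propose for the internal proof of Claim~\ref{clm:slick} is not the one the paper uses, and as stated it does not work. You claim that if \textsc{Smooth-Greedy} selected $k-\ell/\delta$ elements ``all of marginal value $\geq \gamma\texttt{OPT}/k$'' then $T_j$ ``accounts for nearly all of $\texttt{OPT}_{H_j}$ and is therefore $(1-1/e-2\epsilon/3)$-good.'' This inference fails on two counts. First, the paper uses $\gamma=1/4$, so summing $k''$ marginals each $\geq\gamma\texttt{OPT}/k$ would only give $f(T_j)\gtrsim\texttt{OPT}/4$, far from $(1-1/e)$. Second, and more fundamentally, even if Lemma~\ref{lem:opt_hr} certifies $f(T_j)\gtrsim(1-1/e)\texttt{OPT}_{H_j,R_j}$, this is not $(1-1/e)\texttt{OPT}$ for an arbitrary index $j$ --- that transfer is exactly what requires the special index $l$ in Lemma~\ref{claim:fr} and is unavailable for generic $j$. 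The paper's actual argument in the $\gamma$-relevant case does \emph{not} conclude that $T_j$ is good; it bounds $f_{T_j}(H'_{ij})\leq 2\beta f(T_j)$ directly, by noting that each greedily-chosen $a(r)$ has $g_{S(r)}(a(r))\geq(1-\gamma)\max_{b\notin H}g_{H\cup S(r)}(b)\geq(1-\gamma)g_{S\cup H}(H'_{ij})/\ell$, and summing over the $k''$ iterations. The ``$T_j$ is already good, contradiction'' branch is reserved for a different case you do not separate out: when $\gamma$-relevance fails because $\texttt{OPT[G]}_{H_j}<\texttt{OPT[G]}/e$. Your two-way split merges the ``condition (i) fails'' and ``condition (ii) fails'' cases, which lead to structurally different conclusions, and assigns the wrong conclusion to the all-relevant case. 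Relatedly, your concern that $H_{ij}$ was ``not necessarily a candidate in either run'' is a non-issue: $H_{ij}$ is chosen disjoint from $T_j=S_j\cup H_j\supseteq R_j$, so its elements were precisely among the $b\notin H_j$ over which \textsc{Smooth-Greedy} maximized when building $S_j$; and the other side needs nothing, since $f(T_i\cup H'_{ij})\geq f(T_i)$ by monotonicity. Finally, a minor imprecision: in Claim~\ref{lem:tournament} each individual comparison breaks the right way only with probability $1/2+O(1/\log n)$ (a discretization argument over the ratio $\nu_{\max}/\nu_{\min}$), not ``with good probability''; it is the Chernoff aggregation over the $2^\ell$ votes that produces the $1-1/n^{\Theta(1)}$ guarantee, which you do ultimately invoke.
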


The proof of the lemma uses the following two claims.  

\begin{claim}\label{clm:slick}
Let $T_i = S_i \cup {H}_i$ and $T_j = S_j \cup {H}_j$ be two sets that are compared by \textsc{Smooth-Compare}, and suppose that (i)$f(T_i) \geq (1+2\beta)f(T_{j})$ where $\beta = |H_{ij}|/k''$ and $k''=k-\ell/\delta$, and (ii) $f(T_j)<(1-1/e -2\epsilon/3)\texttt{OPT}$ for any $\epsilon \geq 3(1- k''/k)/2$.  Then, for any set $H'_{ij}\subseteq H_{ij}$ w.p. $\geq 1-1/n^3$:
$$f(T_i \cup H'_{ij}) \geq f(T_j \cup H'_{ij}).$$
\end{claim}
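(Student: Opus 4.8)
The plan is to reduce Claim~\ref{clm:slick} to a single \emph{per-element} upper bound on the marginal contribution of $H_{ij}$ to $T_j$, and then prove that bound by inspecting the run of \textsc{Smooth-Greedy} (initialized with $R_j$, smoothing set $H_j$) that produced $T_j$. By monotonicity $f(T_i\cup H'_{ij})\ge f(T_i)$, and by submodularity $f(T_j\cup H'_{ij})=f(T_j)+f_{T_j}(H'_{ij})\le f(T_j)+\sum_{x\in H'_{ij}}f_{T_j}(x)$; so, recalling $\beta=|H_{ij}|/k''=\ell/k''$ and using hypothesis (i), it suffices to show
\[
f_{T_j}(x)\ \le\ \tfrac{2}{k''}\,f(T_j)\qquad\text{for every }x\in H_{ij},
\]
since then $f(T_j\cup H'_{ij})\le(1+2\beta)f(T_j)\le f(T_i)\le f(T_i\cup H'_{ij})$. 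Observe that $H_{ij}$ is disjoint from $T_i=S_i\cup H_i\supseteq\bigcup_m H_m\supseteq R_j$ and from $T_j\supseteq H_j$, so every such $x$ lies outside $R_j\cup H_j$ and was a legitimate candidate at every step of the run.

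Write $g=f_{R_j}$ and $\gamma=\min\{1/e,\epsilon/6\}$ (we may assume $\epsilon<1-1/e$, else the target $1-1/e-\epsilon$ is vacuous), and let $P_0=\emptyset\subseteq P_1\subseteq\cdots\subseteq P_{k''}=S_j\setminus R_j$ be the elements the run has added after each step, with $a_{m+1}$ added at step $m+1$. The argument splits on how the run behaves with respect to $\gamma$-relevance. If \emph{every} step is $\gamma$-relevant, then Claim~\ref{claim:itslate} at step $m+1$, with $b^\star\in\argmax_{b\notin H_j}g_{H_j\cup P_m}(b)$, gives $f(R_j\cup P_{m+1})-f(R_j\cup P_m)=g_{P_m}(a_{m+1})\ge(1-\gamma)\,g_{H_j\cup P_m}(b^\star)\ge(1-\gamma)\,g_{H_j\cup P_m}(x)\ge(1-\gamma)\,f_{T_j}(x)$, the last step using $x\notin H_j$ and submodularity (the base $R_j\cup H_j\cup P_m$ is contained in $T_j$). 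Telescoping over $m=0,\dots,k''-1$ yields $f(S_j)-f(R_j)\ge k''(1-\gamma)f_{T_j}(x)$, and since $f(S_j)-f(R_j)\le f(S_j)\le f(T_j)$ and $1-\gamma\ge 1/2$ we get $f_{T_j}(x)\le\tfrac{2}{k''}f(T_j)$. The $1-1/n^3$ failure probability is precisely the union bound of Claim~\ref{claim:itslate} over the at most $n$ steps of the run; all remaining cases are deterministic.

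It remains to treat runs containing a non-$\gamma$-relevant step. Since condition (ii) of $\gamma$-relevance, $\texttt{OPT[G]}_{H_j}\ge\texttt{OPT[G]}/e$, does not depend on the step, there are two sub-cases. If it fails, then the argument from the proof of Lemma~\ref{lem:opt_h}, applied to the monotone submodular function $g=f_{R_j}$ with budget $k''$ (using $k''\ge 3\ell/\epsilon$, which follows from $k\ge 96\ell/\epsilon^2$), gives $g(H_j)>(1-1/e-\epsilon/3)\texttt{OPT[G]}$; together with the standard bound $\texttt{OPT[G]}=\max_{|T|\le k''}f_{R_j}(T)\ge\tfrac{k''}{k}\bigl(\texttt{OPT}-f(R_j)\bigr)$ this yields $f(T_j)\ge f(R_j)+g(H_j)>(1-1/e-\epsilon/3)\tfrac{k''}{k}\texttt{OPT}$, and since $k\ge 96\ell/\epsilon^2$ forces $k''/k\ge 1-\epsilon/16$, one checks $(1-1/e-\epsilon/3)(1-\epsilon/16)>1-1/e-2\epsilon/3$, contradicting hypothesis (ii); hence this sub-case does not arise. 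Otherwise condition (ii) of relevance holds throughout, and, not being in the ``all relevant'' case, there is a first step $m^\star+1$ failing condition (i): $\max_{b\notin H_j}g_{H_j\cup P_{m^\star}}(b)<\gamma\,\texttt{OPT[G]}_{H_j}/k$. On one hand this forces $f_{T_j}(x)\le g_{H_j\cup P_{m^\star}}(x)<\gamma\,\texttt{OPT[G]}_{H_j}/k\le\gamma\,\texttt{OPT}/k$ for every $x\in H_{ij}$ (using $\texttt{OPT[G]}_{H_j}\le\texttt{OPT}$). On the other hand, the same failure inequality, via the ``non-relevant iteration'' step of the proof of Lemma~\ref{lem:opt_h}, gives $g(H_j\cup P_{m^\star})>(1-\gamma)\texttt{OPT[G]}_{H_j}$, hence $f(T_j)\ge(1-\gamma)\texttt{OPT[G]}_{H_j}\ge\tfrac{(1-\gamma)k''}{ek}\bigl(\texttt{OPT}-f(R_j)\bigr)$; and from hypothesis (ii) and monotonicity $f(R_j)\le f(T_j)<(1-1/e)\texttt{OPT}$, so $\texttt{OPT}-f(R_j)>\texttt{OPT}/e$ and thus $f(T_j)>\tfrac{(1-\gamma)k''}{e^2k}\texttt{OPT}$. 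Combining, $\tfrac{2}{k''}f(T_j)>\tfrac{2(1-\gamma)}{e^2k}\texttt{OPT}\ge\tfrac{\gamma}{k}\texttt{OPT}>f_{T_j}(x)$, where the middle inequality is $\gamma\le\tfrac{2}{e^2+2}$, valid since $\gamma=\epsilon/6<\tfrac16(1-1/e)$. This establishes the per-element bound in all cases.

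The main obstacle is precisely this bookkeeping among $\texttt{OPT}$, $\texttt{OPT[G]}$ and $\texttt{OPT[G]}_{H_j}$ in the non-$\gamma$-relevant cases: one must show that as soon as the greedy producing $T_j$ stops being relevant, either it has already become ``good enough'' (so that $f(T_j)$ exceeds the $(1-1/e-2\epsilon/3)\texttt{OPT}$ threshold, contradicting (ii)), or it has accumulated enough value that the slack $2\beta f(T_j)$ afforded by (i) still dominates the now-negligible marginals of $H_{ij}$ into $T_j$. Getting the constants to line up ($k''/k$ close to $1$, the requirement $\gamma\le 2/(e^2+2)$, and $(1-1/e-\epsilon/3)(1-\epsilon/16)>1-1/e-2\epsilon/3$) is exactly where the hypotheses $k\ge 96\ell/\epsilon^2$ (equivalently $\epsilon\ge 3(1-k''/k)/2$) and $\epsilon<1-1/e$ enter.
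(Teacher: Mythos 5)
Your proof is correct and follows the same overall architecture as the paper's: split on whether every iteration of the \textsc{Smooth-Greedy} run producing $T_j$ was $\gamma$-relevant, in the all-relevant case telescope the per-step guarantee of Claim~\ref{claim:itslate} to bound $f_{T_j}(H'_{ij})\le 2\beta f(T_j)$, and in the non-relevant case split on whether condition~(ii) of relevance fails (yielding a contradiction with hypothesis~(ii) of the claim) or only condition~(i) fails (yielding the same $2\beta f(T_j)$ bound). Two points where you diverge from the paper, both non-fatal: first, in the ``(ii) of relevance fails'' sub-case you quote the $\big(1-1/e-\epsilon/3\big)$ bound from Lemma~\ref{lem:opt_h}, whereas the direct translation (both $\texttt{OPT[G]}$ and $\texttt{OPT[G]}_{H_j}$ use the \emph{same} budget $k''$, so the $O\to O'$ budget-shrinking step that costs $\epsilon/3$ in that lemma is not needed) gives the tighter $g(H_j)>(1-1/e)\,\texttt{OPT[G]}$; because of your extra $\epsilon/3$ slack you are forced to invoke $k''/k\ge 1-\epsilon/16$ from $k\ge 96\ell/\epsilon^2$, which is strictly stronger than the claim's stated hypothesis $\epsilon\ge 3(1-k''/k)/2$ — harmless in the context of Lemma~\ref{lem:boosting}, but the tighter route avoids the issue. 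Second, in the ``only (i) fails'' sub-case the paper's argument is more local: from the non-relevance inequality it directly derives $g_{H_j\cup S}(H'_{ij})<\tfrac{\gamma\ell}{(1-\gamma)k}\,g(H_j\cup S)\le \tfrac{2\ell}{k''}f(T_j)$, never routing through $\texttt{OPT}$ or invoking hypothesis~(ii); your version detours through lower-bounding $f(T_j)$ against $\texttt{OPT}$ (requiring $\texttt{OPT}-f(R_j)>\texttt{OPT}/e$ and the numerical condition $\gamma\le 2/(e^2+2)$), which works but is a bit more delicate than necessary. The per-element versus per-bundle bookkeeping in the all-relevant case is just a stylistic difference, equivalent by subadditivity.
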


\begin{proof}
Recall that ${H}_{ij} \cap \Big (T_i \cup T_j \Big ) =\emptyset$.  We will argue that assuming $f(T_j)<(1-1/e)\texttt{OPT}$, the fact that every element in $H'_{ij}$ was a candidate for selection by \textsc{Smooth-Greedy} and wasn't selected, implies that w.h.p. either (i) $f(T_j)$ is arbitrarily close to $1-1/e$ (in which case we wouldn't mind that if it wins the comparison) or (ii) the marginal contribution of $H'_{ij}$ to $T_j$ is bounded from above by $2\beta f(T_j)$ which suffices since then we get:
\begin{equation*}
f(T_j \cup H'_{ij}) 
 = f(T_j)+f_{T_j}(H'_{ij}) 
 \leq (1+2\beta)f(T_j)  
 < f(T_i)
 \leq f(T_{i} \cup H'_{ij})
 \end{equation*}
To prove this, consider the instantiation of $\textsc{Smooth-Greedy}$ initialized with $R_j$ with smoothing set $H_j$, and let $S$ be the set selected after its $k'' = k-|R_j| - |H_j|$ iterations.  Recall that $S_j=R_j \cup S$ and that $T_j = S_j \cup H_j$.  To ease notation let $R = R_j$ and $H=H_j$.  

We will first prove the statement in the case that the iteration is $\gamma$-relevant for $\gamma = 1/4$.   For every iteration $r \in [k'']$ let $S{(r)}$ be the set of elements selected in the previous iterations and $a({r})$ be the element added to the solution at that stage by \textsc{Smooth-Greedy}.  From Claim~\ref{claim:itslate} we know that since $a(r) \in \argmax_{b} \widetilde{F}(R\cup S(r) \cup b)$ and the size of the smoothing neighborhood $t$ is sufficiently large then w.p. $\geq 1-1/n^4$:
$$ g_{S(r)}(a(r)) \geq (1-\gamma)\max_{b\notin H}g_{H\cup S(r)}(b)$$
We therefore have that:
\begin{align*}
g(S) 
& = \sum_{r=1}^{k''}g_{S{(r)}}(a_{r}) \\
& \geq \sum_{r=1}^{k''}(1-\gamma) \max_{b \notin H} g_{S(r) \cup H  }(b) \\
& \geq \sum_{r=1}^{k''}(1-\gamma) \max_{b \notin H} g_{S \cup H  }(b) \\
& = k'' (1-\gamma) \max_{b \notin H} g_{S \cup H }(b)  \\
& \geq k'' (1-\gamma) \max_{h \in H'_{ij}}g_{S \cup H }(h)  \\
& \geq \frac{k'' (1-\gamma) }{|H'_{ij}|} g_{S \cup H }(H'_{ij})  \\
& \geq \frac{(1-\gamma) k''}{\ell} g_{S \cup H }(H'_{ij})  \\
\end{align*} 
Since $g(T) = f_{R}(T)$ and $\gamma = 1/4$ this implies:
$$ f(R\cup S) -f(R)> \frac{k''}{2\ell}  \left ( f(R \cup H \cup H'_{ij}) - f(R\cup S) \right )$$
Since $T_j = R_j \cup S \cup H_j = R\cup  S \cup H$ we get:
$$f_{T_j}(H'_{ij}) < \frac{2\ell}{k''}f(T_j) = 2\beta f(T_j).  $$
If the iteration is not $\gamma$-relevant, assume first that $e\cdot \texttt{OPT[G]}_{H} \geq \texttt{OPT[G]}$.  
In this case, let $O_H = \argmax_{T:|T|\leq k''}g_{H}(T)$.  Notice that the fact that iteration is not relevant in this case says that there is an iteration $r$ for which $\max_{b \notin H}g_{H \cup S(r)}(b) < \gamma \texttt{OPT[G]}_{H}/k$ and from submodularity of $g$ since $S(r) \subseteq S$ we get $\max_{b \notin H}g_{H \cup S}(b) < \gamma \texttt{OPT[G]}_{H}/k$.  Thus:

\begin{align*}
g_{H\cup S}(O_{H})
& \leq k''\cdot g_{H\cup S}(b^\star) \\
& \leq k''\cdot \frac{\gamma \texttt{OPT[G]}_{H}}{k}\\
& < \gamma \texttt{OPT[G]}_{H}
\end{align*}
which implies:
\begin{align*}
g(H\cup S) 
& > g(O_{H} \cup H \cup S) - \gamma \texttt{OPT[G]}_{H} \\
& \geq g_H(O_{H})- \gamma \texttt{OPT[G]}_{H}\\
& = (1-\gamma)\texttt{OPT[G]}_{H}
\end{align*}
Using this bound we get:
\begin{align*}
g_{H\cup S}(H'_{ij}) 
& \leq |H'_{ij}| \max_{h\in H'_{ij}}g_{H\cup S}(h) \\
& \leq |H'_{ij}| \max_{b\notin H}g_{H\cup S}(b) \\
& \leq |H'_{ij}| \frac{\gamma}{k}\texttt{OPT[G]}_{H} \\
& < \frac{\gamma \ell}{k(1-\gamma)}g(H\cup S)
\end{align*}
Again, as before for $\delta = 1/4$ we get that in this case:
 $$f_{T_j}(H'_{ij}) < \frac{2\ell}{k''}f(T_j) = 2\beta f(T_j)  $$
Lastly, it remains to show that if if the iteration is not $\gamma$-relevant because $e\cdot \texttt{OPT[G]}_{H} < \texttt{OPT}[G]$, we get a contradiction to our assumption that $f(T_{j}) < (1-1/e-2\epsilon/3)\texttt{OPT}$.  To see this, let $O \in \argmax_{T:|T|\leq k''}g(T)$, and notice that:
\begin{align*}
g(H \cup O_H) - g(H) < \frac{g(O)}{e}
\end{align*}   
hence:
\begin{align*}
f(R\cup H) - f(R) 
& = g(H) \\
& > g(H\cup O_H) - \frac{g(O)}{e}\\
& \geq  \left (1-\frac{1}{e}\right )g(O)\\
& \geq \left (1-\frac{1}{e} \right )\left ( f(R\cup O)\right ) - f(R)
\end{align*}
We therefore get that $f(T_j) \geq f(R\cup H) > (1-1/e)f(O)$.  Notice that since $|O| = k''$ and ${k''/k \geq (1-2\epsilon/3)}$, submodularity  implies $f(T_j) \geq (1-1/e -2\epsilon/3)\texttt{OPT}$, a contradiction.
\end{proof}

\begin{claim}\label{lem:tournament}
For $k \ge 96 \ell /\epsilon^2$ suppose that $f(T_i) \geq (1+\epsilon\delta/3)f(T_{j})$ 
and that ${f(T_j)\leq(1-1/e-2\epsilon/3)\texttt{OPT}}$.  Then, $T_i$ wins in the smooth comparison procedure w.p. $\geq 1-2/n^3$.
\end{claim}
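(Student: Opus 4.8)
The plan is to show that when $f(T_i)$ exceeds $f(T_j)$ by a multiplicative factor of $1+\epsilon\delta/3$, essentially every one of the $2^\ell$ pairwise noisy comparisons $\widetilde f(T_i \cup H'_{ij})$ versus $\widetilde f(T_j \cup H'_{ij})$ favors $T_i$, so the majority vote favors $T_i$ with overwhelming probability. First I would invoke Claim~\ref{clm:slick}: since $k \ge 96\ell/\epsilon^2$ gives $\beta = |H_{ij}|/k'' = \ell/k'' \le \epsilon\delta/6$ (using $k'' = k - \ell/\delta$ and $\delta = \epsilon/6$), the hypothesis $f(T_i) \ge (1+\epsilon\delta/3)f(T_j) \ge (1+2\beta)f(T_j)$ together with $f(T_j) \le (1-1/e-2\epsilon/3)\texttt{OPT}$ lets me conclude that for every $H'_{ij} \subseteq H_{ij}$ we have $f(T_i \cup H'_{ij}) \ge f(T_j \cup H'_{ij})$ with probability $\ge 1-1/n^3$. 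In fact I want a slightly stronger quantitative version: tracing through the proof of Claim~\ref{clm:slick} one gets $f_{T_j}(H'_{ij}) < 2\beta f(T_j)$, so $f(T_j \cup H'_{ij}) < (1+2\beta)f(T_j) \le f(T_i) \le f(T_i \cup H'_{ij})$, i.e. there is a genuine multiplicative gap of at least $(1+\epsilon\delta/3)/(1+2\beta) \ge 1 + \epsilon\delta/12$ or so between the two true values — the key point is a \emph{constant-factor} (in $n$) multiplicative separation, not just $\ge$.

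Next I would turn the true-value separation into a noisy-comparison separation. For a fixed $H'_{ij}$, write $\widetilde f(T_i \cup H'_{ij}) = \xi_i f(T_i \cup H'_{ij})$ and $\widetilde f(T_j \cup H'_{ij}) = \xi_j f(T_j \cup H'_{ij})$ where $\xi_i, \xi_j$ are independent draws from $\mathcal{D}$ (independent because the two sets differ — $T_i$ and $T_j$ contain different smoothing sets $H_i \ne H_j$ — and by the consistency/i.i.d.-in-space assumption). The comparison $\widetilde f(T_i \cup H'_{ij}) \ge \widetilde f(T_j \cup H'_{ij})$ holds unless $\xi_i / \xi_j < f(T_j \cup H'_{ij})/f(T_i \cup H'_{ij}) \le 1/(1+\epsilon\delta/12)$. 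Since $\mathcal{D}$ has a generalized exponential tail (in particular $\xi_i$ is bounded above by $n^{o(1)}$ and below by $n^{-o(1)}$ with probability $1-e^{-\Omega(\cdot)}$ by Lemma~\ref{lem:upperBoundNoise}, and has non-vanishing density/mass near its support), the probability that a single such ratio falls below a fixed constant $1/(1+\epsilon\delta/12) < 1$ is bounded by some constant $p_0 < 1/2$; I should argue $p_0 < 1/2$ using that the comparison could only go wrong when $\xi_i$ is small or $\xi_j$ is large, and for a generalized-exponential-tail distribution $\Pr[\xi \le c\,\mathbb{E}[\xi]]$ and $\Pr[\xi \ge c'\,\mathbb{E}[\xi]]$ are each strictly less than $1/2$ for appropriate $c, c'$ — actually the cleanest route is: condition on the high-probability event from Lemma~\ref{lem:upperBoundNoise} that all $2^{\ell+1}$ noise multipliers lie in $[m^{-\delta'}, m^{\delta'}]$ for $m = 2^\ell$, and on that event note that over $2^\ell$ independent comparisons each of which fails with probability at most some $p_0 < 1/2$, a Chernoff/Hoeffding bound gives that the majority favors $T_i$ except with probability $e^{-\Omega(2^\ell)}$.

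Finally I would assemble the bound: $2^\ell = \log^{33} n$ (or $n^{25}$ in the large-$k$ case), so $e^{-\Omega(2^\ell)}$ is at most $1/n^3$ for large $n$; adding the $1/n^3$ failure probability of Claim~\ref{clm:slick} (applied once, or via the strengthened per-$H'_{ij}$ bound it is already baked in) and the failure probability of Lemma~\ref{lem:upperBoundNoise}, a union bound yields the claimed $1-2/n^3$. The main obstacle I expect is the middle step: establishing that a single noisy comparison fails with probability bounded by a constant \emph{strictly below} $1/2$, uniformly over the generalized-exponential-tail family — this requires care because the relevant ratio $\xi_i/\xi_j$ can be close to $1$, so one cannot afford to lose a factor, and one must genuinely use the fixed (independent of $n$) multiplicative gap $1+\epsilon\delta/12$ coming from Claim~\ref{clm:slick} rather than treating the two true values as merely ordered. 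Once that constant bound is in hand, the concentration over the $2^\ell$ votes is routine.
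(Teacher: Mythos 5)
Your proposal gets the skeleton right — invoke Claim~\ref{clm:slick} to convert the $(1+\epsilon\delta/3)$ gap between $f(T_i)$ and $f(T_j)$ into a multiplicative separation between $f(T_i\cup H'_{ij})$ and $f(T_j\cup H'_{ij})$ for every $H'_{ij}$, and then argue that among the $2^\ell$ independent noisy comparisons the majority favors $T_i$ by Chernoff — and this is exactly the structure of the paper's proof. Where you diverge is precisely the step you flag as the ``main obstacle'': bounding the per-comparison failure probability. You try to show a \emph{constant} advantage $p_0<1/2$ per comparison. That is in fact true for any fixed generalized-exponential-tail $\mathcal{D}$ (the density is positive on an interval, so $\Pr[\xi_j/(1+\gamma)\le\xi_i<\xi_j]>0$), but the resulting constant is distribution-dependent and pinning it down would require an explicit density argument. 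The paper takes a slicker, density-free route: condition on the realized multipliers satisfying $\nu_{\max}/\nu_{\min}\le n^\tau$ (Lemma~\ref{lem:upperBoundNoise}), discretize $[\nu_{\min},\nu_{\max}]$ into $m=\log_{1+2\beta}(\nu_{\max}/\nu_{\min})=O(\log n)$ geometric bins of ratio $1+2\beta$, and observe that if $\xi_i$ lands in the same or higher bin than $\xi_j$ then $(1+2\beta)\xi_i\ge\xi_j$; by i.i.d.\ symmetry plus Cauchy–Schwarz the probability of this is at least $\tfrac12+\tfrac{1}{2m}=\tfrac12+\Omega(1/\log n)$. This gives a universal (not distribution-dependent) per-comparison advantage, which is only $\Theta(1/\log n)$ rather than a constant, but since $2^\ell$ is at least $\log^{33}n$ the Chernoff tail $\exp(-\Omega(2^\ell/\log^2 n))$ still crushes $1/n^3$. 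So your route would likely go through with extra work; the paper's discretization avoids the density analysis entirely, which is the cleaner move and is also why the incipient attempt in your second paragraph to lean on Lemma~\ref{lem:upperBoundNoise} before falling back to a constant $p_0$ feels muddled — once you have the polynomial range bound, the discretization argument already gives you everything you need.
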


\begin{proof}
Let $\beta = |H_{ij}|/k''$ where $k'' = k-(|H_{ij}|+|R_{i}|)$.  Since we assume that $k \ge 96 \ell$ and $\delta=\epsilon/6$ this implies that $2\beta < \epsilon^2/45$.  We therefore have:
$$f(T_i) > \left (1+\frac{\epsilon\delta}{3}\right )f(T_j) = \left (1+\frac{\epsilon^2}{18} \right )f(T_j)  > \left (1+\frac{\epsilon^2}{45} \right)^2 f(T_j) > \left (1+ 2\beta\right )^2f(T_j)$$
From Claim~\ref{clm:slick} this implies that for any $H'_{ij} \subseteq H_{ij}$ we have that with probability at least $1-1/n^3$:
$$f(T_j \cup H'_{ij}) \leq (1+2\beta) f(T_j \cup H'_{ij}) $$
We will condition on this event as well as the event that the maximal value obtained throughout the iterations of the algorithm is $\nu_{\max}$ and minimal value is $\nu_{\min}$, and that $\nu_{\max}/\nu_{\min}\leq n^{\tau}$ for some constant $\tau>0$.  
\begin{align*}
\Pr & \left [  \widetilde{f}(T_i \cup H'_{ij}) \geq \widetilde{f}(T_j \cup H'_{ij})  \Big | f(T_i) \geq \left(1+\frac{\epsilon\delta}{3}\right) f(T_j)    \right ]\\
=  \Pr & \left [  \xi_{i}{f}(T_i \cup H'_{ij}) \geq \xi_j {f}(T_j \cup H'_{ij})  \Big  | f(T_i) \geq \left (1+\frac{\epsilon\delta}{3}\right)   f(T_j)    \right ]\\
 >  \Pr & \left [  (1+2\beta) \cdot \frac{\xi_{i}}{\xi_j} \geq 1  \    \right ]\\
 \geq   \frac{1}{2} \ &+ \frac{1}{2\log_{1+2\beta} (\frac{\nu_{max}}{\nu_{min}})}
\end{align*}
The last inequality follows from a discretization argument:  Consider the $m \in O(\log n)$ intervals, where the $i$'th interval is $[\nu_{\min}(1 + 2\beta)^i , \nu_{\min}(1 + 2\beta)^{i+1}]$, and $i$ ranges from $0$ to $\log_{1 +2\beta}(\frac{\nu_{max}}{\nu_{\min}})$.  Due to symmetry of $\xi_i$ and $\xi_j$, the likelihood of $\xi_i$ falling in the same or higher interval than $\xi_j$ is: 
$$\frac{\sum_{i=1}^m i}{m^2} = \frac{1}{2}+\frac{1}{2m} = \frac{1}{2} + \frac{1}{2\log_{1+2\beta} (\frac{\nu_{max}}{\nu_{min}})} = \frac{1}{2}+\frac{1}{2\tau \log_{1+2\beta}n}$$
%
Applying a Chernoff bound, for any constants $\epsilon,\delta>0$, s.t. $\epsilon\delta/8 > 1+2\beta$, and $\nu_{\max}/\nu_{\min}\leq n^{\tau}$ for some constant $\tau>0$, we get that $T_i$ is chosen with probability at least $1-\exp(-\Omega(n/\log(n)))$, conditioned on $\nu_{\max}/\nu_{\min}<n^{\tau}$ which by Lemma~\ref{lem:upperBoundNoise} occurs with probability $1-\exp(-\Omega(n^\alpha))$ for some constant $\alpha>0$.  For sufficiently large $n$, $T_i$ therefore wins w.p. at least $1-2/n^3$.  
\end{proof}

\begin{proof}[\textbf{Proof of Lemma~\ref{lem:boosting}}]
Since $\forall i,j \in [1/\delta]$ $\textsc{Smooth-Compare}(\{T_i,T_j\},H_{ij})$ returns $T_i$ as long as $f(T_i) \geq (1-\epsilon\delta/3)f(T_j)$ and $f(T_j)<(1-1/e-2\epsilon/3)\texttt{OPT}$, and \textsc{Smooth-Compare} is called $1/\delta$ times we get: 
\begin{align*}
f(T_i) 
& \geq &\left (1-\frac{\epsilon\delta}{3} \right )&^{1/\delta} & \times & &\min&\left \{\left(1-\frac{1}{e}-\frac{2\epsilon}{3}\right)\texttt{OPT},\max_{j \in [1/\delta]}f(T_j) \right \} \hspace{0.6in}\\
& \geq &\left (1-\frac{\epsilon}{3} \right )& 	&\times&				   &\min&\left \{ \left(1-\frac{1}{e}-\frac{2\epsilon}{3}\right)\texttt{OPT},\max_{j \in [1/\delta]}f(T_j)\right \}.\hspace{0.6in}
\qedhere
\end{align*}
\end{proof}

\newpage

\section{Optimization for Small $k$}\label{sec:appendix_smallk}

\subsection*{Smoothing Guarantees}

\begin{lemma}[\ref{lem:avg}]
For any $\epsilon>0$ and any set $S \subset N$, let $A^\star \in \arg\max_{A:|A|=1/\epsilon}f_{S}(A)$.  Then:
$$ \left( 1- \epsilon\right)f_{S}(A^\star)\leq \score_{S}(A^\star)\leq f_{S}(A^{\star}).$$
\end{lemma}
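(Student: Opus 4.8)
The plan is to exploit the defining structure of $\score_S(A^\star)$: it is the average over all $t = c(n-c-|S|)$ sets $A^\star_{ij}$ of $f_S(A^\star_{ij})$, where $A^\star_{ij} = (A^\star \setminus \{a_i\}) \cup \{a_j\}$ with $a_i \in A^\star$ and $a_j \notin S \cup A^\star$, and here $c = 1/\epsilon$. First I would dispatch the upper bound $\score_S(A^\star) \le f_S(A^\star)$: since $A^\star$ is the maximizer of $f_S$ over all sets of size $c = 1/\epsilon$, and every set $A^\star_{ij}$ is again of size $c$, we have $f_S(A^\star_{ij}) \le f_S(A^\star)$ for every $i,j$, hence their average is at most $f_S(A^\star)$.

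For the lower bound, the key step is the natural submodular property mentioned in the text: removing a uniformly random element from a set of size $c$ does not lose more than a $1/c = \epsilon$ fraction of its (marginal) value, in expectation. Concretely, I would first argue that $\frac{1}{c}\sum_{a_i \in A^\star} f_S(A^\star \setminus \{a_i\}) \ge (1 - 1/c)\, f_S(A^\star)$. This follows because, writing $A^\star = \{a_1,\dots,a_c\}$ and using submodularity, the marginal of $a_i$ to $A^\star \setminus \{a_i\}$ telescopes: $\sum_i \big(f_S(A^\star) - f_S(A^\star \setminus \{a_i\})\big) = \sum_i f_{S \cup (A^\star \setminus \{a_i\})}(a_i) \le \sum_i f_{S \cup \{a_1,\dots,a_{i-1}\}}(a_i) = f_S(A^\star)$, where the inequality is diminishing returns. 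Rearranging gives the claim. Then, since $f$ is monotone, $f_S(A^\star_{ij}) = f_S((A^\star \setminus \{a_i\}) \cup \{a_j\}) \ge f_S(A^\star \setminus \{a_i\})$ for every $j$, so averaging over $j \notin S \cup A^\star$ and then over $i$ yields
$$\score_S(A^\star) = \frac{1}{c}\sum_{a_i \in A^\star} \frac{1}{n-c-|S|}\sum_{a_j \notin S\cup A^\star} f_S(A^\star_{ij}) \ge \frac{1}{c}\sum_{a_i \in A^\star} f_S(A^\star \setminus \{a_i\}) \ge (1-\epsilon) f_S(A^\star).$$

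The only mild obstacle is getting the telescoping/diminishing-returns inequality stated cleanly for $f_S$ rather than $f$; but since $f$ submodular (and monotone) implies $f_S$ submodular (and monotone) for any fixed $S$ — the paper already uses this fact, e.g. in Claim~\ref{lem:black_and_white} — this is routine. I do not anticipate any real difficulty: the whole argument is two short averaging steps plus one telescoping inequality, and both bounds are tight for additive $f$, which is a good sanity check. If a cleaner route is wanted, one can instead observe directly that a uniformly random size-$(c-1)$ subset $A^\star \setminus \{a_i\}$ has expected value $\ge (1-1/c) f_S(A^\star)$ by a symmetrization argument over random orderings, which is the same computation packaged differently.
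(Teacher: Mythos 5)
Your proposal is correct and matches the paper's own proof essentially step for step: the upper bound via maximality of $A^\star$, the lower bound via monotonicity ($f_S(A^\star_{ij}) \geq f_S(A^\star \setminus \{a_i\})$) combined with the telescoping diminishing-returns inequality showing $\sum_i f_{S\cup(A^\star\setminus\{a_i\})}(a_i) \leq f_S(A^\star)$. The only difference is the order in which you combine the two ingredients, which is immaterial.
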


\begin{proof}\label{lem:avg_proof}
By the maximality of $A^{\star}$ we have that $f(A^\star) \geq f(A^\star_{ij})$ for any $i,j$ since $A^{\star}_{ij}$ is generated by replacing $a_i \in A^{\star}$ with $a_{j} \notin A^{\star} \cup S$.   Therefore, the average of all $A_{ij}$s is upper bounded by $f_{S}(A^\star)$.

For the lower bound, let $c = 1/\epsilon$ and consider some arbitrary ordering on $a_{1},\ldots,a_{c} \in A^{\star}$.  Define $A_{\text{-}i} = A \setminus \{a_i\}$.  From the diminishing returns property we get that for any $i \in [c]$:
\begin{align*}
f_{S \cup A^{\star}_{\text{-}i}}(a_i) && = 	&&f(S \cup A^{\star}_{\text{-}i} \cup a_i) &&- &&	    f(S\cup A^{\star}_{\text{-}i})		&&\\
									&& \leq 	&&f(S \cup \{a_{1}\ldots,a_i\})			 && -  &&f(S \cup \{a_{1},\ldots,a_{i-1}\})&&
\end{align*}
Thus:
\begin{align}
\sum_{i=1}^{c} f_{S\cup A^\star_{\text{-}i}}(a_i)
\leq \sum_{i=1}^{c} \left ( f(S \cup \{a_{1}\ldots,a_i\}) - f(S \cup \{a_{1},\ldots,a_{i-1}\}) \right )
= f_{S}(A^{\star})\label{getlucky}
\end{align}
By summing over all $A^\star_{\text{-}i}$ we get the desired bound:
\begin{align*}
\score_{S}(A^\star)
& = \frac{1}{c(n-c-|S|)}\sum_{j=1}^{n-c-|S|}\sum_{i=1}^{c}f_S(A^{\star}_{ij})\\
& \geq \frac{1}{c}\sum_{i=1}^{c}f_S(A^{\star}_{\text{-}i})\rmk{monotonicity, since $A^\star_{\text{-}i}\subset A^\star_{ij}$}\\
& = \frac{1}{c}\sum_{i=1}^{c}\left ( f_S(A^{\star}_{\text{-}i} \cup a_i) - f_{S \cup A^\star_{\text{-}i}}(a_i) \right )\\
& =  \frac{1}{c}\sum_{i=1}^{c} f_S(A^{\star}) - \frac{1}{c}\sum_{i=1}^{c}f_{S \cup A^\star_{\text{-}i}}(a_i) \\
& \geq f_{S}(A^{\star})  - \frac{1}{c}f_{S}(A^{\star})\rmk{by~(\ref{getlucky})}\\
& =  \left (1- \frac{1}{c} \right)f_{S}(A^{\star})\rmk{}\\
& =  \left (1- \epsilon \right)f_{S}(A^{\star}).\rmk{}\qedhere
\end{align*}
\end{proof}

\paragraph{The smoothing lemma.}  The rest of this subsection is devoted to proving the following important lemma.  Intuitively, this lemma implies that at every iteration of \textsc{SM-Greedy} we identify the bundle which nearly maximizes the mean marginal contribution.

\begin{lemma*}[\ref{algoworks}]
Let $A \in \argmax_{B:|B|=c} \widetilde{\score}(S\cup B)$ where $c \geq \frac{16}{\epsilon}$, and assume that the iteration is $\frac{\epsilon}{4}$-significant.  Then, with probability at least $1-e^{-\Omega(n^{{1}/{10}})}$ we have that:
$$\score_{S}(A) \geq (1-\epsilon) \max_{B:|B|=c}F_S(B).$$
\end{lemma*}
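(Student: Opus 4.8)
Looking at this lemma, I need to prove that the bundle $A$ maximizing the noisy mean value $\widetilde{F}(S \cup B)$ over bundles of size $c$ has mean marginal contribution $F_S(A)$ close to $\max_B F_S(B)$.

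Let me think about the structure. This is a smoothing guarantee. The smoothing neighborhood of a bundle $A$ (given $S$) consists of the sets $A_{ij} = (A \setminus \{a_i\}) \cup \{a_j\}$ for $a_i \in A$, $a_j \notin S \cup A$. So $\mathcal{H}(A) = \{A_{ij}\}$, and $\widetilde{F}(S \cup A) = \frac{1}{t}\sum \widetilde{f}(S \cup A_{ij})$.

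The plan would be to use the smoothing machinery from Appendix~\ref{sec:appendix_smoothing}, specifically Lemma~\ref{lem:noisysmoothing_lowerbound} (lower bound on noisy smooth value in terms of variation) and Lemma~\ref{lem:boundonb} (upper bound). The issue is that smoothing is being applied here a bit indirectly — note the surrogate $F$ is itself an average, and $\widetilde{F}$ involves the noise on individual sets $A_{ij}$, not the noise on $F$. So I need to be careful.

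Key steps in order:

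First, let $A^\star \in \argmax_{B:|B|=c} f_S(B)$ — the bundle with maximum true marginal contribution. By Lemma~\ref{lem:avg}, $(1-\epsilon)f_S(A^\star) \leq F_S(A^\star) \leq f_S(A^\star)$. This means $f_S(A^\star)$ is a good proxy. Also since $A^\star$ maximizes $f_S$ over all bundles of size $c$, and every $A^\star_{ij}$ is also a bundle of size $c$ — wait, $|A^\star_{ij}| = c$. So $f_S(A^\star_{ij}) \leq f_S(A^\star)$ for all $i,j$. This bounds $\max_{T \in \mathcal{H}(A^\star)} f_S(T) \leq f_S(A^\star)$, i.e., $\alpha_{\max}$ for $A^\star$'s neighborhood is at most $f_S(A^\star)$.

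Second, I need to lower bound the minimum value $\min_{T \in \mathcal{H}(A^\star)} f_S(T)$ to bound the variation $v_S(\mathcal{H}(A^\star))$. This is where the $\epsilon/4$-significance comes in and where Claim~\ref{cl:bound_on_var} (mentioned in the text but not in the excerpt) is needed — it bounds the variation in values of sets $A^\star_{ij}$. Intuitively, removing one element from a bundle of size $c$ and adding another: by submodularity the marginal contribution of a single replacement can't drop things too much relative to the average. Since the iteration is significant, $f_S(A^\star) \geq \frac{\epsilon c \cdot \texttt{OPT}}{4k}$ (with the $\epsilon/4$ parameter and the $B$ being the max bundle), and the values can't be too spread out, so the variation is bounded by something like $O(k/\epsilon)$ or a constant depending on $\epsilon$.

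Third, apply Lemma~\ref{lem:noisysmoothing_lowerbound} to $A^\star$'s neighborhood: since $v_S(\mathcal{H}(A^\star))$ is bounded and $t = c(n-c-|S|) = \Omega(n)$ is large, we get $\widetilde{F}(S \cup A^\star) > (1-\lambda)\mu(f(S) + (1-\eps')F_S(A^\star))$ w.h.p. for small $\eps'$ vanishing with $n$. Then apply Lemma~\ref{lem:boundonb} to the neighborhood of $A$ (the selected bundle): $\widetilde{F}(S \cup A) < (1+\lambda)\mu(f(S) + F_S(A) + 3t^{-1/4}\alpha_{\max}(A))$. Here $\alpha_{\max}(A) = \max_{T \in \mathcal{H}(A)} f_S(T) \leq f_S(A^\star)$ since every set in $\mathcal{H}(A)$ is a bundle of size $c$ and $A^\star$ is the max over all such bundles. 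So the additive error term is $\leq 3t^{-1/4}f_S(A^\star) \leq 3t^{-1/4} \cdot \frac{1}{1-\epsilon}F_S(A^\star)$, negligible relative to $F_S(A^\star)$.

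Fourth, combine: since $A$ maximizes $\widetilde{F}$, $\widetilde{F}(S\cup A) \geq \widetilde{F}(S \cup A^\star)$. Chaining the inequalities:
\begin{align*}
(1+\lambda)\mu\left(f(S) + F_S(A) + 3t^{-1/4}f_S(A^\star)\right) &> \widetilde{F}(S\cup A) \geq \widetilde{F}(S\cup A^\star) \\
&> (1-\lambda)\mu\left(f(S) + (1-\epsilon')F_S(A^\star)\right).
\end{align*}
Rearranging and using $f(S) \geq 0$, the $\lambda$ and $t^{-1/4}$ terms vanish for large $n$ and suitable $\lambda = \text{poly}(\epsilon/k)$ or even $\lambda$ constant times $\epsilon$, giving $F_S(A) \geq (1 - O(\epsilon + \lambda + t^{-1/4}))F_S(A^\star)$. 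Finally, $F_S(A^\star) \geq (1-\epsilon) f_S(A^\star) \geq (1-\epsilon)\max_B f_S(B) \geq (1-\epsilon)\max_B F_S(B)$ — wait, need the direction right: $F_S(B) \leq f_S(B) \leq f_S(A^\star)$ for all bundles $B$ of size $c$ (since $A^\star$ maximizes $f_S$), so $\max_B F_S(B) \leq f_S(A^\star) \leq \frac{1}{1-\epsilon}F_S(A^\star)$, hence $F_S(A) \geq (1-O(\epsilon))F_S(A^\star) \geq (1-O(\epsilon))(1-\epsilon)\max_B F_S(B)$. Adjusting constants (rescaling $\epsilon$ by the constant factor, using $c \geq 16/\epsilon$) gives exactly $(1-\epsilon)$.

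The main obstacle is the second step: controlling the variation $v_S(\mathcal{H}(A^\star))$, i.e., showing the values $\{f_S(A^\star_{ij})\}$ don't span too wide a range. A single $A^\star_{ij}$ could in principle have small $f_S$ value if $a_i$ is crucial and $a_j$ is useless. The significance assumption ensures the average (which is $F_S(A^\star) \geq (1-\epsilon)f_S(A^\star)$) is bounded below by $\Omega(\epsilon c \cdot \texttt{OPT}/k)$, but I need a pointwise lower bound. This requires a combinatorial argument (Claim~\ref{cl:bound_on_var}): using submodularity, $f_S(A^\star_{ij}) = f_S(A^\star) - f_{S \cup A^\star_{-i}}(a_i) + f_{S \cup A^\star_{ij}\setminus a_j}(a_j) \geq f_S(A^\star) - f_{S}(a_i)$, and by maximality of $A^\star$ together with the marginal-decomposition argument in Lemma~\ref{lem:avg}, no single element $a_i$ contributes more than roughly $f_S(A^\star)$ on average, but to get a worst-case bound one argues the largest single marginal is at most $f_S(A^\star)$ times a constant — or one bins/averages more carefully. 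Handling the potentially zero or tiny minimum values (where variation is infinite) requires the max-with-zero treatment already built into Lemmas~\ref{lem:noisysmoothing_lowerbound} and~\ref{lem:boundonb}, but the lower bound lemma genuinely needs bounded variation, so this estimate is the crux; the rest is bookkeeping with the smoothing lemmas and choosing $\lambda$ appropriately so the failure probability is $e^{-\Omega(t^{1/4}/\omega)} = e^{-\Omega(n^{1/4}/\log n)} = e^{-\Omega(n^{1/10})}$.
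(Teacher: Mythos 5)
Your overall strategy — lower-bound $\widetilde{F}(S\cup A^\star)$ via Lemma~\ref{lem:noisysmoothing_lowerbound}, upper-bound $\widetilde{F}(S\cup B)$ via Lemma~\ref{lem:boundonb}, use $\widetilde{F}(S\cup A) \geq \widetilde{F}(S\cup A^\star)$, and chain — is the same as the paper's (the paper phrases it as ``$A^\star$ beats every bad $B$,'' but after the union bound over $O(n^c)$ candidates this is the same argument; note you do need that union bound, since $A$ is chosen adaptively and Lemma~\ref{lem:boundonb} only holds pointwise for a fixed set).

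The genuine gap is in the step you correctly flag as the crux: bounding the variation of $\mathcal{H}(A^\star)$. Your two proposed resolutions both fail. A pointwise lower bound $f_S(A^\star_{ij}) \geq \text{const}\cdot f_S(A^\star)$ is false — take a unit-demand-like $f$ where one $a_i \in A^\star$ carries essentially all of $f_S(A^\star)$ and the rest are worthless; then $f_S(A^\star_{ij}) \approx 0$ for that $i$ and every $j$, so the minimum over the full neighborhood is $\approx 0$ and $v_S(\mathcal{H}(A^\star))$ is unbounded. The max-with-zero clause in Lemma~\ref{lem:noisysmoothing_lowerbound} does not rescue you here either: if you apply the lemma to the whole neighborhood $\mathcal{H}(A^\star)$ at once, unbounded variation makes the max-with-zero factor $0$, giving only $\widetilde{F}(S\cup A^\star) \gtrsim \mu f(S)$, which is useless. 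What Claim~\ref{cl:bound_on_var} actually does is decompose $\widetilde{F}(S\cup A^\star) = \frac{1}{c}\sum_{i=1}^{c}\widetilde{\mathbf{F}}(S\cup A^\star_{\text{-}i})$, apply the lower-bound lemma separately to each sub-neighborhood $\mathcal{H}(A^\star_{\text{-}i}) = \{A^\star_{ij}\}_j$, and observe by subadditivity that at most one index $i$ can have $f_S(A^\star_{\text{-}i}) < f_S(A^\star)/2$ (if two did, then $f_S(A^\star) \leq f_S(A^\star_{\text{-}i}) + f_S(A^\star_{\text{-}j}) < f_S(A^\star)$, since $A^\star = A^\star_{\text{-}i}\cup A^\star_{\text{-}j}$). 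The remaining $c-1$ sub-neighborhoods have variation $\leq 2$, and dropping the one bad index costs only a $(c-2)/c$ factor, absorbed because $c\geq 16/\epsilon$. So the decomposition into the $c$ sub-neighborhoods is essential, not cosmetic.

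A smaller point: in the final rearrangement you say the $f(S)$ term vanishes ``using $f(S)\geq 0$.'' That direction of inequality does not help — the error term is $-2\lambda f(S)$, and $f(S)$ can be as large as $\texttt{OPT}$. You need significance again here: $\epsilon/4$-significance plus subadditivity gives $f_S(A^\star) \geq \frac{\epsilon}{4k}f(S)$, so the $\lambda f(S)$ term is absorbed precisely when $\lambda \in O(\epsilon^2/k)$; a constant multiple of $\epsilon$ is not small enough when $k = \Theta(\log\log n)$.
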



\paragraph{Smoothing neighborhoods.}  The proof uses the smoothing arguments developed in Section~\ref{sec:appendix_smoothing}.  Recall that for a given set of elements $A\subseteq N$ a \emph{smoothing function} is a method which assigns $A$ a family of sets $\mathcal{H}(A)$ called the \emph{smoothing neighborhood}.  For a given function ${f:2^{N} \to \mathbb{R}}$, $A,S \subseteq N$, and smoothing neighborhood $\mathcal{H}(A)$ we define:
\begin{align*}
\hspace{1.6in}&&(1)\hspace{0.4in}&& \mathbf{F}_{S}(A)   			 	 &&:= &&\mathbb{E}_{X\in \mathcal{H}(A)}&& 	[ && f_{S}(X) && ];&&\hspace{2in}\\ 
\hspace{1.6in}&&(2)\hspace{0.4in}&& \mathbf{F}(S\cup A) 			 	 && := &&\mathbb{E}_{X\in \mathcal{H}(A)}&&   [ && f(S \cup X) &&  ];&&\hspace{2in}\\
\hspace{1.6in}&&(3)\hspace{0.4in}&& \mathbf{\widetilde{F}}(S\cup A) 	 &&:= &&\mathbb{E}_{X\in \mathcal{H}(A)} &&	 [ &&\widetilde{f}(S \cup X) &&  ].&&\hspace{2in}
\end{align*}
Note that $\mathbf{F}(A) \neq {F}(A)$.  In particular, as discussed above, we do not apply smoothing on the noisy version of $F$ directly, but rather on the noisy version of the function $\mathbf{F}$ which is applied on $\Ai := A \setminus \{a_i\}$, for all $i \in [c]$:
$$ \widetilde{\mathbf{F}}(S \cup \Ai) := \frac{1}{n - c - |S|}\sum_{j \notin S\cup A}\widetilde{f}(S\cup \Ai \cup \{a_j\})$$
Notice that the smoothing arguments then apply to $F$ since:
$$\widetilde{F}(S \cup A)  = \frac{1}{c}\sum_{i=1}^{c}\widetilde{\mathbf{F}}(S \cup \Ai)$$
In our case, for every $\Ai$, its smoothing neighborhood is:
$$\mathcal{H}(\Ai) = \left \{ \Ai \cup \{a_j\}  \ : \ j \notin S\cup A  \right \}$$
Throughout the rest of this section we will use $t$ to denote the number of sets in a smoothing neighborhood of $\mathcal{H}(\Ai)$.  Note that for every $i \in [c]$ the size of a smoothing neighborhood is:
$$t= |\mathcal{H}(\Ai)| = |N \cup (S \setminus A)| = n- c - |S|  \in O(n).$$

\paragraph{Smoothing in the sampled mean method.}  In order to apply Lemma~\ref{lem:noisysmoothing_lowerbound} in a meaningful way we need to bound the variation of the neighborhoods $\mathcal{H}(A^\star_{\text{-}i})$.  To do so, we use the next claim which essentially bounds the variation of the smoothing neighborhoods $\mathcal{H}(A^\star_{\text{-}i})$, of \emph{almost} all $A^\star_{\text{-}i}$.  

\begin{claim}\label{cl:bound_on_var}
Let $A^\star \in \argmax_{B:|B|=c} f_{S}(B)$, $c \geq 4/\epsilon$.  Then:
$$\frac{1}{c}\sum_{i=1}^c\max \left \{0, 1-2v_S(\mathcal{H}(A^\star_{\text{-}i}))\cdot t^{-1/4}  \right \} \mathbf{F}_{S}(A^\star_{\text{-}i})\geq \left (1- \epsilon \right)f_{S}(A^\star).$$
\end{claim}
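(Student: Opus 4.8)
The plan is to split the indices $i\in[c]$ into \emph{good} ones, where the truncation factor $\max\{0,\,1-2v_S(\mathcal H(A^\star_{\text{-}i}))\,t^{-1/4}\}$ is at least $1-\epsilon/2$, and \emph{bad} ones, where $v_S(\mathcal H(A^\star_{\text{-}i}))$ is large; the crux will be that there is at most one bad index. Concretely, call $i$ \emph{good} if $v_S(\mathcal H(A^\star_{\text{-}i}))\le\tfrac{\epsilon}{4}t^{1/4}$ and \emph{bad} otherwise, abbreviate $\mathbf F_i:=\mathbf F_S(A^\star_{\text{-}i})$, and note that $\mathcal H(A^\star_{\text{-}i})=\{A^\star_{ij}:a_j\notin S\cup A^\star\}$, so $\mathbf F_i=\tfrac1t\sum_{j\notin S\cup A^\star} f_S(A^\star_{ij})$ and $\tfrac1c\sum_{i=1}^c\mathbf F_i=\score_S(A^\star)$. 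We may assume $f_S(A^\star)>0$, since otherwise every term vanishes and the inequality is trivial.

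First I would record three elementary facts. (a) By maximality of $A^\star$ among size-$c$ sets, $f_S(A^\star_{ij})\le f_S(A^\star)$ for all $i,j$, so $\max_{T\in\mathcal H(A^\star_{\text{-}i})}f_S(T)\le f_S(A^\star)$ and hence $\mathbf F_i\le f_S(A^\star)$. (b) By monotonicity, $A^\star_{\text{-}i}\subseteq A^\star_{ij}$ gives $\min_{T\in\mathcal H(A^\star_{\text{-}i})}f_S(T)\ge f_S(A^\star_{\text{-}i})$. (c) The telescoping bound underlying Lemma~\ref{lem:avg}: for any ordering $a_1,\dots,a_c$ of $A^\star$, diminishing returns gives $f_{S\cup A^\star_{\text{-}i}}(a_i)\le f_{S\cup\{a_1,\dots,a_{i-1}\}}(a_i)$ since $\{a_1,\dots,a_{i-1}\}\subseteq A^\star_{\text{-}i}$, and summing telescopes to $\sum_{i=1}^c f_{S\cup A^\star_{\text{-}i}}(a_i)\le f_S(A^\star)$, equivalently $\sum_{i=1}^c\big(f_S(A^\star)-f_S(A^\star_{\text{-}i})\big)\le f_S(A^\star)$.

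Next, for a bad index, facts (a) and (b) give $f_S(A^\star_{\text{-}i})\le\min_T f_S(T)=\max_T f_S(T)/v_S(\mathcal H(A^\star_{\text{-}i}))<\tfrac4\epsilon t^{-1/4}f_S(A^\star)$, hence $f_{S\cup A^\star_{\text{-}i}}(a_i)=f_S(A^\star)-f_S(A^\star_{\text{-}i})>\tfrac12 f_S(A^\star)$ for $n$ sufficiently large (recall $t=n-c-|S|=\Theta(n)$); plugging this into (c) forces the number of bad indices to be at most $1$. For good indices the truncation factor is $\ge1-\epsilon/2$ by definition, so, dropping the (at most one) bad term and using $\mathbf F_i\le f_S(A^\star)$,
\[
\frac1c\sum_{i=1}^c\max\{0,\,1-2v_S(\mathcal H(A^\star_{\text{-}i}))\,t^{-1/4}\}\,\mathbf F_i
\;\ge\;\frac{1-\epsilon/2}{c}\Big(\textstyle\sum_{i=1}^c\mathbf F_i-f_S(A^\star)\Big)
\;=\;\big(1-\tfrac\epsilon2\big)\Big(\score_S(A^\star)-\tfrac1c f_S(A^\star)\Big).
\]
Finally I would invoke Lemma~\ref{lem:avg} with $1/c$ in place of its $\epsilon$ to get $\score_S(A^\star)\ge(1-1/c)f_S(A^\star)$, so the bound above is $\ge(1-\tfrac\epsilon2)(1-\tfrac2c)f_S(A^\star)\ge(1-\tfrac\epsilon2-\tfrac2c)f_S(A^\star)\ge(1-\epsilon)f_S(A^\star)$, using $c\ge4/\epsilon$.

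The step I expect to be the main obstacle is the claim that at most one index is bad: it uses precisely the diminishing-returns / telescoping bound (c) together with $t=\Theta(n)$ (so $t^{-1/4}$ is negligible), which is what rules out two indices each absorbing more than half of $f_S(A^\star)$ as a single marginal. Everything else is routine bookkeeping with monotonicity, submodularity, and Lemma~\ref{lem:avg}; as in the rest of this section, the argument tacitly requires $n$ large enough that $\tfrac4\epsilon t^{-1/4}<\tfrac12$.
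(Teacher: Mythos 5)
Your proof is correct and follows the same overall strategy as the paper's: isolate at most one index whose neighborhood has large variation, discard it, and use $\mathbf{F}_{S}(A^\star_{\text{-}i}) \le f_S(A^\star)$ together with the Lemma~\ref{lem:avg} bound to close the gap. The one place you deviate technically is in showing that at most one index can be bad. The paper thresholds on the value $f_S(A^\star_{\text{-}i})$ and applies subadditivity of $f_S$ to $A^\star = A^\star_{\text{-}i} \cup A^\star_{\text{-}j}$ to conclude directly that at most one index has $f_S(A^\star_{\text{-}i}) < f_S(A^\star)/2$, then deduces $v_S(\mathcal{H}(A^\star_{\text{-}i})) \le 2$ for the remaining $c-1$ indices. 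You instead threshold directly on the variation (bad iff $v_S > \tfrac{\epsilon}{4}t^{1/4}$) and rule out a second bad index via the telescoping bound $\sum_{i} f_{S\cup A^\star_{\text{-}i}}(a_i) \le f_S(A^\star)$. Both are valid and lead to a comparable large-$n$ assumption ($t \gtrsim (8/\epsilon)^4$); the paper's subadditivity step is slightly more self-contained since it makes no reference to $t$ at that point, while your variant reuses the diminishing-returns telescoping that already underlies Lemma~\ref{lem:avg}. The remaining bookkeeping — dropping the bad term, bounding it by $f_S(A^\star)$, invoking Lemma~\ref{lem:avg} with $c = 1/\epsilon'$, and the closing $(1-\epsilon/2)(1-2/c) \ge 1-\epsilon$ arithmetic using $c \ge 4/\epsilon$ — matches the paper's.
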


\begin{proof}
To bound the average variation of the sets $\{A^{\star}_{\text{-}i}\}_{i=1}^c$ we argue that at most one set $A^\star_{\text{-}i}$ will be s.t. $f_{S}(\Ai^\star) < f_{S}(A^\star)/2$.  To see this, assume for purpose of contradiction there are $\Ai^\star$ and $A_{\text{-}j}^\star$ for which $f_S(\Ai^{\star})\leq f_S(A_{\text{-}j}^{\star})< f_{S}(A^{\star})/2$, then since $A^\star = A^\star_{\text{-}i} \cup A^\star_{\text{-}j}$ we get a contradiction:
$$
f_{S}(A^\star) =  f_{S}(\Ai^{\star} \cup A_{-j}^\star) \leq f_{S}(\Ai^{\star})+ f_{S}(A_{-j}^\star) < 2\cdot \frac{f_{S}(A^{\star})}{2} =  f_{S}(A^{\star}).$$
We therefore have at least $c-1$ sets s.t. each $\Ai^\star$ respects $f_{S}(A^\star_{\text{-}i}) \geq f_{S}(A^\star)/2$.  Call these sets \emph{bounded}.  For any such bounded set $A^\star_{\text{-}i}$, since $A^\star_{\text{-}i} \subset A^\star_{ij}$ for any $j \notin S \cup A^{\star}$, monotonicity implies:
$$\min_{A^{\star}_{ij} \in \mathcal{H}(A^\star_{\text{-}i})} f_{S}(A^{\star}_{ij}) \geq  \frac{f_{S}(A^{\star})}{2}$$
For a given set $A^\star_{\text{-}i}$ note that for every $j$, every set $A_{ij} \in \mathcal{H}(A^{\star}_i)$ respects $f_S(A^\star_{ij}) \leq f_S(A^{\star})$ due to the maximality of $A^\star$.  Thus for any bounded set $A^\star_{\text{-}i}$:
$$ v_S(\mathcal{H}(A^\star_{\text{-}i})) = \frac{\max_{A^{\star}_{ij} \in \mathcal{H}(A^\star_i)} f_{S}(A^{\star}_{ij})}{\min_{A^{\star}_{ij} \in \mathcal{H}(A^\star_i)} f_{S}(A^{\star}_{ij})} \leq \frac{f_{S}(A^\star)}{f_{S}(A^\star)/2} = 2$$
Let $l$ be the index of the set $A^\star_{\text{-}i}$ with the lowest value $f_{S}(A^\star_{\text{-}i})$.  Our discussion above implies that this is the only set whose variation may not be bounded from above by $2$.  Assume $n$ sufficiently large s.t. $t\geq 2^{12}/\epsilon^4$.  We therefore get:

\begin{align}
\frac{1}{c}\sum_{i=1}^c \left ( \max \{0,1-2v_S(\mathcal{H}(A^\star_{\text{-}i}))t^{-\frac{1}{4}}  \} \right )\mathbf{F}_{S}(A^\star_{\text{-}i})
& \geq \frac{1}{c}\sum_{i \neq l}\left ( \max \{ 0,1-2v_S(\mathcal{H}(A^\star_{\text{-}i})) t^{-\frac{1}{4}}  \} \right )\mathbf{F}_{S}(A^\star_{\text{-}i}) \label{eq:b1}\\
& \geq \frac{1}{c}\sum_{i \neq l} \left(1- 4t^{-\frac{1}{4}} \right)  \mathbf{F}_{S}(A^\star_{\text{-}i}) \label{eq:b2}\\
& \geq \frac{1}{c}\sum_{i \neq l} \left(1- 4t^{-\frac{1}{4}} \right) f_{S}(A^\star_{\text{-}i}) \label{eq:b3}\\
& \geq \left(1- 4t^{-\frac{1}{4}} \right)\frac{1}{c}\left( \sum_{i=1}^{c} f_{S}(A^\star_{\text{-}i}) - f_{S}(A^\star_{-l}) \right) \label{eq:b4}\\
& \geq \left(1- 4t^{-\frac{1}{4}} \right)\frac{1}{c}\left((c-1)f_{S}(A^\star) - f_{S}(A^\star_{-l}) \right) \label{eq:b5}\\
& \geq \left(1- 4t^{-\frac{1}{4}} \right)\frac{1}{c}\left((c-1)f_{S}(A^\star) - f_{S}(A^\star) \right) \label{eq:b6}\\
& \geq \left(1- 4t^{-\frac{1}{4}}\right)\left (\frac{c-2}{c} \right ) f_{S}(A^\star) \label{eq:b7}\\
& \geq \left(\frac{c-2}{c} - 4t^{-\frac{1}{4}} \right ) f_{S}(A^\star) \label{eq:b8}\\
& \geq \left ( 1-\epsilon   \right )f_{S}(A^\star)\label{eq:b9}
\end{align}
The inequality~(\ref{eq:b2}) is justified by the bound we established on bounded sets;~(\ref{eq:b3}) is due to monotonicity of $f_{S}$, since $F_{S}(A^\star_{\text{-}i})$ is an average of the marginal contribution over all possible $A^\star_{ij}$, which is a superset of $A^\star_{\text{-}i}$;~(\ref{eq:b5}) is due to an argument in the proof of Lemma~\ref{lem:avg};~(\ref{eq:b6}) is due to the optimality of $A^\star$;~(\ref{eq:b9}) is due to the assumption on the parameters in the statement of the claim.
\end{proof}

\begin{proof}[\textbf{Proof of Lemma~\ref{algoworks}}]
Let $A^{\star} = \arg \max_{A:|A|=c} f_S(A)$ and let $B:|B|=c$ be such that $\score_S(B) < (1-\epsilon) \score_{S}(A^\star)$.  We will apply the smoothing arguments and show that with high probability
$$\widetilde{\score}(S\cup A^\star) > \widetilde{\score}(S\cup B).$$
By taking a union bound over all possible $O(n^c)$ sets $B$ we will then conclude that the set whose smooth noisy contribution is largest must have smooth contribution at least factor of $(1-\epsilon)$ from that of $A^\star$, with high probability.

We will denote $\epsilon_{1} = \epsilon$ and $\epsilon_{2} = \epsilon/4$.  Notice that the conditions of Claim~\ref{cl:bound_on_var} are met with $\epsilon_2$ and that the iteration is $\epsilon_2$-significant, which from submodularity implies $f_{S}(A^\star) \geq \epsilon_2\cdot f(S)/k$.

For a set $B_{\text{-}i} \subset B$, using Lemma \ref{lem:boundonb}, for $t = n-c-|S|$, when $\omega$ denotes the highest realized value of a noise multiplier, we know that for $\lambda \in [0,1)$ with probability $1-\exp\left ({-\Omega(\lambda^2 t^{1/4}/\omega)}\right )$:
\begin{align*}
\widetilde{\score}(S \cup B) & = \frac{1}{c} \sum_{i} \widetilde{\mathbf{F}}(S \cup B_{\text{-}i}) \\
& <  \frac{1}{c} \sum_i(1+\lambda) \mu \cdot \left (f(S)+ \mathbf{F}_{S}(B_{\text{-}i}) + 3t^{-1/4} \max_{B_{ij} \in \{\mathcal{H}(B_{\text{-}i})\}}f_S(B_{ij}) \right )\\
& \leq (1+\lambda) \mu \cdot \left (f(S) + 3t^{-1/4}\max_{B_{ij} \in \{\cup_{i \in [c]}\mathcal{H}(B_{\text{-}i})\}}f_S(B_{ij}) + \frac{1}{c}\sum_{i=1}^{c} \mathbf{F}_{S}(B_{\text{-}i}) \right )\\
& \leq  (1+\lambda) \mu \cdot \left (f(S) + 3t^{-1/4} f_{S}(A^\star) + \frac{1}{c}\sum_{i=1}^{c} \mathbf{F}_{S}(B_{\text{-}i}) \right )\\
& \leq  (1+\lambda) \mu \cdot \left (f(S)+ 3t^{-1/4} f_{S}(A^\star) + \score(S \cup B) \right )\\
& \leq  (1+\lambda) \mu \cdot \left (f(S) + 3t^{-1/4} f_{S}(A^\star) + (1-\epsilon_1)\score(S \cup A^\star) \right )\\
& \leq  (1+\lambda) \mu \cdot \left (f(S) + 3t^{-1/4} f_{S}(A^\star) + (1-\epsilon_1)f_{S}(A^\star) \right )\\
& =  (1+\lambda)  \mu \cdot \left ( f(S) + f_S(A^\star) \left (3t^{-1/4} + (1-\epsilon_1) \right ) \right)
\end{align*}
We now need to argue that $\widetilde{\score}(S \cup A^\star)$ is sufficiently large to beat $\widetilde{\score}(S \cup B)$.  Assuming $n$ is sufficiently large s.t. $t\geq 2^{20}/\epsilon^4$, from lemmas~\ref{lem:noisysmoothing_lowerbound} and \ref{cl:bound_on_var} we know that for $\lambda \in [0,1)$ w.p. $1-e^{-\Omega(\lambda^2t^{1/4}/\omega)}$:
\begin{align*}
\widetilde{\score}(S \cup A^\star)
& = \frac{1}{c}\sum_{i=1}^{c}\widetilde{\mathbf{F}}(S \cup A^\star)\\
& > (1-\lambda)\mu \cdot \left ( f(S) +  \frac{1}{c}\sum_{i=1}^{c} \left (1-2v(\mathcal{H}(A^\star_i))\cdot t^{-1/4} \right) \cdot \mathbf{F}_{S}(A^\star) \right )\\
& > (1-\lambda)\mu \cdot \left ( f(S) + (1-\epsilon_2)f_{S}(A^\star) \right )
\end{align*}
We therefore get that:
\begin{align*}
\widetilde{\score}(S \cup A^\star) - \widetilde{\score}(S \cup B)
& \geq \mu \left ( (1-\lambda) \cdot \left ( f(S) + (1-\epsilon_2)f_{S}(A^\star) \right ) -  (1+\lambda)\cdot \left ( f(S) + f_S(A^\star) \left (3t^{-1/4} + (1-\epsilon_1) \right ) \right)  \right )\\
& \geq \mu \left( (1-\lambda)(1-\epsilon_2)f_{S}(A^\star) -  2\lambda f(S) - (1+\lambda)\left (3t^{-1/4} + (1-\epsilon_1) \right )f_S(A^\star) \right )\\
& \geq \mu \left( (1-\lambda)(1-\epsilon_2)f_{S}(A^\star) -  \frac{2\lambda k}{\epsilon_{2}}f_{S}(A^\star)  - (1+\lambda)\left (3t^{-1/4} + (1-\epsilon_1) \right )f_S(A^\star) \right )\\
& \geq \mu\cdot f_{S}(A^\star) \left ((1-\lambda)(1-\epsilon_2)-  \frac{2\lambda k}{\epsilon_{2}} - (1+\lambda)\left (3t^{-1/4} + (1-\epsilon_1) \right ) \right )\\
& \geq \mu\cdot f_{S}(A^\star) \left ((1-\lambda)(1-\epsilon_2)-  \frac{2\lambda k}{\epsilon_{2}}  - (1+\lambda)\left (\epsilon_{4} + (1-\epsilon_1) \right ) \right )\\
& \geq \mu\cdot f_{S}(A^\star) \left (1-\lambda -\epsilon_2 -   \frac{2\lambda k}{\epsilon_{2}}  - \epsilon_{2} - \lambda\epsilon_2 - 1 - \lambda + \epsilon_1 \right )\\
& > \mu\cdot f_{S}(A^\star) \left (\epsilon_1 - 3\epsilon_{2} - \lambda\left (\frac{2k}{\epsilon_2} \right) \right )\\
\end{align*}

For any $\lambda \leq \epsilon^2/2k$ the difference above is strictly positive.  Conditioning on $\omega$ being bounded from above by $t^{1/5}$ which happens with probability $1-e^{-\Omega(t^{1/5}/\log t)}$, since $k \in O(\log \log n)$ we that the result holds with probability at least $1-e^{-\Omega(t^{1/10})}$.
\end{proof}

\subsection*{Approximation Guarantee in Expectation}

\begin{lemma*}[\ref{lem:meanapx}]
Let $\delta>0$ and assume $k> 16 /\delta^2$, $c = 16/\delta$.  Suppose that in every $\delta/4$-significant iteration of \textsc{SM-Greedy} when $S$ are the elements selected in previous iterations, $A\in \argmax_{B:|B|=c} \widetilde{\score}(S \cup B)$, the bundle added $\hat{A}$ respects $f_{S}(\hat{A}) \geq (1-\delta)F_{S}(A)$.  Let $\bar{S}$ be the solution after $\lfloor k/c \rfloor$ iterations.  Then, w.p. $\geq 1-1/n^2$:
$$f(\bar{S}) = (1-1/e - 5\delta)\texttt{OPT}.$$
\end{lemma*}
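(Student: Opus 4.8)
Looking at Lemma~\ref{lem:meanapx}, the plan is to run a standard greedy-style inductive argument on the surrogate, being careful about the two sources of slack: the iterations that are not $\delta/4$-significant (which we simply discard), and the gap between the mean marginal contribution $F_S(A)$ of the bundle we identify and the actual marginal contribution $f_S(\hat A)$ of the bundle we add (controlled by the hypothesis $f_S(\hat A)\geq(1-\delta)F_S(A)$). The key chain of inequalities to assemble is: in a $\delta/4$-significant iteration with previously selected set $S$, if $A\in\argmax_{|B|=c}\widetilde F(S\cup B)$ then by Lemma~\ref{algoworks} we have $F_S(A)\geq(1-\delta)\max_{|B|=c}F_S(B)$; by Lemma~\ref{lem:avg} the right-hand side is at least $(1-\delta)\max_{|B|=c}f_S(B)$ minus nothing — actually $\max_{|B|=c}F_S(B)\geq F_S(A^\star)\geq(1-\epsilon)f_S(A^\star)$ where $A^\star$ maximizes $f_S$; and $f_S(A^\star)=\max_{|B|=c}f_S(B)\geq \frac{c}{k}f_S(O)\geq\frac{c}{k}(\texttt{OPT}-f(S))$ by submodularity/subadditivity (covering $O$ by $k/c$ bundles of size $c$). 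Combining with the hypothesis, each significant iteration gains $f(S\cup\hat A)-f(S)=f_S(\hat A)\geq(1-\delta)F_S(A)\geq(1-O(\delta))\frac{c}{k}(\texttt{OPT}-f(S))$.

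First I would set up notation: let $S_0=\emptyset,S_1,\ldots,S_m$ be the solutions across the $m=\lfloor k/c\rfloor$ iterations, so $\bar S=S_m$, and write $c/k$ as the ``per-iteration'' fraction (noting $mc\geq k-c\geq k(1-\delta)$ since $c=16/\delta\leq \delta k$ when $k>16/\delta^2$). For iterations that are $\delta/4$-significant, the displayed recursion $\texttt{OPT}-f(S_{i+1})\leq\big(1-(1-c\delta')\frac{c}{k}\big)(\texttt{OPT}-f(S_i))$ holds for a suitable $\delta'=O(1)$ absorbing the $(1-\delta)$, $(1-\epsilon)$ and $\epsilon=1/c$ losses. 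For iterations that are \emph{not} $\delta/4$-significant, I would argue as in the proof of Lemma~\ref{lem:opt_h}: non-significance means $\max_{|B|=c}f_{S_i}(B)<\frac{\delta c\,\texttt{OPT}}{4k}$, hence $f_{S_i}(O)\leq \frac{k}{c}\max_{|B|=c}f_{S_i}(B)<\frac{\delta}{4}\texttt{OPT}$ (again covering $O$ by $k/c$ bundles), so $f(S_i\cup O)<f(S_i)+\frac\delta4\texttt{OPT}$, and thus $f(S_m)\geq f(S_i)\geq f(S_i\cup O)-\frac\delta4\texttt{OPT}\geq f(O)-\frac\delta4\texttt{OPT}\geq(1-\frac\delta4)\texttt{OPT}$, which already beats $(1-1/e-5\delta)\texttt{OPT}$. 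So we may assume every iteration is $\delta/4$-significant and just unroll the recursion: $\texttt{OPT}-f(S_m)\leq\big(1-(1-c\delta')\frac ck\big)^m\texttt{OPT}\leq e^{-(1-c\delta')mc/k}\texttt{OPT}\leq e^{-(1-c\delta')(1-\delta)}\texttt{OPT}$, and then $(1-c\delta')(1-\delta)\geq 1-O(\delta)$ so that $e^{-(1-O(\delta))}\leq \frac1e+O(\delta)$, giving $f(S_m)\geq(1-1/e-O(\delta))\texttt{OPT}$; tracking the constants carefully should yield the stated $5\delta$.

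The probabilistic part is a union bound: Lemma~\ref{algoworks} fails with probability $e^{-\Omega(n^{1/10})}$ per iteration, and there are at most $k\leq n$ iterations, so all invocations succeed with probability $\geq 1-n\cdot e^{-\Omega(n^{1/10})}\geq 1-1/n^2$ for large $n$; the hypothesis ``$f_S(\hat A)\geq(1-\delta)F_S(A)$ in every significant iteration'' is assumed, not proven here, so no additional failure probability enters. The main obstacle — really the only delicate point — is bookkeeping the multiplicative losses so that all the small constants ($\delta$ from the hypothesis, $\epsilon=1/c=\delta/16$ from Lemma~\ref{lem:avg}, the $\delta$ from Lemma~\ref{algoworks}, the $(1-\delta)$ from $mc\geq k(1-\delta)$) collapse into a clean $5\delta$ rather than something larger; choosing $c=16/\delta$ is exactly what makes $\epsilon=1/c$ negligible against $\delta$, and $k>16/\delta^2$ is what makes the ``lose one bundle'' term negligible. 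I would handle this by first proving the clean recursion with an unspecified $O(\delta)$ slack, then at the end plug in $c=16/\delta$ and verify $e^{-(1-O(\delta))}\leq 1/e+5\delta$ holds for all small $\delta$ by a first-order Taylor estimate of $x\mapsto e^{-x}$ near $x=1$.
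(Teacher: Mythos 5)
Your proposal is correct and follows essentially the same greedy-recursion strategy as the paper's own proof: apply Lemma~\ref{algoworks} and Lemma~\ref{lem:avg} to relate $f_S(\hat A)$ to $\frac{c}{k}(\texttt{OPT}-f(S))$ in each significant iteration, unroll the recursion, absorb the $(1-\delta)$, $(1-\delta/16)$, and lost-$c$-elements factors into a $1-O(\delta)$ exponent, and union-bound the failure probability of Lemma~\ref{algoworks} over the $O(k)$ iterations. The only cosmetic difference is bookkeeping: you dispose of non-significant iterations by showing any such iteration already implies $f(S_i)\geq(1-\delta/4)\texttt{OPT}$ (and hence $f(\bar S)\geq(1-\delta/4)\texttt{OPT}$ by monotonicity), and you account for the ``wasted'' $c$ elements directly in the exponent $mc/k\geq 1-\delta$, whereas the paper reroutes both slacks through a restricted optimum $f(O')\geq(1-2\delta)\texttt{OPT}$ before running the induction — your version is arguably cleaner, but the underlying argument and the final constants are the same.
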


\begin{proof}
We will analyze the solution only on iterations that are $\delta/4$ relevant since this is when we can apply the smoothing arguments.  Since $k> 16 /\delta^2$ and since each iteration is $\delta/4$-significant, by Lemma~\ref{algoworks} we know that in each iteration $A \in \argmax_{B:|B|=c}\widetilde{\score}(S\cup B)$ respects with overwhelming probability:
$$\score_{S}(A) \geq (1-\delta)\max_{B:|B|=c}\score_{S}(B)$$

We will condition on the success of this event in every one of the $\lfloor k/c \rfloor$ iterations.  By a union bound the result will hold w.p. at least $1-1/n^2$.   We assume that $n$ is sufficiently large s.t. $t\geq 2^{20}/\delta^4$.

To account for the fact that we are only analyzing $\delta/4$-significant iterations, we can compare against $(1-\delta/4)$ of the optimal value: let $\hat{k}$ be the last $\delta/4$-significant iteration and $\hat{O} \subseteq O$ be the subset of size $\hat{k}$ of the optimal solution whose value is largest.  By submodularity:
\begin{align}
f(\hat{O}) \geq (1-\delta/4)\texttt{OPT}\label{eq:alpha1}
\end{align}
Second, we argue that optimizing over sets of size $c$ rather than singletons is inconsequential when $k>c/\epsilon$.  To be convinced, notice that when the algorithm selects $c$ elements in every iteration the total number of elements selected will be $k' > k - c$.  Let $O' \in \arg\max_{T:|T|\leq k'}f(T)$.  As in previous arguments, from submodularity we have that: $(1-c/k)f(\hat{O})\leq f(O')$.  Since $k>c/\epsilon$ we have that:
\begin{align}
f(O') > (1-\delta)f(\hat{O}) > (1-2\delta)\texttt{OPT}\label{eq:alpha2}
\end{align}
We will henceforth analyze the algorithm against $O'$.  In a similar manner to the analysis of the greedy algorithm which selects singletons at every stage $i \in [k]$, we can analyze the greedy algorithm which selects sets of size $c$ at every stage $i \in [k'/c]$.  To ease notation assume $\lfloor k'/c \rfloor = k'/c$.

For a given stage of the algorithm, assume the set $S$ has been previously selected and that a set $\hat{A}$ is being added into the solution.  Let $B^{\star} = \arg\max_{B \subseteq O':|B|=c}f_{S}(B)$ and $A^\star = \arg\max_{B:|B|=c} f_{S}(B)$.
\begin{align*}
f_{S}(\hat{A})
& \geq  (1-\delta)\max_{B:|B|=c}\score_{S}(B) \rmk{assumption in the statement}\\
& >  (1-2\delta)\score_{S}(A^\star)  \rmk{Lemma~\ref{algoworks} applied with $\epsilon=\delta$}\\
& >  (1-3\delta)f_{S}(A^\star)\rmk{Lemma~\ref{lem:avg} and $c \geq 1/\delta$}\\
& >  (1-3\delta)f_{S}(B^\star) \rmk{maximality of $A^\star$}\\
& >  (1-3\delta)\frac{c}{k'} \cdot f_{S}(O')  \rmk{subadditivity.}\\
& =  (1-3\delta)\frac{c}{k'} \cdot \left ( f(O' \cup S) - f(S) \right )\\
& \geq  (1-3\delta)\frac{c}{k'} \cdot \left ( f(O') - f(S) \right )
\end{align*}
A standard inductive argument stating that at every iteration $i \in \lfloor k/c \rfloor$ we have that the value of the current solution is at least $\left (1 - (1-1/\lfloor k/c \rfloor)^i\right )\OPT$ implies that $f(\bar{S}) \geq \left(1-1/e - 3\delta \right )f(O')$.  Since we lose $2\delta$ from~(\ref{eq:alpha2}) this concludes our proof.
\end{proof}

\subsection*{From Expectation to High Probability}

\begin{definition}
For a given set $S$, let $A^\star \in \argmax_{B:|B|=c}f_{S}(B)$, $A \in \argmax_{B:|B|=c} \widetilde{\score}(S\cup B)$, and $\mathcal{A}=\{A_{ij}\}_{i\in A,j\notin A}$.  For a fixed $\epsilon>0$:
\begin{itemize}
\item $A_{ij} \in \mathcal{A}$ is $\epsilon$-\emph{good} if ${f_S(A_{ij}) \ge (1 - 2\epsilon) f_S(A^{\star})}$; let $\good$ denote all $\epsilon$-good $A_{ij} \in \mathcal{A}$;
\item $A_{ij} \in \mathcal{A}$ is  $\epsilon$-\emph{bad} if $f_S(A_{ij}) \le (1 - 3\epsilon) f_S(A^{\star})$; let $\bad$ denote all $\epsilon$-bad $A_{ij} \in \mathcal{A}$.
\end{itemize}
\end{definition}

\begin{claim}\label{clm:goodandbad}
For a set $S\subseteq N$ let $A \in \argmax_{B:|B|=c}\widetilde{\score}(S\cup B)$ and assume the iteration is $\epsilon/8$-significant and that $c\geq\epsilon/2$.  Then with probability at least $1-1/n^{10}$:
\begin{itemize}
\item $|\good| \geq  \frac{c(n-c-|S|)}{2}$;
\item $|\bad| \leq  \frac{c(n-c-|S|)}{2}$.
\end{itemize}
\end{claim}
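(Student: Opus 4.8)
The plan is to show that at least half the sets $A_{ij}$ are $\epsilon$-good, from which $|\bad|\le c(n-c-|S|)/2$ follows immediately since good and bad sets are disjoint (a set with $f_S(A_{ij})\ge(1-2\epsilon)f_S(A^\star)$ cannot also have $f_S(A_{ij})\le(1-3\epsilon)f_S(A^\star)$). So the real content is the lower bound $|\good|\ge c(n-c-|S|)/2$. First I would establish, via Lemma~\ref{algoworks} (applied with accuracy parameter $\epsilon/8$, using that the iteration is $\epsilon/8$-significant and $c\ge 16/\epsilon$), that with probability at least $1-e^{-\Omega(n^{1/10})}\ge 1-1/n^{10}$ the selected bundle $A$ satisfies $\score_S(A)\ge(1-\epsilon/8)\max_{B:|B|=c}F_S(B)\ge(1-\epsilon/8)f_S(A^\star)/(1-\epsilon/8)$; more directly, combining with Lemma~\ref{lem:avg}, $\score_S(A)\ge(1-\epsilon/8)^2 f_S(A^\star)\ge(1-\epsilon/4)f_S(A^\star)$. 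So we may condition on the event $\score_S(A)\ge(1-\epsilon/4)f_S(A^\star)$.

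Next I would run a simple averaging (Markov-type) argument on the multiset $\{f_S(A_{ij})\}_{i\in A,j\notin S\cup A}$, which has $t=c(n-c-|S|)$ elements and mean exactly $\score_S(A)\ge(1-\epsilon/4)f_S(A^\star)$. Every $A_{ij}$ satisfies $f_S(A_{ij})\le f_S(A^\star)$ by maximality of $A^\star$. Suppose for contradiction that more than half the sets are \emph{not} $\epsilon$-good, i.e. more than $t/2$ of them have $f_S(A_{ij})<(1-2\epsilon)f_S(A^\star)$. Bounding the remaining at most $t/2$ sets by $f_S(A^\star)$ and the more-than-$t/2$ sets by $(1-2\epsilon)f_S(A^\star)$ gives
\[
\score_S(A)\;<\;\tfrac12 f_S(A^\star)+\tfrac12(1-2\epsilon)f_S(A^\star)\;=\;(1-\epsilon)f_S(A^\star),
\]
contradicting $\score_S(A)\ge(1-\epsilon/4)f_S(A^\star)$ as long as $\epsilon/4<\epsilon$, which always holds. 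Hence at least $t/2=c(n-c-|S|)/2$ of the sets are $\epsilon$-good, giving the first bullet; the second bullet follows as noted above.

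The one point needing care is the bookkeeping of accuracy parameters: Lemma~\ref{algoworks} is stated with a generic $\epsilon$, and I must invoke it with a sufficiently small multiple of the $\epsilon$ appearing in this claim (here $\epsilon/8$) so that the product of the $(1-\cdot)$ factors from Lemma~\ref{algoworks} and Lemma~\ref{lem:avg} stays comfortably above $1-\epsilon$; the hypotheses $c\ge 16/\epsilon$ (so Lemma~\ref{algoworks}'s requirement $c\ge 16/(\epsilon/8)$... ) actually need $c\ge 128/\epsilon$, so I would either tighten the constant or observe that the claim's stated hypothesis $c\ge\epsilon/2$ is a typo for a lower bound of the form $c=\Omega(1/\epsilon)$ consistent with the rest of Section~\ref{sec:smallk}; I'll phrase the invocation of Lemma~\ref{algoworks} with whatever constant makes the arithmetic go through. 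I do not expect any genuine obstacle here — the main (routine) work is just matching constants and citing the already-proved smoothing lemma; the averaging step is elementary.
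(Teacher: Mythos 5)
Your proof follows essentially the same route as the paper's: invoke Lemma~\ref{algoworks} and Lemma~\ref{lem:avg} to get $\score_S(A)\ge(1-\Theta(\epsilon))f_S(A^\star)$ on the high-probability event, then run the same Markov/averaging argument using $f_S(A_{ij})\le f_S(A^\star)$ (maximality of $A^\star$) to force $|\good|\ge c(n-c-|S|)/2$, with $|\bad|\le c(n-c-|S|)/2$ following because good and bad are disjoint subsets of the $t$-element multiset. One small bookkeeping correction: an $\epsilon/8$-significant iteration lets you instantiate Lemma~\ref{algoworks} with accuracy parameter $\epsilon/2$ (not $\epsilon/8$), which combined with Lemma~\ref{lem:avg} yields $\score_S(A)\ge(1-\epsilon/2)^2 f_S(A^\star)\ge(1-\epsilon)f_S(A^\star)$ — weaker than your claimed $(1-\epsilon/4)f_S(A^\star)$, but still strictly larger than the averaging upper bound $(1-\epsilon)f_S(A^\star)$ under the contradiction hypothesis, so the argument closes exactly as you describe; you are also right that the stated hypothesis $c\ge\epsilon/2$ is a typo for a bound of the form $c=\Omega(1/\epsilon)$.
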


\begin{proof}
Since the sets $A_{ij}$ are distinct both $\good$ and $\bad$ contain no repetitions and we can argue about their size.
To lower bound the size of $\good$, let $A^\star \in \argmax_{A:|A|=c}f_S(A)$.  When the iteration is $\epsilon/8$-significant, from Lemma \ref{algoworks} we know that with exponentially high probability:
$$\score_{S}(A) \geq (1-\epsilon/2)\score_{S}(A^\star)$$
When $c\geq 2/\epsilon$, from Lemma, we know that:
$$\score_{S}(A^\star) \geq (1-\epsilon/2)f_{S}(A^\star)$$
Denoting $m=c(n-c-|S|)$, we get with exponentially high probability:
\begin{align}
\score_{S}(A) = \frac{1}{m}\sum_{j=1}^{m}\sum_{i=1}^{c}f_S(A_{ij}) \ge (1 - \epsilon) f_S(A^{\star})\label{eq:sec5ineq1}
\end{align}
In addition, due to the maximality of $A^\star$ we have that $f_S(A_{ij}) \le f_S(A^{\star})$ for every $i,j$.  Therefore:
\begin{align}
\sum_{j=1}^{m}\sum_{i=1}^{c}f_S(A_{ij}) \le |\good|\cdot f_S(A^{\star}) + \left(m - |\good|\right)\cdot(1 - 2\epsilon) f_S(A^{\star})\label{eq:sec5ineq2}
\end{align}
Putting~(\ref{eq:sec5ineq1}) and ~(\ref{eq:sec5ineq2}) together we get that for sufficiently large $n$, with probability at least $1-1/n^{10}$:
$$m(1-\epsilon)f_{S}(A^\star) \leq \left ( |\good|+(m-|\good|)(1-2\epsilon) \right )f_{S}(A^\star)$$
Rearranging and using $m=c(n-c-|S|)$ we get that $|\good| \geq c(n-c-|S|)/2$.  Since there are a total of $c(n-c-|S|)$ it follows that $|\bad|\leq c(n-c-|S|)$ as required.
\end{proof}

\begin{definition}
Let $\pdf(x)$ denote the probability density function of the noise distribution.  For a set $S: |S| \in O(\log n)$, $c>0$, $\gamma>0$, we define $\mg$ and $\mb$ as:
\begin{itemize}
\item $\int_{\mb}^\infty \pdf(x) dx = \frac{2}{c (n-c-|S|) \log n}$;
\item $\int_{\mg}^\infty \pdf(x) dx = \frac{2\log n}{c (n-c-|S|)}$.
\end{itemize}
\end{definition}

The following claim immediately follows from the definition, yet it is still useful to specify explicitly.  The claim considers $c(n-c-|S|)/2$ samples since this is an upper and lower bound on $|\good|$ and $|\bad|$.  Therefore the claim gives us the likelihood that the largest noise multiplier of $\bad$ does not exceed $\mb$ and that at least one set from $\good$ exceeds $\mg$.

\begin{claim}\label{clm:probmamg}
For a fixed set $S$ and $A \in \argmax_{B:|B|=c}\widetilde{\score}(S\cup B)$, let $m=c(n-c-|S|)$ and consider $m/2$ independent samples from the noise distribution.  Then:
\begin{itemize}
\item $\Pr \left [\max \{\xi_1, \ldots \xi_{m/2}\} \le \mb\right ] > \left (1 - \frac{2}{\log n} \right)$;
\item $\Pr \left [\max \{ \xi_1 \ldots \xi_{m/2}\} \ge \mg \right ] > 1 - 2/ n$.
\end{itemize}
\end{claim}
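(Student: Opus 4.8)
The plan is to derive both bounds directly from the defining equations for $\mb$ and $\mg$, using only independence of the samples together with the elementary inequalities $(1-x)^t \ge 1-tx$ (Bernoulli) and $(1-x)^t \le e^{-tx}$, valid for $x \in [0,1]$, $t \ge 1$. Write $m = c(n-c-|S|)$ and adopt the convention, consistent with the surrounding argument, that $\int_{y}^{\infty}\pdf(x)\,dx$ equals $\Pr[\xi \ge y]$ for $\xi\sim\mathcal D$, so that an atom at $y$ is counted. I would first note that $\mb$ and $\mg$ exist for $n$ large: if $\mathcal D$ has unbounded support this is clear, and if $\mathcal D$ has bounded support then the generalized exponential tail assumption guarantees either an atom at its supremum or a density continuous and nonzero there, so the target masses $\tfrac{2}{m\log n}$ and $\tfrac{2\log n}{m}$, which both vanish as $n\to\infty$, are attained at some $\mb,\mg$ no larger than the supremum.

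For the first bullet, each sample obeys $\Pr[\xi_i \le \mb] \ge 1 - \Pr[\xi_i \ge \mb] = 1 - \tfrac{2}{m\log n}$, so by independence and Bernoulli's inequality
\[
\Pr\!\left[\max\{\xi_1,\dots,\xi_{m/2}\}\le \mb\right] \ge \Big(1-\tfrac{2}{m\log n}\Big)^{m/2} \ge 1 - \tfrac{m}{2}\cdot\tfrac{2}{m\log n} = 1 - \tfrac{1}{\log n} > 1 - \tfrac{2}{\log n}.
\]
For the second bullet, $\Pr[\xi_i < \mg] = 1 - \Pr[\xi_i \ge \mg] = 1 - \tfrac{2\log n}{m}$, hence
\[
\Pr\!\left[\max\{\xi_1,\dots,\xi_{m/2}\}< \mg\right] = \Big(1-\tfrac{2\log n}{m}\Big)^{m/2} \le e^{-\frac{m}{2}\cdot\frac{2\log n}{m}} = e^{-\log n} = \tfrac1n,
\]
giving $\Pr[\max\{\xi_1,\dots,\xi_{m/2}\}\ge \mg] \ge 1 - 1/n > 1 - 2/n$. (When $m/2$ is not an integer one passes to $\lfloor m/2\rfloor$ in the first estimate and $\lceil m/2\rceil$ in the second --- legitimate since $|\bad|\le m/2$ and $|\good|\ge m/2$ are integers --- and the inequalities only improve.)

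There is no genuinely hard step here; as the preceding remark in the text says, the claim follows from the definitions. The one subtlety worth flagging is precisely the one the generalized exponential tail definition was set up to handle: ensuring the thresholds $\mb,\mg$ exist and that the per-sample bounds point in the stated direction even when $\mathcal D$ has an atom --- which forces the ``atom at, or density nonzero at, the supremum'' hypothesis for bounded $\mathcal D$, and the standing assumption that $n$ is large. Once the convention on $\int_y^\infty \pdf$ is fixed, everything reduces to the two displayed one-line computations.
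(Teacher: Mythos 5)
Your proof is correct and essentially identical to the paper's: both derive each bound by raising the per‑sample probability (read off from the defining integral for $\mb$ or $\mg$) to the power $m/2$ via independence, then applying elementary inequalities. The paper is terser — it states the resulting bounds without explicitly invoking Bernoulli's inequality or $(1-x)^t \le e^{-tx}$ — while you spell those steps out and additionally flag the existence of the thresholds and the atom/boundary convention, but the route is the same.
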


\begin{proof}
For a single sample $\xi$ from $\mathcal{D}$, we have that:
\[\Pr[\xi \le \mb] = 1 - \frac{2}{m\log n}\]
If we take $m/2$ independent samples $\xi_{1}, \ldots \xi_{m/2}$, the probability they are all bounded by $\mb$ is:
\[\Pr \left [\max \{\xi_1, \ldots \xi_{|\bad|}\} \le \mb\right ] \geq \left (1 - \frac{2}{m \log n} \right )^{\frac{m}{2}} > \left (1 - \frac{2}{\log n} \right)\]
In the case of $\mg$, the probability that a single sample $\xi$ taken from $\mathcal{D}$ is at most $\mg$ is equal to:
$$\Pr \left [ \xi \le \mg \right ] = 1 - \frac{2\log n}{m}$$

If we take independent samples $\xi_{1}, \ldots \xi_{m/2}$, the probability they are all bounded by $\mg$ is:
$$\Pr \left [ \max \{ \xi_1, \ldots \xi_{c(n-c-|S|)} \} \le \mg \right ] = \left (1 - \frac{2\log n }{m}\right )^{\frac{m}{2}} < \frac{2}{2^{\log n}} =\frac{2}{n}$$
And accordingly the probability that at least one of these samples is greater than $\mg$ is:
\begin{equation*}
\Pr \left [\max \{ \xi_1 \ldots \xi_{m/2}\} \ge \mg \right ] > 1 - 2/ n.\qedhere
\end{equation*}
\end{proof}

\paragraph{Showing $\theta_{g}$ is arbitrarily close to $\theta_{b}$.}
Lemma~\ref{relationMbMg} below relates $\mg$ and $\mb$ assuming that $\mathcal{D}$ has a generalized exponential tail.  This lemma makes the result applicable for Exponential and Gaussian distributions, and it fully leverages the fact that $k \in O(\log \log n)$.  The lemma is quite technical, and we therefore first prove the much simpler case where the distribution is bounded.

\begin{lemma}\label{relationMbMg_bounded}
Assume $\mathcal{D}$ has a generalized exponential tail and that $\mathcal{D}$ is bounded, then for all $\eps \in \Omega(1 / \log \log n)$ we have that $\mg\geq (1 - \eps) \mb$.
\end{lemma}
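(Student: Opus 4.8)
The plan is to show that in the bounded-support case both $\mb$ and $\mg$ are pinned essentially at the right endpoint $M$ of the support of $\mathcal{D}$, and then to bound how far apart they can be relative to $\mb$. Write $m = c\,(n-c-|S|)$; since $c = O(1/\eps)$ is a constant for fixed $\eps$ and $|S| \in O(\log\log n)$, we have $m = \Theta(n)$, so in particular $m \to \infty$. Let $M$ denote the supremum of the support of $\mathcal{D}$, which by hypothesis is a positive constant independent of $n$. Recalling the definitions, $\mb$ is the point with $\int_{\mb}^{\infty}\pdf(x)\,dx = 2/(m\log n)$ and $\mg$ the point with $\int_{\mg}^{\infty}\pdf(x)\,dx = 2\log n/m$; since $2\log n/m > 2/(m\log n)$, we already have $\mg \le \mb \le M$, so the only thing left is to show that $\mg$ cannot be much smaller than $\mb$.

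By the generalized-exponential-tail definition, for bounded support there are two cases. First, suppose $\mathcal{D}$ has an atom at $M$, say of mass $p = \Pr[\xi = M] > 0$, a constant independent of $n$. Then for all sufficiently large $n$ we have $2\log n/m < p$, so both quantile targets are smaller than the mass of the atom alone; hence $\mb = \mg = M$, and $\mg = M \ge (1-\eps)M = (1-\eps)\mb$ is immediate.

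Second, suppose $\pdf$ is continuous and nonzero at $M$. Set $\rho_0 = \pdf(M) > 0$ and choose a constant $\eta \in (0, M)$ small enough that $\pdf(x) \in [\rho_0/2,\, 2\rho_0]$ for all $x \in [M-\eta, M]$, which is possible by continuity. Since $\int_{M-\eta}^{M}\pdf(x)\,dx \ge \rho_0\eta/2$ is a positive constant while the tail masses $2\log n/m$ and $2/(m\log n)$ tend to $0$, for all large $n$ both $\mb$ and $\mg$ lie in $[M-\eta, M]$. On that interval the density is within a constant factor of $\rho_0$, so $M - \mb \le \tfrac{2}{\rho_0}\cdot\tfrac{2}{m\log n}$ and $M - \mg \le \tfrac{2}{\rho_0}\cdot\tfrac{2\log n}{m}$. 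Consequently $\mb - \mg = (M-\mg) - (M-\mb) \le \tfrac{4\log n}{\rho_0 m}$, and dividing by $\mb \ge M-\eta$ gives $\tfrac{\mb - \mg}{\mb} \le \tfrac{4\log n}{\rho_0(M-\eta)\,m} = O(\log n/n)$. For any $\eps = \Omega(1/\log\log n)$ this is below $\eps$ once $n$ is large, i.e.\ $\mg \ge (1-\eps)\mb$, completing the proof.

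There is no genuine obstacle in this lemma: both thresholds converge to the boundary $M$ of the support at a polynomial-in-$n$ rate, while $M$ is a fixed positive constant, so the relative gap is negligible next to $\eps = \Omega(1/\log\log n)$. The only points needing a little care are interpreting the quantile correctly at an atom (which makes the atom case trivial) and keeping $\mb$ bounded away from $0$ in the continuous case, which is why we work inside $[M-\eta, M]$ with $M-\eta > 0$ rather than all the way to $M$. The genuinely delicate argument — where the tail is unbounded and one must exploit $k \in O(\log\log n)$ to control the shape of $g(x) = \sum_i a_i x^{\alpha_i}$ in the density $\pdf(x) = e^{-g(x)}$ — is exactly what is deferred to Lemma~\ref{relationMbMg}.
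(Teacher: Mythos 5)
Your proof is correct and follows essentially the same route as the paper: split on whether $\mathcal{D}$ has an atom at the supremum $M$ (trivial) or is continuous and nonzero there, and in the latter case use the fact that the quantile targets $2/(m\log n)$ and $2\log n/m$ vanish while the density is bounded away from $0$ near $M$, forcing both thresholds up to $M$. Your version is slightly more quantitative — you pin both $M-\mb$ and $M-\mg$ at $O(\log n / n)$ and then bound the relative gap, whereas the paper directly lower-bounds $\mg$ by $(1-\eps)M$ using the mass of $[(1-\eps)M, M]$ and the trivial $\mb \le M$ — but the underlying idea and case structure are the same, and your bookkeeping (in particular keeping $\mb$ bounded below by $M-\eta > 0$ before dividing) is sound.
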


\begin{proof}
Let $\chi$ be an upper bound on $\mathcal{D}$. If there is an atom at $\chi$ with some probability $\gamma > 0$, then we are done, as $\theta_g = \theta_b = \chi$. Otherwise, since $\mathcal{D}$ has a generalized exponential tail we know that $\pdf(\chi) = \gamma$ for some $\gamma > 0$, and that $\pdf$ is continuous at $\chi$. But then there is some $\delta > 0$ such that for any $\chi - \delta \le x \le \chi$ we have that $\pdf(x) \ge \gamma/2$. Choosing $n$ to be large enough that $(1 - \epsilon)\gamma > \gamma - \delta$, we have that

\[\int_{(1 - \epsilon)\gamma}^{\gamma} \pdf(x) \ge \gamma/2 \epsilon\]

Choosing $n$ large enough such that

\[\frac{2\log n}{c (n-c-|S|)} < \gamma/2 \epsilon\]

Gives that $\mg \ge (1 - \epsilon) \chi$. As $\mb \le \chi$ we are done.
\end{proof}

\begin{lemma}\label{relationMbMg}
If $\mathcal{D}$ has a generalized exponential tail then
${(1 - \eps) \mb \le \mg}$, $\forall \eps \in \Omega(1 / \log \log n)$.
\end{lemma}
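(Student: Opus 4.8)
The plan is to reduce to the case where $\mathcal D$ is unbounded (the bounded case is exactly Lemma~\ref{relationMbMg_bounded}) and then turn the two defining identities for $\mb$ and $\mg$ into sharp asymptotic estimates on these quantiles. Write $m = c(n-c-|S|)$; since $|S|$ is small ($O(\log\log n)$ in our application) we have $m \ge n/2$ and $\log m = \Theta(\log n)$, so $\int_{\mb}^\infty e^{-g(x)}\,dx = \tfrac{2}{m\log n}$ and $\int_{\mg}^\infty e^{-g(x)}\,dx = \tfrac{2\log n}{m}$, whence $\mg \le \mb$ and both tend to infinity with $n$ (so in particular $\mg,\mb > x_0$ and the tail integrals only see the part of the density equal to $e^{-g}$). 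Let $\alpha_0$ be the largest exponent occurring in $g$ and $a^\star$ the sum of the coefficients of $x^{\alpha_0}$; integrability of $e^{-g}$ on an unbounded domain forces $a^\star>0$, so $g(x) = a^\star x^{\alpha_0}(1+o(1))$ and $g'(x) = a^\star\alpha_0 x^{\alpha_0-1}(1+o(1))$, with $g'$ positive and eventually monotone since $\alpha_0\ge 1$.

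The key elementary estimate I would establish is
$$\int_\theta^\infty e^{-g(x)}\,dx = \Theta\!\left(\frac{e^{-g(\theta)}}{g'(\theta)}\right)$$
for large $\theta$, with implied constants depending only on $\mathcal D$: the upper bound is $\int_\theta^\infty e^{-g}\le \int_\theta^\infty (g'/g'(\theta))e^{-g} = e^{-g(\theta)}/g'(\theta)$ when $g'$ is increasing, and the matching lower bound comes from a single integration by parts, the remainder being controlled by $g''/g'^2\to 0$. Taking $-\log$ of the two defining identities then yields
$$g(\mb)+\log g'(\mb) = \log m + \log\log n + O(1), \qquad g(\mg)+\log g'(\mg) = \log m - \log\log n + O(1),$$
where every $O(1)$ is genuinely bounded. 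From either identity together with $g(\theta)=a^\star\theta^{\alpha_0}(1+o(1))$ one reads off $\mb^{\alpha_0} = \tfrac{\log m}{a^\star}(1+o(1)) = \mg^{\alpha_0}(1+o(1))$, so $\mb/\mg\to 1$ and hence $\log g'(\mb)-\log g'(\mg) = (\alpha_0-1)\log(\mb/\mg)+o(1)\to 0$. Subtracting the two identities therefore gives $g(\mb)-g(\mg) = 2\log\log n + O(1)$.

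To finish, apply the mean value theorem: $g(\mb)-g(\mg) = g'(\zeta)(\mb-\mg)$ for some $\zeta\in(\mg,\mb)$, and $g'(\zeta)\ge g'(\mg)\ge c\,\mg^{\alpha_0-1}$ for a constant $c>0$ and $n$ large, so $\mb-\mg \le \tfrac{O(\log\log n)}{c\,\mg^{\alpha_0-1}}$. Since $\mb\ge\mg$, it is enough that $\gamma \ge \tfrac{O(\log\log n)}{c\,\mg^{\alpha_0}}$, and because $\mg^{\alpha_0}=\Theta(\log m)=\Theta(\log n)$ this requirement reads $\gamma = \Omega(\log\log n/\log n)$, which is implied by $\gamma\in\Omega(1/\log\log n)$ for all sufficiently large $n$ (as $(\log\log n)^2 = o(\log n)$). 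This gives $(1-\gamma)\mb\le\mg$, as claimed.

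I expect the main obstacle to be making the tail estimate $\int_\theta^\infty e^{-g} = \Theta(e^{-g(\theta)}/g'(\theta))$ fully rigorous for the general form $g(x)=\sum_i a_i x^{\alpha_i}$: in particular handling $\alpha_0 = 1$ (where $g'$ is asymptotically constant, not growing) and lower-order terms that could make $g'$ non-monotone on a bounded initial stretch, and then verifying that every error term is really $O(1)$ or $o(\log\log n)$ rather than something that could compete with the $2\log\log n$ gap. Everything downstream of the two logarithmic identities is routine bookkeeping.
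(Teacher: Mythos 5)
Your proposal is correct and takes a genuinely different route from the paper. The paper works with a two-sided sandwich: it fixes a slack $\beta=\eps/100$ and shows that for $x$ above a threshold, $\pdf(x)$ lies between $(1+\beta)a_0 x^{\alpha_0-1}e^{-(1+\beta)a_0 x^{\alpha_0}}$ and $(1-\beta)a_0 x^{\alpha_0-1}e^{-(1-\beta)a_0 x^{\alpha_0}}$; these integrate in closed form, giving an explicit upper bound $M$ on $\mb$, after which the paper verifies directly that $\int_{(1-\eps)M}^{\infty}\pdf$ exceeds the $\mg$-level mass. Your proof instead isolates the sharp Laplace asymptotic $\int_\theta^\infty e^{-g}=\Theta\bigl(e^{-g(\theta)}/g'(\theta)\bigr)$, takes logs of both defining identities, subtracts, and finishes with the mean value theorem. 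The content is the same — both proofs ultimately reduce to the observation that $a_0\theta_q^{\alpha_0}\approx -\log q$ so the two quantiles differ only at the $\log\log n/\log n$ scale — but the bookkeeping is organized differently: the paper absorbs all lower-order corrections into the $(1\pm\beta)$ slack (trading sharpness for robustness, and making the final inequality a one-shot computation), whereas you cancel those corrections exactly by subtraction, which is cleaner and yields a slightly stronger quantitative conclusion, $(\mb-\mg)/\mb = O(\log\log n/\log n)$, rather than just what is needed for $\eps\in\Omega(1/\log\log n)$. The obstacle you flag — making the Laplace estimate rigorous for arbitrary $g(x)=\sum_i a_i x^{\alpha_i}$ with possibly $\alpha_0=1$ and non-monotone $g'$ on an initial stretch — is precisely the step the paper sidesteps by refusing to take the sharp asymptotic and instead settling for the looser $(1\pm\beta)$ bounds; your integration-by-parts justification (control of $g''/g'^2\to 0$) does handle it, but it is the one place where the paper's cruder bound is arguably the safer choice.
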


\begin{proof}
  \noindent The proof follows three stages:
\begin{enumerate}
  \item We use properties of $\mathcal{D}$ to argue upper and lower bounds for $\pdf(x)$;
  \item We show an upper bound $M$ on $\mb$;
  \item We show that integrating a lower bound of $\pdf(X)$ from $(1 - \eps)M$ to $\infty$, yields a probability mass at least $\frac{\log n}{\eps c (n-c-|S|)}$. Now suppose for contradiction that $\mg < (1 - \eps)\mb$, we would get that $\int_{\mg}^{\infty}\pdf(x)$ is strictly greater than $\frac{\log n}{\eps c (n-c-|S|)}$, which contradicts the definition of $\mg$.
\end{enumerate}

We now elaborate each on stage. Recall that by definition of $\mathcal{D}$ for $x \ge x_0$, we have that $\pdf(x) = e^{-g(x)}$, where $g(x) = \sum_{i} a_i \alpha_i$ and that we do not assume that all the $\alpha_i$'s are integers, but only that $\alpha_0 \ge \alpha_1 \ge \ldots$, and that $\alpha_0 \ge 1$. We do not assume anything on the other $\alpha_i$ values.

For the first stage we will show that for every $g(x)$, there exists $n_0$ such that for any $n > n_0$ and $x \ge \left(\frac{\log n}{2a_0}\right)^{1/\alpha_0}$ we have that for $\beta = \eps / 100 < 1/100$:
\[(1 + \beta) a_0 x^{\alpha_0 - 1} e^{-(1 + \beta)a_0 x^{\alpha_0}} \le \pdf(x) \le (1 - \beta) a_0 x^{\alpha_0 - 1} e^{-(1 - \beta)a_0 x^{\alpha_0}} \]
We explain both directions of the inequality. To see
$a_0 x^{\alpha_0 - 1}(1 +\beta) e^{-(1 + \beta)a_0 x^{\alpha_0}} \le \pdf(x)$ we first show:
\[e^{-(1 + \beta/2)a_0 x^{\alpha_0}} \le \pdf(x)\]
This holds since for sufficiently large $n$, we have that:
\[x \ge \frac{(\log n)^{1/\alpha_0}}{2 a_0} \ge \left(\frac{2\sum_{i=1}|a_i|}{\beta a_0}\right)^{\alpha_0 - \alpha_1}\]
So the term $\frac{\beta}{2} x^{\alpha_0}$ dominates the rest of the terms. We now show that:
\[e^{-(1 + \beta/2)a_0 x^{\alpha_0}} \ge a_0 x^{\alpha_0 - 1}(1 +\beta) e^{-(1 + \beta)a_0 x^{\alpha_0}}\]
This is equivalent to:
\[e^{\beta a_0/2 x^{\alpha_0}} \ge a_0 x^{\alpha_0 - 1}(1 +\beta) \]
Which hold for $x = \log \log^3 n$ and large enough $n$.

The other side of the inequality is proved in a similar way. We want to show that:
\[\pdf(x) \le (1 - \beta) a_0 x^{\alpha_0 - 1} e^{-(1 - \beta)a_0 x^{\alpha_0}}\]
Clearly for $x > \log \log^3 n$ we have that $(1 - \beta) a_0 x^{\alpha_0 - 1} > 1$. Hence we just need to show that:
\[\pdf(x) \le e^{-(1 - \beta)a_0 x^{\alpha_0}}\]
But this holds for sufficiently large $n$ s.t.:
\[x \ge \frac{(\log n)^{1/\alpha_0}}{2 a_0} \ge \left(\frac{\sum_{i=1}|a_i|}{\beta a_0}\right)^{\alpha_0 - \alpha_1}\]

We now proceed to the second stage, and compute an upper bound on $\mb$. Note that if
$$\int_{\mb}^{\infty}  \pdf(x) = \int_{M}^{\infty}  g(x)$$
and for every $x \ge M$ we have $\pdf(x) \le g(x)$ then it must be that $M \ge \mb$. Applying this to our setting, we bound $\pdf(x) \le (1 - \beta) a_0 x^{\alpha_0 - 1} e^{-(1 - \beta)a_0 x^{\alpha_0}}$ to get:
\begin{align*}
\frac{1}{c (n-c-|S|) \log n}
& = \int_{M}^{\infty} (1 - \beta) a_0 x^{\alpha_0 - 1} e^{-(1 - \beta)a_0 x^{\alpha_0}} \\
& = -e^{-(1 - \beta)a_0 x^{\alpha_0}}|_{M}^{\infty} \\
& = e^{-(1 - \beta)a_0 M^{\alpha_0}}
\end{align*}
Taking the logarithm of both sides, we get:
\begin{align*}
-(1 - \beta)a_0 M^{\alpha_0}
& = \log \frac{1}{c (n-c-|S|) \log n}\\
& = - \log (c (n-c-|S|) \log n)
\end{align*}
Multiplying by $-1$, dividing by $(1 - \beta)a_0$ and taking the $1/\alpha_0$ root we get:
\[M = \left(\frac{\log (c (n-c-|S|) \log n)}{(1 - \beta)a_0}\right)^{\alpha_0}\]
Note that $(1 - \eps) M > \left(\frac{\log n}{2a_0}\right)^{1/\alpha_0}$ and hence our bounds on $\pdf(x)$ hold for this regime.

We move to the third stage, and bound $\int_{(1 -\eps)M}^{\infty} \pdf(x)$ from below. If we show that:
$\int_{(1 -\eps)M}^{\infty} \pdf(x)$ is greater than $\frac{\log n}{\eps c (n-c-|S|)}$, this implies that $\mg \ge (1 -\eps)M$, as $\mg$ is defined as the value such that when we integrate $\pdf(x)$ from $\mg$ to $\infty$ we get exactly $\frac{\log n}{\eps c (n-c-|S|)}$. We show:
\begin{align*}
\int_{(1 -\eps)M}^{\infty} \pdf(x)
& \ge (1 + \beta)a_0 \alpha_0 x^{\alpha_0-1}e^{-(1 + \beta)a_0 x^{\alpha_0}} \\
& = -e^{-(1 + \beta)a_0 x^{\alpha_0}}|_{(1 - \eps)M}^{\infty} \\
& = e^{-(1 + \beta)a_0 ((1 - \eps)M)^{\alpha_0}} \\
& = e^{-(1 + \beta)a_0 M^{\alpha_0}(1-\eps)^{\alpha_0}} \\
& \ge e^{-(1 + \beta)a_0 M^{\alpha_0}(1-\eps)}
\end{align*}

However $a_0 M^{\alpha_0} = \left(\frac{\log (c (n-c-|S|) \log n)}{(1 - \beta)}\right)$. Since $\beta < 0.1$ we have that $\frac{1+\beta}{1 - \beta} < 1 + 3 \beta$. Substituting both expressions we get:
\begin{align*}
e^{-(1 + \beta)a_0 M^{\alpha_0}(1-\eps)}
& \ge e^{-(1 + 3\beta)(1 - \eps) \log (c (n-c-|S|) \log n)} \\
& = \left(\frac{1}{c (n-c-|S|) \log n}\right)^{(1 - \eps)(1 + 3 \beta)} \\
& \ge \left(\frac{1}{c (n-c-|S|) \log n}\right)^{(1 - \eps/2)}
\end{align*}

Where we used that $\beta = \eps/100$ and hence $(1 - \eps)(1 + 3 \beta) < 1 - \eps/2$. We now need to compare this to $\frac{\sqrt{\log n}}{\eps c (n-c-|S|)}$. To
do this, note that:
\begin{align*}
\left(\frac{1}{c (n-c-|S|) \log n}\right)^{(1 - \eps/2)}
& \ge \frac{1}{c (n-c-|S|)^{1 - \eps/2} \log n}\\
& \ge \frac{2^{\sqrt{\log n}}}{c (n-c-|S|) \log n} \\
& \ge \frac{\log n}{\eps c (n-c-|S|)}
\end{align*}

Where $n$ is large enough that $\frac{\eps}{2} \log (n - c - |S|)  > \sqrt{\log n}$.  This completes the proof, since ${\mg \ge (1 - \eps)M \ge (1 - \eps)\mb}$ as required.
\end{proof}

\begin{lemma*}[\ref{single}]
For any $\epsilon>0$, suppose we run \textsc{SM-Greedy} where in each iteration we add a bundle of elements of size $c= 16/\epsilon$.  For any $\epsilon/8$-significant iteration where the set previously selected is ${S: |S| \in O(\log\log n)}$, let $A \in \argmax \widetilde{\score}(S\cup A)$ and $\MS = \argmax_{(i,j)\in A \times N\setminus S\cup A} \widetilde{f}(S \cup A_{ij})$.
Then, with probability at least $1-3/\log n$ we have that:
$$f_S(\MS) \ge (1 - 3\epsilon)\score_S(A).$$
\end{lemma*}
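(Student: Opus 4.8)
The plan is to establish Lemma~\ref{single} by combining the three ingredients sketched right after its statement: the counting bound (Claim~\ref{clm:goodandbad}), the ``$\mg$ close to $\mb$'' estimate (Lemma~\ref{relationMbMg}), and the definition of $\mg,\mb$ via Claim~\ref{clm:probmamg}. First I would set $m = c(n-c-|S|)$ and condition on the high-probability event from Claim~\ref{clm:goodandbad} (invoked with parameter $\epsilon/8$, so with probability $\geq 1-1/n^{10}$) that $|\good| \geq m/2$ and $|\bad| \leq m/2$, where good/bad are defined with the same $\epsilon$. Here $A^\star \in \argmax_{B:|B|=c} f_S(B)$ and $A \in \argmax_{B:|B|=c}\widetilde{\score}(S\cup B)$, and the sets in question are $\{A_{ij}\}_{i\in A, j\notin S\cup A}$. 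Note $\score_S(A) \le f_S(A^\star)$: indeed $\score_S(A)$ is an average of values $f_S(A_{ij})$ each bounded by $f_S(A^\star)$ by maximality of $A^\star$; so it suffices to prove $f_S(\MS) \ge (1-3\epsilon) f_S(A^\star)$, i.e.\ that $\MS$ is not bad.

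Next I would use the noise multipliers. Write $\xi_{ij}$ for the multiplier attached to the set $S\cup A_{ij}$; these are i.i.d.\ draws from $\mathcal{D}$. The key point is that $\widetilde{f}(S\cup A_{ij}) = \xi_{ij} f(S\cup A_{ij})$, so if $G$ is a good set with a large multiplier and $B'$ is a bad set, then $\widetilde{f}(S\cup G) \ge \xi_G (1-2\epsilon)(f(S)+f_S(A^\star))$ while $\widetilde{f}(S\cup B') \le \xi_{B'}(f(S) + (1-3\epsilon)f_S(A^\star))$; so $\MS$ (the argmax of $\widetilde f$) cannot be bad as long as there exists a good $G$ with $\xi_G$ large enough relative to $\max_{B' \text{ bad}} \xi_{B'}$. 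Quantitatively, since $|\good| \ge m/2$, by the $\mg$ half of Claim~\ref{clm:probmamg} the maximum multiplier over the good sets is $\ge \mg$ with probability $\ge 1-2/n$; since $|\bad| \le m/2$, by the $\mb$ half of Claim~\ref{clm:probmamg} the maximum multiplier over the bad sets is $\le \mb$ with probability $\ge 1 - 2/\log n$. On the intersection of these events, using Lemma~\ref{relationMbMg} with $\eps = \Theta(1/\log\log n)$ (legitimate since $|S| \in O(\log\log n)$, so $m\log n$ and $m/\log n$ are polynomial in $n$, matching the hypotheses on $\mg,\mb$) we get $\mg \ge (1-\eps)\mb$, hence the best good multiplier is at least $(1-\eps)$ times the best bad multiplier. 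I then need one more elementary inequality: comparing $\widetilde f$ on the best good set against $\widetilde f$ on any bad set, the ratio of the $f$-parts is $\ge (1-2\epsilon)(f(S)+f_S(A^\star)) / (f(S)+(1-3\epsilon)f_S(A^\star))$, which I must show is strictly larger than $1/(1-\eps)$; this uses $\epsilon$-significance (which via submodularity gives $f_S(A^\star) \ge (\epsilon/8) f(S)/k$, so the ratio is bounded away from $1$ by a constant depending only on $\epsilon$, comfortably beating $1/(1-\eps) = 1+o(1)$ for large $n$). Combining, on all these events $\MS$ beats every bad set in noisy value, so $\MS \notin \bad$, giving $f_S(\MS) \ge (1-3\epsilon) f_S(A^\star) \ge (1-3\epsilon)\score_S(A)$.

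Finally I would do the probability bookkeeping: union over the failure of Claim~\ref{clm:goodandbad} ($\le 1/n^{10}$), the failure of the $\mg$-event ($\le 2/n$), the failure of the $\mb$-event ($\le 2/\log n$), and the (exponentially small) failure of the smoothing event implicit in Claim~\ref{clm:goodandbad} / the earlier guarantee that $\score_S(A) \ge (1-\epsilon/2)\score_S(A^\star)$; the dominant term is $2/\log n$, and a slightly loose accounting gives the claimed $1-3/\log n$.

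The main obstacle is the second step, specifically making precise the comparison ``best good multiplier $\times$ good $f$-value $>$ best bad multiplier $\times$ bad $f$-value.'' The subtlety is twofold: (i) one must be careful that the $\mg$ and $\mb$ thresholds are applied to the right sample sizes — Claim~\ref{clm:probmamg} is stated for $m/2$ samples precisely because $|\good|$ and $|\bad|$ are respectively lower- and upper-bounded by $m/2$, and $\mg,\mb$ were defined with the full $m$ in the tail probabilities, so the arithmetic $(1-2/(m\log n))^{m/2}$ etc.\ must be tracked; and (ii) the whole argument only works because $\eps = \Omega(1/\log\log n)$ suffices in Lemma~\ref{relationMbMg}, and $\eps$ being this large is affordable only because the multiplicative gap between the good $f$-value and the bad $f$-value is a genuine constant (coming from the good/bad thresholds $1-2\epsilon$ vs $1-3\epsilon$ together with $\epsilon$-significance), not something shrinking with $n$. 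I would want to state this gap computation carefully and note exactly where $k \in O(\log\log n)$ enters (it enters only through $|S| \in O(\log\log n)$, which is what keeps $m\log n$, $m/\log n$ polynomial so that the generalized-exponential-tail estimates of Lemma~\ref{relationMbMg} apply with the stated $\eps$).
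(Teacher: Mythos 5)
Your outline matches the paper's proof: condition on Claim~\ref{clm:goodandbad}, apply both halves of Claim~\ref{clm:probmamg} and Lemma~\ref{relationMbMg}, and conclude that the max over good sets beats every bad set in noisy value so that $\MS$ is not bad. But the final comparison step, which you flag as the ``main obstacle,'' contains a genuine error and a misdiagnosis of where $k \in O(\log\log n)$ enters.

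First, your lower bound on the good side is too weak. You write $\widetilde f(S\cup G) \ge \xi_G\,(1-2\epsilon)\bigl(f(S)+f_S(A^\star)\bigr)$, but the correct (and tighter) bound is $\xi_G\bigl(f(S)+(1-2\epsilon)f_S(A^\star)\bigr)$, since the $(1-2\epsilon)$ loss affects only the marginal part, not $f(S)$. This matters: with your version, letting $r = f_S(A^\star)/f(S)$, the ratio of the $f$-parts is $\frac{(1-2\epsilon)(1+r)}{1+(1-3\epsilon)r}$, which for small $r$ is $\approx 1-2\epsilon < 1$, so the claimed inequality ``ratio $> 1/(1-\eps)$'' simply fails. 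With the correct bound the ratio is $1 + \frac{\epsilon\, f_S(A^\star)}{f(S)+(1-3\epsilon)f_S(A^\star)}$, which is strictly greater than $1$ but only by $\Theta\!\bigl(\epsilon\, f_S(A^\star)/f(S)\bigr)$.

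Second, you assert this gap is ``a constant depending only on $\epsilon$.'' It is not: by $\epsilon$-significance and $|S| \le k \le d\log\log n$ one only gets $f_S(A^\star) \ge \frac{\epsilon}{d\log\log n}\,f(S)$, so the gap is $\Theta(\epsilon^2/\log\log n)$ — the same order as the $\eps$ permitted by Lemma~\ref{relationMbMg}. There is no ``comfortable'' margin; the argument only closes because one can choose the constant, specifically $\gamma < \epsilon^2/(3d\log\log n)$ as the paper does. This is also where $k\in O(\log\log n)$ is truly used — not to keep $m\log n$ polynomial (that is automatic since $m=\Theta(n)$ for any $|S|<n/2$), but because if $k$ were even $\log n$ the gap would shrink to $\Theta(\epsilon^2/\log n)$, which Lemma~\ref{relationMbMg}'s $\eps = \Omega(1/\log\log n)$ could no longer be pushed below, and the good--bad separation would be lost.
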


\begin{proof}
We will use the above claims to argue that with probability at least $1-4/\log n$ the noisy mean value of any set in $\bad$ is smaller than the largest noisy mean value of a set in $\good$.  Since a bad set is defined as a set $B$ for which $f_S(B) \leq (1-3\epsilon)f_{S}(A^\star)$ this implies that the set returned by the algorithm has value at least $(1-3\epsilon)f_{S}(A^\star)$.  Since for any set $A:|A|=c$ we have that $f_{S}(A^\star)$ is an upper bound on $\score_{S}(A)$ will complete the proof.

We will condition on the event that $|\good| \geq c(n-c-|S|)/2$ which happens with probability at least $1-1/n^{10}$ from Claim~\ref{clm:goodandbad}.  Under this assumption, from Claim~\ref{clm:probmamg} we know that with probability at least $1-2/n$ at least one of the noise multipliers of sets in $\good$ has value at least $\mg$, and from Lemma~\ref{relationMbMg} we know that $\mg \geq (1-\gamma)\mb$ for any $\gamma \in \Theta(1/\log\log n)$.  Thus:
\begin{align*}
\max_{A_{ij} \in \good}\tilde{f}(S\cup A_{ij})
&& = 	  && \max_{A_{ij} \in \good}\xi_{A_{ij}} &&\times&& &&   [ && f(S) && + && f_{S}(A_{ij})  && ] &&\\
&& \geq && \mg  &&\times&& 							  && [ && f(S) && + && (1-2\epsilon)f_{S}(A^\star) && ] &&\\
&& \geq && (1-\gamma) \mb   &&\times&& && [ && f(S)&& + &&(1-2\epsilon)f_{S}(A^\star)&&]&&
\end{align*}
Let $B \in \argmax_{C \in \bad} \widetilde{f}(S \cup C)$.  From Claim~\ref{clm:probmamg} we know that w.p. at least $1-2/\log n$ all noise multipliers of sets in $\bad$ are at most $\mb$.  Thus:
$$
\widetilde{f}(S \cup B) = \max_{A_{ij} \in \bad}\tilde{f}(S \cup A_{ij}) = \max_{A_{ij} \in \bad}\xi_{A_{ij}}f(S \cup A_{ij}) \leq \mb\cdot [f(S)+(1-3\epsilon)f_{S}(A^\star)]
$$
Let $d$ be some constant such that $|S| \leq d\log\log n$.  Note that the iteration is $\epsilon$-significant, and therefore due to the maximality of $A^\star$ and since $f(S)\leq \texttt{OPT}$ and the optimal solution has at most $d\cdot \log\log n$ elements we have that:
$$f_{S}(A^\star) \geq \frac{\epsilon}{d\log\log n}f(S).$$
Since Lemma~\ref{relationMbMg} applies to any $\gamma \in \Theta(1/\log\log n)$, we know that for any constant $d$ there is a large enough value of $n$ such that $\gamma< \epsilon^2/3d\log\log n$.
Putting it all together and conditioning on all events we have with probability at least $1-3/\log n$:
\begin{align*}
\widetilde{f}(S \cup \MS) - \widetilde{f}(S \cup B)
& \geq && \Big ( (1-\gamma) \mb\cdot [f(S)+(1-2\epsilon)f_{S}(A^\star)] \Big ) -  \Big (\mb \cdot [f(S)+(1-3\epsilon)f_{S}(A^\star)] \Big )  &&
\end{align*}
\begin{align*}
& \geq &&\mb &&\Big (  && \epsilon f_{S}(A^\star) &&-&& \gamma &\times& &&[ &&(1-2\epsilon) f_{S}(A^\star) &&+ &&f(S)       &&    ] && \Big )&&\\
& \geq &&\mb && \Big ( &&\epsilon f_{S}(A^\star) &&-&& \gamma  &\times&  &&\big [ &&(1-2\epsilon) f_{S}(A^\star) &&+&& \frac{d\log\log n}{\epsilon} f_S(A^\star)      &&   \big ] && \Big )&&\\
& = &&\mb f_{S}(A^\star) && \Big ( &&\epsilon  &&-&& \gamma &\times&  &&\big [ &&(1-2\epsilon) &&+&& \frac{d\log\log n}{\epsilon}      &&   \big ] && \Big )&&\\
& > &&\mb f_{S}(A^\star) && \Big ( &&\epsilon  &&-&& \frac{\epsilon^2}{3d\log \log n} &\times& &&\big [ &&(1-2\epsilon) &&+&& \frac{d\log\log n}{\epsilon}      &&   \big ] && \Big )&&\\
& > &&\mb f_{S}(A^\star) && \Big ( &&\epsilon  &&-&& \frac{2\epsilon}{3} \Big ) && && && && && && &&
\\
& > && 0
%
%
%
\end{align*}
Since the difference is strictly positive this implies that with probability at least $1-3/\log n$ a bad set will not be selected by the algorithm which concludes our proof.
\end{proof}

\subsection*{Approximation Guarantee of SM-Greedy}

\begin{theorem}\label{final-thm}
For any monotone submodular function $f:2^N \to \mathbb{R}$ and $\epsilon>0$, when ${k  \in \Omega({1}/{\epsilon}) \cap O(\log \log n)}$, there is a $(1 - 1/e - \epsilon)$ approximation for $\max_{S:|S|\leq k} f(S)$, with probability $1-4/\log n$ given access to a noisy oracle whose distribution has a generalized exponential tail.
\end{theorem}

\begin{proof}\label{final-thm_proof}
First, for the case in which $k \in \Omega({1}/{\epsilon^2})$, we can apply \textsc{SM-Greedy} as described in the main body of the paper.  Let $\delta = \epsilon/5$ and set $c = 16/\delta$.  At any given $\delta/8$-significant iteration of \textsc{SM-Greedy} from Lemma~\ref{single} we know that with probability at least $1-3/\log n$ we have that $f(\MS)\geq (1-\delta) \score_S(A)$, where $A \in \argmax_{B:|B|=c}\widetilde{F}(B) $.  We can then apply Lemma~\ref{lem:meanapx} which implies that with probability at least $1-\frac{4}{\log n}$ we have a $1-1/e-5\delta = (1-1/e-\epsilon)$ approximation.

In the case $k \in \Omega(1/\epsilon) \cap O(1/\epsilon^2)$ note that taking bundles of size $c\in O(1/\epsilon)$ in each iteration may result in a $1/2$ approximation.  In this case, we therefore enumerate over all possible sets of size $c=k$ and output ${\MS = \argmax \widetilde{f}(A_{ij})}$ where ${A = \argmax_{B:|B|=k} \widetilde{\score}(B)}$.
By Lemma~\ref{single} we know that w.p. $1-3\log n$:
\begin{align}
f(\MS) \geq (1-48/c)F(A) = (1-48/k)F(A) \geq (1-\epsilon/2)F(A)\label{eq:smallk1}
\end{align}
By the smoothing lemma (Lemma \ref{algoworks}) we know that for any fixed $\epsilon$ and sufficiently large $n$ with overwhelming probability $\score (A) \ge (1-\epsilon/2)\score(A^\star)$ for ${A^\star \in \argmax_{B:|B|=k}f(B)}$.  By the sampled mean method (Lemma~\ref{lem:avg}) we know that $F(A^\star) \geq (1-1/k)f(A^\star)$, thus:
\begin{align}
F(A) \geq (1-1/k - \epsilon/2)f(A^\star) \label{eq:smallk2}
\end{align}
Putting (\ref{eq:smallk1}) and (\ref{eq:smallk2}) together and taking a union bound we get our result.
\end{proof}

\newpage

\section{Optimization for Very Small $k$}\label{sec:very_smallk_appendix}

\subsection*{Smoothing Guarantees}

\begin{lemma*}[\ref{smoothing_verysmall}]
Let $A \in \argmax_{B:|B|=k}\widetilde{F}(B)$.  Then, for any fixed $\epsilon>0$ w.p. $1-e^{-\Omega(\epsilon^2(n-k))}$:
$$F(A) \geq (1-\epsilon)\max_{B:|B|=k} F(B)$$
\end{lemma*}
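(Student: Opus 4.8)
The plan is to reduce this to the general smoothing lemmas in Appendix~\ref{sec:appendix_smoothing}, namely Lemma~\ref{lem:noisysmoothing_lowerbound} and Lemma~\ref{lem:boundonb}, combined with a union bound over all $\binom{n}{k}$ candidate sets of size $k$. Here the smoothing neighborhood of a set $B$ with $|B|=k$ is $\mathcal{H}(B)=\{B\cup x : x\notin B\}$, so $t := |\mathcal{H}(B)| = n-k$, and since $k$ is a constant we have $t\in\Theta(n)$, which is what makes the concentration bounds meaningful.

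First I would fix $A^\star\in\argmax_{B:|B|=k}F(B)$ and argue that for any $B$ with $F(B)<(1-\epsilon)F(A^\star)$ we have $\widetilde F(B)<\widetilde F(A^\star)$ with probability $1-e^{-\Omega(\epsilon^2 n)}$; then a union bound over the $\binom{n}{k}=O(n^k)$ sets $B$ (a polynomial, since $k$ is constant) gives the claim. To get the per-pair bound: the key structural point is that the variation $v_\emptyset(\mathcal{H}(A^\star))$ of the neighborhood of the maximizer is bounded. Indeed every set in $\mathcal{H}(A^\star)$ has the form $A^\star\cup x$ with $|A^\star\cup x|=k+1$; by monotonicity $f(A^\star\cup x)\geq f(A^\star)$, and by submodularity/subadditivity $f(A^\star\cup x)\leq f(A^\star)+f(x)\leq f(A^\star)+\max_{a}f(a)$, and since $A^\star$ maximizes $f$ over sets of size $k$ (note $F(A^\star)\geq f(A^\star)\geq f(\{a^\star\})$ for the best singleton $a^\star$, so singletons don't help), the ratio $\max/\min$ over $\mathcal{H}(A^\star)$ is bounded by a constant depending only on $k$. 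This lets me invoke Lemma~\ref{lem:noisysmoothing_lowerbound} to lower bound $\widetilde F(A^\star)$ by roughly $(1-\lambda)\mu\cdot(1-O(t^{-1/4}))F(A^\star)$, and Lemma~\ref{lem:boundonb} to upper bound $\widetilde F(B)$ by roughly $(1+\lambda)\mu\cdot(F(B)+3t^{-1/4}\alpha_{\max})$ where $\alpha_{\max}\leq f(A^\star\cup x)$ for the relevant $x$, which is itself at most $O(F(A^\star))$ since values of $(k+1)$-sets are within a constant factor of $F(A^\star)$ (again using that $f$ is maximized at $A^\star$ among $k$-sets plus bounded marginals). Choosing $\lambda=\Theta(\epsilon)$, the gap $\widetilde F(A^\star)-\widetilde F(B)$ becomes strictly positive, and the failure probability is $e^{-\Omega(\lambda^2 t^{1/4}/\omega)}$ where $\omega$ is the largest realized noise multiplier. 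By Lemma~\ref{lem:upperBoundNoise}, $\omega\leq n^{o(1)}$ (in fact $\omega = O(\mathrm{polylog})$) except with exponentially small probability, and since $k$ is constant, $t^{1/4}=\Omega(n^{1/4})$, so this exponent is $e^{-\Omega(\epsilon^2 n^{1/4}/\mathrm{polylog}\,n)}$; absorbing the union bound over $O(n^k)$ sets and over the noise tail event leaves a failure probability of $e^{-\Omega(\epsilon^2 (n-k))}$ after possibly weakening the exponent — in fact the cleanest route is to note the neighborhood has $n-k$ elements, so a direct Chernoff/Bernstein estimate on $\sum_i\xi_i f(A^\star\cup x_i)$ versus $\sum_i\xi_i f(B\cup x_i)$, using the bounded-variation structure, already yields the $e^{-\Omega(\epsilon^2(n-k))}$ rate claimed in the statement without needing the $t^{1/4}$ binning.

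The main obstacle is bounding the variation and the ratio $\alpha_{\max}/F(A^\star)$ cleanly: one must verify that, because $k$ is a constant, all sets appearing in any neighborhood $\mathcal{H}(B)$ for $B$ of size $k$ have $f$-value within a constant factor of $F(A^\star)$ (on the relevant event that $A^\star$ is indeed the smooth maximizer and its value is non-negligible), so that the additive $t^{-1/4}\alpha_{\max}$ error terms from Lemma~\ref{lem:boundonb} are genuinely lower-order compared to the multiplicative $\epsilon$ slack we are trying to preserve. A subtlety worth flagging: unlike the large-$k$ and small-$k$ sections, there is no ``relevance/significance'' assumption imposed here, so one should handle the degenerate case where $F(A^\star)$ itself is tiny — but in that case every set has tiny smooth value, all noisy values are correspondingly tiny, and the inequality $F(A)\geq(1-\epsilon)\max_B F(B)$ holds vacuously or by the same comparison once everything is scaled; alternatively, submodular monotone functions with $F(A^\star)=0$ make the statement trivial. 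Modulo these bookkeeping points, the proof is a direct application of the smoothing framework.
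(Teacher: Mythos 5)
Your proposal mirrors the paper's own proof closely: same smoothing neighborhood $\mathcal{H}(B)=\{B\cup x : x\notin B\}$ of size $t=n-k$, same invocation of Lemma~\ref{lem:noisysmoothing_lowerbound} to lower-bound $\widetilde F(A^\star)$ and Lemma~\ref{lem:boundonb} to upper-bound $\widetilde F(B)$ for each competitor $B$ with $F(B)<(1-\epsilon)F(A^\star)$, same union bound over the $O(n^k)$ candidates, and the same key structural claim that $v(\mathcal{H}(A^\star))$ is bounded by a small constant (the paper says $\leq 2$).

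The variation bound is the step to scrutinize, in both write-ups. You justify $\max_x f(A^\star\cup x)\leq 2f(A^\star)$ by asserting ``$A^\star$ maximizes $f$ over sets of size $k$'' and $f(A^\star)\geq f(\{a^\star\})$ for the best singleton $a^\star$ --- but $A^\star$ maximizes $F$, not $f$, and the two can differ badly. Concretely, for $k=1$ let $f$ be a coverage function over $\{0,1,\dots,M\}$ where $a_1$ covers $\{0\}$, $a_2$ covers $\{1\}$, and the other $n-2$ elements each cover $\{1,\dots,M\}$. A direct computation gives $F(\{a_1\})-F(\{a_i\})=\tfrac{n-1-M}{n-1}$ for $i\geq 3$ and $F(\{a_1\})>F(\{a_2\})$, so for any $M<n-1$ the $F$-maximizer is $A^\star=\{a_1\}$; yet $f(\{a_1\})=1$, $\min_x f(\{a_1,x\})=2$, $\max_x f(\{a_1,x\})=M+1$, so $v(\mathcal{H}(A^\star))=(M+1)/2$, which with $M=n/2$ is $\Theta(n)$. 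Then the parameter $\min\{1,2v\cdot t^{-1/4}\}$ in Lemma~\ref{lem:noisysmoothing_lowerbound} saturates at $1$ and the lower bound is vacuous. What $F$-optimality actually gives is only $\max_x f(A^\star\cup x)\leq f(A^\star)+\max_a f(a)\leq 2F(A^\star)$, i.e.\ $v(\mathcal{H}(A^\star))\leq 2F(A^\star)/f(A^\star)$, which is not a priori a constant. The paper makes the identical leap (``the optimality of $A^\star$ and submodularity imply $f(A^\star\cup x)\leq 2f(A^\star)$''), so your sketch reproduces the gap rather than introducing one; a fix would anchor the comparison on a $k$-set $B$ that \emph{both} contains the best singleton (forcing $v(\mathcal{H}(B))\leq 2$) \emph{and} satisfies $F(B)\geq(1-\epsilon)F(A^\star)$ (which holds when $f(A^\star)$ is a small fraction of $F(A^\star)$, i.e.\ precisely the regime where anchoring on $A^\star$ fails), but this case analysis appears in neither write-up. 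Your two side observations are sound: the smoothing lemmas deliver only failure probability $e^{-\Omega(\lambda^2 t^{1/4}/\omega)}$, not the stated $e^{-\Omega(\epsilon^2(n-k))}$, so the direct Chernoff estimate you propose would be needed to match the claimed rate; and the $F(A^\star)=0$ degenerate case is indeed vacuous.
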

\begin{proof}
The proof follows the same reasoning as those from previous sections.  Let $A^\star = \argmax_{B:|B|=k}F(B)$.  
We will show that w.h.p. no set $B$ for which $F(B)<(1-\epsilon)F(A^\star)$ beats $A$.  The size of the smoothing set is $t=n-k$, and $\omega$ is an upper bound on the noise multiplier. 

Note that the optimality of $A^\star$ and submodularity imply that $f(A^\star \cup x) \leq 2f(A^\star)$, for all $x \in N\setminus A^\star$.  Hence from monotonicity the variation is bounded by 2: 
$$v(A^\star) = \frac{\max_{x \in N\setminus A}f(A^\star\cup x)}{\min_{x \in N\setminus A}f(A^\star\cup x) } \leq \frac{2f(A^\star)}{f(A^\star)} = 2$$
We can therefore apply Lemma~\ref{lem:boundonb} and get that with probability at least $1-e^{\Omega(\lambda^2t^{1/4}/\omega)}$:
$$\widetilde{F}(A^\star) \geq (1-\lambda)\mu \left (1- 4t^{-1/4} F(A^\star)  \right )$$
To upper bound $\widetilde{F}(B)$ for a set $B$ s.t. $F(B) < (1-\epsilon)F(A^\star)$, note that the value of largest set in the smoothing neighborhood is $\max_{x \in N\setminus B}f(B\cup x)\leq 2f(A^\star)$.  Hence, from Lemma~\ref{lem:noisysmoothing_lowerbound} we get that with probability at least 
$1-e^{\Omega(\lambda^2t^{1/4}/\omega)}$:
$$F(B) \leq (1+\lambda)\mu \left (F(B) +6t^{-1/4}F(A^\star) \right )$$
Therefore when $n$ is sufficiently large s.t. $t^{-1/4} \leq \epsilon/100$ and $\lambda<1$ we get that:
\begin{align*}
F(A^\star) - F(B)
& \geq (1-\lambda)\mu (1-4t^{-1/4})F(A^\star) - (1+\lambda)\mu \left( F(B) + 6t^{-1/4}F(A^\star)  \right )\\
& \geq \mu \left ( (1-\lambda)(1-\frac{4\epsilon}{100})F(A^\star) - (1+\lambda)(1-\epsilon)F(A^\star) - (1+\lambda)\frac{6\epsilon}{100}F(A^\star)\right ) \\
& \geq \mu \left ( (1-\lambda)(1-\frac{4\epsilon}{100})F(A^\star) - (1+\lambda)(1-\epsilon)F(A^\star) - (1+\lambda)\frac{6\epsilon}{100}F(A^\star)\right ) \\
& > \mu \cdot F(A^\star)\left ( \epsilon -2\lambda - \epsilon/5\right ) \\
\end{align*}
Using $\lambda<\epsilon/10$ the above inequality is strictly positive.  Conditioning on the event of $\omega$ being sufficiently small completes the proof.
\end{proof}

\subsection*{An Approximation Algorithm for Very Small $k$}

\paragraph{Approximation guarantee in expectation.}  We first present the algorithm whose approximation guarantee is arbitrarily close to $k/(k+1)$, in expectation. 

\begin{algorithm}
\caption{\textsc{Exp-Small-Greedy}}\label{alg:Small-Greedy(exp)}
\label{c}
\begin{algorithmic}[1]
\INPUT budget $k$
	\STATE $A \leftarrow \arg\max_{B\ : |B|=k}\widetilde{F}(B)$
	\STATE $x \leftarrow \textrm{select random element from $N\setminus A$}$
	\STATE $ \MS \leftarrow \textrm{random set of size $k$ from $A \cup x$}$
\RETURN $\MS$ 
\end{algorithmic}
\end{algorithm}

\begin{theorem}\label{thm:tinyk}
For any submodular function $f:2^N \to \mathbb{R}$, the algorithm \textsc{Exp-Small-Greedy} obtains returns a $(k/(k+1) - \epsilon)$ approximation for $\max_{S:|S|\leq k}f(S)$, in expectation, for any fixed $\epsilon>0$.
\end{theorem}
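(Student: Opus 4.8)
The plan is to combine the smoothing guarantee of Lemma~\ref{smoothing_verysmall}, used as a black box to dispose of the noise, with an elementary combinatorial averaging argument that accounts for the $\tfrac{k}{k+1}$ loss. Lemma~\ref{smoothing_verysmall} says that if $A\in\argmax_{B:|B|=k}\widetilde{F}(B)$, then on an event $E$ of probability $1-e^{-\Omega(\epsilon^2(n-k))}$ we have $F(A)\geq(1-\epsilon)\max_{B:|B|=k}F(B)$. Since $f$ is monotone, the optimiser $O$ of $\max_{S:|S|\le k}f(S)$ may be taken of size exactly $k$, and for every $x\notin O$ monotonicity gives $f(O\cup x)\geq f(O)=\texttt{OPT}$, so $\max_{B:|B|=k}F(B)\geq F(O)=\mathbb{E}_{x\notin O}[f(O\cup x)]\geq\texttt{OPT}$. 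Hence on $E$ we have $F(A)\geq(1-\epsilon)\texttt{OPT}$, which reduces the theorem to showing that the down-sampling step of \textsc{Exp-Small-Greedy} loses at most a $\tfrac{k}{k+1}$ factor against $F(A)$, in expectation over the algorithm's coins.

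The combinatorial core is the following fact, which is exactly the telescoping computation already used in the proof of Lemma~\ref{lem:avg}: for any $T$ with $|T|=k+1$, if $S$ is a uniformly random size-$k$ subset of $T$ (i.e. a uniformly random element is removed), then $\mathbb{E}[f(S)]\geq\tfrac{k}{k+1}f(T)$. Writing $T=\{b_1,\dots,b_{k+1}\}$ in an arbitrary order, $\mathbb{E}[f(S)]=\tfrac{1}{k+1}\sum_{i}\big(f(T)-f_{T\setminus b_i}(b_i)\big)$, and by diminishing returns $\sum_i f_{T\setminus b_i}(b_i)\leq\sum_i\big(f(\{b_1,\dots,b_i\})-f(\{b_1,\dots,b_{i-1}\})\big)=f(T)-f(\emptyset)\leq f(T)$ (using $f(\emptyset)\geq 0$), so $\mathbb{E}[f(S)]\geq\tfrac{k}{k+1}f(T)$. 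Applying this with $T=A\cup x$ for each $x\notin A$ and averaging over the uniformly random $x\notin A$ the algorithm picks, $\mathbb{E}[f(\MS)\mid A]=\mathbb{E}_{x\notin A}\,\mathbb{E}_{S}[f(S)]\geq\tfrac{k}{k+1}\,\mathbb{E}_{x\notin A}[f(A\cup x)]=\tfrac{k}{k+1}F(A)$, where the inner expectation is over the uniformly random size-$k$ subset $S$ of $A\cup x$.

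Combining the two pieces: conditioned on $E$ (and on the resulting $A$), the algorithm's randomness gives $\mathbb{E}[f(\MS)\mid E]\geq\tfrac{k}{k+1}(1-\epsilon)\texttt{OPT}$, while conditioned on $\overline{E}$ we use only $f(\MS)\geq 0$. For $n$ large enough $\Pr[E]\geq 1-e^{-\Omega(\epsilon^2(n-k))}\geq 1-\epsilon$, so $\mathbb{E}[f(\MS)]\geq(1-\epsilon)\tfrac{k}{k+1}(1-\epsilon)\texttt{OPT}\geq\big(\tfrac{k}{k+1}-2\epsilon\big)\texttt{OPT}$; rescaling $\epsilon$ gives the claimed $\big(\tfrac{k}{k+1}-\epsilon\big)$ bound (and this specialises to the tight $1/2$ when $k=1$). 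Note that conditioning on the good noise event $E$ already yields the bound in expectation over the algorithm's coins alone.

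I do not expect a genuine obstacle here: the smoothing lemma is invoked as a black box and the $\tfrac{k}{k+1}$ averaging fact is elementary submodularity — the only care needed is in organising the conditioning on $E$ versus $\overline{E}$. The one conceptual point worth stating explicitly is \emph{why} the loss arises and where: optimality of $A$ for $\widetilde{F}$ only certifies that $A$ is good \emph{on average over its neighborhood} $\mathcal{H}(A)$, not that $A$ itself, nor any fixed neighbor, has large true value; drawing a random neighbor $A\cup x$ and then down-sampling to size $k$ is precisely the device that converts this average guarantee into an in-expectation guarantee, at the unavoidable cost of the sampling factor $\tfrac{k}{k+1}$.
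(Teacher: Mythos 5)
Your proposal is correct and follows essentially the same route as the paper's proof: bound $\mathbb{E}[f(\MS)]$ by $\frac{k}{k+1}F(A)$ via the telescoping down-sampling argument, then apply Lemma~\ref{smoothing_verysmall} to get $F(A)\geq(1-\epsilon)\max_{B:|B|=k}F(B)$, and finally use monotonicity to lower-bound that maximum smooth value by $\texttt{OPT}$. The one place you are more careful than the paper's write-up is in explicitly conditioning on the good smoothing event $E$ and absorbing its failure probability into the $\epsilon$ slack; the paper states the chain of inequalities without this bookkeeping, so your version is a small but genuine tightening of the argument rather than a different proof.
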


\begin{proof}
From Lemma~\ref{lem:avg} we know that $f(\MS) \geq (k/(k+1))F(A)$.  Let $A^\star = \argmax_{B:|B|=k}f(B)$.  From monotonicity we know that $f(A^\star) \leq F(A^\star)$.   Applying Lemma~\ref{smoothing_verysmall} we get that for the set $F(A) \geq (1-\epsilon)F(A^\star)$.  Hence:
\begin{equation*}
f(\MS) \geq \left (\frac{k}{k+1}\right)F(A) \geq (1-\epsilon)\left (\frac{k}{k+1} \right)F(A^\star) \geq (1-\epsilon)\left (\frac{k}{k+1} \right) f(A^\star) 
> \left ( \left (\frac{k}{k+1} \right)-\epsilon \right ) \texttt{OPT}.\qedhere
\end{equation*}
\end{proof}

\paragraph{High probability.} To obtain a result w.h.p. we modify the algorithm above.  The algorithm enumerates all possible subsets of size $k-1$, and then select the set $A\in \argmax_{B:|B|=k-1}\widetilde{F}(B)$.  The algorithm then selects $\MS \in \argmax_{X \in \mathcal{H}(A)} \widetilde{f}(X)$.  A formal description is added below.

\begin{algorithm}
\caption{\textsc{WHP-Small-Greedy}}\label{alg:Small-Greedy(hp)}
\label{c}
\begin{algorithmic}[1]
\INPUT budget $k$
	\STATE $A \leftarrow \arg\max_{B\ : |B|=k-1}\widetilde{F}(B)$
	\STATE $ \MS \leftarrow \argmax_{x \in N\setminus A}\widetilde{f}(A \cup x)$
\RETURN $\MS$
\end{algorithmic}
\end{algorithm}
The analysis of the algorithm is similar to the high probability proof from Section~\ref{sec:smallk}.

\begin{theorem*}[\ref{thm:verysmallkwhp}]
For any submodular function $f:2^N \to \mathbb{R}$ and any fixed $\epsilon>0$ and constant $k$, there is a $\left(1-1/k-\epsilon\right)$-approximation algorithm for $\max_{S:|S|\leq k}f(S)$ which only uses a generalized exponential tail noisy oracle, and succeeds with probability at least $1-6/\log n$.
\end{theorem*}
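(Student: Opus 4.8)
The plan is to run the same high‑probability recipe used for \textsc{SM-Greedy} in Section~\ref{sec:smallk}, but the accounting is much lighter because $k$ is a constant: the set $A$ of size $k-1$ produced in line~1 of \textsc{WHP-Small-Greedy} has constant size, there is only one greedy step, and in particular the $\Theta(\log\log n)$ slackness that made the small‑$k$ case delicate never appears. Fix a small constant $\eta>0$ (think $\eta=\epsilon/6$) and let $O$ be an optimal solution, which by monotonicity we may take to have $|O|=k$.

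\textbf{Step 1: a size-$(k-1)$ target worth $(1-1/k)\texttt{OPT}$, then smoothing.} Exactly as in the proof of Lemma~\ref{lem:avg}, the diminishing‑returns telescoping identity gives $\sum_{o\in O} f_{O\setminus o}(o)\le f(O)$, so some $o^\star$ has $f_{O\setminus o^\star}(o^\star)\le f(O)/k$, and the set $B^\star:=O\setminus\{o^\star\}$ of size $k-1$ satisfies $f(B^\star)\ge (1-1/k)\texttt{OPT}$. Since $F(B^\star)=\mathbb{E}_{x\notin B^\star}[f(B^\star\cup x)]\ge f(B^\star)$ by monotonicity, we have $\max_{|B|=k-1}F(B)\ge (1-1/k)\texttt{OPT}$. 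I would then invoke the smoothing guarantee Lemma~\ref{smoothing_verysmall} with $k-1$ in place of $k$: the set $A\in\argmax_{|B|=k-1}\widetilde F(B)$ chosen by the algorithm satisfies, with probability $1-e^{-\Omega(\eta^2(n-k))}$,
$$F(A)\ \ge\ (1-\eta)(1-1/k)\,\texttt{OPT},$$
and I condition on this event henceforth.

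\textbf{Step 2: a constant fraction of the candidates are ``good''.} Let $m=n-k+1$ be the number of candidate elements. For every $x\notin A$ we have $f(A\cup x)\le\texttt{OPT}$ because $|A\cup x|=k$, and combined with Step~1, $f(A\cup x)\le \texttt{OPT}\le C\cdot F(A)$ with $C=C(k,\eta):=\tfrac{k}{(k-1)(1-\eta)}>1$. Call $x$ \emph{good} if $f(A\cup x)\ge(1-2\eta)F(A)$ and \emph{bad} if $f(A\cup x)\le(1-3\eta)F(A)$. Since the average of $f(A\cup x)$ over $x$ equals $F(A)$ while each term is at most $CF(A)$, a one‑line averaging argument forces the fraction of good $x$ to be at least the constant $\delta:=\tfrac{2\eta}{C-1+2\eta}>0$; so there are $\delta m=\Omega(n)$ good elements and (trivially) at most $m$ bad ones. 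This is the analogue of Claim~\ref{clm:goodandbad}, with $F(A)$ playing the role of the reference value; note I only need a constant fraction good, not half.

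\textbf{Step 3: the noise‑tail comparison, and wrap‑up.} Define $\theta_g,\theta_b$ by $\int_{\theta_b}^{\infty}\rho=\tfrac{2}{m\log n}$ and $\int_{\theta_g}^{\infty}\rho=\tfrac{2\log n}{\delta m}$; as in Claim~\ref{clm:probmamg}, with probability $1-2/\log n$ every bad $x$ has $\xi_{A\cup x}\le\theta_b$, and with probability $1-2/n$ some good $x$ has $\xi_{A\cup x}\ge\theta_g$. The two integral masses differ by only a polylogarithmic factor, so Lemma~\ref{relationMbMg} (or Lemma~\ref{relationMbMg_bounded} for bounded $\mathcal{D}$) gives $\theta_g\ge(1-\gamma)\theta_b$ with $\gamma=o(1)$. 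On these events a good $x$ with a large multiplier has $\widetilde f(A\cup x)=\xi_{A\cup x}f(A\cup x)\ge\theta_g(1-2\eta)F(A)$, while every bad $x$ has $\widetilde f(A\cup x)\le\theta_b(1-3\eta)F(A)$; since $(1-\gamma)(1-2\eta)>(1-3\eta)$ for $n$ large — here it is crucial that $k$, hence $\eta$ and $C$, are constants, so a plain $\gamma=o(1)$ suffices with no $\log\log n$ blow‑up — the maximizer $x^\star$ returned by the algorithm cannot be bad, i.e. $f(\hat A)>(1-3\eta)F(A)\ge(1-3\eta)(1-\eta)(1-1/k)\texttt{OPT}\ge(1-1/k-\epsilon)\texttt{OPT}$ for $\eta=\epsilon/6$. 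Folding in the exponentially small failure probabilities of Step~1 and of bounding the largest realized multiplier (Lemma~\ref{lem:upperBoundNoise}), the total failure probability is at most $2/\log n+2/n+e^{-\Omega(\eta^2 n)}\le 6/\log n$. The only real obstacle is Step~3 — getting a genuine constant fraction of good sets and then running the order‑statistics comparison of $\theta_g$ and $\theta_b$ via the generalized‑exponential‑tail property — which is essentially a stripped‑down instance of Lemma~\ref{single}; Steps~1 and~2 are routine.
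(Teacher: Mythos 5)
Your proof is correct and follows essentially the same route as the paper's: establish that the best size-$(k-1)$ set is worth $(1-1/k)\OPT$, apply the smoothing lemma to show $F(A)$ is near-maximal, and then run the good/bad-set order-statistics argument using $\theta_g$, $\theta_b$ and Lemma~\ref{relationMbMg}. The only cosmetic difference is that you define good and bad relative to $F(A)$ (the smooth value of the algorithm's pick) and bound $f(A\cup x)\le C\cdot F(A)$, whereas the paper defines them relative to $f(A^\star)$ (the true optimum over size $k-1$) and bounds $f(A\cup x)\le 2f(A^\star)$; both choices yield a constant fraction of good candidates and the rest of the argument is identical.
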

\begin{proof}
Let $A \in \argmax_{B:|B|=k-1}\widetilde{F}(B)$, and let $A^\star \in \argmax_{B:|B|=k-1} f(B)$.  Since $A^\star$ is the optimal solution over $k-1$ elements, from submodularity we know that $f(A^\star) \geq (1-1/k)\texttt{OPT}$. 
%
%
%
%
What now remains to show is that $\MS \in \argmax_{x \in N\setminus A} \widetilde{f}(A \cup x)$ is a $(1-\epsilon)$ approximation to $F(A)$.  To do so recall the definitions of good and bad sets from the previous section.  Let $\delta = \epsilon/3$.  Suppose that a set $X$ is in $\delta$\texttt{-good}(A) if $f(X) \geq (1-2\delta)f(A^\star)$ and in $\delta$\texttt{-bad}(A) if $f(X)\leq (1-3\delta)f(A^\star)$.  We will show that the set selected has value at least as high as that of a bad set, i.e. $(1-3\delta)f(A^\star)$ which will complete the proof.  

We first show that with probability at least $1-6/\log n$ the noise multiplier of some good set is at least $\mg$ and of a bad set is at most $\mb$.  To do so we will first argue about the size of $\delta$\texttt{-good}(A) and $\delta$\texttt{-bad}(A).  From Lemma~\ref{smoothing_verysmall} and the maximality of $A$ we know that with exponentially high probability  $F(A) \geq (1-\delta)F(A^\star)$.  Therefore for $m = n-k$:
$$F(A) = \frac{1}{m}\sum_{x \notin A}f(A \cup x) \geq (1-\delta)\frac{1}{m}\sum_{x \notin A^\star}f(A^\star \cup x) \geq (1-\delta)f(A^\star) $$  
Due to the maximality of $A^\star$ and submodularity we know that $f(A\cup x)\leq 2f(A^\star)$ for all $x \notin A$:
$$\sum_{x \notin A}f(A\cup x) \leq |\delta\texttt{-good}(A)|2f(A^\star) + (m- |\delta\texttt{-good}(A)|)(1-2\delta)f(A^\star)$$
Putting the these bounds on $F(A)$ together and rearranging we get that:
$$ |\delta\texttt{-good}(A)| \geq \frac{\delta \cdot m}{1+2\epsilon} \geq \frac{\delta m}{3}$$
Therefore, for sufficiently large $n$ the likelihood of at least one set achieving value at least $\mg$ is: 
$$\Pr[ \max \{\xi_{1},\ldots,\xi_{\delta \cdot m/3}\} \geq \mg] \geq 1- \left (1-\frac{2\log n}{m} \right)^\frac{\delta  m}{3} \geq 1 - \frac{2}{n^{\delta/3}} \geq 1-\frac{1}{\log n}$$
To bound $\delta$\texttt{-bad}(A) we will simply note that it is trivial that $\delta\texttt{-bad}(A)< m$.  Thus, the likelihood that all noise multipliers of bad sets are bounded from above by $\mb$ is:  
\[\Pr \left [\max \{\xi_1, \ldots \xi_{m}\} \le \mb\right ] \geq \left (1 - \frac{2}{m \log n} \right )^{m} > \left (1 - \frac{4}{\log n} \right)\]
Thus, by a union bound and conditioning on the event in Lemma~\ref{smoothing_verysmall} we get that $\mb$ is an upper bound on the value of the noise multiplier of bad sets and $\mg$ is with lower bound on the value of the noise multiplier of a good stem all with probability at least $1-6/\log n$.  From Lemma~\ref{relationMbMg} we know that for any $\gamma \in \Theta(1/\log\log n)$ we have that $\mg \geq (1-\gamma)\mb$.  Thus: 
\begin{align*}
\max_{X \in  \delta\texttt{-good}(A)}\tilde{f}(X)
= \max_{X \in \delta\texttt{-good}(A)}\xi_{X }f(X) \geq \mg\cdot (1-2\delta)f(A^\star)
 \geq (1-\gamma) M_{b}\cdot (1-2\delta)f(A^\star)
\end{align*}
Let $B \in \argmax_{C \in \delta\texttt{-bad}} \widetilde{f}(S \cup C)$.  From Claim~\ref{clm:probmamg} we know that with probability at least $1-2/\log n$ all noise multipliers of sets in $\bad$ are at most $\mb$.  Thus:
$$
\widetilde{f}(S \cup B) = \max_{X \in \delta\texttt{-bad}}\tilde{f}(X) = \max_{X \in \bad}\xi_{X}f(X) \leq M_{b}\cdot(1-3\delta)f(X) 
$$
Putting it all together we have with probability at least $1-6/\log n$:  
\begin{align*}
\widetilde{f}(\MS) - \widetilde{f}(B) 
 \geq M_{b}f(A^\star)\cdot \left((1-\gamma)(1-2\delta) - (1-3\delta)\right)
 > \mb f(A^\star)\left ( \delta -\gamma \right )
\end{align*}
Since Lemma~\ref{relationMbMg} applies to any $\gamma \in \Theta(1/\log\log n)$, and $\delta$ is fixed it applies to $\gamma<\delta$ and the difference is positive.  Since $\delta = \epsilon/6$ this completes our proof.
\end{proof}

\subsection*{Information Theoretic Lower Bounds for Constant $k$}
Surprisingly, even for $k=1$ no algorithm can obtain an approximation better than $1/2$, which proves a separation between large and small $k$.\footnote{We note that if the algorithm is not allowed to query the oracle on sets of size greater than $k$, Claim~\ref{clm:lower_constant} can be extended to show a$O(n)$ inapproximability, so choosing a random element is almost the best possible course of action.} The following is a tight bound for $k=1$.

\begin{claim}\label{clm:lower_constant}
There exists a submodular function and noise distribution for which w.h.p. no randomized algorithm with a noisy oracle can obtain an approximation better than $1/2 +O(1/\sqrt{n})$ for $\max_{a \in N} f(a)$.
\end{claim}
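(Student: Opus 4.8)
The plan is to exhibit a distribution over instances --- a monotone submodular function together with a consistent noisy oracle --- on which every algorithm fails, and then to derive the claim from it by a standard averaging (Yao) argument, using that with overwhelming probability over the noise the oracle carries essentially no information. Fix a parameter $p=1/\sqrt n$ and draw a random planted set $X\subseteq N$ by including each element independently with probability $p$, so that $|X|=(1\pm o(1))\sqrt n$ with high probability and in particular $X\neq\emptyset$. I would then build $f$ as a capped affine (coverage-type) function of the pair $(|S\cap X|,\,|S\setminus X|)$ so that: (i) every singleton inside $X$ is worth exactly twice every singleton outside $X$, hence $\max_{a\in N}f(a)=\texttt{OPT}$ is attained only on elements of $X$; and (ii) a cap makes the value of every set that is either large or disjoint from $X$ independent of $X$, so that ``obvious'' queries (the whole ground set, or a random large set) reveal nothing. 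This is the $k=1$ specialization of the pair of hard instances used in the proof of Theorem~\ref{thm:adversarial}, and like there the function is literally a coverage function.

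Next I would choose the noise distribution $\mathcal{D}$ so that the noisy oracle $\widetilde f(S)=\xi_S f(S)$ is, with overwhelming probability over the $\xi_S$'s, \emph{uninformative} about $X$. Concretely, take $\xi_S = c\cdot B_S$ with $B_S$ a Bernoulli variable equal to $1$ only with a probability so small that even if the algorithm queries all $2^n$ subsets, the expected number of queries returning a nonzero value is $o(1)$; on the complementary event every query returns $0$. (A sufficiently heavy tail achieves the same effect while keeping $\mathcal{D}$ supported away from $0$; note that no tail condition is imposed on $\mathcal{D}$ here.) Conditioned on the high-probability event that the oracle is identically zero, the algorithm's transcript is independent of $X$, so its output $a_{\mathrm{out}}$ depends only on its own coins, whence $\Pr[a_{\mathrm{out}}\in X]\le \mathbb{E}|X|/n + o(1)=O(1/\sqrt n)$.

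It then follows that the expected value of the returned element is at most $(\text{singleton value outside }X)\cdot\big(1+O(1/\sqrt n)\big)=\texttt{OPT}\big(\tfrac12+O(1/\sqrt n)\big)$, and since the ratio is bounded in $[0,1]$, combining the overwhelmingly likely all-zero event with a Markov/union bound over the draw of $X$ (and the algorithm's coins) upgrades this to the stated high-probability statement. I would also record the add-on mentioned in the surrounding text: if queries to sets of size greater than $k$ are forbidden, the value cap is no longer needed, one may take $|X|=\{a^\star\}$ a single planted element, and the identical reasoning gives $\Pr[a_{\mathrm{out}}=a^\star]=O(1/n)$, i.e.\ an $\Omega(n)$ inapproximability.

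The main obstacle is the second step: arguing that the noisy oracle genuinely leaks essentially nothing about $X$ against an \emph{unbounded} number of queries to sets of \emph{arbitrary} size. With the scaled-Bernoulli (or sufficiently heavy-tailed) choice of $\mathcal{D}$ this becomes a clean counting/union-bound statement, but one must verify that the rare informative answers cannot be parlayed into a better-than-$\sqrt n$ localization of $X$ --- this is precisely where the interplay among the value cap, the size $|X|\approx\sqrt n$, and the noise parameters has to be balanced, and it is also what separates the $1/2+O(1/\sqrt n)$ bound obtained here from the $\Omega(n)$ bound available when large queries are disallowed.
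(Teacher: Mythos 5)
Your proposal has a genuine gap, and it is in exactly the place you flag as "the main obstacle." You want a noise distribution $\mathcal{D}$ that makes the oracle uninformative with overwhelming probability — either by putting almost all of its mass at $0$ (scaled Bernoulli with $\Pr[\text{nonzero}] \ll 2^{-n}$) or by using a "sufficiently heavy tail." But such a $\mathcal{D}$ is precisely the kind of degenerate distribution that the paper rules out at the outset ("avoids trivialities like $\mathcal{D}\subseteq\{0\}$"), and for good reason: if the oracle is w.h.p.\ identically zero, then the same argument shows that \emph{no} constant-factor approximation is achievable for \emph{any} $k$, which would contradict the paper's main positive result (Theorem~\ref{thm:largek_main}). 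Notice that in your construction the factor $1/2$ is an artifact of choosing the two singleton values to have ratio $2$; nothing stops you from planting values with ratio $M$, which would yield a $1/M$ lower bound. This is a tell-tale sign that the source of hardness in your argument is the noise model rather than the structure of the function — so you have proved a trivial claim about a degenerate oracle, not the tight $1/2$ barrier the claim is after. (Your explicit remark that "no tail condition is imposed on $\mathcal{D}$ here" is the crux: if no condition is imposed, the claim is vacuously true and also vacuously strong.)

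The paper's proof takes a different and more delicate route. It uses a tame, bounded noise distribution that returns $2$ with probability $1/\sqrt n$ and $1$ otherwise (which \emph{does} satisfy the generalized exponential tail condition: bounded support with an atom at the supremum). The function pair is $f_1(S)=\min\{|S|,2\}$ versus $f_2(S)=\min\{g(S),2\}$ with one planted singleton of value $2$. The cap at $2$ means every query of size $\ge 2$ returns $2$ regardless of the function, so the oracle transcript reduces to the multiset of values seen on singletons; the hardness is then a pure binomial-distinguishing question — can you tell one planted ``$2$'' apart from the $\sim\sqrt n$ noise-induced ``$2$''s? — and the concentration of that binomial gives the $O(1/\sqrt n)$ slack. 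In other words, the $1/\sqrt n$ in the paper comes from the standard deviation of the count of observed $2$'s, not from the size of a planted set; and the factor $1/2$ comes from the calibrated ratio between planted value and noise multiplier, which is forced to be exactly $2$ so that the noisy oracle genuinely confounds signal with noise. Your planted-set idea (borrowed from Theorem~\ref{thm:adversarial}) is the right intuition for the adversarial-noise lower bound, but it does not transfer to the i.i.d.\ noise model without this extra calibration, and the attempt to force it through by weakening $\mathcal{D}$ collapses the claim into a triviality.
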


\begin{proof}
We will construct two functions that are identical except that one function attributes a value of $2$ for a special element $x^\star$ and $1$ for all other elements, whereas the other is assigns a value of $1$ for each element.  In addition, these functions will be bounded from above by 2 so that the only queries that gives any information are those of singletons.  More formally, consider the functions $f_{1}(S) = \min\{|S|,2\}$ and $f_{2}(S)=\min\{g(S),2\}$ where $g:2^{N}\to \mathbb{R}$ is defined for some $x^\star \in N$ as:
\[
    g(S)=
\begin{cases}
    2,& \text{if } S = x^\star\\
    |S|,              & \text{otherwise}
\end{cases}
\]
The noise distribution will return $2$ with probability $1/\sqrt{n}$ and 1 otherwise.

We claim that no algorithm can distinguish between the two functions with success probability greater than $1/2+O(1/\sqrt{n})$.
For all sets with two or more elements, both functions return 2, and so no information is gained when querying such sets.  Hence, the only information the algorithm has to work with is the number of 1, 2, and 4 values observed on singletons. If it sees the value 4 on such a set, it concludes that the underlying function is $f_2$. This happens with probability $1/\sqrt{n}$.

Conditioned on the event that the value 4 is not realized, the only input that the algorithm has is the number of 1s and 2s it sees. The optimal policy is to choose a threshold, such if a number of 2s observed is or above this threshold, the algorithm returns $f_2$ and otherwise it reruns $f_1$.  In this case, the optimal threshold is $\sqrt{n}+1$. 

The probability that $f_2$ has at most $\sqrt{n}$ twos is $1/2 - 1/\sqrt{n}$, and so is the probability that $f_1$ has at least $\sqrt{n}+1$ twos, and hence the advantage over a random guess is $O(1/\sqrt{n})$ again.

An algorithm which approximates the maximal set on $f_2$ with ratio better than $1/2 + \omega(1/\sqrt{n})$ can be used to distinguish the two functions with advantage $\omega(1/\sqrt{n})$. Having ruled this out, the best approximation one can get is $1/2 +O(1/\sqrt{n})$ as required.
\end{proof}

We generalize the construction to general $k$. The lower for general $k$ behaves like $2k/(2k-1)$, where our upper bound is $(k-1)/k$.

\begin{claim}\label{clm:evenlower_constant}
There exists a submodular function and noise distribution for which w.h.p. no randomized algorithm with a noisy oracle can obtain an approximation better than $(2k-1)/2k +O(1/\sqrt{n})$ for the optimal set of size $k$.
\end{claim}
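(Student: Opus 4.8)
The plan is to mirror the two–function indistinguishability argument used for $k=1$ in Claim~\ref{clm:lower_constant}, lifting it to general constant $k$. First I would fix the ``generic'' instance $f_1(S)=\min\{(2-\tfrac1k)|S|,\,2k\}$, which is monotone submodular (a nonnegative modular function truncated at a constant) and satisfies $\max_{|S|\le k}f_1(S)=2k-1$. For each $v\in N$ I define a planted variant $f_2^{(v)}$ by raising the weight of the single element $v$ from $2-\tfrac1k$ to $3-\tfrac1k$ and keeping the cap $2k$; again this is a nonnegative modular function truncated at a constant, hence monotone submodular, and now $\max_{|S|\le k}f_2^{(v)}(S)=2k$, attained precisely by the $k$–sets containing $v$, while every $k$–set avoiding $v$ still has value $2k-1$. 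For $k=1$ this is exactly the instance of Claim~\ref{clm:lower_constant} (they differ only on $\{v\}$, with values $1$ versus $2$). The ``$f_2$–world'' draws $v$ uniformly at random and hands the algorithm a consistent noisy oracle of $f_2^{(v)}$.

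The key structural observation is that $f_1$ and $f_2^{(v)}$ agree except on the $\Theta(n^{k-1})$ sets $S$ with $v\in S$ and $|S|\le k$; on such an $S$ with $|S|=j\in\{1,\dots,k\}$ one has $f_2^{(v)}(S)=f_1(S)+1=\rho_j f_1(S)$ with $\rho_j=1+\frac{k}{(2k-1)j}$. So I would take the noise distribution $\mathcal{D}$ to be a fine multiplicative grid, supported on $\{\rho^{m}\}$ for a small base $\rho>1$ chosen (via a suitable rational approximation) so that each of the $k$ ratios $\rho_1,\dots,\rho_k$ is an integer power $\rho^{s_j}$ up to negligible error, with the index $m$ drawn from a distribution that is essentially flat over a window of polynomial-in-$n$ width. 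Then multiplying a true value by $\rho_j$ acts on the observed value as a shift of the grid index by $s_j$, and a sufficiently wide almost-flat index distribution is within total variation $o(1)$ of its own shift; moreover, since $\rho_j$ is a grid point, the set of \emph{observable} noisy values on $S$ is the same in both worlds, so nothing ``gives away'' the planted instance. For $k=1$ the grid degenerates to the $\{1,2\}$ distribution of Claim~\ref{clm:lower_constant} ($\rho=2=\rho_1$).

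With these choices, a coupling together with a union bound over the $\Theta(n^{k-1})$ affected sets shows that, for uniformly random $v$, the joint law of all noisy oracle answers in the $f_2^{(v)}$–world is within total variation $O(1/\sqrt n)$ of the law in the $f_1$–world (tune the grid window so the per–set discrepancy is $O(n^{-k-1/2})$). Hence the output distribution of any randomized, adaptive algorithm — even one allowed to query every subset, which consistency makes the worst case — differs by at most $O(1/\sqrt n)$ between the two worlds; the reduction from adaptivity can alternatively be routed through the ``guess the first informative query'' argument of Theorem~\ref{thm:adversarial}. In the $f_1$–world the instance is symmetric under permutations of $N$, so for fixed $v$ the algorithm's $k$–element output contains $v$ with probability at most $k/n$; thus in the $f_2^{(v)}$–world it contains $v$ with probability at most $k/n+O(1/\sqrt n)=o(1)$. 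On the complementary event the returned set avoids $v$, hence has value at most $2k-1=\frac{2k-1}{2k}\max_{|S|\le k}f_2^{(v)}(S)$, which gives the stated ceiling of $\frac{2k-1}{2k}+O(1/\sqrt n)$.

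The main obstacle, and the point where this goes strictly beyond $k=1$, is the noise design: a single planted element now perturbs $\Theta(n^{k-1})$ sets spread across sizes $1,\dots,k$, and these perturbations come in $k$ distinct multiplicative magnitudes $\rho_1,\dots,\rho_k$ rather than one. The distribution $\mathcal{D}$ must simultaneously absorb all of them — forcing the base $\rho$ and the rational approximations of the $\rho_j$ to be chosen compatibly — while the total variation accumulated over \emph{all} affected sets stays $O(1/\sqrt n)$, so the grid window, the base, and the count $\Theta(n^{k-1})$ must be balanced. Verifying that no noisy value is exclusive to one world (so that the only advantage is the mild distributional tilt, exactly as ``seeing a $2$ versus a $4$'' in Claim~\ref{clm:lower_constant}) is the step I expect to require the most care; one also has to double-check that larger sets containing $v$ remain capped in both functions, so the leakage really is confined to sizes $\le k$.
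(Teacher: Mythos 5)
Your two-world indistinguishability plan has the right shape, but you plant the wrong object, and this creates a real gap that the paper avoids. Because you perturb a single \emph{element} $v$, the two worlds disagree on $\Theta(n^{k-1})$ sets spread across sizes $1,\dots,k$, each perturbed by a different ratio $\rho_j$. You then propose a discrete ``multiplicative grid'' noise supported on $\{\rho^m\}$ with each $\rho_j\approx\rho^{s_j}$ ``up to negligible error.'' This does not work: for $k\ge 2$ the ratios $\rho_1,\dots,\rho_k$ are generically incommensurable, so they cannot all be \emph{exact} powers of a common base, and if the approximation error is nonzero then the observed value $\xi_S\cdot\rho_j\, f_1(S)$ on an affected set falls strictly off the grid, immediately revealing which sets contain $v$ — precisely the ``no observable value exclusive to one world'' requirement you flag as delicate. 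A continuous, approximately log-uniform density spread over a window of log-width $\Omega(n^{k-1/2})$ would repair the grid objection and make your TV union bound go through, but at the cost of an extremely spread-out noise distribution, and the remaining bookkeeping (summing $\binom{n-1}{j-1}$ per-set discrepancies of $k$ different magnitudes so the total is $O(1/\sqrt n)$) is nontrivial and left unexecuted.

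The paper's proof sidesteps all of this by planting a single \emph{set} $S^\star$ of size exactly $k$, not a single element. It uses $f_1(S)=2|S|$ for $|S|<k$, $f_1(S)=2k-1$ for $|S|=k$, and $f_1(S)=2k$ for $|S|>k$, and defines $f_2(\cdot\,;S^\star)$ to be identical except $f_2(S^\star)=2k$. Both are monotone submodular, and they disagree on \emph{exactly one} query: sets of size $\ne k$ give no information, and $k$-sets other than $S^\star$ are identical in both worlds. With the near-degenerate two-point noise returning $2k/(2k-1)$ with probability $n^{-1/2}$ and $1$ otherwise, the observable values on $k$-sets are $\{2k-1,\,2k,\,4k^2/(2k-1)\}$ in both worlds; the only tell is the rare largest value (probability $1/\sqrt n$), and otherwise $S^\star$ is hidden among the $\approx\binom{n}{k}/\sqrt n$ sets that naturally read $2k$ by noise. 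This collapses your multi-scale noise-design problem to a single ratio and a single affected query, keeps the noise distribution essentially degenerate rather than polynomially wide, and makes the $(2k-1)/2k+O(1/\sqrt n)$ bound fall out immediately. The lesson: when lifting the $k=1$ argument, plant whatever object minimizes the number of distinguishing queries, which here is a full $k$-set rather than a singleton.
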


\begin{proof}
Consider the function:
\[
    f_1(S)=
\begin{cases}
    2|S|,& \text{if } |S| < k \\
    2k-1, & \text{if } |S| = k \\
    2k, & \text{if } |S| > k \\
\end{cases}
\]
and the function $f_2$, which is dependent on the identity of some random set of size $k$, denoted $S^{\star}:$
\[
    f_2(S;S^{\star})=
\begin{cases}
    2|S|,& \text{if } |S| < k \\
    2k-1, & \text{if } |S| = k, S \neq S^{\star}\\
    2k, & \text{if }  S = S^{\star}\\
    2k, & \text{if } |S| > k \\
\end{cases}
\]
Note that both functions are submodular.

The noise distribution will return $2k/(2k-1)$ with probability $n^{-{1}/{2}}$ and 1 otherwise.  Again we claim that no algorithm can distinguish between the functions with probability greater than $1/2$. Indeed, since $f_1,f_2$ are identical on sets of size different than $k$, and their value only depends on the set size, querying these sets doesn't help the algorithm (the oracle calls on these sets can be simulated). As for sets of size $k$, the algorithm will see a mix of $2k-1$, $2k$, and at most one value of $4k^2/(k-1)$. If the algorithm sees the value $4k^2/(k-1)$ then it was given access to $f_2$. However, the algorithm will see this value only with probability $1/\sqrt{n}$. Conditioning on not seeing this value, the best policy the algorithm can adopt is to guess $f_2$ if the number of $2k$ values is at least $1 + \frac{{n \choose{k}}}{\sqrt{n}}$, and guess $f_1$ otherwise. The probability of success with this test is $1/2 + O(1/\sqrt{n})$ (regardless of whether the underlying function is $f_1$ or $f-2$). Any algorithm which would approximate the best set of size $k$ to an expected ratio better than $(2k-1)/2k + \omega(1/\sqrt{n})$ could be used to distinguish between the function with an advantage greater than $1/\sqrt{n}$, and this puts a bound of $(2k-1)/2k + O(1/\sqrt{n})$ on the expected approximation ratio.
\end{proof}

\newpage
\section{Noise Distributions}\label{sec:distributions}
As discussed in the Introduction, our goal was to allow noise distribution in the model to potentially be Gaussian, Exponential, uniform and generally bounded.  It was important for us that algorithm to be oblivious to the specific noise distribution, and rely on its properties only in the analysis. For achieve this we introduced the class of \emph{generalized exponential tail} distributions.  We recall the definition from the Introduction.

\begin{definition*}
A noise distribution $\mathcal{D}$ has a \textbf{generalized exponential tail} if there exists some $x_0$ such that for $x > x_0$ the probability density function $\pdf(x) = e^{-g(x)}$, where $g(x) = \sum_{i} a_ix^{\alpha_i}$. We do not assume that all the $\alpha_i$'s are integers, but only that $\alpha_0 \ge \alpha_1 \ge \ldots$, and that $\alpha_0 \ge 1$. If $\mathcal{D}$ has bounded support we only require that either it has an atom at its supremum, or that $\pdf$ is continuous and non zero at the supremum.
\end{definition*}

%

Note that the definition includes Gaussian and Exponential distributions.  For $i >0$ it is possible that $\alpha_i < 1$ which implies that a generalized exponential tail also includes cases where the probability density function denoted $\pdf$ respects $\pdf(x) = \pdf(x_0) e^{-g'(x-x_0)}$ (we can simply add $\pdf(x_0)$ to $g$ using $\alpha_i = 0$ for some $i$, and move from $g'(x - x_0)$ to an equivalent $g(x)$ via a coordinate change).  

The most important property of the noise distribution is that all of its moments are constant, independent of $n$. In fact, $\mathcal{D}$ describes how the noise affects a single evaluation, and does not depend on the number of elements. This means (for example) that if we could get $h(n)$ independent samples from $\mathcal{D}$, we would be arbitrarily close to the mean, as long as $h(n)$ is monotone in $n$.

\paragraph{Impossibility for distributions that depend on $n$.}  We note that if the adversary would have been allowed to choose the noise distribution as a function of $n$, then no approximation would be possible, even if the noise distribution had mean $1$. For example, a noise distribution which returns 0 with probability $1 - 1/2^{2n}$ and $2^{2n}$ with probability $1/2^{2n}$ has an expected value of $1$, is not always $0$, but does not enable any approximation.

\paragraph{Impossibility for two distributions.}
One can consider having multiple noise distributions which act on different sets. A noise distribution can be assigned to a set either in adversarial manner, or at random.
If sets are assigned to noise distributions in an adversarial manner, it is possible to construct the bad example of the correlated case from Section~\ref{sec:extensions} with just  two noise distributions.  If sets are assigned to a noise distribution in an i.i.d manner, this reduces to the i.i.d case when there is a single distribution.

\paragraph{The relation between $n$ and the distribution}
As we have explained above, if the distribution depends on $n$, then approximation is not possible. In particular, this means that if the universe is too small, optimization is not possible. For example, suppose that $\mathcal{D}$ returns $0$ with probability $1 - 2^{-100}$, and otherwise returns $2^{100}$. Then $\mathcal{D}$ is bounded away from zero, has expectancy $1$, but approximation is not possible if $n = 50$. Hence we need to assume some minimal value $n_0$ that depends on the distribution, and assert an approximation ratio of $1 - 1/e - \epsilon$ only for $n > n_0$. We note that $n_0$ is constant, and hence if $n \le n_0$ we can run the ``optimal'' algorithm of evaluating the noisy oracle over all subsets of $n$, but the approximation ratio might still be arbitrarily bad.

We note that the problem is not ``just'' an atom at zero. Suppose that $f$ is additive, and bounded between $1$ and $100$. if $\mathcal{D}$ is uniform over the set $2^{100^{i}}$ for $1 \le i \le 2^{100}$ and $n = 50$ then approximation is not possible; if $\tilde f(A)$ turns out to be larger than $\tilde f(B)$ this says very little about $f(A), f(B)$ - it's more likely happen due to the noise.

\newpage
\section{Additional Examples}\label{sec:examples}
In this section we show some examples of how greedy and its variants fail under error and noise.

\paragraph{Greedy fails with error.}  In the maximum-coverage problem we are given a family of sets that cover a universe of items, and the goal is to select a fixed number of sets whose union is maximal.  This classic problem is an example of maximizing a monotone submodular function under a cardinality constraint. For a concrete example showing how greedy fails with error, consider the instance illustrated in Figure~\ref{fig:noise}.  In this instance there is one family of sets $\mathcal{A}$ depicted on the left where all sets cover the same two items, and another family of disjoint sets $\mathcal{B}$ that each cover a single unique item.  Consider an oracle which evaluates sets as follows.
For any combination of sets the oracle evaluates the cardinality of the union of the subsets exactly, except for a few special cases: For $S = A \cup b \ \ \forall A \subseteq \mathcal{A},b\in \mathcal{B}$  the oracle returns $\widetilde{f}(S) =2$,
and for $S \subseteq \mathcal{A}$ the oracle returns $\widetilde{f}(S)=2+\delta$ for some arbitrarily small $\delta>0$.
With access to this oracle, the greedy algorithm will only select sets in $\mathcal{A}$ which may be as bad as linear in the size of the input.  In this example we tricked the greedy algorithm with a $1/3$-erroneous oracle, but same consequences apply to an $\epsilon$-erroneous oracle for any $\epsilon>0$ by planting $(1-\epsilon)/\epsilon$ items in $\mathcal{A}$.

\paragraph{Greedy fails with random noise.}  In practice, the greedy algorithm is often used although we know the data may be noisy.  Hence, a different direction for research could be to analyze the effect of noise on the existing greedy algorithm. Unfortunately, it turns out that the greedy algorithm fails even on very simple examples.

\begin{theorem}
Given a noise distribution that is either uniformly distributed in ${[1 - \epsilon, 1 + \epsilon]}$ for any $\epsilon>0$, a Gaussian, or an Exponential, the greedy algorithm cannot obtain a constant factor approximation ratio even in the case of maximizing additive functions under a cardinality constraint.
\end{theorem}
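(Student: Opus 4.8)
The plan is to exhibit, for each of the three noise families, a single additive (hence monotone submodular) function $f$ and a budget $k$ on which the greedy algorithm — which at step $t$ appends $\arg\max_{a\notin S}\widetilde f(S\cup a)$ — returns a set whose value is a vanishing fraction of $\max_{|S|\le k}f(S)$. The common skeleton is an instance whose ground set splits into $k$ \emph{target} elements of value $1$ (so $\OPT=k$) and $N=n-k$ \emph{decoy} elements of small value $\delta\ge 0$, the budget being $k$. The single feature I would exploit is that the oracle reports the \emph{noisy total} $\widetilde f(S\cup a)=\xi_{S\cup a}\,(f(S)+f(a))$, not noisy marginals; hence once $f(S)$ is non‑negligible a comparison between a target ($f(a)=1$) and a decoy ($f(a)=\delta$) is governed essentially by which of two \emph{independent} multipliers is larger — and there are overwhelmingly more decoys than targets.

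For the Gaussian and Exponential cases I would argue straight from the unbounded tail. A standard extreme–value estimate gives that the maximum of $m$ i.i.d.\ samples grows without bound as $m\to\infty$, while by Lemma~\ref{lem:upperBoundNoise} the maximum of a bounded number $O(k)$ of samples stays $O(1)$ with high probability. Fix a constant $\delta\in(0,1)$ and take $n$ super‑polynomial in $k$. I would then induct on the greedy step $t=1,\dots,k$: assuming greedy has so far picked only decoys we have $f(S)\le\delta k=O(1)$, so the best decoy has noisy value at least $\delta\cdot\max\{\approx n\text{ i.i.d.\ samples}\}$, which dominates the best target's noisy value $\le (f(S)+1)\cdot\max\{\le k\text{ i.i.d.\ samples}\}=O(1)$ with high probability; a union bound over the $k$ steps (the steps use disjoint, hence independently–noised, sets) closes the induction. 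Therefore greedy outputs a set of value $\le\delta k$, i.e.\ a $\delta$–approximation, and letting $\delta\to 0$ (growing $n$ accordingly) rules out any fixed constant ratio; already $k=2$ suffices for this case.

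For the uniform noise $\mathcal D\subseteq[1-\epsilon,1+\epsilon]$ a decoy can be boosted by a factor of only $\tfrac{1+\epsilon}{1-\epsilon}$, so the extreme–value argument does not transfer and I would use a more carefully tuned instance: keep the $N$ decoys at value $0$ but give the $k$ targets \emph{distinct decreasing} values $w_1>w_2>\dots>w_k$ (for concreteness $w_t=1/t$, so that $\OPT=\sum_t w_t=\Theta(\log k)\to\infty$), and exploit the accumulation phenomenon. Greedy is \emph{forced} to take the few largest targets, after which $f(S)$ has grown past $\tfrac{1-\epsilon}{2\epsilon}w$ for every remaining target value $w$; from that point on, whenever the (shrinking) pool of still–available targets is collectively unlucky, one of the abundant decoys with multiplier $\approx 1+\epsilon$ overtakes all of them, and greedy keeps only an $O_\epsilon(1)$–value prefix of the targets while $\OPT$ diverges, giving ratio $o(1)$. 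The main obstacle is precisely this bounded–noise case: one has to verify that the event ``every still–available target is unlucky while some decoy is lucky'' genuinely diverts greedy from the targets on a constant fraction of the late steps — using concentration of the maximum of the many decoy multipliers — rather than merely postponing a target pick by one step, and one has to choose $\{w_t\}$ so that $\OPT$ still tends to infinity; by contrast the Gaussian/Exponential cases are essentially immediate once the extreme–value estimate and Lemma~\ref{lem:upperBoundNoise} are in hand.
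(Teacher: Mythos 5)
Your Gaussian/Exponential argument is sound and in fact more elementary than the paper's: with $k$ targets of value $1$ and $n-k$ decoys of value $\delta$, greedy's step-$(t{+}1)$ comparison is between $\xi\cdot(f(S)+1)$ over $k$ targets and $\xi\cdot(f(S)+\delta)$ over $\approx n$ decoys, and the maximum of $\Theta(n)$ i.i.d.\ draws from an unbounded-tail distribution exceeds the maximum of $O(k)$ draws by an arbitrarily large factor (while the latter stays $O(1)$ by Lemma~\ref{lem:upperBoundNoise}); once this factor passes $1/\delta$ greedy is stuck on decoys for all $k$ rounds, giving ratio $\delta$, and sending $\delta\to0$ with $n\to\infty$ rules out every constant. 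This is a different mechanism from the paper's: the paper sets up one two-type instance for all three noise families ($k=\sqrt n$ ``good'' elements of value $n^{1/4}$, $n-\sqrt n$ ``bad'' elements of value $1$) and argues that once $f(S)$ dominates the per-element gap the step is decided by noise; your construction instead prevents $f(S)$ from ever growing, letting the tail carry the whole argument, which is cleaner whenever the tail is unbounded.

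The uniform case — which you rightly flag as the real obstacle — has a genuine gap, and the harmonic instance $w_t = 1/t$ does not do what you want. Suppose greedy holds $m$ targets, so $f(S)\approx\ln m$. For a decoy to win the step it is not enough that one decoy multiplier is near $1+\epsilon$ while one target's is near $1-\epsilon$: \emph{every} remaining target $t$ must simultaneously satisfy $\xi_t < (1+\epsilon)\frac{f(S)}{f(S)+w_t}$, an event of probability $1-\frac{(1+\epsilon)w_t}{2\epsilon f(S)}$ for the uniform noise. With $\Theta(k)$ targets still available, the probability they are \emph{all} unlucky at once is roughly
\[
\exp\!\Bigl(-\frac{1+\epsilon}{2\epsilon\ln m}\sum_{t>m}\frac1t\Bigr)\;\approx\;\exp\!\Bigl(-\Theta_\epsilon\!\Bigl(\frac{\ln k-\ln m}{\ln m}\Bigr)\Bigr),
\]
which vanishes for every $m=k^{o(1)}$ and only becomes $\Omega(1)$ when $m\ge k^{1-O(\epsilon)}$. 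So greedy keeps collecting targets until $f(S)\ge(1-O(\epsilon))\ln k=(1-O(\epsilon))\OPT$, and even after an occasional diversion it resumes: the harmonic instance actually certifies that greedy is a $(1-O(\epsilon))$-approximation on it — the opposite of what you need. The structural point is that for noise bounded in $[1-\epsilon,1+\epsilon]$ the best decoy's multiplier advantage over the best of $g$ remaining targets is only $O(\epsilon/g)$ multiplicatively, so a decoy can displace a target only when that target's marginal gain is already an $O(\epsilon/g)$ fraction of $f(S)$; under slowly decaying weights this crossing never occurs while $f(S)=o(\OPT)$. Your proposal correctly names this as the crux, but the instance you put forward does not resolve it, and a valid bounded-noise construction has to make the crossing happen while $f(S)$ is still a vanishing fraction of $\OPT$ — which neither the harmonic weights nor a naive two-value instance accomplishes.
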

\begin{proof}[Proof sketch]
Consider an additive function, which has two types of elements: $k = \sqrt{n}$ good elements, each worth $n^{1/4}$, and $n - k$ bad elements, each worth $1$. Suppose that the noise is uniform in $[1 - \epsilon, 1 + \epsilon]$. Then after taking $k^{2/3}$ good elements greedy is much more likely to take bad elements, which leads to an approximation ratio of $O(1/n^{1/6})$.  Similar examples hold for Gaussian and Exponential noise.
\end{proof}

\paragraph{Greedy fails when taking maximal sampled mean bundle.}  In Section~\ref{sec:smallk} we discuss a greedy algorithm which iteratively takes bundles of $O(1/\epsilon)$ elements that maximize $\widetilde{F}(S\cup B)$, where $\widetilde{F}(S \cup A) = \sum_{i\in A,j\notin S\cup A} \widetilde{f}(S \cup A_{ij})$.  To see this can be arbitrarily bad, even when $\widetilde{F}\approx F$, consider an instance with $n-2$ elements $N'$ s.t. for any $S\subseteq N'$ the function evaluates to $f(S)=M$ for some arbitrarily large value $M>0$, and an additional subset of elements $A=\{a_1,a_2\}$ s.t. $f(A)= f(a_1) = f(a_2) = \epsilon$, for some arbitrarily small $\epsilon>0$.  Now assume that for any $S\subseteq N'$ and $i \in [2]$ we have $f(S\cup a_i)=M+\epsilon$.  The sampled mean of $A$ is maximal, its value is arbitrarily small.

\begin{figure*}[t]
\begin{centering}
                \includegraphics[trim = 0mm 0mm 0mm 0mm, height=40mm]{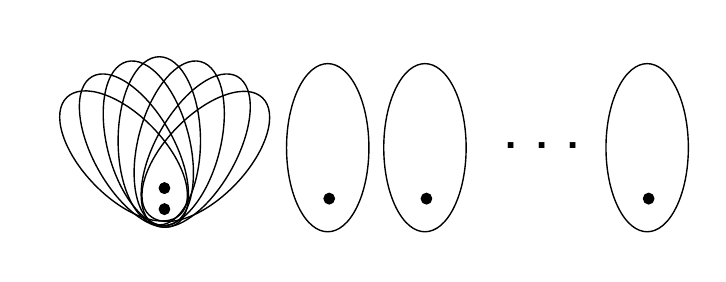}
                 \caption{\footnotesize{An instance of max-cover for which the greedy algorithm fails with access to an oracle with error.}}\label{fig:noise}
                 \end{centering}
\end{figure*}

\end{document}